\newcommand{\RN}[1]{%
  \textup{\expandafter{\romannumeral#1}}%
}
\tikzset{
  block/.style    = {draw, thick, rectangle, minimum width = 2em},
sblock/.style      = {draw, thick, rectangle, minimum height = 2em,
minimum width = 2em}, 
}
\tikzset{XOR/.style={draw,circle,append after command={
        [shorten >=\pgflinewidth, shorten <=\pgflinewidth,]
        (\tikzlastnode.north) edge (\tikzlastnode.south)
        (\tikzlastnode.east) edge (\tikzlastnode.west)
        }
    }
}
\newcommand\remove[1]{}
\newtheorem{theorem}{Theorem}    
\newtheorem{definition}{Definition}
\newtheorem{observation}{Observation}
\newtheorem{claim}{Claim}
\newtheorem{lemma}{Lemma}
\newtheorem{corollary}{Corollary}
\newtheorem{conjecture}{Conjecture}
\newtheorem{remark}{Remark}
\newtheorem{cnstr}{Construction}
\newcommand{\original}[1]{#1}
\newcommand{\bB}{\mathbb{B}}
\newcommand{\bF}{\mathbb{F}}
\newcommand{\bP}{\mathbb{P}}
\newcommand{\cA}{\mathcal{A}}
\newcommand{\cC}{\mathcal{C}}
\newcommand{\cG}{\mathcal{G}}
\newcommand{\cW}{\mathcal{W}}
\newcommand{\cY}{\mathcal{Y}}
\newcommand{\bolda}{\bold{a}}
\newcommand{\boldb}{\bold{b}}
\newcommand{\boldc}{\bold{c}}
\newcommand{\bolde}{\bold{e}}
\newcommand{\boldy}{\bold{y}}
\DeclareMathOperator{\rank}{rank}
\DeclareMathOperator{\Eval}{Eval}
\DeclareMathOperator{\inter}{inter}
\DeclareMathOperator{\FHT}{FHT}
\DeclareMathOperator{\Majority}{Majority}
\DeclareMathOperator{\flip}{flip}
\DeclareMathOperator{\poly}{poly}
\DeclareMathOperator{\LLR}{LLR}
\DeclareMathOperator{\argmax}{argmax}
\newcommand{\R}{\mathbb{R}}
\newcommand{\N}{\mathbb{N}}
\newcommand{\E}{\mathbb{E}}
\newcommand{\F}{\mathbb{F}}
\newcommand{\per}[1]{\left(#1\right)}
\newcommand{\abs}[1]{\left|#1\right|}
\newcommand{\set}[1]{\left\{#1\right\}}
\newcommand{\prob}[2]{\mathbb{P}_{#1}\left[#2\right]}
\newcommand{\comment}[1]{}
\newcommand{\err}{\text{err}}
\newtheorem*{thm*}{Theorem}
\newtheorem*{lem*}{Lemma}
\newcommand{\dist}[2]{\text{dist}\per{{#1},{#2}}}
\newcommand{\weight}[1]{\mathrm{wt}(#1)}
\newcommand{\wt}[1]{\mathrm{wt}(#1)}
\newcommand{\derivative}[2]{\Delta_{#1}{#2}}
\newcommand{\weightdistribution}[3]{A_{{#2},{#1}}\left(#3\right)}
\newcommand{\wdist}[3]{A_{{#2},{#1}}\left(#3\right)}
\DeclareMathOperator{\bis}{bias}
\newcommand{\bias}[1]{\bis(#1)}
\newcommand{\reedmuller}[2]{\mathrm{RM}({#2},{#1})}
\newcommand{\polynomials}[2]{\reedmuller{r}{m}}
\newcommand{\rmrm}{\reedmuller{r}{m}}
\newcommand{\1}{\mathds{1}}
\begin{document}

\title{Reed-Muller Codes: Theory and Algorithms}

\author{Emmanuel Abbe \and \hspace*{.3in} Amir Shpilka \and \hspace*{.3in} Min Ye}

\maketitle
{\renewcommand{\thefootnote}{}\footnotetext{

\vspace{-.2in}
 
\noindent\rule{1.5in}{.4pt}

E. Abbe is with the Mathematics Institute and the School of Computer and Communication Sciences at EPFL, Switzerland, and the Program in Applied and Computational Mathematics and the Department of Electrical Engineering in Princeton University, USA.

A. Shpilka is with the Blavatnik school of Computer Science at Tel Aviv University, Tel Aviv, Israel. Email: shpilka@tauex.tau.ac.il.

M. Ye is with the Data Science and Information Technology Research Center, Tsinghua-Berkeley Shenzhen Institute, Tsinghua Shenzhen International Graduate School, Shenzhen, China. Email: yeemmi@gmail.com

The first and third authors were partly supported by the NSF CAREER Award CCF-1552131. The second author received funding from the Israel Science Foundation (grant number 552/16) and from the Len Blavatnik and the Blavatnik Family foundation.
}
}

\renewcommand{\thefootnote}{\arabic{footnote}}
\setcounter{footnote}{0}

\begin{abstract}
Reed-Muller (RM) codes are among the oldest, simplest and perhaps most ubiquitous family of codes. They are used in many areas of coding theory in both electrical engineering and computer science. Yet, many of their important properties are still under investigation. This paper covers some of the recent developments regarding the weight enumerator and the capacity-achieving properties of RM codes, as well as some of the algorithmic developments. In particular, the paper discusses the recent connections established between RM codes, thresholds of Boolean functions, polarization theory,  hypercontractivity, and the techniques of approximating low weight codewords using lower degree polynomials (when codewords are viewed as evaluation vectors of degree $r$ polynomials in $m$ variables). It then overviews some of the algorithms for decoding RM codes. It covers both algorithms with provable performance guarantees for every block length, as well as algorithms with state-of-the-art performances in practical regimes, which do not perform as well for large block length. Finally, the paper concludes with a few open problems.     
\end{abstract}

\clearpage

\tableofcontents

\clearpage

\section{Introduction}

A large variety of codes have been developed over the past 70 years. These were driven by various objectives, in particular, achieving efficiently the Shannon capacity \cite{Shannon48}, constructing perfect or good codes in the Hamming worst-case model \cite{Hamming50}, matching the performance of random codes, improving the decoding complexity, the weight enumerator, the scaling law, the universality, the local properties of the code \cite{Macwilliams77,Lint99,Blahut03,Helleseth05,Richardson08,Huffman10,Yekhanin12},  and more objectives in theoretical computer science such as in cryptography (e.g., secrete sharing, private information retrieval),  pseudorandomness, extractors, hardness amplification or probabilistic proof systems; see \cite{Abbe15} for references. Among this large variety of code developments, one of the first, simplest and perhaps most ubiquitous code is the Reed-Muller (RM) code.

The RM code was introduced by Muller in 1954 \cite{Muller54}, and Reed developed shortly after a decoding
algorithm decoding up to half its minimum distance \cite{Reed54}. The code construction can be described with a greedy procedure. Consider building
a linear code (with block length a power of two); it must contain the all-0 codeword. If one has to pick
a second codeword, then the all-1 codeword is the best choice under any meaningful criteria. If now one
has to keep these two codewords, the next best choice to maximize the code distance is the half-0 half-1
codeword, and to continue building a basis sequentially, one can add a few more vectors that preserve a
relative distance of half, completing the simplex code, which has an optimal rate for the relative distance
half.  Once saturation is reached at relative distance half, it is less clear how to pick the next codeword,
but one can simply re-iterate the simplex construction on any of the support of the previously picked
vectors, and iterate this after each saturation, reducing each time the distance by half. This gives the RM
code, whose basis is equivalently defined by the evaluation vectors of bounded degree monomials.

As mentioned, the first order RM code is the augmented simplex code or equivalently the Hadamard code, and the simplex code is the dual of the Hamming code that is `perfect'. This strong property is clearly lost once the RM code order gets higher, but RM codes preserve nonetheless a decent distance (at root block length for constant rate). Of course this does not give a `good' code (i.e., a code with constant rate and constant relative distance), and it is far from achieving the distance that other combinatorial codes can reach, such as Golay codes, BCH codes or expander codes \cite{Macwilliams77}. However, once put under the light of random errors, i.e., the Shannon setting, for which the minimum distance is no longer the right figure or merit, RM codes may perform  well again. In \cite{Helleseth04},  Levenshtein and co-authors show  that for the binary symmetric channel, there are codes that  improve on the simplex code in terms of the error probability (with matching length and dimension). Nonetheless, in the lens of Shannon capacity, RM codes seem to perform very well. In fact, more than well; it is plausible that they achieve the Shannon capacity on any Binary-input Memoryless Symmetric (BMS) channel \cite{Costello07,Abbe15,Kudekar16STOC,Kudekar17,AY18} and perform comparably to random codes on criteria such as the scaling law \cite{Hassani18} or the weight enumerator \cite{Sloane70,Macwilliams77,kasami1970weight,kasami1976weight,Kaufman12,Samorod18}. 

The fact that RM codes have good performance in the Shannon setting, and that they seem to achieve capacity, has long been observed and conjectured. It is hard to track back the first appearance of this belief in the literature, but \cite{Kudekar16STOC} reports that it was likely already present in the late 60s.
The claim was mentioned explicitly in a 1993 talk by
Shu Lin, entitled ?RM Codes are Not So Bad? \cite{Lin93}. It appears that a 1994 paper by Dumer and
Farrell contains the earliest printed discussion on this matter \cite{Dumer93}. Since
then, the topic has become increasingly prevalent\footnote{The capacity conjecture for the BEC at constant rate was posed as one of the open problems at the Information Theory Semester at the Simons Institute, Berkeley, in 2015.} \cite{Abbe15,Arikan09,Arikan2010survey,Carlet05,Costello07,Didier06,Mondelli14}.

But the research activity has truly sparked in the recent years, with the emergence of polar codes \cite{Arikan09}. Polar codes are the  close cousins of RM codes. They are derived from the same square matrix but with a different row selection. The more sophisticated and channel dependent construction of polar codes gives them the advantage of being provably capacity-achieving on any BMS channel, due to the polarization phenomenon. Even more impressive is the fact that they possess an efficient decoding algorithm down to the capacity. 

Shortly after the polar code breakthrough, and given the close relationship between polar and RM codes, the hope that RM codes could also be proved to achieve capacity on any BMS started to propagate, both in the electrical engineering and computer science communities. A first confirmation of this was obtained in extremal regimes of the BEC and BSC \cite{Abbe15}, exploiting new bounds on the weight enumerator \cite{Kaufman12}, and a first complete proof for the BEC at constant rate was finally obtained in \cite{Kudekar17}. These however did not exploit the close connection between RM and polar codes,  recently investigated in \cite{AY18}, showing that the RM transform is also polarizing and that a third variant of the RM code achieves capacity on any BMS with the conjecture that this variant is indeed the RM code itself. Nonetheless, the general conjecture that RM codes achieve capacity on any BMS channel remains open.

Polar codes and RM codes can be compared in different ways. In most performance metrics, and putting aside the decoding complexity, RM codes seem to be superior to polar codes \cite{Mondelli14,AY18}. Namely, they seem to achieve capacity universally and with an optimal scaling-law, while polar codes have a channel-dependent construction with a suboptimal scaling-law \cite{Hassani14,Hassani18}. However, RM codes seem more complex both in terms of  obtaining performance guarantees (as evidenced by the long standing  conjectures) and in terms of their decoding complexity. 

The efficient decoding of RM codes is the second main challenge regarding RM codes. Many algorithms have been propose since Reed's algorithm \cite{Reed54}, such as \cite{Green66,Be86,Sidel92,Sakkour05,Dumer04,Dumer06,Dumer06a}, and newer ones have appeared in the post polar code period \cite{Saptharishi17,YA18,Santi18}. Some of these already show that at various block-lengths and rates that are relevant for communication applications, RM codes are indeed competing or even superior to polar codes \cite{Mondelli14,YA18}, even compared to the improved versions considered for 5G \cite{3gpp}.

This survey is meant to overview these recent developments regarding  both the performance guarantees (in particular on weight enumerator and  capacity) and the decoding algorithms for RM codes.


\subsection{Outline}
The organization of this survey is as follows. We start in Section \ref{sect:defs} with the main definitions and basic properties of RM codes (their recursive structure, distance, duality, symmetry group and local properties). We then cover the bounds on their weight enumerator in Section \ref{sec:wt-dist}. In Section \ref{sect:capacity}, we cover results tackling the capacity achievability, using results on weight enumerator, thresholds of monontone Boolean functions and connections to polarization theory. We then cover various decoding algorithms in Section \ref{sect:drm}, providing pseudo-codes for them, and conclude in Section\ref{sect:open} with a selection of open problems.   

\section{Definitions and basic properties} \label{sect:defs}
\subsection{Definition and parameters}
Codewords of binary Reed-Muller codes consist of
the evaluation vectors of
multivariate polynomials over the binary field $\bF_2$. The encoding procedure of RM codes maps the information bits stored in the polynomial coefficients to the polynomial evaluation vector.
Consider the polynomial ring $\bF_2[x_1,x_2,\dots,x_m]$ with $m$ variables. For a polynomial $f\in \bF_2[x_1,x_2,\dots,x_m]$ and a binary vector $\original{z}=(z_1,z_2,\dots,z_m) \in \bF_2^m$, let $\Eval_{\original{z}}(f):=f(z_1,z_2,\dots,z_m)$ be the evaluation of $f$ at the vector $\original{z}$, and let $\Eval(f):=(\Eval_{\original{z}}(f): \original{z}\in\bF_2^m)$ be the evaluation vector of $f$ whose coordinates are the evaluations of $f$ at all $2^m$ vectors in $\bF_2^m$. 
Reed-Muller codes with parameters $m$ and $r$ consist of all the evaluation vectors of polynomials with $m$ variables and degree no larger than $r$.

\begin{definition}
 The $r$-th order (binary) Reed-Muller code $\mathrm{RM}(m,r)$ code is defined as the following set of binary vectors 
 $$
 \mathrm{RM}(m,r):=\{\Eval(f): f\in\bF_2[x_1,x_2,\dots,x_m], \deg (f)\le r  \}\;.
 $$
\end{definition}

Note that in later sections, we might use $\Eval(f)$ and $f$ interchangeably to denote the codeword of RM codes.
For a subset $A\subseteq[m]:=\{1,2,\dots,m\}$, we use the shorthand notation $x_A:=\prod_{i\in A}x_i$.
Notice that we always have $x^n=x$ in $\bF_2$ for any integer $n\ge 1$, so we only need to consider the polynomials in which the degree of each $x_i$ is no larger than $1$. 
All such polynomials with degree no larger than $r$ are linear combinations of the following set of monomials
$$
\{x_A:A\subseteq[m], |A|\le r \}.
$$
There are $\sum_{i=0}^r\binom{m}{i}$ such monomials, and the encoding procedure of $\mathrm{RM}(m,r)$ maps the coefficients of these monomials to their corresponding evaluation vectors. Therefore, $\mathrm{RM}(m,r)$ is a linear code with code length $n=2^m$ and code dimension  $\sum_{i=0}^r\binom{m}{i}$.
Moreover, the evaluation vectors $\{\Eval(x_A):A\subseteq[m], |A|\le r \}$ form a generator matrix of $\mathrm{RM}(m,r)$. Here we give a few examples of generator matrices for RM codes with code length $8$:
\begin{align*}
& \setlength\arraycolsep{2pt}
 \mathrm{RM}(3,0): 
\begin{bmatrix}
\Eval(1)
\end{bmatrix}
= \begin{bmatrix} 
1&1&1&1 & 1&1&1&1
\end{bmatrix}
\quad \quad \quad \quad \quad
\mathrm{RM}(3,1): 
\begin{bmatrix}
\Eval(x_1) \\
\Eval(x_2) \\
\Eval(x_3) \\
\Eval(1) 
\end{bmatrix}
= \begin{bmatrix} 
1&1&1&1 & 0&0&0&0 \\
1&1&0&0 & 1&1&0&0 \\
1&0&1&0 & 1&0&1&0 \\
1&1&1&1 & 1&1&1&1 
\end{bmatrix}      \\
& \setlength\arraycolsep{2pt}
 \mathrm{RM}(3,2): 
\begin{bmatrix}
\Eval(x_1x_2) \\
\Eval(x_1x_3) \\
\Eval(x_2x_3) \\
\Eval(x_1) \\
\Eval(x_2) \\
\Eval(x_3) \\
\Eval(1) 
\end{bmatrix}
= \begin{bmatrix} 
1&1&0&0 & 0&0&0&0 \\
1&0&1&0 & 0&0&0&0 \\
1&0&0&0 & 1&0&0&0 \\
1&1&1&1 & 0&0&0&0 \\
1&1&0&0 & 1&1&0&0 \\
1&0&1&0 & 1&0&1&0 \\
1&1&1&1 & 1&1&1&1 
\end{bmatrix}  
\quad \quad
\mathrm{RM}(3,3): 
\begin{bmatrix}
\Eval(x_1x_2x_3) \\
\Eval(x_1x_2) \\
\Eval(x_1x_3) \\
\Eval(x_2x_3) \\
\Eval(x_1) \\
\Eval(x_2) \\
\Eval(x_3) \\
\Eval(1) 
\end{bmatrix}
= \begin{bmatrix} 
1&0&0&0 & 0&0&0&0 \\
1&1&0&0 & 0&0&0&0 \\
1&0&1&0 & 0&0&0&0 \\
1&0&0&0 & 1&0&0&0 \\
1&1&1&1 & 0&0&0&0 \\
1&1&0&0 & 1&1&0&0 \\
1&0&1&0 & 1&0&1&0 \\
1&1&1&1 & 1&1&1&1 
\end{bmatrix}
\end{align*}
From this example, we can see that $\mathrm{RM}(m,0)$ is the repetition code, and $\mathrm{RM}(m,m)$ consists of all the binary vectors of length $n=2^m$, i.e., the evaluation vectors $\{\Eval(x_A):A\subseteq[m]\}$ form a basis of $\bF_2^n$.

Another equivalent way of defining RM codes is the Plotkin $(u, u+ v)$ construction, which we discuss in detail below in Section~\ref{sect:rscd}.
We also note that RM codes can be defined as geometry codes and refer to \cite{Lin01,Macwilliams77} for further details.

\subsection{Recursive structure and distance}  \label{sect:rscd}
For any polynomial $f\in\bF_2[x_1,x_2,\dots,x_m]$, we can always decompose it into two parts, one part containing $x_m$ and the other not containing $x_m$:
\begin{equation} \label{eq:dcp}
f(x_1,x_2,\dots,x_m)=g(x_1,x_2,\dots,x_{m-1})+x_m h(x_1,x_2,\dots,x_{m-1}).
\end{equation}
Here we use the fact that $x_m^n=x_m$ in $\bF_2$ for any integer $n\ge 1$.

We can also decompose the evaluation vector $\Eval(f)$ into two subvectors, one subvector consisting of the evaluations of $f$ at all $\original{z}=(z_1,\dots,z_m)$'s with $z_m=0$ and the other subvector consisting of the evaluations of $f$ at all $\original{z}=(z_1,\dots,z_m)$'s with $z_m=1$. We denote the first subvector as $\Eval^{[z_m=0]}(f)$ and the second one as $\Eval^{[z_m=1]}(f)$. We also define their sum over $\bF_2$ as 
$\Eval^{[/z_m]}(f):=\Eval^{[z_m=0]}(f)+\Eval^{[z_m=1]}(f)$.
Note that all three vectors $\Eval^{[z_m=0]}(f), \Eval^{[z_m=1]}(f)$ and $\Eval^{[/z_m]}(f)$ have length $2^{m-1}$, and their coordinates are indexed by $(z_1,z_2,\dots,z_{m-1})\in\bF_2^{m-1}$.

By \eqref{eq:dcp}, $\Eval^{[z_m=0]}(f)$ is the evaluation vector of $g(x_1,x_2,\dots,x_{m-1})$, and  $\Eval^{[/z_m]}(f)$ is the evaluation vector of $h(x_1,x_2,\dots,x_{m-1})$.
Now assume that $\Eval(f)\in\mathrm{RM}(m,r)$, or equivalently, assume that $\deg(f)\le r$. Then we have $\deg(g) \le r$ and $\deg(h) \le r-1$. Therefore, $\Eval^{[z_m=0]}(f) \in \mathrm{RM}(m-1,r)$ and 
$\Eval^{[/z_m]}(f)\in \mathrm{RM}(m-1,r-1)$.
This is called the Plotkin $(u, u+ v)$ construction of RM codes, meaning that if we take a codeword $c\in\mathrm{RM}(m,r)$, then we can always divide its coordinates into two subvectors $u$ and $u+v$ of length $2^{m-1}$, where $u\in\mathrm{RM}(m-1,r)$, $v\in\mathrm{RM}(m-1,r-1)$
and $c=(u,u+v)$.

A consequence of this recursive structure is that the code distance of $\mathrm{RM}(m,r)$ is $d=2^{m-r}$. We prove this by induction. It is easy to establish the induction basis. For the inductive step, suppose that the claim holds for $m-1$ and all $r\le m-1$; then we only need to show that the Hamming weight of the vector $(u,u+v)$ is at least $2^{m-r}$ for any $u\in\mathrm{RM}(m-1,r)$ and $v\in\mathrm{RM}(m-1,r-1)$, i.e., we only need to show that $w(u)+w(u+v)\ge 2^{m-r}$, where $w(\cdot)$ is the Hamming weight of a vector. Since $w(u+v)\ge w(v)-w(u)$, we have $w(u)+w(u+v)\ge w(v)$. By the inductive hypothesis, $w(v)\ge 2^{m-r}$, so $w(u)+w(u+v)\ge 2^{m-r}$. This completes the proof of the code distance.

The Plotkin construction also implies a recursive relation between the generator matrices of RM codes. More precisely, let $G(m-1,r-1)$ be a generator matrix of $\mathrm{RM}(m-1,r-1)$ and let $G(m-1,r)$ be a generator matrix of $\mathrm{RM}(m-1,r)$. Then we can obtain a generator matrix $G(m,r)$ of $\mathrm{RM}(m,r)$ using the following relation:
$$
G(m,r)=\begin{bmatrix}
G(m-1,r) & G(m-1,r) \\
0 & G(m-1,r-1)
\end{bmatrix},
$$
where $0$ denotes the all-zero matrix with the same size as $G(m-1,r-1)$.

\begin{table}
\centering
\large
\begin{tabular}{ |c|c|c|c|c| } 
 \hline
 code & code length & code dimension & code distance & dual code \\   \hline
 $\mathrm{RM}(m,r)$ & $n=2^m$ & $k = \sum_{i=0}^r \binom{m}{i}$ & $d = 2^{m-r}$ & $\mathrm{RM}(m,m-r-1)$ \\ 
 \hline
\end{tabular}
 \caption{Important parameters of $\mathrm{RM}(m,r)$.}
 \label{tb:imp}
\end{table}

\subsection{Duality}
The dual code of a binary linear code $\cC\subseteq \bF_2^n$ is defined as\footnote{As we work over $\F_2$ all calculations are done in that field.}
$$
\cC^{\perp}:=\{x\in\bF_2^n: \langle x,c \rangle=0 \quad \forall c\in\cC\},
\text{~~~where~} \langle x,c \rangle=\sum_{i=1}^n x_i c_i .
$$
By definition, the dual code $\cC^{\perp}$ is also a linear code,  and we have 
\begin{equation}\label{eq:sumc}
\dim(\cC)+\dim(\cC^{\perp})=n.
\end{equation}

Next we will show that the dual code of $\mathrm{RM}(m,r)$ is $\mathrm{RM}(m,m-r-1)$.
First, observe that the Hamming weight of every codeword in $\mathrm{RM}(m,m-1)$ is even, i.e., for every $f\in\bF_2[x_1,\dots,x_m]$ with $\deg(f)\le m-1$, we have $\sum_{\original{z}\in\bF_2^m}\Eval_{\original{z}}(f)=0$, where the summation is over $\bF_2$. This is because the Hamming weight of $\Eval(x_A)$ is $2^{m-|A|}$, so $\sum_{\original{z}}\Eval_{\original{z}\in\bF_2^m}(x_A)=0$ for all subsets $A$ with size $|A|\le m-1$. For every $f\in\bF_2[x_1,\dots,x_m]$ with $\deg(f)\le m-1$, we can write it as $f=\sum_{A\subset[m],|A|\le m-1}u_A x_A$. Therefore,
$$
\sum_{\original{z}\in\bF_2^m}\Eval_{\original{z}}(f)=
\sum_{\original{z}\in\bF_2^m}\Eval_{\original{z}}(\sum_{A\subset[m],|A|\le m-1}u_A x_A)
=\sum_{A\subset[m],|A|\le m-1}  \Big( u_A
\sum_{\original{z}\in\bF_2^m}\Eval_{\original{z}}(x_A) \Big) = 0 .
$$

Suppose that $\Eval(f)$ is a codeword of $\mathrm{RM}(m,r)$ and $\Eval(g)$ is a codeword of $\mathrm{RM}(m,m-r-1)$. Then $\deg(f)\le r$ and $\deg(g)\le m-r-1$.
Notice that $\langle \Eval(f), \Eval(g) \rangle = \sum_{\original{z}\in\bF_2^m} \Eval_{\original{z}}(f)\Eval_{\original{z}}(g)
=\sum_{\original{z}\in\bF_2^m} \Eval_{\original{z}}(fg)$.
Since $\deg(fg)\le m-1$, we have
$\langle \Eval(f), \Eval(g) \rangle 
=\sum_{\original{z}\in\bF_2^m} \Eval_{\original{z}}(fg)=0$.
Therefore, every codeword of $\mathrm{RM}(m,m-r-1)$ belongs to the dual code of $\mathrm{RM}(m,r)$, i.e., $\mathrm{RM}(m,m-r-1)\subseteq\mathrm{RM}(m,r)^{\perp}$.

Since
$$
\dim(\mathrm{RM}(m,m-r-1))=\sum_{i=0}^{m-r-1}\binom{m}{i}
=\sum_{i=0}^{m-r-1}\binom{m}{m-i}
=\sum_{i=r+1}^{m}\binom{m}{i},
$$
we have
$$
\dim(\mathrm{RM}(m,r))+\dim(\mathrm{RM}(m,m-r-1))
=\sum_{i=0}^{r}\binom{m}{i} +
\sum_{i=r+1}^{m}\binom{m}{i}
=\sum_{i=0}^{m}\binom{m}{i} = 2^m = n.
$$
Combining this with \eqref{eq:sumc}, we know that $\dim(\mathrm{RM}(m,m-r-1))=\dim(\mathrm{RM}(m,r)^{\perp})$. Thus we conclude that 
$$
\mathrm{RM}(m,m-r-1)=\mathrm{RM}(m,r)^{\perp} .
$$
This in particular tells us that the parity check matrix of $\mathrm{RM}(m,r)$ is the generator matrix of $\mathrm{RM}(m,m-r-1)$. The important parameters of $\mathrm{RM}(m,r)$ are summarized in Table~\ref{tb:imp}.

\subsection{Affine-invariance}
\label{sect:aip}
The automorphism group of a code $\cC$ is the set of permutations under which $\cC$ remains invariant. More precisely, the automorphism group of a code $\cC$ with code length $n$ is defined as $\cA(\cC):=\{\pi\in S_n: \pi(\cC)=\cC\}$, where $\pi(\cC):=\{\pi(c):c\in\cC\}$, and $\pi(c)$ is vector obtained from permuting the coordinates of $c$ according to $\pi$. It is easy to verify that $\cA(\cC)$ is always a subgroup of the symmetric group $S_n$. 

RM codes are affine-invariant in the sense that $\cA(\mathrm{RM}(m,r))$ contains a subgroup isomorphic to the affine linear group.
More specifically, since the codewords of RM codes are evaluation vectors and they are indexed by the vectors $\original{z}\in\bF_2^m$, the affine linear transform $g_{A,\original{b}}: \original{z}\mapsto A\original{z}+\original{b}$ gives a permutation on the coordinates of the codeword when $A$ is an $m\times m$ invertible matrix over $\bF_2$ and $\original{b}\in \bF_2^m$. Next we show that such a permutation indeed belongs to $\cA(\mathrm{RM}(m,r))$.
For any codeword $c \in \mathrm{RM}(m,r)$, there is a polynomial $f\in \bF_2[x_1,\dots,x_m]$ with $\deg(f)\le r$ such that $c=\Eval(f)$. 
Since $g_{A,\original{b}}(c)=\Eval(f \circ g_{A,\original{b}})$ and $\deg(f \circ g_{A,\original{b}})= \deg(f)\le r$, we have $g_{A,\original{b}}(c)\in \mathrm{RM}(m,r)$. Therefore, $g_{A,\original{b}} \in \cA(\mathrm{RM}(m,r))$, and RM codes are affine-invariant.

Recall that in Section~\ref{sect:rscd} we showed that $\Eval^{[/z_m]}(f)\in \mathrm{RM}(m-1,r-1)$ if $\Eval(f)\in \mathrm{RM}(m,r)$. Using the affine-invariant property, we can replace $z_m$ in this statement with any linear combination of $z_1,\dots,z_m$. More specifically, for any $\ell=b_1z_1+\dots+b_mz_m$ with nonzero coefficient vector $\original{b}=(b_1,\dots,b_m)\neq\original{0}$, we define $\Eval^{[\ell=0]}(f),  \Eval^{[\ell=1]}(f), \Eval^{[/\ell]}(f)$ in the same way as 
$\Eval^{[z_m=0]}(f),\Eval^{[z_m=1]}(f), \linebreak[4]
\Eval^{[/z_m]}(f)$.
For any such $\ell$,
one can always find an affine linear transform mapping $z_m$ to $\ell$. Since RM codes are invariant under such affine transforms, we have $\Eval^{[/\ell]}(f)\in \mathrm{RM}(m-1,r-1)$ whenever $\Eval(f)\in \mathrm{RM}(m,r)$.
This observation will be used in several decoding algorithms in Section~\ref{sect:drm}.

\subsection{General finite fields and locality}
The definition of binary RM codes above can be naturally extended to more general finite fields $\bF_q$.
Let us consider the polynomial ring $\mathbb{F}_q[x_1,x_2,\dots,x_m]$ of $m$ variables. 
For a polynomial $f\in\bF_q[x_1,x_2,\dots,x_m]$, we again use  $\Eval(f):=(\Eval_{\original{z}}(f): \original{z}\in\bF_q^m)$ to denote the evaluation vector of $f$. 
Since $x^q=x$ in $\bF_q$, we only need to consider the polynomials in which the degree of each $x_i$ is no larger than $q-1$, and the degree of such polynomials is no larger than $m(q-1)$.
\begin{definition}
Let $n:=q^m$ and $r\le m(q-1)$.
 The $r$-th order $q$-ary Reed-Muller code $\mathrm{RM}_q(m,r)$ code is defined as the following set of vectors in $\mathbb{F}_q^n$:
$$
\mathrm{RM}_q(m,r) := \{\Eval(f): f\in\bF_q[x_1,x_2,\dots,x_m], \deg (f)\le r  \}.
$$
\end{definition}

A locally decodable code (LDC) is an error-correcting code that allows a single bit of the original message to be decoded with high probability by only examining (or querying) a small number of bits of a possibly corrupted codeword.
RM codes over large finite fields are the oldest and most basic family of LDC.
When RM codes are used as LDC, the order $r$ of RM codes is typically set to be smaller than the field size $q$.
At a high level, local decoding of RM codes requires us
to efficiently correct the evaluation of a multivariate polynomial at a given point $\original{z}$ from the evaluation of the same polynomial at a small number of other points.
The decoding algorithm chooses a set  of points on an affine line that passes through $\original{z}$. It then queries the codeword for the evaluation of the polynomial on the points in this set and interpolates that polynomial to obtain the evaluation at $\original{z}$.
 We refer the readers to \cite{Yekhanin12} for more details on this topic.

\subsection{Notations}
We summarize here a few notations and parameters used in the paper.
We use $n=2^m$ to denote the code length (or blocklength) and $k=\sum_{i=0}^r\binom{m}{i}$ to denote the code dimension of RM codes.
We use $\mathbf{1}[\cdot]$ for the indicator function.
We also use the  notation $W:X\to Y$ to denote a communication channel that maps input random variable $X$ to output random variable $Y$, with the channel transition probability $W(y|x)=\mathbb{P}(Y=y|X=x)$.
We use $|\cdot|$ to denote the Hamming weight of a binary vector, i.e., the number of $1$'s in this vector, and we use $\wt{\cdot}$ to denote its relative Hamming weight, i.e., Hamming weight divided by the length of the  vector.
We use $h_z(\cdot)$ to denote the extrinsic information transfer (EXIT) function and 
$H_2(x):=-x\log_2(x)-(1-x)\log_2(1-x)$ to denote the binary entropy function.
Finally, in Section~\ref{sect:drm}, if two vectors belong to $\bF_2^m$, then by default their sum is over $\bF_2$ unless mentioned otherwise.

\section{Weight enumerator}\label{sec:wt-dist}

	In this section we survey known results on the weight distribution of binary Reed-Muller codes. We  first give the basic definition and discuss known results,  then we explain the ideas of the proofs of some of the main results. In Section \ref{sec:wt-to-cap}, we explain how these results can be used to prove that Reed-Muller codes achieve capacity for the BEC and the BSC for a certain range of parameters.\\

	The \emph{weight enumerator} of a code measures how many codewords of a given weight are there in the code. 
	\begin{definition}[Weight, Bias, Weight enumerator]\label{def:wt}\sloppy
		For a codeword $f$ we denote its hamming weight by $|f| = |\{z \in \bF_2^m \mid f(z)=1\}|$, and its relative weight, which we refer to simply as weight, with $\weight{f} = |f|/2^m = \E_Z[f(Z)]= \bP_Z[f(Z) =1]$ where $Z$ is drawn uniformly at random in $\bF_2^m$. When no distribution is mentioned, we draw the underlying random variable, in this case $Z$, according to the uniform distribution. We also denote the bias of a codeword by $\bias{f} = \E_Z[(-1)^{f(Z)}]$.    
		For  $\beta \in [0,1]$ we define $\weightdistribution{r}{m}{\beta} \triangleq \abs{\set{f \in \polynomials{r}{m} \mid \weight{f} = \beta}}$ and 
		$\weightdistribution{r}{m}{\leq\beta} \triangleq \abs{\set{f \in \polynomials{r}{m} \mid \weight{f}\leq \beta}}$.
	\end{definition}

	Thus, $\wdist{r}{m}{\leq \beta}$ counts the number of codewords of (relative) weight at most $\beta$. In particular, for $\beta < 2^{-r}$, $\weightdistribution{r}{m}{\leq\beta}=1$.
	The weight enumerator is one of the most useful measures for proving that a code achieves capacity as there are formulas that relate the distribution of weights  in a code to the probability of correcting random erasures or random errors (see Section~\ref{sec:wt-to-cap}). Intuitively, if the weight enumerator behaves similarly to that of a random code then we can expect the code to achieve capacity in a similar manner to random codes. Clearly, RM codes are quite different than random codes. 
	In particular, for a random code we expect that besides the zero codeword, every other codeword will have weight roughly $(1/2\pm \epsilon)$ (where $\epsilon$ is a constant depending on the rate), whereas RM codes contain many codewords of small weight. Nevertheless, as we shall see, if one can show that the weight enumerator drops quickly for $\beta<1/2$ then this may be sufficient for proving that the code achieves capacity. Thus, proving strong upper bound on the weight enumerator for weights slightly smaller than $1/2$ is an interesting and in some cases also a fruitful approach to proving that RM codes achieve capacity.
	
	\subsection{Results}\label{sec:weight:result}
	
	Computing the weight enumerator of RM codes is a well known problem that is open in most ranges of parameters. In 1970 Kasami and Tokura \cite{kasami1970weight} characterized all codewords of weight less than twice the minimum distance. This was  later improved in \cite{kasami1976weight} to all codewords of weight less than $2.5$ times the minimal distance. For degrees larger than $3$ they obtained the following result.
	\begin{theorem}[Theorem 1 of \cite{kasami1976weight}]\label{thm:wt<2.5}
		If $r \ge 3$ and $f\in \rmrm$ satisfies $\wt{f} < 2.5\cdot 2^{-r}$ then, up to an invertible linear transformation, 
		$$f = x_1 g(x_3,\ldots,x_m) + x_2 h(x_3,\ldots,x_m) + x_1 x_2 k(x_3,\ldots,x_m)\;,$$
		where $\deg(g)=\deg(h)=r-1$ and $\deg(k)=r-2$.
	\end{theorem}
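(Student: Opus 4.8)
The plan is to induct on $m$ (simultaneously over all admissible $r$), using the Plotkin decomposition together with affine-invariance. The first step is to reformulate the conclusion geometrically: if $f$ vanishes on the codimension-two flat $\{x_1=x_2=0\}$, then $f$ lies in the ideal generated by $x_1,x_2$, and collecting the $x_2$-free and $x_1$-free parts shows $f=x_1g(x_3,\dots,x_m)+x_2h(x_3,\dots,x_m)+x_1x_2k(x_3,\dots,x_m)$; conversely any $f$ of this shape vanishes on that flat. Since we may apply any invertible affine map to $f$ without changing its weight, the theorem is equivalent to the statement that every $f\in\rmrm$ with $\wt{f}<2.5\cdot 2^{-r}$ vanishes on some codimension-two affine subspace. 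The base case $m=r$ (points and pairs of points) is immediate, and the recursion will bottom out at order two, where the weight distribution of quadratic forms is classical and the claim is checked directly.

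For the inductive step, by affine-invariance we may split $f$ along any nonzero linear form $\ell$. Set $f_0=\Eval^{[\ell=0]}(f)$ and $v=\Eval^{[/\ell]}(f)$; then $f_0\in\mathrm{RM}(m-1,r)$ and $v\in\mathrm{RM}(m-1,r-1)$, while $\Eval^{[\ell=1]}(f)=f_0+v$. A direct count at each point of $\bF_2^{m-1}$ according to the value of $v$ gives the key identity $|f|=|v|+2\,|\support{f_0}\setminus\support{v}|$. If some $\ell$ yields $v=0$, then $f$ is constant along $\ell$, so $|f|=2|f_0|$ and $f_0\in\mathrm{RM}(m-1,r)$ has weight below $2.5$ times its minimum distance $2^{m-1-r}$; the inductive hypothesis then applies to $f_0$, and lifting its vanishing flat through the $\ell$-direction gives the structure of $f$. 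So we may assume $v\ne 0$ for every $\ell$, and we choose $\ell$ to control $|v|$.

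The main case is $v\ne 0$. Since the minimum distances of $\rmrm$ and of $\mathrm{RM}(m-1,r-1)$ both equal $2^{m-r}$, the identity gives $2^{m-r}\le |v|\le |f|<2.5\cdot 2^{m-r}$, so $v$ is a codeword of $\mathrm{RM}(m-1,r-1)$ of weight below $2.5$ times its own minimum distance. The inductive hypothesis (or, when $r-1=2$, the classical quadratic weight distribution) therefore pins down $v$: up to an affine map it vanishes on a codimension-two flat of $\bF_2^{m-1}$ and is supported on a structured region. Moreover $2\,|\support{f_0}\setminus\support{v}|=|f|-|v|<1.5\cdot 2^{m-r}$, so $\support{f_0}$ lies almost entirely inside $\support{v}$. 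Combining the known location of $\support{v}$ with the degree bound $\deg(f_0)\le r$ and this near-containment forces $f_0$ into a compatible position, from which one exhibits a single codimension-two flat of $\bF_2^m$ on which both halves of $f$ vanish.

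I expect the main obstacle to be exactly this gluing, together with the surrounding case analysis: the slack term $2\,|\support{f_0}\setminus\support{v}|$ is not forced to vanish, so $\support{f_0}$ need not sit cleanly inside $\support{v}$, and one must rule out the ``exotic'' configurations whose weight could a priori fall in the band $[2\cdot 2^{m-r},\,2.5\cdot 2^{m-r})$ without being of the claimed form. This requires tracking the second-order term tightly and exploiting the freedom in $\ell$ to align the structure of $v$ with that of $f_0$. Once a vanishing codimension-two flat is produced and normalized to $\{x_1=x_2=0\}$, the degree bookkeeping is routine: $\deg(g)=\deg(h)=r-1$ and $\deg(k)=r-2$ are forced by demanding that $f$ genuinely have degree $r$, since strictly smaller degrees would place $f$ in a lower-weight family already covered by the induction.
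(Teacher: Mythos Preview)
The paper does not give its own proof of this theorem; it is quoted from \cite{kasami1976weight} as background for the weight-enumerator discussion and is never argued in the text. So there is no ``paper's proof'' to compare your proposal against.

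On the proposal itself: the geometric reformulation (the conclusion is equivalent to finding a codimension-two affine flat on which $f$ vanishes) and the Plotkin identity $|f|=|v|+2\,|\support{f_0}\setminus\support{v}|$ are both correct and are the natural opening moves. But you have already put your finger on the real gap: the gluing step is not carried out, and it is essentially the whole theorem. Knowing that $v\in\mathrm{RM}(m-1,r-1)$ vanishes on some codimension-two flat of $\bF_2^{m-1}$, and that $\support{f_0}$ spills over $\support{v}$ by fewer than $0.75\cdot 2^{m-r}$ points, does not by itself force $f_0$ to vanish on any particular flat, much less one that lifts compatibly with the structure of $v$ to a codimension-two flat in $\bF_2^m$. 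In \cite{kasami1976weight} this is handled by a lengthy case analysis on the possible shapes of $v$ (using the earlier Kasami--Tokura classification below twice the minimum distance) and on how $f_0$ can sit relative to each shape; the argument does not fall out of a single inequality, and your outline does not supply a substitute.

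A smaller point: your closing ``degree bookkeeping'' does not actually yield $\deg(g)=\deg(h)=r-1$ and $\deg(k)=r-2$ as \emph{equalities}. Having $\deg(f)=r$ only forces at least one of the three to be attained; nothing in your argument rules out, say, $\deg(g)=r-1$ with $\deg(h)<r-1$ and $\deg(k)<r-2$. If the theorem is to be read literally, this also needs further justification.
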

By counting the number of such representations one can get a good estimate on the number of codewords of such weight.
	
	No significant progress was then made for over thirty years until the work of Kaufman, Lovett and Porat \cite{Kaufman12}  gave, for any constant degree $r=O(1)$,  asymptotically tight bounds on the weight enumerator of RM codes of degree $r$. 
	\begin{theorem}[Theorem 3.1 of  \cite{Kaufman12}]\label{thm:klp}
		Let $1\leq \ell\leq r-1$ and $0<\epsilon\leq 1/2$. It holds that
		$$\left(1/\epsilon\right)^{c_r \cdot m^{r-\ell}} \leq \wdist{r}{m}{\leq (1-\epsilon)2^{-\ell}} \leq \left(1/\epsilon\right)^{C_r \cdot m^{r-\ell}} \;,$$
		where $c_r,C_r>0$ are constants that depend only on $r$.
	\end{theorem}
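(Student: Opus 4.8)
The plan is to prove the two inequalities by completely different means and to route both through a single auxiliary quantity, the number of \emph{biased} polynomials
\[
B_{d,N}(\epsilon) := \abs{\set{q \in \reedmuller{d}{N} : \bias{q} \ge \epsilon}},
\]
which I claim satisfies $B_{d,N}(\epsilon) = (1/\epsilon)^{\Theta_d(N^{d-1})}$. The bridge to weights is the elementary identity $\bias{q} = 1-2\wt{q}$, so an $\epsilon$-scale bias lower bound is exactly the weight upper bound we care about.

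\textbf{Lower bound.} I would first set up a subcube construction. Writing $f = x_1 \cdots x_{\ell-1}\cdot q(x_\ell,\dots,x_m)$ with $\deg q \le r-\ell+1$, the codeword $f$ lies in $\rmrm$ and is supported on the affine subspace $\{x_1=\cdots=x_{\ell-1}=1\}$ of relative size $2^{-(\ell-1)}$, so $\wt{f} = 2^{-(\ell-1)}\wt{q}$. The constraint $\wt{f}\le (1-\epsilon)2^{-\ell}$ becomes $\wt{q}\le (1-\epsilon)/2$, i.e. $\bias{q}\ge\epsilon$; since distinct $q$ give distinct $f$, this yields $\wdist{r}{m}{\le (1-\epsilon)2^{-\ell}} \ge B_{r-\ell+1,\,m-\ell+1}(\epsilon)$. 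It then remains to bound $B_{d,N}(\epsilon)$ from below by exhibiting many biased polynomials. The prototype is the quadratic case $q = \sum_{i=1}^{\rho}\ell_{2i-1}\ell_{2i}$ built from $\rho=\Theta(\log(1/\epsilon))$ freely chosen affine forms, with bias $2^{-\rho}\ge\epsilon$, giving $2^{\Theta(\rho N)}=(1/\epsilon)^{\Theta(N)}$ distinct quadratics; the general-degree analogue realizes $q$ as a degree-$d$ function of $\rho$ freely chosen degree-$(d-1)$ polynomials, contributing $(1/\epsilon)^{\Theta(N^{d-1})}$, which plugged in gives the claimed $(1/\epsilon)^{c_r m^{r-\ell}}$.

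\textbf{Upper bound.} This is where the real work lies. The natural route is induction on $r$ through the discrete derivatives $D_y f(x)=f(x)+f(x+y)$, which have degree $\le r-1$ and satisfy $\wt{D_y f}\le 2\wt{f}\le (1-\epsilon)2^{-(\ell-1)}$. Thus each derivative is a degree-$(r-1)$ codeword whose weight is bounded away from $2^{-(\ell-1)}$, so by the inductive hypothesis it lies in a set of size $(1/\epsilon)^{C_{r-1}m^{r-\ell}}$; the base case $\ell=r-1$ (weight below twice the minimum distance) is handled by the Kasami--Tokura classification of Theorem~\ref{thm:wt<2.5}, counted directly. Since $f$ is recovered from its $m$ first-order derivatives $D_{e_1}f,\dots,D_{e_m}f$ together with $f(0)$ by flipping coordinates one at a time, one immediately obtains a finite bound.

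\textbf{Main obstacle.} The naive reconstruction from $m$ derivatives costs an $m$-th power and inflates the exponent from $m^{r-\ell}$ to $m^{r-\ell+1}$, off by the crucial factor of $m$, and this loss compounds across the levels of the induction. Removing it is the heart of the matter: one must exploit that the family $\{D_y f\}$ is itself highly structured rather than $m$ arbitrary low-weight polynomials, equivalently that a low-weight (highly biased) codeword has \emph{bounded rank} --- it is a function of only $O_r(\log(1/\epsilon))$ polynomials of degree $<r$. Establishing this bias-implies-low-rank statement with the correct logarithmic dependence on $1/\epsilon$ is the genuinely hard step (elementary for quadratics via the rank of the associated bilinear form, but requiring the analytic inverse machinery for degree $\ge 3$), and it is precisely what upgrades the count to the tight $(1/\epsilon)^{C_r m^{r-\ell}}$, matching the construction above up to the constants $c_r,C_r$.
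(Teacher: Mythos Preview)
Your lower bound is essentially the right construction and matches what is done in the literature (the subcube reduction $f=x_1\cdots x_{\ell-1}\cdot q$ together with an explicit family of biased polynomials); the only thing you gloss over is the bias calculation for the ``general-degree analogue,'' which for $d\ge 3$ is not automatic and in the paper's treatment (Theorem~\ref{thm:biaslb}) is handled by a probabilistic argument over random choices of the $g_i$.

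The upper bound, however, has a genuine gap. The step you single out as the crux --- that a polynomial with bias $\ge\epsilon$ is an \emph{exact} function of $O_r(\log(1/\epsilon))$ polynomials of degree $<r$ --- is not known with this quantitative dependence for degree $\ge 3$. The ``analytic inverse machinery'' you invoke (Green--Tao, Kaufman--Lovett and descendants) gives rank bounds that are tower-type in the original papers and, even after the recent breakthroughs on analytic versus partition rank, only polynomial in $\log(1/\epsilon)$ with exponent strictly larger than $1$. Plugging rank $\rho=(\log(1/\epsilon))^{c}$ with $c>1$ into your count gives $2^{\rho\cdot O_r(m^{r-1})}$, which is $(1/\epsilon)^{(\log(1/\epsilon))^{c-1}\cdot O_r(m^{r-1})}$ and strictly exceeds $(1/\epsilon)^{C_r m^{r-\ell}}$. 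So the approach as stated does not close.

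The actual argument in \cite{Kaufman12}, described here as the $\epsilon$-net approach (Section~\ref{sec:eps-net}), sidesteps exact rank entirely and is much more elementary. The key observation is that for any $x$, $\Pr_y[\Delta_y f(x)=f(x)]=1-\wt{f}$, so by Chernoff (Lemma~\ref{lem:KLP}) or Chebyshev on a subspace (Lemma~\ref{KPL2}) the majority of $t=O(\log(1/\epsilon'))$ random derivatives is $\epsilon'$-close to $f$. This yields an $\epsilon'$-net whose elements are explicit Boolean functions of $t$ polynomials of degree $\le r-1$; taking $\epsilon'=2^{-r-1}$ (half the minimum distance) forces each ball to contain at most one codeword, and the net size alone bounds the enumerator. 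The point is that \emph{approximation} by $O(\log(1/\epsilon'))$ lower-degree polynomials follows from a one-line concentration inequality, whereas the \emph{exact} low-rank statement you propose is a deep structural theorem whose known bounds are too weak for the conclusion.
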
	
	Note that the relative weight of a codeword in $\rmrm$ is between $2^{-r}$ and $1$, hence we consider only $1\leq \ell\leq r-1$ in the statement of the theorem. 
	Unfortunately, as the degree gets larger, the estimate in Theorem~\ref{thm:klp} becomes less and less tight. Building on the techniques of \cite{Kaufman12},  Abbe, Shpilka and Wigderson \cite{Abbe15} managed to get better bounds for degrees up to $m/4$, which they used to show that RM codes achieve capacity for the BEC and the BSC for degrees $r=o(m)$. 
		\begin{theorem}[Theorem 3.3 of  \cite{Abbe15}]\label{thm:asw}
		Let $1\leq \ell\leq r-1<m/4$ and $0<\epsilon\leq 1/2$. Then,\footnote{We use the notation  ${n\choose \leq k} \triangleq \sum_{i=0}^{k}{n\choose i}$.}
		$$\wdist{r}{m}{\leq (1-\epsilon)2^{-\ell}} \leq \left(1/\epsilon\right)^{O\left( \ell^4 {m-\ell \choose \leq r-\ell}\right)} \;.$$
	\end{theorem}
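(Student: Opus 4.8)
\emph{Proof plan.} The plan is to bound the number of low-weight codewords by showing that each such codeword is almost determined by a residual polynomial of degree $r-\ell$ in $m-\ell$ variables, and then to count. The exponent $\binom{m-\ell}{\le r-\ell}$ in the target bound is exactly $\dim\reedmuller{r-\ell}{m-\ell}$, which signals that after peeling off $\ell$ ``affine directions'' the remaining freedom should be that of a single degree-$(r-\ell)$ polynomial; the base $1/\epsilon$ and the overhead $\ell^4$ should come, respectively, from the slack between the threshold $(1-\epsilon)2^{-\ell}$ and the clean value $2^{-\ell}$, and from the accumulated bookkeeping over $\ell$ peeling steps.

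First I would set up the recursion through the Plotkin $(u,u+v)$ decomposition of Section~\ref{sect:rscd}. Writing $c=(u,u+v)$ with $u\in\reedmuller{r}{m-1}$ and $v\in\reedmuller{r-1}{m-1}$, the relative weights (now taken in $\bF_2^{m-1}$) satisfy $\wt{c}=\tfrac12\big(\wt{u}+\wt{u+v}\big)$, and the distance inequality $\wt{u+v}\ge\wt{v}-\wt{u}$ used to prove $d=2^{m-r}$ gives $\wt{v}\le\wt{u}+\wt{u+v}=2\wt{c}$. Hence $\wt{c}\le(1-\epsilon)2^{-\ell}$ forces $\wt{v}\le(1-\epsilon)2^{-(\ell-1)}$, so the ``derivative part'' $v$ is itself a low-weight codeword one level down, in $\reedmuller{r-1}{m-1}$. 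Crucially this reduction preserves both $r-\ell$ and $m-\ell$, so the number of admissible $v$ is bounded by induction as $(1/\epsilon)^{O((\ell-1)^4\binom{m-\ell}{\le r-\ell})}$, exactly the right main term.

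The heart of the argument is a structural lemma controlling, for each fixed $v$, the number of completions $u$. Here I would use the approximation-by-lower-degree technique highlighted in the introduction: a degree-$r$ codeword whose weight is well below $2^{-\ell}$ has bias $\bias{f}=1-2\wt{f}$ so large that $f$ must correlate with, and in fact be rigidly captured by, a product of $\ell$ independent affine forms times a degree-$(r-\ell)$ factor. Quantitatively, the smaller $\epsilon$ is, the more tightly $f$ is pinned to this product structure, which is what converts the slack into the base $1/\epsilon$ of the count rather than into the exponent's main term. For the base case $\ell=r-1$ the threshold $(1-\epsilon)2^{-(r-1)}<2\cdot 2^{-r}$ lies below twice the minimum distance, so the Kasami--Tokura classification (Theorem~\ref{thm:wt<2.5}) gives the product structure outright and the count follows by enumerating the $O(\ell m)$ bits needed to specify the $\ell$ affine forms together with the residual degree-$(r-\ell)$ polynomial. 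Summing the recursion over split directions is legitimate because RM codes are affine-invariant (Section~\ref{sect:aip}): one is free to choose the Plotkin direction $\ell=b_1z_1+\cdots+b_mz_m$ realizing the most favorable peeling.

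The hard part will be making the structural lemma lossless enough that the number of $u$-completions, for a fixed $v$, does not reintroduce the larger term $\binom{m-\ell}{\le r-\ell+1}$ that a naive application of the theorem to $u\in\reedmuller{r}{m-1}$ would produce; equivalently, one must exploit the \emph{coupling} between $u$ and $u+v$ (both low weight simultaneously) to keep the main term at $\binom{m-\ell}{\le r-\ell}$ and to collect the $\ell^4$ overhead cleanly across the $\ell$ levels of recursion. This is precisely where the hypothesis $r-1<m/4$ enters: it keeps the tail estimates on the residual weight distribution --- which feed the derivative and restriction bounds --- small enough that the error terms never dominate, and it is the source of the improvement over the weaker bound of Theorem~\ref{thm:klp}, whose exponent degrades as the degree grows.
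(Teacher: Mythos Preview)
Your Plotkin-based recursion is not the route the paper takes, and the step you yourself flag as ``the hard part'' is a genuine gap rather than a detail.

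The paper's proof is the $\epsilon$-net argument of Section~\ref{sec:eps-net}. One does not peel off a \emph{single} direction. Instead, for $f$ with $\wt{f}\le 2^{-k}$ one takes $t=O(\log(1/\epsilon))$ independent $(k-1)$-tuples of directions $\cY_1,\dots,\cY_t$ and shows (Lemma~\ref{lem:KLP}) that $\Majority\bigl(\Delta_{\cY_1}f,\dots,\Delta_{\cY_t}f\bigr)$ is $\epsilon$-close to $f$. The set of all such majorities is an $\epsilon$-net $\cN_{k-1,t}$, and Lemma~\ref{lem:basic counting argument} yields a recursion in the \emph{weight level} (Corollary~\ref{cor:recursion for weight}), not in $(m,r)$. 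The refinement of \cite{Abbe15} over \cite{Kaufman12} is that each $\Delta_{\cY}f$ can be viewed as a polynomial on $\text{span}(\cY)^\perp$, hence lies in $\reedmuller{r-k}{m-k}$, giving $|\cN_{k,t}|\le 2^{mkt}\,|\reedmuller{r-k}{m-k}|^t$; this is where $\binom{m-\ell}{\le r-\ell}$ enters. The base $1/\epsilon$ comes from $t$, and the $\ell^4$ from the product over weight levels.

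Your recursion is correct as far as it goes: writing $c=(u,u+v)$ does force $v=\Delta_{e_m}f\in\reedmuller{r-1}{m-1}$ to have weight at most $(1-\epsilon)2^{-(\ell-1)}$, and induction bounds the number of such $v$ with the right exponent. But then it stalls. For fixed $v$, the completion $u$ lies in $\reedmuller{r}{m-1}$ (degree $r$, not $r-1$) subject only to $\wt{u}+\wt{u+v}\le(1-\epsilon)2^{-(\ell-1)}$, and nothing you have written bounds the number of such $u$ by less than $\wdist{r}{m-1}{\le(1-\epsilon)2^{-(\ell-1)}}$, whose exponent is $\binom{m-\ell}{\le r-\ell+1}$ --- too large by exactly the margin you note. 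The ``coupling'' you invoke is not realized by any mechanism in the proposal: a single derivative $v=\Delta_b f$ does \emph{not} approximate $f$ (for a generic $b$ it need not even correlate with $f$), so no choice of one Plotkin direction, however favorable under affine-invariance, pins down $u$ from $v$. What actually pins $f$ down is the majority of $t$ independent higher-order derivatives, and once you supply that step you are executing the paper's $\epsilon$-net proof rather than a Plotkin recursion. Your Kasami--Tokura base case is also not the one used: the paper's recursion bottoms out when the ball radius drops below half the minimum distance (so each ball contains at most one codeword), not via the structural classification of Theorem~\ref{thm:wt<2.5}.
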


	Sberlo and Shpilka \cite{Sberlo18} polished the techniques of \cite{Abbe15} and managed to obtain good estimates for every degree.
		\begin{theorem}[Theorem 1.2 of \cite{Sberlo18}]
		\label{thm:sberlo}
		Let $\gamma = r/m$. Then, for every integer $\ell$,
		$$ \wdist{r}{m}{\leq 2^{-\ell}}\leq 2^{\per{O(m^4) + 17(c_{\gamma}\ell+d_{\gamma})\gamma^{\ell-1}\binom{m}{\leq r}}} \;,$$
		where $c_{\gamma} = \frac{1}{1-\gamma}$ and $d_{\gamma} = \frac{2-\gamma}{(1-\gamma)^2}$.
	\end{theorem}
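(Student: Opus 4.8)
The plan is to prove the bound by induction on $\ell$, with the Plotkin $(u,u+v)$ decomposition of Section~\ref{sect:rscd} as the workhorse for driving the parameters down. Write $W(m,r,\ell):=\wdist{r}{m}{\leq 2^{-\ell}}$. Given $f\in\mathrm{RM}(m,r)$ with $\wt{f}\leq 2^{-\ell}$, decompose $f=g+x_m h$ so that the two halves of the evaluation vector are $g$ and $g+h$, with $g\in\mathrm{RM}(m-1,r)$ and $h\in\mathrm{RM}(m-1,r-1)$. Since $\wt{f}=\tfrac12(\wt{g}+\wt{g+h})$, the weight budget forces $\wt{g}+\wt{g+h}\leq 2^{-(\ell-1)}$, and hence each of $\wt{g}$, $\wt{g+h}$, and (by the triangle inequality) $\wt{h}$, is at most $2^{-(\ell-1)}$. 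The derivative $h$ therefore lives in the strictly smaller code $\mathrm{RM}(m-1,r-1)$ at threshold $\ell-1$; this is the branch that simultaneously lowers $m$, $r$, and $\ell$, and it is what should ultimately produce the factor $\gamma^{\ell-1}$, since $\gamma^{\ell-1}\binom{m}{\leq r}\approx\binom{m-\ell+1}{\leq r-\ell+1}$ is exactly the dimension one reaches after $\ell-1$ such steps.

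The first obstacle is that the obvious recursion is worthless. Bounding the number of pairs $(g,h)$ by the product of the two marginal counts gives $W(m,r,\ell)\leq W(m-1,r,\ell-1)\cdot W(m-1,r-1,\ell-1)$, and unrolling this across the resulting binary tree yields an exponent $\sum_{j=0}^{\ell-1}\binom{\ell-1}{j}\binom{m-\ell+1}{\leq r-j}$, which collapses by Vandermonde's identity to exactly $\binom{m}{\leq r}$ --- the trivial count of all codewords. The whole point is thus to exploit the \emph{joint} constraint that $g$ and $g+h$ are both light, rather than treating the two halves independently. Following and sharpening the strategy of Theorem~\ref{thm:asw}, I would establish a structural lemma of the form: a light degree-$r$ codeword is essentially determined by low-complexity data, namely its restriction to a suitably chosen flat of codimension about $\ell-1$ on which it degenerates to degree roughly $r-\ell+1$, together with a small amount of correction data. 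Quantitatively, such a lemma should let one replace the useless product recursion by one in which each halving of the weight threshold costs only a single $(m,r)\mapsto(m-1,r-1)$ step plus a polynomial-in-$m$ overhead, so that iterating $\ell-1$ times lands at $\mathrm{RM}(m-\ell+1,r-\ell+1)$ and reproduces the $\gamma^{\ell-1}\binom{m}{\leq r}$ scaling.

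Granting the structural lemma, the remaining work is careful bookkeeping to extract the precise constants. The factor $\gamma^{\ell-1}$ comes from the telescoping ratio $\prod_{i=0}^{\ell-2}\frac{r-i}{m-i}$; since $\gamma=r/m$ is \emph{not} constant along the recursion (the derivative branch replaces $r/m$ by $(r-1)/(m-1)$), the inductive hypothesis must be stated with the $m$- and $r$-dependent coefficients $c_\gamma=\tfrac{1}{1-\gamma}$ and $d_\gamma=\tfrac{2-\gamma}{(1-\gamma)^2}$ already built in, and one checks that $(c_\gamma\ell+d_\gamma)\gamma^{\ell-1}$ dominates the corresponding combination at the children after accounting for the cumulative-sum corrections $\binom{m-\ell}{\leq r-\ell+1}$ versus $\binom{m-\ell}{\leq r-\ell}$; the constant $17$ is the slack absorbed in verifying this single scalar inequality. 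The additive $O(m^4)$ term should collect the lower-order costs that do not compound multiplicatively: the number of flats and linear changes of coordinates used at each of the $\leq\ell\leq m$ levels (each costing $2^{O(m^2)}$), together with the base cases $\ell\leq 1$ (where $W(m,r,1)\leq 2^{\binom{m}{\leq r}}$ is the trivial half-weight bound) and the small-$\ell$ regime, where one can fall back on the Kasami--Tokura description of Theorem~\ref{thm:wt<2.5}.

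The hard part will be the structural lemma and, intertwined with it, the scalar induction on the coefficients. Proving that low weight genuinely forces degeneration to lower degree on a flat --- rather than merely bounding each half's weight, which we saw is insufficient --- is the crux: it is precisely the step where the weight hypothesis must be converted into a statement about the \emph{dimension} of the space of light codewords, and it is delicate because at threshold $2^{-\ell}$ with $\ell\leq r$ the weight is not small enough to force any iterated derivative to vanish outright. Getting the coefficient inequality to close with an absolute constant (here $17$), uniformly in $\gamma\in(0,1)$ and for every integer $\ell$, while keeping all non-compounding losses inside a single $O(m^4)$, is the other place where the analysis must be done with care.
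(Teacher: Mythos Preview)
Your proposal takes a genuinely different route from the paper, and as written it has a real gap rather than merely a different flavor.

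The paper (following \cite{Kaufman12,Abbe15,Sberlo18}) does \emph{not} use the Plotkin $(u,u+v)$ recursion. It uses the $\epsilon$-net approach of Section~\ref{sec:eps-net}: for $f$ with $\weight{f}\leq 2^{-k}$, Lemma~\ref{lem:KLP} says that $f$ is $\epsilon$-approximated by the majority of $t=\lceil 17\log(1/\epsilon)\rceil$ iterated discrete derivatives $\Delta_{\cY_1}f,\ldots,\Delta_{\cY_t}f$ of order $k-1$. The set $\mathcal{N}_{k-1,t}$ of all such majority functions is the $\epsilon$-net; Lemma~\ref{lem:basic counting argument} gives $\wdist{r}{m}{\leq 2^{-\ell}}\leq |\mathcal{N}_{\ell-1,t}|\cdot\wdist{r}{m}{\leq 2^{-\ell-1}}$, and Corollary~\ref{cor:recursion for weight} unrolls this to $\prod_{j=\ell}^{r}|\mathcal{N}_{j-1,17(j+2)}|$. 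The constant $17$ is precisely the Chernoff constant in the choice of $t$, not ``slack absorbed in a scalar inequality''; the $(c_\gamma\ell+d_\gamma)\gamma^{\ell-1}$ term comes from bounding $|\mathcal{N}_{k,t}|\leq 2^{mkt}|\reedmuller{r-k}{m-k}|^t$ (with the Sberlo--Shpilka refinement that different derivatives share monomials) and summing the resulting exponents; the $O(m^4)$ collects the $2^{mkt}$ factors specifying the derivative directions.

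Your approach, by contrast, rests entirely on an unproved and unspecified ``structural lemma'' asserting that a light codeword is determined by its restriction to a codimension-$(\ell-1)$ flat on which it has degree $\approx r-\ell+1$, plus small corrections. You correctly observe that the naive Plotkin product recursion collapses to the trivial bound, but you have not supplied the mechanism that replaces it; saying ``such a lemma should let one replace the useless product recursion'' is exactly the missing step. Note that a single Plotkin step drops the weight threshold from $2^{-\ell}$ only to $2^{-(\ell-1)}$ on \emph{each} half, so after $\ell-1$ steps along any branch you are back at weight $\leq 1/2$, i.e., no constraint at all --- there is no path in the Plotkin tree that lands you at $\reedmuller{r-\ell+1}{m-\ell+1}$ with a nontrivial weight restriction, which is why the $\epsilon$-net uses \emph{order-$(\ell-1)$} derivatives in one shot rather than iterated single derivatives. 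Until you can state and prove the structural lemma precisely, the argument does not go through.
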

	To better understand the upper bound, note that the leading term in the exponent is $O\per{\ell \gamma^{\ell-1}\binom{m}{\leq r}}$ (when $0<\gamma<1$ is a constant)  whereas, if we were to state Theorem~\ref{thm:asw} in the same way, then its leading term would be $O\per{\ell^4 \gamma^{\ell-1}\binom{m}{\leq r}}$.
	
		Recently, Samorodnitsky \cite{Samorod18} proved a remarkable general result regarding the weight enumerator of codes that either they or their dual code achieve capacity for the BEC. His techniques are completely different than the techniques of \cite{Kaufman12,Abbe15,Sberlo18}.
	\begin{theorem}[Proposition 1.6 in \cite{Samorod18}]\label{thm:sam}
		Let $C\subseteq\{0,1\}^n$ a linear code of rate $R$. Let $(a_0,\ldots,a_n)$ be the distribution of hamming weights\footnote{I.e., there are exactly $a_i$ codewords whose hamming weight is exactly $i$ in $C$.} in $C$.
		For $0 \leq k \leq n$,
let $k^* = \text{min}\{k, n-k\}$. Let $\theta=R^{2\ln 2}$.
\begin{enumerate}
    \item If $C^\perp$ achieves capacity on the BEC then for all $0\leq k\leq n$
    $$ a_k \leq 2^{o(n)}\cdot \left(\frac{1}{1-R}\right)^{k^*\cdot 2\ln 2}\;.$$
    \item \label{alex:tight} If $C$ achieves capacity on the BEC then for all $0\leq k\leq n$
    $$a_k \leq 2^{o(n)}\cdot \left\{ 
    \begin{array}{ll}
    \frac{|C|}{(1-\theta)^{k^*}(1+\theta)^{n-k^*}} & 0\leq k^*\leq \frac{1-\theta}{2}\cdot n\\
    |C|\cdot \frac{{n\choose k^*}}{2^n} & \text{otherwise}
    \end{array}
    \right.$$
    \end{enumerate}
	\end{theorem}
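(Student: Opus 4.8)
\emph{Approach.} The plan is to route everything through the weight-enumerator generating function $W_C(z) := \sum_{k} a_k z^k$ and to read it in two ways. Dividing by $|C|$ makes it the moment generating function of codeword weight, $W_C(z)/|C| = \E_X[z^{|X|}]$ with $X$ uniform on $C$; and summing over erasure patterns makes it the erasure-channel object $W_C(z) = \E_E\big[2^{\dim C(E)}\big]$, where $E\subseteq[n]$ keeps each coordinate independently with probability $z$ and $C(E) = \set{c\in C : \support{c}\subseteq E}$ is the subcode supported on $E$. These two representations are the whole point: the first lets me extract the $a_k$ by a Chernoff step, while the second is exactly the quantity that the capacity hypothesis should control, since $C$ recovers an erasure pattern iff $\dim C(E)=0$.

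\emph{Reduction to one evaluation.} Since every $a_k\ge 0$, Markov gives $a_k \le W_C(z)\,z^{-k}$ for all $z>0$, and in the homogeneous form $a_k\le W_C(x,y)\,x^{-(n-k)}y^{-k}$ the bound becomes symmetric in $k\leftrightarrow n-k$, which is what produces the $k^\ast=\min\set{k,n-k}$ dependence and the two cases of Part~2. A short computation shows that the claimed bounds are \emph{exactly} what comes out of a single well-chosen evaluation: Part~2 follows from $W_C\!\per{\tfrac{1-\theta}{1+\theta}} \le 2^{o(n)}\,|C|\,(1+\theta)^{-n}$, and Part~1 follows from $W_C\!\per{(1-R)^{2\ln 2}} \le 2^{o(n)}$. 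I would check, against a random code of the same rate, that both targets are met with equality up to the $2^{o(n)}$ factor, so that no slack is lost in the extraction and the entire difficulty is concentrated in bounding $W_C$ at that one point.

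\emph{Duality and the hypercontractive baseline.} The two parts are linked by MacWilliams, $W_{C^\perp}(z) = \frac{(1+z)^n}{|C|}\,W_C\!\per{\frac{1-z}{1+z}}$, which is also why the hypothesis sits on $C^\perp$ in Part~1 and on $C$ in Part~2. In Fourier form this reads $\per{|C|/2^n}^2 W_{C^\perp}(z) = \langle \1_C, T_z \1_C\rangle$, exhibiting the dual enumerator as a noise correlation of the indicator of $C$. Hypercontractivity, $\langle \1_C, T_z\1_C\rangle \le \|\1_C\|_{1+z}^2 = \per{|C|/2^n}^{2/(1+z)}$, then gives an \emph{unconditional} estimate; running it in the appropriate direction yields $W_C(z)\le |C|^{2z/(1+z)}$, which at the two special points specializes to $|C|^{1-\theta}$ and to $|C|^{2z^\ast/(1+z^\ast)}$ respectively. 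Both are exponentially larger than the targets, so hypercontractivity is the right tool but far too lossy on its own; the $2\ln 2$ exponent in Part~1 and the choice $\theta=R^{2\ln 2}$ are, I expect, the signatures of optimizing this $(2\!\to\!q)$ inequality against the capacity constraint.

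\emph{The main obstacle.} The crux is that ``achieves capacity on the BEC'' is a statement about the \emph{typical} behaviour of $\dim C(E)$ (the EXIT function is a sharp step, so $\dim C(E)=0$ with probability $1-o(1)$ below threshold), whereas $\frac1n\log W_C(z)$ is a \emph{large-deviation} functional of $\dim C(E)$, dominated by rare erasure patterns that trap exponentially many codewords; Jensen only yields the useless lower bound $W_C(z)\ge 2^{\E[\dim C(E)]}$. Indeed, at the evaluation point the typical value is zero yet $W_C$ is exponentially large even for a random code, so the capacity hypothesis must be leveraged to show that the \emph{upper tail} of $\dim C(E)$ is no heavier than in the random case. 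I expect this to be the heart of the argument: combining the area theorem $\int_0^1 (\text{EXIT})\,dz = R$, which ties the hypothesis to an integral of the tail exponent, with a capacity-sharpened hypercontractive / $\ell_q$-norm inequality that pins the constant $\theta=R^{2\ln 2}$ down exactly. Everything else — the Chernoff extraction, the MacWilliams bookkeeping, and the reduction to $k^\ast$ — should be routine once this single bound on $W_C$ at the critical point is in hand.
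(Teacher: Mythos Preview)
Your scaffolding is sound and matches the paper's: the Chernoff extraction $a_k\le W_C(z)\,z^{-k}$, the MacWilliams link between Parts~1 and~2, the identification of the dual enumerator with a noise correlation $\langle \1_C,T_z\1_C\rangle$, and the observation that plain Bonami--Beckner hypercontractivity is exponentially too weak are all correct. You also put your finger precisely on the difficulty: capacity controls the \emph{typical} value of $\dim C(E)$ (equivalently, the expected rank deficit), whereas $\log W_C(z)=\log\E_E\bigl[2^{\dim C(E)}\bigr]$ is a $\log$-of-expectation dominated by atypical $E$.

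What you are missing is the specific inequality that crosses this gap, and it is not a sharpening of the usual $(p\to q)$ bound but a structurally different statement. Samorodnitsky proves (quoted in the paper as Theorem~\ref{thm:sam-norm}) that for nonnegative $\phi$,
\[
\log_2\|\phi_\eta\|_q \;\le\; \E_{T\sim\lambda}\log_2\|\E(\phi\mid T)\|_q,\qquad \lambda=(1-2\eta)^{a(q)},
\]
established by comparing $\eta$-derivatives at $0$ via a logarithmic Sobolev inequality in the spirit of Gross. The expectation over the random subset $T$ sits \emph{outside} the logarithm: this is exactly the $\log\E\to\E\log$ swap you wanted, paid for by the reparametrisation $\eta\mapsto\lambda$ (and $a(2)=1/\ln 2$ is the source of the exponent $2\ln 2$ and of $\theta=R^{2\ln 2}$). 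Applied with $\phi=\frac{2^n}{|C|}\1_C$ and $q=2$, the left side is $\log_2 W_{C^\perp}(\lambda^{2\ln 2})=\log_2\frac{1}{|C|}\sum_k a_k(1-\theta)^k(1+\theta)^{n-k}$, while the right side evaluates \emph{exactly} to $\lambda n-\E_{T\sim\lambda}\rank_{C^\perp}(T)$, an expectation of a bounded rank deficit. The capacity hypothesis says this is $o(n)$ at the appropriate $\lambda$, and both parts then drop out via your Chernoff step. Without this inequality your proposal has no mechanism to convert typical behaviour into control of the large-deviation functional; the ``capacity-sharpened hypercontractive inequality'' you anticipated is precisely this statement, and it is the entire content of the proof.
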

	Observe that Item~\ref{alex:tight} in Theorem~\ref{thm:sam} says that for $k/n\in [\frac{1}{2}\pm R^{2\ln 2}]$, the weight distribution of a capacity achieving code is (up to the $2^{o(n)}$ term) the same as that of a random code.

	As Kudekar et al. \cite{Kudekar17} proved that all codes with parameters $\reedmuller{m/2\pm O(\sqrt{m})}{m}$ achieve  capacity for the BEC,  Theorem~\ref{thm:sam} gives strong upper bound on the weight enumerator of RM codes for linear weights, some are as tight as possible. 
	
	On the other hand, for hamming weight $k=o(n)$, the theorem fails to give meaningful bounds as the $2^{o(n)}$ term becomes too large to ignore and in fact, it dominates the entire estimate, and in particular it is a weaker bound than the one given in \Cref{thm:sberlo}. In addition, when the rate $R$ approaches zero (e.g. when $r<(1/2-\epsilon)m$) Theorem~\ref{thm:sam} does not give a meaningful estimate, again due to the $2^{o(n)}$ term, whereas  \Cref{thm:sberlo} works for such degrees as well. Thus, \Cref{thm:sam} gives very strong bounds for constant rate and constant relative weight, while \Cref{thm:sberlo} gives better bounds for small weight or small rate.

	So far we mostly discussed results on the weight distribution for weights that are a constant factor smaller than $1/2$. However, one expects that a random codeword will have weight roughly $1/2$. Thus, an interesting question to understand is the concentration of the weight, in particular it is interesting to know how many codewords have weight smaller than $(1-\epsilon)/2$ for small $\epsilon$, or, in terms of bias, how many codewords have bias at least $\epsilon$.  To the best of our knowledge, for other range of parameters, the first such result was obtained by Ben Eliezer, Hod and Lovett \cite{ben2012random} who proved the following.
	\begin{theorem}[Lemma 2 in \cite{ben2012random}]\label{thm:BHL}
		Let $m,r \in \N$ and $\delta > 0$ such that $r \leq (1-\delta)m$. Then there exist positive constants $c_1,c_2$ (which depend solely on $\delta$) such that,
		$$\weightdistribution{r}{m}{\leq \frac{1-2^{-c_1 \frac{m}{r}}}{2}} \leq 2^{\per{(1-c_2) \binom{m}{\leq r}}} \;,$$
		where the probability is over a uniformly random polynomial with $m$ variables and degree $\leq r$.
	\end{theorem}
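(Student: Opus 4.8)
The plan is to translate the weight condition into a statement about bias and then show that biased codewords are exponentially rare. Write $\epsilon = 2^{-c_1 m/r}$. Since $\weight{f} = \frac12\per{1-\bias{f}}$, the event $\weight{f} \le \frac{1-\epsilon}{2}$ is exactly $\bias{f} \ge \epsilon$, and by the symmetry $f \mapsto f+1$ (which sends $\bias{f}$ to $-\bias{f}$ and permutes $\rmrm$) it suffices, up to a factor of two, to bound $\abs{\set{f \in \rmrm : \abs{\bias{f}} \ge \epsilon}}$ by $2^{(1-c_2)\binom{m}{\leq r}}$. So the whole content is that a polynomial of degree $\le r$ has bias bounded away from $\pm 1$ unless it lies in a small, highly structured family.

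First I would record the two standard handles on the bias of a low-degree polynomial. The Plotkin decomposition of Section~\ref{sect:rscd} gives $\bias{f} = \frac12\per{\bias{f|_{x_m=0}} + \bias{f|_{x_m=1}}}$, so bias is an average of biases of same-degree restrictions; iterating, $\bias{f} = \E_w[\bias{f_w}]$ where $f_w$ restricts the last $m-r$ variables, an arbitrary function on the remaining $r$ variables. Dually, the Cauchy–Schwarz/derivative identity $\bias{f}^2 = \E_y[\bias{\derivative{y}{f}}]$ lowers the degree, since $\derivative{y}{f} \in \reedmuller{r-1}{m}$. Both say the same thing: large bias forces many restrictions (or derivatives) to be biased as well, the kind of rigidity that limits the number of candidates. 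The scale $2^{-\Theta(m/r)}$ appears because bias degrades geometrically under these operations while only $\Theta(m/r)$ essentially independent reductions are available before the degree is exhausted.

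The heart of the argument is a structural dichotomy: if $\abs{\bias{f}} \ge 2^{-s}$ then $f$ has small rank, i.e. $f = \sum_{i=1}^{K} g_i h_i + g_0$ with $\deg g_i,\deg h_i < r$ and $\deg g_0 \le r-1$ and $K = O(s)$ (equivalently, $f$ is a function of $O(s)$ polynomials of degree $<r$). Granting this, the number of such $f$ is at most the number of choices of the $2K+1$ lower-degree pieces, namely $2^{(2K+1)\binom{m}{\leq r-1}}$. The regime $r \le (1-\delta)m$ enters precisely here: it yields $\binom{m}{r}/\binom{m}{\leq r-1} = \Theta_\delta(m/r)$, hence $\binom{m}{\leq r-1} = O_\delta\per{\tfrac{r}{m}}\binom{m}{\leq r}$, so the count is $2^{O_\delta(Kr/m)\binom{m}{\leq r}}$. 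Taking $s = c_1 m/r$ makes $K = O(c_1 m/r)$ and the exponent $O_\delta(c_1)\binom{m}{\leq r}$, which is $\le (1-c_2)\binom{m}{\leq r}$ once $c_1 = c_1(\delta)$ is small. This also pins down the threshold $2^{-c_1 m/r}$: it is the largest bias for which the rank budget $K\cdot\binom{m}{\leq r-1}$ still fits inside a constant fraction of the ambient dimension $\binom{m}{\leq r}$.

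The main obstacle is the structural dichotomy above with a good quantitative dependence. The naive route—iterating the derivative identity $r-1$ times down to affine functions—is useless, because each Cauchy–Schwarz step squares the bias, so one ends up needing control of $\bias{f}^{2^{r-1}}$, a doubly exponential loss that swamps the target. The pure moment method fails for a different reason: $\E_f[\bias{f}^{2t}]$ equals the probability that $2t$ uniformly random evaluation points have all their degree-$\le r$ monomial sums vanishing, but estimating it is equivalent to controlling the very weight enumerator we are bounding, so Markov closes nothing. The correct argument must extract structure from bias directly and with at worst a linear loss $K=O(s)$: a merely polynomial-in-$s$ bound already violates the budget $K \le \Theta(m/r)$. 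Obtaining this linear bias-to-rank implication in the high-degree regime, where general inverse theorems are far too lossy, and checking that the lower-degree factors can be enumerated without fatal double-counting, is where the real work lies.
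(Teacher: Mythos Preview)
The survey does not itself prove Theorem~\ref{thm:BHL}; it is quoted from \cite{ben2012random}. The nearest point of comparison is the $\epsilon$-net machinery of Section~\ref{sec:eps-net}, which the paper uses for the neighbouring Theorems~\ref{thm:sberlo-bias} and~\ref{thm:sberlo-bias-general}, and that machinery is organized differently from your proposal.

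Your outline has a real gap, and you already locate it: the structural dichotomy ``$|\bias{f}|\ge 2^{-s}$ implies Schmidt rank $K=O(s)$ with an absolute constant'' is not a known lemma. For degree~$2$ it is exact, for degree~$3$ polynomial bounds exist, but for general $r$ over $\F_2$ a linear (or even polynomial) bias-to-rank bound with constant independent of the degree is open. Your budget analysis genuinely needs uniformity in $r$: if one only has $K\le C(r)\,s$ then the count $2^{(2K+1)\binom{m}{\le r-1}}$ fits under $2^{(1-c_2)\binom{m}{\le r}}$ only when $c_1\le C'_\delta/C(r)$, so $c_1$ would depend on $r$ rather than only on $\delta$, contrary to the statement. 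The lemma you isolate as ``the real work'' is thus not a missing detail but a statement at least as hard as the theorem, and quite possibly harder.

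The $\epsilon$-net route avoids this by trading exact structure for approximation plus recursion. Lemma~\ref{KPL2} says that if $\bias{f}\ge 2^{-s}$ then $f$ lies within a \emph{constant} $\epsilon$ of an explicit function of $t=O(\log(1/\epsilon)+s)$ first-order derivatives; the proof is just Chebyshev over $2$-wise independent directions, and since all $\Delta_{\sum_{i\in I}y_i}f$ are determined by the $t$ basic derivatives $\Delta_{y_i}f\in\reedmuller{r-1}{m}$ together with the $y_i$, the net has size at most $2^{t(m+\binom{m}{\le r-1})}$. Because $\epsilon$ is taken constant rather than $2^{-r}$, the net parameter $t$ stays linear in $s$. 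The price is that each ball may contain many codewords, but that count is $\wdist{r}{m}{\le 2\epsilon}$, which is then handled by recursion on the weight via Lemma~\ref{lem:basic counting argument} and Corollary~\ref{cor:recursion for weight}. Approximation-plus-recursion thus delivers the same ``linear in $s$'' budget you were aiming for, without ever needing a rank statement.
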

This result was later extended to  other prime fields in  \cite{beame2018bias}. Note that for RM codes of constant rate, Theorem~\ref{thm:sam} gives much sharper estimates, but it does not extend to other ranges of parameters. 

Observe that when the degree $r$ is linear in $m$ Theorem~\ref{thm:BHL} only applies to weights that are some constant smaller than $1/2$ and does not give information about the number of polynomials that have bias $o(1)$. Such a result was obtained by Sberlo and Shpilka \cite{Sberlo18}, and it played an important role in their results on the capacity of RM codes. They first proved a result for the case that $r< m/2$ and then for the general case (with a weaker bound).
		\begin{theorem}[Theorem 1.4 of \cite{Sberlo18}]
		\label{thm:sberlo-bias}
		Let $\ell,m\in \N$ and let $0 < \gamma(m) < 1/2 - \Omega\per{\sqrt{\frac{\log m}{m}}}$ be a parameter (which may be constant or depend on $m$) such that $\frac{\ell+\log\frac{1}{1-2\gamma}}{(1-2\gamma)^2} = o(m)$. Then, 
		$$\weightdistribution{\gamma m}{ m}{\leq \frac{1-2^{-\ell}}{2}} \leq 2^{\per{O(m^4)+\per{1-2^{-c(\gamma,\ell)}}\binom{m}{\leq r}}} \;,$$
		where  $c(\gamma,\ell) = O\per{\frac{\gamma^2 \ell  +  \gamma \log(1/1-2\gamma)}{1-2\gamma} + \gamma}$.
	\end{theorem}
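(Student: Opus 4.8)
The plan is to count degree-$r$ polynomials (with $r=\gamma m$) of bias at least $2^{-\ell}$, which is precisely the event counted by $\weightdistribution{\gamma m}{m}{\leq (1-2^{-\ell})/2}$ once we rewrite $\bias{f}=1-2\weight{f}$, so that $\weight{f}\le (1-2^{-\ell})/2$ is equivalent to $\bias{f}\ge 2^{-\ell}$. I would drive the whole argument by an induction on the degree anchored on the discrete derivative $\derivative{a}{f}(x)=f(x+a)+f(x)$, which has degree at most $r-1$, together with the second-moment identity
\[ \bias{f}^2 = \E_{a}\big[\,\bias{\derivative{a}{f}}\,\big]. \]
Thus a polynomial of large bias must have derivatives of large \emph{average} bias: if $\bias{f}\ge 2^{-\ell}$ then $\E_a[\bias{\derivative{a}{f}}]\ge 2^{-2\ell}$, and a Markov-type averaging shows that a fraction $\gtrsim 2^{-2\ell}$ of the directions $a$ produce a degree-$(r-1)$ polynomial whose bias is still at least about $2^{-2\ell-1}$.

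The counting engine is the fact that a multilinear polynomial is determined, up to its lower-order part, by its first-order derivatives: knowing $\{\derivative{e_i}{f}\}_{i\in[m]}$ recovers every coefficient $c_A$ with $|A|\ge 1$, hence pins down the top homogeneous part $f^{=r}$, and the residual freedom is a degree-$(r-1)$ polynomial contributing at most $2^{\binom{m}{\le r-1}}$ choices. I would therefore bound the number of large-bias $f$ by (the number of admissible tuples of derivatives, which are degree-$(r-1)$ polynomials of controlled bias) times $2^{\binom{m}{\le r-1}}$, feeding the inductive hypothesis in degree $r-1$ into the first factor. Each descent trades a full $2^{\binom{m}{\le r}}$ worth of entropy for a constrained count, and the per-level saving is what accumulates into the factor $\per{1-2^{-c(\gamma,\ell)}}$: the exponent $c(\gamma,\ell)$ should track both how the bias threshold degrades ($2^{-\ell}\to 2^{-2\ell}\to\cdots$) and how the binomial weights $\binom{m}{\le r-k}$ shrink relative to $\binom{m}{\le r}$, which is where the dependence on $\gamma/(1-\gamma)$, i.e.\ on $1-2\gamma$, enters. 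The $O(m^4)$ term is the accumulated lower-order slack from the base cases and from the averaging losses over the $O(r)=O(m)$ levels of the recursion.

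The main obstacle is that the derivatives $\{\derivative{a}{f}\}_a$ of a single $f$ are highly correlated, so one cannot count each derivative independently and multiply — that would lose far more than $2^{-c}$ in the exponent and destroy the $\per{1-2^{-c}}$ shape of the bound. The heart of the argument must be to count the \emph{joint} object: one has to show that the set of directions yielding high-bias derivatives, together with the consistency constraints those derivatives satisfy as derivatives of a common polynomial, confines $f^{=r}$ to a subspace of codimension about $2^{-c}\binom{m}{\le r}$, rather than treating the derivatives as free degree-$(r-1)$ polynomials. Handling this correlation cleanly, while simultaneously keeping the bias threshold from collapsing below the minimum-distance scale $2^{-r}$ as it is repeatedly squared, is exactly what should force the hypotheses $\gamma<1/2-\Omega(\sqrt{\log m/m})$ and $\frac{\ell+\log\frac{1}{1-2\gamma}}{(1-2\gamma)^2}=o(m)$: these keep the recursion in the regime where every level still yields a genuine saving and the $O(m^4)$ error never swamps the main term.

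As an alternative to descending in degree, I could run the recursion through the Plotkin decomposition $f=g+x_m h$, for which $\bias{f}=\tfrac12\per{\bias{g}+\bias{g+h}}$, reducing the number of variables by one per step while holding $r$ fixed down to the trivial range $m=r$. The bookkeeping is different, but the same correlation difficulty reappears as the crux, now in the guise of the shared degree-$r$ part of the pair $(g,g+h)$, so I expect either route to stand or fall on the joint-counting step above.
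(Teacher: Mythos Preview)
Your proposal diverges from the paper's route and leaves a genuine gap exactly where you flag it. The paper (following \cite{Kaufman12,Abbe15,Sberlo18}) never tries to \emph{determine} or constrain $f$ from its high-bias derivatives; it uses the $\epsilon$-net machinery instead. Lemma~\ref{KPL2} shows that any $f$ with $\bias{f}\ge\delta$ is $\epsilon$-approximated in Hamming distance by the majority of its $2^t-1$ first-order derivatives along a $t$-dimensional subspace, with $t=O(\log(1/\epsilon)+\log(1/\delta))$. Via Lemma~\ref{lem:basic counting argument} this yields $\wdist{r}{m}{\le(1-\delta)/2}\le |\mathcal{B}_t|\cdot \wdist{r}{m}{\le 2\epsilon}$: the net size $|\mathcal{B}_t|$ is bounded by directly counting tuples of degree-$(r-1)$ polynomials (the refinement in \cite{Sberlo18} is precisely that these derivatives \emph{share monomials}, which tightens that count), while the ball factor lands in the small-weight regime already handled by Theorem~\ref{thm:sberlo}. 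The correlation among derivatives is thus exploited only to shrink the net, never to carve out a codimension-$2^{-c}\binom{m}{\le r}$ subspace for $f^{=r}$; the ``joint-counting step'' you call the heart of the argument is simply not part of the actual proof, and you offer no mechanism for it.

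Your degree-descent recursion also has a depth obstruction you do not address. The identity $\bias{f}^2=\E_a[\bias{\derivative{a}{f}}]$ squares the threshold at every level, so after $k$ descents you are working with bias $\gtrsim 2^{-2^k\ell}$, which drops below the minimum-distance scale $2^{-(r-k)}$ once $k\gtrsim\log(r/\ell)$; you cannot run the ``$O(r)=O(m)$ levels'' you invoke. The paper's recursion is in the \emph{weight} parameter (each application of the net halves the radius, as in Corollary~\ref{cor:recursion for weight}), and for the bias regime a single use of Corollary~\ref{cor:def-bt} with, say, $\epsilon=1/8$ already transports the problem to weights $\le 1/4$, where the small-weight bound takes over. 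Your Plotkin alternative inherits the same unresolved joint-counting issue, as you note yourself, so neither route as written reaches the theorem.
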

	
	As the form of the bound is a bit complicated, the following remark was made in \cite{Sberlo18}.
	\begin{remark}
		To make better sense of the parameters in the theorem we note the following.
		\begin{itemize}
			\item When $\gamma<1/2$ is a constant, $c(\gamma,\ell) =O(\ell)$.
			\item The bound is meaningful up to degrees $\left(\frac{1}{2} - \Omega\left(\frac{\sqrt{\log m}}{\sqrt m}\right) \right)m$, but falls short of working for constant rate RM codes.
			
			\item
			For $\gamma$ which is a constant the upper bound is applicable to $\ell = o(m)$ (in fact it is possible to push it all the way to some $\ell = \Omega(m)$). For $\gamma$ approaching $1/2$, i.e $\gamma = 1/2 - o(1)$, there is a trade-off between how small the $o(1)$ is and the largest $\ell$ for which the bound is applicable to. Nevertheless, even if $\gamma = 1/2 - \Omega\per{\sqrt{\frac{\log m}{m}}}$ the lemma still holds for $\ell = \Omega(\log m)$ (i.e, for a polynomially small bias).
		\end{itemize}
	\end{remark}
	

	We see that for linear degrees ($r=\Omega(m)$) Theorem~\ref{thm:BHL}  gives a  bound on the number of polynomials (or codewords) that have at least some constant bias, whereas \Cref{thm:sberlo-bias} holds for a wider range of parameters and in particular can handle bias which is nearly exponentially small in $m$. 
For general degrees, Sberlo and Shpilka obtained the following result.	
	\begin{theorem}[Theorem 1.7 of \cite{Sberlo18}]\label{thm:sberlo-bias-general}
		\label{thm : weak concentration of bias for all gamma}
		Let $r \leq m\in \N$ and $\epsilon > 0$. Then,
		$$\wdist{r}{m}{\leq \frac{1-\epsilon}{2}} \leq \exp\per{-\frac{2^{r}\epsilon^2}{2}} \cdot 2^{{\binom{m}{\leq r}}} \;.$$
	\end{theorem}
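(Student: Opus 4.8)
The plan is to pass to the language of moment generating functions and prove a sub-Gaussian tail bound for the bias of a uniformly random codeword. Since $\dim\rmrm=\binom{m}{\leq r}$, the quantity $\wdist{r}{m}{\leq\frac{1-\epsilon}{2}}$ equals $2^{\binom{m}{\leq r}}\cdot\prob{f}{\bias{f}\geq\epsilon}$, where $f$ is drawn uniformly from $\rmrm$ and we used $\bias{f}=1-2\weight{f}$, so that $\weight{f}\leq\frac{1-\epsilon}{2}\iff\bias{f}\geq\epsilon$. Writing $S(f):=\sum_{z\in\bF_2^m}(-1)^{f(z)}=2^m\bias{f}$, it therefore suffices to show that $\prob{f}{S(f)\geq\epsilon 2^m}\leq\exp(-2^r\epsilon^2/2)$.

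The heart of the argument is the claim that $S(f)$ is sub-Gaussian with variance proxy $2^{2m-r}$; concretely, I would prove
$$\Phi_{m,r}(\lambda):=\E_f\!\left[\exp(\lambda S(f))\right]\leq\exp\!\left(2^{2m-r}\lambda^2/2\right)\qquad\text{for all }\lambda\in\R,\ 0\leq r\leq m.$$
Given this, a standard Chernoff bound with the optimal choice $\lambda=\epsilon 2^{r-m}$ gives $\prob{f}{S(f)\geq\epsilon 2^m}\leq\exp(-\epsilon 2^m\lambda+2^{2m-r}\lambda^2/2)=\exp(-2^r\epsilon^2/2)$, which is exactly the desired estimate.

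I would establish the sub-Gaussian bound by induction on $m$, exploiting the Plotkin structure from Section~\ref{sect:rscd}. For the base case $m=r$ the code $\reedmuller{r}{r}$ is all of $\bF_2^{2^r}$, so $S(f)$ is a sum of $2^r$ independent uniform signs and $\Phi_{r,r}(\lambda)=\cosh^{2^r}(\lambda)\leq\exp(2^r\lambda^2/2)$, matching the proxy $2^{2r-r}=2^r$. For the inductive step, write a uniform $f\in\rmrm$ as $f=(u,u+v)$ with $u$ uniform in $\reedmuller{r}{m-1}$ and $v$ uniform in $\reedmuller{r-1}{m-1}$, independent, so that $S(f)=S(u)+S(u+v)$. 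The key observation is that, marginally, both $u$ and $u+v$ are uniformly distributed in $\reedmuller{r}{m-1}$ (the latter because $\reedmuller{r-1}{m-1}\subseteq\reedmuller{r}{m-1}$). Applying Cauchy--Schwarz,
$$\Phi_{m,r}(\lambda)=\E_{u,v}\!\left[e^{\lambda S(u)}e^{\lambda S(u+v)}\right]\leq\sqrt{\E\!\left[e^{2\lambda S(u)}\right]\E\!\left[e^{2\lambda S(u+v)}\right]}=\Phi_{m-1,r}(2\lambda).$$
Feeding in the inductive hypothesis $\Phi_{m-1,r}(2\lambda)\leq\exp(2^{2(m-1)-r}(2\lambda)^2/2)$ and simplifying $2^{2m-2-r}\cdot 4=2^{2m-r}$ reproduces the proxy exactly, closing the induction.

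The main obstacle, and the reason the exponent carries $2^r$ rather than the $2^m$ one might naively hope for, is the strong dependence among the signs $(-1)^{f(z)}$: a direct Hoeffding bound over the $2^m$ evaluations is unavailable because $f$ is low degree, and the heavy tail created by minimum-weight codewords (bias near $1$) genuinely precludes the stronger estimate. The Cauchy--Schwarz decoupling step is precisely what tames this dependence: it costs a doubling of $\lambda$ at each level of the recursion while halving the number of variables, so the variance proxy inflates by a factor of $2$ per level, and after descending from $m$ down to the base case $r$ the effective number of independent blocks is $2^r$ rather than $2^m$. I would finally double-check the boundary case $r=0$ (the repetition code, where $v$ is forced to equal $0$ and the recursion still reads $\Phi_{m,0}(\lambda)=\Phi_{m-1,0}(2\lambda)$) and the direction of the inequality for $\lambda<0$, both of which follow from the symmetry $S(f)\leftrightarrow-S(f)$ of the bias distribution.
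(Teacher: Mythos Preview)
Your proof is correct and takes a genuinely different route from the one the paper outlines. The survey attributes this bound to the $\epsilon$-net machinery of Section~\ref{sec:eps-net}: approximate a biased codeword by a majority of its discrete derivatives (Lemmas~\ref{lem:KLP} and~\ref{KPL2}), bound the size of the resulting net, and feed this into Lemma~\ref{lem:basic counting argument}. You bypass all of that. You observe that the stated inequality is precisely a sub-Gaussian tail estimate for $S(f)=2^m\bias{f}$ with variance proxy $2^{2m-r}$, and you prove the moment-generating bound $\Phi_{m,r}(\lambda)\le\exp(2^{2m-r}\lambda^2/2)$ by induction on $m$ using only the Plotkin decomposition $f=(u,u+v)$ together with a single Cauchy--Schwarz step; the doubling of $\lambda$ at each level exactly matches the factor-of-two growth of the proxy, and the base case $m=r$ is the standard Hoeffding bound for $2^r$ independent signs. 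The marginal-uniformity of $u+v$ in $\reedmuller{r}{m-1}$ (via $\reedmuller{r-1}{m-1}\subseteq\reedmuller{r}{m-1}$) is the one structural fact needed, and you identify it correctly. Your argument is shorter, more self-contained, and delivers the constant in the exponent on the nose; on the other hand, the $\epsilon$-net approach, while heavier, is what yields the sharper degree-sensitive bounds of Theorems~\ref{thm:sberlo} and~\ref{thm:sberlo-bias}, which your decoupling method does not appear to reach.
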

	
	Observe that the main difference between \Cref{thm:BHL} and \Cref{thm:sberlo-bias-general} is that in \Cref{thm:BHL} the bias ($\epsilon=2^{-c_1 m/r}$) is directly linked to the degree. In particular, for every $\epsilon$ \Cref{thm:BHL} allows for polynomials of degree $r = O(m/\log(1/\epsilon))$ whereas \Cref{thm:sberlo-bias-general} allows for $r$ and $\epsilon$ to be arbitrary. 
	When  $r= O(m/\log(1/\epsilon))$   the bound in  \Cref{thm:BHL} is stronger than the one given in \Cref{thm:sberlo-bias-general}. However, when $\epsilon=o(1)$ (as a function of $m$) then the result of \Cref{thm:BHL} is meaningful only for $r=o(m)$ whereas \Cref{thm:sberlo-bias-general} gives strong bounds for polynomials of every degree. 
	
To complete the picture we state two lower bounds on the weight enumerator. The first is a fairly straightforward observation.

\begin{observation}\label{obs:wd-lb}
$$ \frac{1}{2} \cdot 2^{{m-\ell+1 \choose \leq r - \ell +1}} \leq \wdist{r}{m}{\leq 2^{-\ell}}\;.$$
\end{observation}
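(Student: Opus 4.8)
The plan is to prove the lower bound by exhibiting an explicit, injective family of low-weight codewords. The starting point is the observation that the exponent $\binom{m-\ell+1}{\leq r-\ell+1}$ is exactly $\dim \mathrm{RM}(m-\ell+1,r-\ell+1)$, the number of polynomials in the $m-\ell+1$ variables $x_\ell,\dots,x_m$ of degree at most $r-\ell+1$. This strongly suggests multiplying each such polynomial by the degree-$(\ell-1)$ monomial $x_1x_2\cdots x_{\ell-1}$. Concretely, for $g\in\mathrm{RM}(m-\ell+1,r-\ell+1)$, viewed as a polynomial in $x_\ell,\dots,x_m$, I would set $f_g:=x_1\cdots x_{\ell-1}\cdot g$. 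Then $\deg(f_g)\le(\ell-1)+(r-\ell+1)=r$, so $f_g\in\rmrm$, and the support of $f_g$ is contained in $\{z:z_1=\cdots=z_{\ell-1}=1\}$.

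First I would pin down the weight. Restricting to the subcube $z_1=\cdots=z_{\ell-1}=1$ shows $|f_g|=|g|$, so the relative weight satisfies $\weight{f_g}=|g|/2^m=\weight{g}\cdot 2^{-(\ell-1)}$, where $\weight{g}$ denotes the relative weight of $g$ as a function on $\bF_2^{m-\ell+1}$. Hence to guarantee $\weight{f_g}\le 2^{-\ell}$ it suffices to restrict to those $g$ with $\weight{g}\le 1/2$. Controlling how many such $g$ there are is the only real subtlety, and it is exactly where the factor $1/2$ in the statement comes from. The clean device is the fixed-point-free involution $g\mapsto g+1$ on $\mathrm{RM}(m-\ell+1,r-\ell+1)$ (the constant $1$ has degree $0\le r-\ell+1$, so $g+1$ stays in the code); since the supports of $g$ and $g+1$ are complementary we have $\weight{g}+\weight{g+1}=1$, so at least one member of each two-element orbit has relative weight at most $1/2$. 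Therefore at least half of the $2^{\binom{m-\ell+1}{\leq r-\ell+1}}$ polynomials $g$ satisfy $\weight{g}\le 1/2$.

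Finally I would verify injectivity: $g$ is recovered from $f_g$ by reading the coordinates indexed by $z_1=\cdots=z_{\ell-1}=1$, so distinct $g$ produce distinct codewords $f_g$. Combining the three steps yields that the number of $f\in\rmrm$ with $\weight{f}\le 2^{-\ell}$ is at least $\tfrac12\cdot 2^{\binom{m-\ell+1}{\leq r-\ell+1}}$, which is the asserted bound. The argument is essentially calculation-free; the main (and only minor) obstacle is the factor-of-two loss incurred by the $\weight{g}\le 1/2$ requirement, which the involution resolves cleanly. One should also note that the bound is trivially true when $\ell>r$, since then $2^{-\ell}$ lies below the minimum relative distance $2^{-r}$ and the only qualifying codeword is the zero codeword, consistent with the right-hand side.
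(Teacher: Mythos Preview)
Your proof is correct and follows essentially the same approach as the paper: multiply polynomials $g(x_\ell,\ldots,x_m)$ of degree at most $r-\ell+1$ by $x_1\cdots x_{\ell-1}$, and use that at least half of such $g$ have weight at most $1/2$. Your involution $g\mapsto g+1$ is a clean way to justify the ``at least half'' claim that the paper simply asserts, and your injectivity and edge-case remarks are correct added detail.
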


Comparing to \Cref{thm:sberlo} and \Cref{thm:sam}, we see that for $\ell=O(1)$ the  lower bound in Observation~\ref{obs:wd-lb} is roughly of the form $\frac{1}{2}2^{(r/m)^{\ell-1}\cdot {m\choose \leq r}}$. Thus, for $r=m/2$ it is similar to what \Cref{thm:sam} gives, except that it has a smaller constant in the exponent: $2$ versus $4\ln 2$. For $r=\gamma m$ it gives a much smaller constant in the exponent compared to \Cref{thm:sberlo}: $2$ versus $17(c_{\gamma}\ell+d_{\gamma})$. The main difference is for very small weights, say $\ell = (1-\epsilon)r$. Then, the estimate in  Observation~\ref{obs:wd-lb} is much smaller than what \Cref{thm:sberlo} gives. This is mainly due to the fact that \Cref{thm:sberlo} heavily relies on estimates of binomial coefficients that become less and less good as $(r-\ell) \rightarrow 0$.

The second lower bounds was given in \cite{Sberlo18} and it concerns weights around  $1/2$. 
	\begin{theorem}[Theorem 1.8 of \cite{Sberlo18}]
		\label{thm:biaslb}
		Let $20\leq r \leq m\in \N$. Then for any integer $\ell <r/3$ and sufficiently large $m$ it holds that
		$$\frac{1}{2}\cdot  2^{\per{\sum_{j=1}^{\ell-1}\binom{m-j}{\leq r-1}}}\leq \wdist{r}{m}{\leq \frac{1-2^{-\ell}}{2}}   \;.$$
	\end{theorem}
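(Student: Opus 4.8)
The plan is to reformulate the statement in terms of bias and then produce the required codewords by forcing them to vanish on a large affine subcube. Since $\weight{f} \leq \frac{1-2^{-\ell}}{2}$ is equivalent to $\bias{f} \geq 2^{-\ell}$, it suffices to exhibit at least $\frac12 \, 2^{\sum_{j=1}^{\ell-1}\binom{m-j}{\leq r-1}}$ codewords $f \in \rmrm$ with $\bias{f} \geq 2^{-\ell}$. Consider the subcube $V = \{z \in \bF_2^m : z_1 = \dots = z_{\ell-1} = 0\}$, which contains a $2^{-(\ell-1)}$ fraction of all points, and let $K = \{f \in \rmrm : f|_V \equiv 0\}$ be the set of codewords vanishing on $V$. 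The key point is that any $f \in K$ automatically contributes $+1$ to the bias sum at every point of $V$, so its bias is pinned from below once we control only the sign of the remaining contribution.

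First I would count $K$. The restriction map $\rho : f \mapsto f(0,\dots,0,z_\ell,\dots,z_m)$ sends $\rmrm$ onto $\reedmuller{r}{m-\ell+1}$ (every polynomial of degree $\leq r$ in $z_\ell,\dots,z_m$ is its own image), so $K = \ker\rho$ has size $2^{\binom{m}{\leq r} - \binom{m-\ell+1}{\leq r}}$. Using the elementary identity $\binom{N}{\leq r} - \binom{N-1}{\leq r} = \binom{N-1}{\leq r-1}$ one telescopes $\sum_{j=1}^{\ell-1}\binom{m-j}{\leq r-1} = \binom{m}{\leq r} - \binom{m-\ell+1}{\leq r}$, so that $|K| = 2^{\sum_{j=1}^{\ell-1}\binom{m-j}{\leq r-1}}$. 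Equivalently, $K$ is parametrized bijectively by the triangular family $f = \sum_{j=1}^{\ell-1} z_j\, h_j(z_{j+1},\dots,z_m)$ with $\deg h_j \leq r-1$, which exhibits the same count directly.

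Next, for $f \in K$ I would write $\bias{f} = 2^{-(\ell-1)} + (1-2^{-(\ell-1)})\, b(f)$, where $b(f) = \E[(-1)^{f(Z)} \mid Z \notin V]$ is the bias off $V$; this holds precisely because $f$ is $0$ on $V$. The finishing move is an involution $\iota(f) = f + \1_{V^c}$ on $K$, where $\1_{V^c} = 1 + \prod_{i=1}^{\ell-1}(1+z_i)$ is the indicator of $V^c$. This polynomial has degree $\ell-1 \leq r$, hence lies in $\rmrm$, and it vanishes on $V$, so $\iota$ maps $K$ into itself; it is fixed-point-free (as $\1_{V^c}\neq 0$ for $\ell \geq 2$) and flips the sign of every value off $V$, whence $b(\iota(f)) = -b(f)$. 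In each of the $\frac12|K|$ pairs $\{f,\iota(f)\}$ at least one member has $b \geq 0$ and therefore $\bias{} \geq 2^{-(\ell-1)} \geq 2^{-\ell}$; counting one such codeword per pair yields the claimed $\frac12 \, 2^{\sum_{j=1}^{\ell-1}\binom{m-j}{\leq r-1}}$ codewords of weight at most $\frac{1-2^{-\ell}}{2}$. The degenerate case $\ell = 1$ is trivial, as the bound then only asserts the existence of a single low-weight codeword.

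I expect the only real obstacle to be bookkeeping rather than anything conceptual: pinning the dimension count exactly onto the stated exponent (the telescoping identity is the crux) and verifying that $\1_{V^c}$ is low-degree enough to stay inside $\rmrm$, which is the only place the hypothesis $\ell \leq r$ enters. The pairing argument itself is robust and in fact delivers the slightly stronger bound $\weight{f} \leq \frac{1}{2} - 2^{-\ell}$; the remaining hypotheses ($r \geq 20$, $\ell < r/3$, and $m$ large) seem to be needed only for the sharper constants in the companion upper bounds and are not essential for this lower bound.
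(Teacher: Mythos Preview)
Your proof is correct. Both you and the paper exhibit codewords of the triangular form $\sum_i z_i\,h_i(z_{i+1},\ldots,z_m)$ with $\deg h_i\le r-1$, i.e., codewords vanishing on a coordinate subcube, and the exponent $\sum_{j=1}^{\ell-1}\binom{m-j}{\le r-1}$ arises from exactly this parametrization in each case. The difference lies in how the bias lower bound is secured. The paper's sketch picks the $h_i$ uniformly at random and argues that with high probability the resulting polynomial has $\bias{g}\ge 2^{-\ell+1}$, then converts ``high probability'' into the stated count. You instead use the deterministic involution $f\mapsto f+\1_{V^c}$ on the kernel $K$: since it negates the off-$V$ bias, at least one polynomial in each orbit has nonnegative off-$V$ bias and hence total bias at least $2^{-(\ell-1)}$, giving the factor $\tfrac12$ exactly. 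Your pairing argument is more elementary and self-contained than the probabilistic route, and it makes transparent (as you observe) that only the degree condition $\ell-1\le r$ is actually used here, so the extra hypotheses $r\ge 20$, $\ell<r/3$, and ``$m$ sufficiently large'' are slack for this direction.
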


	Comparing the upper bound in \Cref{thm:sberlo-bias} to \Cref{thm:biaslb} we see that there is a gap between the two bounds. Roughly, the lower bound on the number of polynomials that have bias at least $\epsilon$ matches the upper bound corresponding to bias at least $\sqrt{\epsilon}$. This may be a bit difficult to see when looking at \Cref{thm:sberlo} but we refer to Remark 3.16 in \cite{Sberlo18} for a qualitative comparison.

\subsection{Proof strategy}

In this section we explain the basic ideas behind the proofs of the theorems stated in Section~\ref{sec:weight:result}. The most common proof strategy is based on the approach of Kaufman et al. \cite{Kaufman12}, which, following \cite{Sberlo18}, we call the $\epsilon$-net approach. Theorem~\ref{thm:sam} is proved using a completely different approach. \\

\subsubsection{The $\epsilon$-net approach for upper bounding the weight enumerator (Theorems ~\ref{thm:klp},~\ref{thm:asw},~\ref{thm:sberlo},~\ref{thm:sberlo-bias},~\ref{thm:sberlo-bias-general})} \label{sec:eps-net}

	The main idea in the  work of \cite{Kaufman12}, which was later refined in \cite{Abbe15} and \cite{Sberlo18}, is that in order to upper bound the number of polynomials of certain weight we should find a relatively small set, in the space of all functions $\F_2^m \rightarrow \F_2$, such that all low weight polynomials are contained in balls of (relative) radius at most $\epsilon$ around the elements of the set, with respect to the hamming distance. We call such a set an $\epsilon$-net for $\reedmuller{r}{m}$. 
	
	Why is this approach useful? Assuming we have found such a net, we can upper bound the number of low weight codewords by the number of codewords in each ball times the size of the net. The crux of the argument is to note that the number of codewords in each ball is upper bounded by $\wdist{r}{m}{2\epsilon}$. This gives rise to a recursive approach whose base case is when the radius of the ball is smaller than half the minimum distance (and then there is at most one codeword in the ball). Formally, this idea is captured by the next simple claim.

	\begin{lemma}
		\label{lem:basic counting argument}
		Let $S \subseteq \polynomials{m}{r}$ be a subset of polynomials with an $\epsilon$-net $\mathcal{N}$. Then,
		$$\abs{S} \leq \abs{\mathcal{N}}\cdot \weightdistribution{r}{m}{2\epsilon} \;.$$
	\end{lemma}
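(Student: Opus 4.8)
The plan is to convert the covering supplied by the net into a counting bound through a union bound over balls, and then to count how many codewords a single ball can contain by exploiting the linearity of $\reedmuller{r}{m}$. Writing $B_\epsilon(g) := \{h : \dist{h}{g} \le \epsilon\}$ for the relative-Hamming ball of radius $\epsilon$ centered at $g$, the defining property of an $\epsilon$-net is precisely that every $f \in S$ lies within relative distance $\epsilon$ of some net point, i.e. $S \subseteq \bigcup_{g \in \mathcal{N}} B_\epsilon(g)$. A union bound then yields $\abs{S} \le \sum_{g \in \mathcal{N}} \abs{S \cap B_\epsilon(g)}$, so the whole statement reduces to bounding the number of elements of $S$ inside a single ball by $\weightdistribution{r}{m}{\le 2\epsilon}$.

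The key step is this per-ball count, and the one idea it requires is a translation trick. Fix $g \in \mathcal{N}$; if $S \cap B_\epsilon(g) = \emptyset$ there is nothing to prove, so I would pick an arbitrary reference point $f_0 \in S \cap B_\epsilon(g)$. For every $f \in S \cap B_\epsilon(g)$ the triangle inequality gives $\dist{f}{f_0} \le \dist{f}{g} + \dist{g}{f_0} \le 2\epsilon$. Since $S \subseteq \reedmuller{r}{m}$ and $\reedmuller{r}{m}$ is a linear code, the difference $f + f_0$ is again a codeword, and over $\bF_2$ its relative weight equals the relative distance, so $\wt{f + f_0} = \dist{f}{f_0} \le 2\epsilon$. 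The translation map $f \mapsto f + f_0$ is injective, and by the above it sends $S \cap B_\epsilon(g)$ into the set of codewords of relative weight at most $2\epsilon$; hence $\abs{S \cap B_\epsilon(g)} \le \weightdistribution{r}{m}{\le 2\epsilon}$. Summing this uniform bound over the $\abs{\mathcal{N}}$ net points gives $\abs{S} \le \abs{\mathcal{N}} \cdot \weightdistribution{r}{m}{\le 2\epsilon}$, which is the claimed inequality (the quantity $\weightdistribution{r}{m}{2\epsilon}$ appearing in the statement being the ball-count of codewords of weight up to $2\epsilon$).

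I do not expect any genuine obstacle in this lemma: it is an elementary covering argument, and the only place that calls for a moment's thought is recognizing that counting codewords in an arbitrary ball of radius $\epsilon$ reduces, after subtracting one codeword lying in that ball, to counting codewords of weight at most $2\epsilon$. This is exactly where the linearity of $\reedmuller{r}{m}$ enters and where the factor of $2$ in the radius is produced. The real difficulty of the overall program lives not here but in constructing nets $\mathcal{N}$ that are simultaneously small and fine, so that the recursion this lemma sets up on the radius $\epsilon$ actually converges; that is the substantive content of Theorems~\ref{thm:klp}--\ref{thm:sberlo-bias-general}.
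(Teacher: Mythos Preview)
Your proof is correct and matches the approach the paper sketches in the paragraph preceding the lemma (the paper does not actually spell out a proof, only the idea). The union bound over net points followed by the translation-by-a-reference-codeword argument to reduce to counting low-weight codewords is exactly what is intended, and your parenthetical clarification that $\weightdistribution{r}{m}{2\epsilon}$ here stands for the cumulative count $\weightdistribution{r}{m}{\le 2\epsilon}$ is appropriate.
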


		Thus, to get strong upper bounds on the number of low weight/bias polynomials we would like the $\epsilon$-net to be as effective as possible. This means that on the one hand we would like the $\epsilon$-net to be small and on the other hand that no ball around an element of the net should contain too many codewords. 
	
	Before explaining how to get such an $\epsilon$-net we first explain how this approach was developed in the papers \cite{Kaufman12},\cite{Abbe15},\cite{Sberlo18}.	To prove Theorem~\ref{thm:klp} Kaufman et al. \cite{Kaufman12} constructed an $\epsilon$-net such that each ball contains at most one low weight polynomial. To achieve this they picked $\epsilon=2^{-r-1}$. However, since they insisted on having at most one low weight polynomial in every ball, this resulted in a relatively large net. To prove Theorem~\ref{thm:asw}, Abbe et al. observed that one can bound the number of codewords in each ball using the weight enumerator at smaller weights. This allowed them to pick a larger $\epsilon$ and use recursion. \cite{Abbe15} used the same approach as \cite{Kaufman12} to construct the $\epsilon$-net,  though with different parameters and with tighter analysis on the net size. \cite{Sberlo18} improved further on  \cite{Abbe15} by observing that the $\epsilon$-net approach was used only to bound the weight enumerator at weights which are somewhat smaller than $1/2$. The reason for that is that the calculations performed in \cite{Kaufman12,Abbe15} were not tight enough and stopped working as the weights got closer to $1/2$ or as the degree got larger. 
	Thus, \cite{Sberlo18} first improved the calculations upper bounding the size of the $\epsilon$-net and then, using the improved calculations, obtained results for the weight enumerator also for weights close to $1/2$, and for all degrees, as stated in Theorems~\ref{thm:sberlo},~\ref{thm:sberlo-bias} and~\ref{thm:sberlo-bias-general}.

We now explain the main idea of Kaufman et al. for constructing the $\epsilon$-net. For this we will need the notion of \emph{discrete derivative}. The discrete derivative of a function $f:\F_2^m\to \F_2$ at direction $y\in \F_2^m$ is the function 
\begin{equation}  \label{eq:disder}
\Delta_y f: x \mapsto \Delta_y f (x) \triangleq f(x+y)+f(x)\;.
\end{equation}
It is not hard to see that if $f$ is a degree $r$ polynomial then, for every $y$, $\Delta_y f$ has degree at most $r-1$. 
Another basic observation is that if a function $f:\F_2^m\to \F_2$ has weight $\beta$, then, for each $x$,
\[
\bP_Y[\Delta_Y f(x)=f(x)]=1-\beta\;,
\]
where by $Y$ we mean a random variable that is distributed uniformly over $\F_2^m$.
Thus, if $\beta<1/2$ then $\Delta_y f(x)$ gives a good estimate for $f(x)$. Hence, if we consider $t$ directions $y_1,\ldots,y_t \in \F_2^m$ and define 
\[\label{eq:F0t0maj}
F_{y_1,\ldots,y_t}(x) \triangleq \text{Majority}\left(\Delta_{y_1} f(x),\ldots,\Delta_{y_t} f(x)\right)
\]
then we get from the Chernoff-Hoeffding bound that
\[ 
\E_{X,Y_1,\ldots,Y_t} [\mathbf{1}[F_{Y_1,\ldots,Y_t}(X)\neq f(X)]] \approx \exp(-t) \;.
\]
Picking $t=O(\log 1/\epsilon)$ we see that for each polynomial $f\in \reedmuller{r}{m}$ there is some $F_{y_1,\ldots,y_t}$ at hamming distance at most $\epsilon$. Thus, the set of all such $F_{y_1,\ldots,y_t}$ forms an $\epsilon$-net for polynomials of weight at most $\beta$ in $\reedmuller{r}{m}$. All that is left to do is to count the number of such functions $F_{y_1,\ldots,y_t}$ to obtain a bound on the size of our net. 

In fact, one can carry the same approach further and rather than approximating $f$ by its first order derivatives, use instead higher order derivatives. For a set of $k$ directions $\cY=\{y_1,\ldots,y_k\} \in \left(\F_2^m\right)^k$ we define 
\[ \Delta_{\cY}f(x) = \Delta_{y_k}\Delta_{y_{k-1}}\cdots\Delta_{y_1}f(x)\;.
\]
The following version of Lemma 2.2 of \cite{Kaufman12} appeared in \cite{Sberlo18}.

	\begin{lemma}
		\label{lem:KLP}
		Let $f : \F_2^m \rightarrow \F_2$ be a function such that $\weight{f}\leq 2^{-k}$ for $k \geq 2$ and let $\epsilon > 0$. Then, there exist directions $\cY_1,\ldots, \cY_t \in (\F_2^m)^{k-1}$ such that
		\[\prob{X}{f(X) \neq \text{Majority}\left(\Delta_{\cY_1}{f}(X),\ldots,\Delta_{\cY_t}{f}(X)\right)}\leq \epsilon \;,\]
		where  $t = \lceil 17\log(1/\epsilon) \rceil$.
	\end{lemma}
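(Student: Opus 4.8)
The plan is to treat a single higher-order derivative $\Delta_{\cY}f$, for a uniformly random $\cY\in(\F_2^m)^{k-1}$, as a weak predictor of $f$, and then to amplify it to confidence $1-\epsilon$ by a majority vote over $t$ independent copies. Concretely, I would first establish a pointwise weak-learning bound: there is a universal constant $\delta>0$ such that for every $x\in\F_2^m$,
$$\prob{\cY}{\Delta_{\cY}f(x)=f(x)}\ \ge\ \tfrac12+\delta .$$
Granting this, the lemma follows from the probabilistic method. Drawing $\cY_1,\dots,\cY_t$ independently and writing the target quantity as $\E_X\,\bP_{\cY_1,\dots,\cY_t}[\cdot]$, for each fixed $x$ the $t$ values $\Delta_{\cY_i}f(x)$ are i.i.d.\ and each equals $f(x)$ with probability at least $\tfrac12+\delta$; Hoeffding's inequality then bounds the chance that the majority is wrong at $x$ by $\exp(-2\delta^2 t)$, uniformly in $x$. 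Averaging over $X$ and taking $t=\lceil 17\log(1/\epsilon)\rceil$ makes this at most $\epsilon$ in expectation, so some concrete choice of $\cY_1,\dots,\cY_t$ achieves $\prob{X}{f(X)\ne\mathrm{Majority}(\dots)}\le\epsilon$. The constant $17$ is exactly what is needed to absorb the value of $\delta$ produced in the first step.

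It remains to prove the weak-learning bound, which is the heart of the argument. Fix $x$ and observe that $\Delta_{\cY}f(x)+f(x)=\sum_{\emptyset\ne S\subseteq[k-1]}f\bigl(x+\sum_{i\in S}y_i\bigr)$ over $\F_2$, so the prediction is wrong precisely when this parity of $2^{k-1}-1$ terms equals $1$. Each term is the value of $f$ at a uniform point, hence equals $1$ with probability $\weight{f}\le 2^{-k}$; the delicate point is that although there are almost $2^{k-1}$ terms — so the naive union bound only gives the useless estimate $(2^{k-1}-1)2^{-k}<\tfrac12$ — the parity of many weakly biased bits still retains a bias bounded away from $0$. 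Indeed, were the terms independent, the error would be $\tfrac12\bigl(1-(1-2^{1-k})^{2^{k-1}-1}\bigr)$, which tends to $\tfrac12(1-e^{-1})<\tfrac13$ as $k$ grows and is uniformly below $\tfrac12$ for every $k\ge2$. I would make this rigorous by exposing the directions one at a time: conditioning on $y_1,\dots,y_{j-1}$ and then averaging over $y_j$ yields the recursion
$$p_j(x)=W_{j-1}+p_{j-1}(x)-2\,\E\bigl[\weight{g}\cdot\mathbf 1[g(x)\ne f(x)]\bigr],$$
where $p_j(x)$ is the order-$j$ prediction error, $g=\Delta_{\{y_1,\dots,y_{j-1}\}}f$, and $W_{j-1}=\E\,\weight{g}$.

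The main obstacle is precisely the control of the negative correction $2\,\E[\weight{g}\,\mathbf 1[g(x)\ne f(x)]]$ in this recursion. Dropping it leaves $p_j\le W_{j-1}+p_{j-1}$, and since $W_{j-1}\le 2^{j-1}\weight{f}\le 2^{j-1-k}$, the geometric sum telescopes only to the trivial $p_{k-1}(x)\le\tfrac12$; the whole point is that whenever the lower-order derivative disagrees with $f$ at $x$, the support point responsible for the disagreement forces $\weight{g}$ to be non-negligible, so the correction is bounded below and keeps $p_{k-1}(x)$ a fixed constant under $\tfrac12$. Establishing this quantitative anti-concentration — equivalently, ruling out adversarial correlations among the $2^{k-1}-1$ derivative points that could drive the bias to $0$ — is the crux; once it supplies the constant $\delta$, the amplification step completes the proof. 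I would finally remark that insisting on order-$(k-1)$ derivatives (rather than the order-$1$ derivatives that already suffice once $\weight{f}\le 1/4$) is what makes the lemma useful downstream: when $f$ has degree $r$, each $\Delta_{\cY_i}f$ has degree at most $r-(k-1)$, which is exactly the degree reduction needed to keep the resulting $\epsilon$-net small.
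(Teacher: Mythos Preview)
Your overall architecture is exactly the one the paper sketches: view a single random $(k-1)$-fold derivative as a weak predictor of $f(x)$ and amplify by an independent majority via Chernoff--Hoeffding. The paper itself does not prove the lemma (it cites \cite{Kaufman12,Sberlo18}), and its informal discussion only carries out the first-order case, where $\bP_Y[\Delta_Y f(x)=f(x)]=1-\weight{f}\ge 3/4$ gives an honest constant advantage. For higher order it simply says ``one can carry the same approach further,'' and adds in a footnote that the actual argument uses a \emph{weighted} majority rather than a plain one.

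There is, however, a genuine gap in your proposal at precisely the point you flag as the ``crux.'' You need a universal constant $\delta>0$ with $\bP_{\cY}[\Delta_{\cY}f(x)=f(x)]\ge \tfrac12+\delta$ for every $x$; your recursion
\[
p_j(x)=W_{j-1}+p_{j-1}(x)-2\,\E\!\left[\weight{g}\,\mathbf 1[g(x)\ne f(x)]\right]
\]
only yields $p_{k-1}(x)\le \tfrac12-2^{-k}$ once the negative term is dropped, and your heuristic for rescuing that term --- ``disagreement forces $\weight{g}$ to be non-negligible'' --- is not correct. A concrete obstruction: with positive probability over $\cY'$ one has $g=\Delta_{\cY'}f\equiv 0$ (e.g.\ whenever the directions are linearly dependent, and in many other cases); on that event $g$ disagrees with $f$ at every $1$-point of $f$ while $\weight{g}=0$, so the correction contributes nothing. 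More generally, the $2^{k-1}-1$ evaluation points $\{x+\sum_{i\in S}y_i:S\ne\emptyset\}$ are only pairwise independent, and pairwise independence alone does not let you import the ``independent'' bias calculation $\tfrac12\bigl(1-(1-2^{1-k})^{2^{k-1}-1}\bigr)\to\tfrac12(1-e^{-1})$. So as written, the proof does not close; with only $\delta=2^{-k}$ Hoeffding would force $t=\Theta(4^k\log(1/\epsilon))$, not $17\log(1/\epsilon)$.

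The footnote in the paper is the tell: the argument in \cite{Kaufman12,Sberlo18} does \emph{not} proceed by securing a uniform constant advantage and then taking a plain majority. It uses a weighted (threshold) majority, which lets one exploit the asymmetry between the cases $f(x)=0$ and $f(x)=1$ rather than demand the same pointwise $\delta$ in both. If you want to complete the proof, that is the missing idea to incorporate; trying to push your recursion to a universal $\delta$ for unweighted majority is chasing a target the original proof does not aim for.
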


	Denote\footnote{Actually, we have to take a weighted majority, but for sake of clarity we ignore this detail in our presentation.} 
	\[F_{\cY_1,\ldots,\cY_t} \triangleq \text{Majority}\left(\derivative{\cY_1}{f},\ldots,\derivative{\cY_t}{f}\right)\]
	and
		\[ \mathcal{N}_{k,t} \triangleq  \set{F_{\cY_1,\ldots,\cY_t}
		 : \cY_1,\ldots, \cY_t \in (\F_2^m)^{k} \;,\; f \in \reedmuller{r}{m} \text{ and } \weight{f}\leq 2^{-k-1} }\;. \]
		
		Combining Lemma~\ref{lem:basic counting argument} with recursive applications of Lemma~\ref{lem:KLP}, we obtain the following bound on the weight enumerator.
		
			\begin{corollary}
		\label{cor:recursion for weight}
		Let $r,m,\ell \in \N$ such that $r \leq m$. Then,
		\[\weightdistribution{r}{m}{\leq 2^{-\ell}} \leq \abs{\mathcal{N}_{\ell-1,t}}\cdot\weightdistribution{r}{m}{2^{-\ell-1}}\;,\]
		where $t = 17(\ell+2)$.
	Consequently, 
		\[\weightdistribution{r}{m}{\leq 2^{-\ell}} \leq \prod_{j=\ell}^{r} \abs{\mathcal{N}_{j-1,17(j+2)}} \;.\]
	\end{corollary}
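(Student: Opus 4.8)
The plan is to prove the two displays separately. The single-step bound is obtained by exhibiting the family $\mathcal{N}_{\ell-1,t}$ of majority-of-derivative functions as an explicit $\epsilon$-net for the low-weight codewords and then invoking the counting argument of \ref{lem:basic counting argument}; the product bound then follows by unrolling this one-step recursion down to the minimum-distance base case.

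For the single step, fix $\ell\geq 2$ and set $\epsilon:=2^{-(\ell+2)}$. Let $S:=\set{f\in\rmrm : \weight{f}\leq 2^{-\ell}}$, so that $\abs{S}=\weightdistribution{r}{m}{\leq 2^{-\ell}}$. Every $f\in S$ satisfies $\weight{f}\leq 2^{-\ell}=2^{-k}$ with $k=\ell\geq 2$, so \ref{lem:KLP} supplies directions $\cY_1,\ldots,\cY_t\in(\F_2^m)^{\ell-1}$ with $\prob{X}{f(X)\neq F_{\cY_1,\ldots,\cY_t}(X)}\leq\epsilon$, where $t=\lceil 17\log(1/\epsilon)\rceil=17(\ell+2)$. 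Since these directions lie in $(\F_2^m)^{\ell-1}$ and $\weight{f}\leq 2^{-\ell}=2^{-(\ell-1)-1}$, the approximating function $F_{\cY_1,\ldots,\cY_t}$ belongs to $\mathcal{N}_{\ell-1,t}$ directly from its definition, and the relative Hamming distance from $f$ to it is at most $\epsilon$. Hence $\mathcal{N}_{\ell-1,t}$ is an $\epsilon$-net for $S$, and \ref{lem:basic counting argument} gives $\abs{S}\leq\abs{\mathcal{N}_{\ell-1,t}}\cdot\weightdistribution{r}{m}{\leq 2\epsilon}$. The one piece of arithmetic that matters is that this single choice of $\epsilon$ produces simultaneously the advertised $t=17(\ell+2)$ and the halved radius $2\epsilon=2^{-\ell-1}$, which is exactly the first display.

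For the product, iterate the single-step bound with $\ell$ replaced successively by $\ell,\ell+1,\ldots,r$; at the step indexed by $j$ one uses the net $\mathcal{N}_{j-1,17(j+2)}$ and passes from weight threshold $2^{-j}$ to $2^{-j-1}$. The recursion terminates because $2^{-(r+1)}$ lies strictly below the relative minimum distance $2^{-r}$ of $\rmrm$, so $\weightdistribution{r}{m}{\leq 2^{-r-1}}=1$ (only the zero codeword qualifies). Chaining the $r-\ell+1$ resulting inequalities then telescopes to $\weightdistribution{r}{m}{\leq 2^{-\ell}}\leq\prod_{j=\ell}^{r}\abs{\mathcal{N}_{j-1,17(j+2)}}$.

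Since the substance of both lemmas is assumed, there is no genuinely hard analytic step; the only thing requiring care is the index bookkeeping. Concretely, one must check that the shift $k=\ell$ in \ref{lem:KLP} aligns with the definition of $\mathcal{N}_{\ell-1,t}$ (derivative order $\ell-1$ and weight cutoff $2^{-\ell}$), that $\epsilon=2^{-(\ell+2)}$ is precisely the choice making both $t$ and $2\epsilon$ come out correctly, and that every intermediate level $j$ obeys $j\geq 2$ so that \ref{lem:KLP} remains applicable throughout the recursion (automatic from $j\geq\ell\geq 2$). The degenerate small cases $\ell\leq 1$ fall outside the regime of interest and need no separate treatment.
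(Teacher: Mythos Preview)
Your proof is correct and follows exactly the approach the paper indicates: the paper simply states that the corollary follows by ``combining Lemma~\ref{lem:basic counting argument} with recursive applications of Lemma~\ref{lem:KLP}'' without spelling out the details, and you have supplied precisely those details, including the choice $\epsilon=2^{-(\ell+2)}$ that makes both $t=17(\ell+2)$ and $2\epsilon=2^{-\ell-1}$ come out right, and the base case $\weightdistribution{r}{m}{\leq 2^{-r-1}}=1$.
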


The way that Kaufman et al. bounded the size of $\mathcal{N}_{k,t}$ was simply to say that each $F_{\cY_1,\ldots,\cY_t}$ is an explicit function of $t$ polynomials of degree $r-k$ and hence the size of the net is at most $\abs{\reedmuller{r-k}{m}}^t$. One idea in the improvement of \cite{Abbe15} over \cite{Kaufman12} is that derivatives of polynomials can be represented as polynomials in fewer variables. Specifically, one can think of $\Delta_{\cY} f$ as a polynomial defined on the vector space $\text{span}(\cY)^\perp$. This allows for some saving in the counting argument, namely, 
\[ \abs{\mathcal{N}_{k,t}} \leq 2^{mkt}\cdot \abs{\reedmuller{r-k}{m-k}}^t \;,\]
where the term $2^{mkt}$ comes from the fact that now we need to explicitly specify the sets $\cY_1,\ldots,\cY_t$.
\cite{Sberlo18} further improved the upper bound by noting that different derivatives contain information about each other. I.e., they share monomials. This allowed them to get a better control of the amount of information encoded in the list of derivatives and as a result to obtain a better bound on the size of the net. This proved significant for bounding the number of codewords having small bias.

The discussion above relied on Lemma~\ref{lem:KLP} that works for weights at most $1/4$. For weights closer to $1/2$ a similar approach is taken except that this time Kaufman et al. noted that one can pick a subspace of dimension $t$ and consider all $2^t-1$ non-trivial first order derivatives, according to directions in the subspace, to obtain a good approximation for the polynomial. Since the directions according to which we take derivatives are no longer independent they could no longer use the Chernoff-Hoeffding bound in their argument. Instead they observed that the directions are $2$-wise independent and could therefore use the Chebyshev bound instead to bound $t$. Specifically, Lemma 2.4 of \cite{Kaufman12} as stated in \cite{Sberlo18} gives:

	\begin{lemma}
		[Lemma 2.4 in \cite{Kaufman12}]
		\label{KPL2}
		Let $f : \F_2^n \rightarrow \F_2$ be a function such that $\bias{f} \geq \delta > 0$ and let $\epsilon>0$. Then, for  $t = \lceil \log(1/\epsilon) + 2\log(1/\delta) + 1 \rceil$, there exist directions $y_1,\ldots,y_t \in \F_2^m$ such that,
		$$\prob{X}{f(X) = \text{Majority}\left(\derivative{\sum_{i \in I}y_i}{f}(X) : \emptyset \neq I \subseteq [t]\right)} \geq 1 - \epsilon \;.$$
	\end{lemma}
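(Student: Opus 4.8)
The plan is to reduce the statement to a single application of Chebyshev's inequality over a random choice of directions, exactly as the text preceding the lemma anticipates. First I would translate the bias hypothesis into a statement about weight: since $\bias{f} = \E_Z[(-1)^{f(Z)}] = 1 - 2\weight{f}$, the assumption $\bias{f} \geq \delta$ is equivalent to $\weight{f} = \beta \leq (1-\delta)/2$, so $f$ vanishes on strictly more than half of $\F_2^m$.

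Next I would rewrite the event we wish to control. Using $\derivative{u}{f}(x) = f(x+u)+f(x)$, for each nonempty $I\subseteq[t]$ we have $\derivative{\sum_{i\in I}y_i}{f}(x) = f\!\left(x+\sum_{i\in I}y_i\right)+f(x)$. Checking the two cases $f(x)=0$ and $f(x)=1$ separately shows that in \emph{both} cases the event
\[
\text{Majority}\left(\derivative{\sum_{i\in I}y_i}{f}(x) : \emptyset\neq I\subseteq[t]\right) = f(x)
\]
holds if and only if the majority, over the $N:=2^t-1$ nonempty sets $I$, of the bits $f\!\left(x+\sum_{i\in I}y_i\right)$ equals $0$ (there are no ties, as $N$ is odd). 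Thus it suffices to show that for a good choice of $y_1,\dots,y_t$ and all but an $\epsilon$-fraction of $x$, fewer than half of the points $\{x+\sum_{i\in I}y_i\}$ land in the support of $f$.

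I would then draw $y_1,\dots,y_t\in\F_2^m$ uniformly and independently and fix $x$. The key observation—and the reason Chebyshev rather than Chernoff is the right tool—is that the family $\{f(x+\sum_{i\in I}Y_i)\}_{\emptyset\neq I}$ is only \emph{pairwise} independent: for any nonempty $I$ the sum $\sum_{i\in I}Y_i$ is uniform on $\F_2^m$, and for distinct nonempty $I\neq J$ the indicator vectors $\mathbf{1}_I,\mathbf{1}_J\in\F_2^t$ are distinct and nonzero, hence linearly independent over $\F_2$, so $(\sum_{i\in I}Y_i,\sum_{i\in J}Y_i)$ is jointly uniform on $\F_2^m\times\F_2^m$. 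Writing $S=\sum_{\emptyset\neq I}\mathbf{1}[f(x+\sum_{i\in I}Y_i)=1]$, this gives $\E S = N\beta$ and, by pairwise independence, $\mathrm{Var}(S)=N\beta(1-\beta)\leq N/4$. The error event is $S\geq N/2$; since $\E S = N\beta \leq N/2 - N\delta/2$, Chebyshev yields
\[
\prob{Y}{S \geq N/2} \leq \frac{\mathrm{Var}(S)}{(N\delta/2)^2} \leq \frac{1}{N\delta^2}\;.
\]

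Finally I would substitute the stated $t=\lceil\log(1/\epsilon)+2\log(1/\delta)+1\rceil$, which gives $2^t\geq 2/(\epsilon\delta^2)$ and hence $N=2^t-1\geq 1/(\epsilon\delta^2)$, making the bound above at most $\epsilon$ uniformly over the fixed $x$. Averaging over $x$ and swapping the order of expectation, $\E_Y\big[\prob{X}{\text{error}}\big]\leq\epsilon$, so some concrete choice of directions $y_1,\dots,y_t$ achieves $\prob{X}{f(X)\neq\text{Majority}(\cdots)}\leq\epsilon$, as claimed. The one genuine subtlety, and the step I would be most careful with, is the pairwise-independence computation underlying the Chebyshev estimate: full independence would permit a Chernoff bound and remove the quadratic dependence on $\delta$, but the derivative directions $\sum_{i\in I}Y_i$ are correlated, so $2$-wise independence is the best available—which is precisely what forces the $2\log(1/\delta)$ term in $t$.
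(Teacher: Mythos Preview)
Your proposal is correct and follows exactly the approach the paper describes: exploit that the $2^t-1$ nonzero subset sums $\sum_{i\in I}Y_i$ of independent uniform $Y_i$ are pairwise independent and uniform, apply Chebyshev to bound the probability that the majority vote fails at a fixed $x$, then average over $x$ to extract a good tuple of directions. The reduction of the majority event to ``fewer than half of the shifts land in the support of $f$'' and the arithmetic verifying that $t=\lceil \log(1/\epsilon)+2\log(1/\delta)+1\rceil$ suffices are both carried out cleanly.
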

	\begin{corollary}\label{cor:def-bt}
		For any $t \in \N$ define,
		$$\mathcal{B}_{t} = \set{\text{Majority}\left(\derivative{\sum_{i \in I}y_i}{f}(x) : \emptyset \neq I \subseteq [t]\right) : f \in \reedmuller{r}{m} \;,\; y_1,\ldots,y_t \in \F_2^m } \;.$$
		Then, for  $t = \lceil \log(1/\epsilon) + 2\log(1/\delta) + 1 \rceil$, $\mathcal{B}_{t}$ is an $\epsilon$-net for $\set{f \in \reedmuller{r}{m} : \bias{f} \geq \delta}$.
	\end{corollary}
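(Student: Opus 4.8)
The plan is to derive the corollary directly from Lemma~\ref{KPL2}, essentially as a bookkeeping translation between the probabilistic guarantee of that lemma and the geometric notion of an $\epsilon$-net. There is genuinely no new idea needed; the content has already been packaged into Lemma~\ref{KPL2}.

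First I would unwind the definition of an $\epsilon$-net. To show that $\mathcal{B}_{t}$ is an $\epsilon$-net for $\set{f \in \reedmuller{r}{m} : \bias{f} \geq \delta}$, it suffices to exhibit, for every polynomial $f \in \reedmuller{r}{m}$ with $\bias{f} \geq \delta$, some element $g \in \mathcal{B}_{t}$ whose relative Hamming distance from $f$ is at most $\epsilon$, i.e.\ $\prob{X}{f(X) \neq g(X)} \leq \epsilon$.

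Next, fix such an $f$. Applying Lemma~\ref{KPL2} to this $f$ with the prescribed $\epsilon,\delta$ yields directions $y_1,\ldots,y_t \in \F_2^m$, with $t = \lceil \log(1/\epsilon) + 2\log(1/\delta) + 1 \rceil$, for which the function
$$ g(x) \;=\; \text{Majority}\left(\derivative{\sum_{i \in I} y_i}{f}(x) : \emptyset \neq I \subseteq [t]\right) $$
agrees with $f$ on at least a $(1-\epsilon)$-fraction of inputs. Reading off the definition of $\mathcal{B}_{t}$, this $g$ is manufactured from a polynomial in $\reedmuller{r}{m}$ together with a tuple of directions in $\F_2^m$, so $g \in \mathcal{B}_{t}$; and the agreement guarantee is precisely the statement $\prob{X}{f(X) \neq g(X)} \leq \epsilon$. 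Since $f$ was an arbitrary biased polynomial, $\mathcal{B}_{t}$ is an $\epsilon$-net, as claimed, and the value of $t$ is inherited verbatim from Lemma~\ref{KPL2}, so no separate optimization of $t$ is required.

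I do not expect a real obstacle here: the set $\mathcal{B}_{t}$ is deliberately defined by quantifying over \emph{all} $f \in \reedmuller{r}{m}$ and all direction tuples $y_1,\ldots,y_t$, so the containment $g \in \mathcal{B}_{t}$ is automatic, and the only nontrivial input — the existence of good directions for each biased $f$ — is exactly Lemma~\ref{KPL2}. The one detail to flag is the weighted-majority caveat noted in the footnote to Lemma~\ref{lem:KLP}: the same weighting must be used consistently both in the definition of $\mathcal{B}_{t}$ and in the invocation of Lemma~\ref{KPL2}, but this is cosmetic and does not affect the argument.
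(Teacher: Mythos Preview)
Your proposal is correct and is exactly the intended argument: the corollary is an immediate repackaging of Lemma~\ref{KPL2}, and the paper does not supply a separate proof beyond stating it as a consequence. Your one-line verification that the approximating function $g$ lies in $\mathcal{B}_t$ by construction is all that is needed.
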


To conclude, the $\epsilon$-net approach works as follows. We first show that each polynomial of weight at most $\beta$ can be approximated by an explicit function of some lower order derivatives. We then count the number of such possible representations and then continue recursively to bound the number of codewords that are close to each such function. \\

\subsubsection{Connection to hypercontractivity}

Samordnitsky proved Theorem~\ref{thm:sam} as a corollary of a more general result concerning the behavior of Boolean functions under noise. We shall only give a high level description of his approach. 

For a Boolean function $\phi: \{0,1\}^n\to \R$ and a noise parameter $\eta$ let us denote 
\[\phi_\eta(x) \triangleq \sum_{y\in\{0,1\}^n} \eta^{\dist{x}{y}}(1-\eta)^{n-\dist{x}{y}} \phi(y) \;,\]
where $\dist{x}{y}$ is the hamming distance between $x$ and $y$. Thus, $\phi_\eta(x)$ averages the value of $\phi$ around $x$ according to the $\eta$-biased measure. As $\phi_\eta$ is a convex combination of many shifted copies of $\phi$, it's $\ell_q$ norm cannot be larger than $\phi$'s. Samordnitsky's main theorem quantifies that loss in norm. 
To state his main result we shall need the following notation. For $0\leq \lambda\leq 1$, let $T \sim \lambda$ denote a random subset $T$ of $\{1,\ldots,n\}$ in which each element is chosen independently with probability $\lambda$. Let $\E(\phi|T)$ be the conditional expectation of $\phi$ with respect to $T$, that is, $\E(\phi |T)(x) \triangleq \E_{Y:Y|_T =x|_T} \phi(Y)$.

\begin{theorem}[Theorem 1.1 of \cite{Samorod18}]\label{thm:sam-norm}
Let $\phi:\{0,1\}^n\to\R$ be non negative. Then, for any $q>1$ 
\[\log_2 \| \phi_\eta \|_q \leq \E_{T\sim\lambda}\log_2 \| \E(\phi | T)\|_q \;,\]
with $\lambda= (1-2\eta)^{a(q)}$, where
\[ a(q) = \left\{ \begin{array}{ll}
\frac{1}{2 \ln 2} \frac{2^{3-q}\left( 2^{q-1} -1 \right)}{q-1} & 1 < q \leq 2 \\
\frac{1}{2 \ln 2} \frac{q}{q-1} & q\leq 2
\end{array} \right. \;.
\]
\end{theorem}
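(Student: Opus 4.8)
The plan is to prove the statement by tensorization, i.e.\ by induction on the number of coordinates $n$, with the single-coordinate case playing the role of a sharp ``two-point inequality.'' Because $L^q$ norms do not factor over a product, I would carry the induction through for the slightly more general vector-valued statement in which $\phi$ takes values in the positive cone of an auxiliary space $L^q(\Omega,\mu)$; this is what lets one peel off a single coordinate at a time while treating the remaining coordinates as part of the value space. Writing $\rho = 1-2\eta$ so that $\lambda = \rho^{a(q)}$, and abbreviating by $\E_1$ the operator that averages out the first coordinate, the base case $n=1$ is the inequality
\begin{equation*}
\per{\frac{\|(1-\eta)\psi_0 + \eta\psi_1\|_q^q + \|\eta\psi_0 + (1-\eta)\psi_1\|_q^q}{2}}^{1/q} \le \per{\frac{\|\psi_0\|_q^q + \|\psi_1\|_q^q}{2}}^{\lambda/q}\cdot\Big\|\tfrac{\psi_0+\psi_1}{2}\Big\|_q^{1-\lambda}
\end{equation*}
for all nonnegative $\psi_0,\psi_1 \in L^q(\Omega,\mu)$, where $\psi_0=\phi(0),\ \psi_1=\phi(1)$. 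Here the first factor on the right is $\|\phi\|_q$ (the ``keep the coordinate'' case, $T\ni 1$) and the second is $\|\E_1\phi\|_q$ (the ``average it out'' case, $T\not\ni 1$).

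To establish this two-point inequality I would first use positive homogeneity together with a symmetrization/extremization argument to reduce to two canonical configurations: the \emph{scalar} one, in which $\psi_0,\psi_1$ are proportional (reducing, after normalizing $\psi_0=1+s,\ \psi_1=1-s$, to the one-variable inequality $g(\rho s)\le \rho^{a(q)} g(s)$ with $g(s)=\log\tfrac{(1+s)^q+(1-s)^q}{2}$), and the \emph{orthogonal} one, in which $\psi_0,\psi_1$ have disjoint supports (reducing to $\log_2\tfrac{(1+\rho)^q+(1-\rho)^q}{2}\le (q-1)\rho^{a(q)}$). In both cases the problem becomes a single-variable calculus statement, and the value of $a(q)$ is exactly what is needed for the binding configuration to hold with equality. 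This is where the two regimes come from: for $q\ge 2$ the worst case sits at the boundary (one of $\psi_0,\psi_1$ vanishing, equivalently $\rho\to 1$), whose logarithmic derivative at $\rho=1$ gives $a(q)=\tfrac{1}{2\ln 2}\tfrac{q}{q-1}$, whereas for $1<q\le 2$ the extremizer is an orthogonal configuration at an \emph{interior} value of $\rho$, yielding the heavier expression $\tfrac{1}{2\ln 2}\tfrac{2^{3-q}(2^{q-1}-1)}{q-1}$.

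For the inductive step I would peel the first coordinate. Since the noise operator factors as $\phi_\eta = T^{[2,n]}S^{1}$, where $S^{1}$ acts on the first coordinate and $T^{[2,n]}$ on the rest, setting $\psi_b=(T^{[2,n]}\phi)(b,\cdot)$ turns $\|\phi_\eta\|_q$ into exactly the left-hand side of the two-point inequality. Applying the base case yields
\begin{equation*}
\log_2 \|\phi_\eta\|_q \le \lambda\,\log_2\|T^{[2,n]}\phi\|_q + (1-\lambda)\,\log_2\|T^{[2,n]}\E_1\phi\|_q\;,
\end{equation*}
and I would then invoke the vector-valued induction hypothesis on the $n-1$ remaining coordinates for each of $\phi$ and $\E_1\phi$. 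The crucial point is that the right-hand side of the target is a \emph{geometric} mean of the norms $\|\E(\phi|T)\|_q$, so its logarithm is additive, and the weights $\lambda$ (``keep'') and $1-\lambda$ (``average'') are precisely the per-coordinate inclusion probabilities of $T\sim\lambda$. Reassembling the two inductive contributions therefore reconstructs the full expectation $\E_{T\sim\lambda}\log_2\|\E(\phi|T)\|_q$, closing the induction.

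The main obstacle is the base case. The tensorization is essentially bookkeeping once the right vector-valued statement is fixed, and the additivity of $\log$ over the geometric mean is exactly what makes it go through. By contrast, pinning down the sharp constant $a(q)$ requires identifying the extremal configuration in each regime and verifying a delicate one-variable inequality with equality forced at the binding point; the two different formulas reflect a genuine transition in where the extremizer lives as $q$ crosses $2$. A further subtlety, which I expect to be the main pitfall, is that one cannot derive the vector-valued two-point inequality from the scalar one via Minkowski's inequality: that route bounds $\|\tfrac{\psi_0+\psi_1}{2}\|_q$ by $\tfrac{\|\psi_0\|_q+\|\psi_1\|_q}{2}$, which goes in the wrong direction. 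The two-point inequality must therefore be proved directly in its vector-valued form, and it is precisely the orthogonal configurations (invisible to the scalar case) that dictate the constant in the range $1<q\le 2$.
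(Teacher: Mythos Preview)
Your approach is genuinely different from the one the paper describes. The paper does not prove the theorem itself but reports Samorodnitsky's method, which follows Gross's semigroup proof of hypercontractivity: one views both sides as functions of $\eta$, and since the noise operators form a semigroup it suffices to compare derivatives at $\eta=0$; that comparison is then carried out via an appropriate logarithmic Sobolev inequality. You instead propose tensorization --- induction on $n$ with a vector-valued two-point base case --- which is the other classical route to hypercontractive-type statements. Your inductive step is correct as written: the noise operator factors over coordinates, the right-hand side is a geometric mean (so additive in logarithms), and the per-coordinate inclusion weight for $T\sim\lambda$ is exactly the $\lambda$ produced by the base case, so the bookkeeping closes. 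You are also right that the base case must be proved in its vector-valued form and that Minkowski goes the wrong way.

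What each approach buys: the semigroup method concentrates all the difficulty into a single infinitesimal inequality (a log-Sobolev variant), from which the sharp constant $a(q)$ drops out cleanly; it also avoids any extremization over an infinite-dimensional value space. Your route is more combinatorial and avoids differential machinery, but it pushes the entire content into the two-point inequality, where the constant must be extracted by hand. The weak point of your proposal is precisely there: the claimed reduction of the vector-valued base case to the ``scalar'' (proportional) and ``orthogonal'' (disjoint-support) configurations is asserted but not justified, and it is not clear that an arbitrary pair $(\psi_0,\psi_1)$ of nonnegative $L^q$ functions can be extremized down to these two cases. If that reduction can be made rigorous the rest of your argument goes through; otherwise the base case remains a genuine gap.
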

Results quantifying the decrease in norm due to noise are called hypercontractive inequalities, and Samorodnitsky's proof  follows the proof of a hypercontractive inequality due to Gross \cite{gross1975logarithmic}.
As Samorodnitsky writes, the idea of the proof is to ``view both sides of the corresponding inequality as functions of $\eta$ (for a fixed $q$) and compare the derivatives. Since noise operators form a semigroup it suffices to compare the derivatives at zero, and this is done via an appropriate logarithmic Sobolev inequality.''
While Theorem~\ref{thm:sam-norm} does not seem related to RM codes it turns out that if one takes $\phi$ to be the characteristic function of the code then this gives a lot of information about the weight distribution of the code. 
For a linear code $C\subseteq \{0,1\}^n$ with generating matrix $G$, let $\phi_C = \frac{2^n}{|C|}\mathds{1}_C$. For $T\subseteq \{1,\ldots,n\}$ let $\rank_C(T)$ denote the rank of the submatrix of $G$ generated by the rows whose indices are in $T$. The following is a consequence of Theorem~\ref{thm:sam-norm}.

\begin{theorem}[Proposition 1.3,1.4 of \cite{Samorod18} specialized to $q=2$]\label{thm:sam-rank}
Let $(a_0,\ldots,a_n)$ be the distribution of hamming weights in $C$ and $(b_0,\ldots,b_n)$ the distribution of hamming weights in $C^\perp$.
For  $0\leq \lambda \leq 1$ let $\eta=\frac{1 - \lambda^{\ln 2}}{2}$. Then, for any such $\lambda$, and $\theta=\lambda^{2\ln 2}$, it holds that
\begin{align*}
    \lambda n - \E_{T\sim \lambda} \rank_{C^\perp}(T)  \geq \log_2\left( \E [{\phi_C}_\eta^2]\right) &= \log_2 \left( \sum_{i=0}^{n} b_i \lambda^{2i\cdot \ln 2}\right) \\ &= \log_2\left(\frac{1}{|C|}\cdot \sum_{k=0}^{n}a_k(1-\theta)^k(1+\theta)^{n-k} \right)\;.
\end{align*} 
\end{theorem}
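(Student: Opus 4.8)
The plan is to establish the two equalities by a Fourier/Parseval computation combined with the MacWilliams identity, and then to derive the inequality by specializing Theorem~\ref{thm:sam-norm} to $q=2$ and $\phi=\phi_C$.

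First I would record the Fourier expansion of $\phi_C$. Since $C$ is linear, $\mathds{1}_C(x)=\frac{|C|}{2^n}\sum_{w\in C^\perp}(-1)^{\langle w,x\rangle}$, so $\phi_C=\sum_{w\in C^\perp}\chi_w$ with $\chi_w(x)=(-1)^{\langle w,x\rangle}$; that is, $\phi_C$ places equal Fourier mass on exactly the characters indexed by $C^\perp$. The noise operator acts diagonally in this basis, scaling $\chi_w$ by $(1-2\eta)^{|w|}$, so ${\phi_C}_\eta=\sum_{w\in C^\perp}(1-2\eta)^{|w|}\chi_w$. By Parseval, $\E[{\phi_C}_\eta^2]=\sum_{w\in C^\perp}(1-2\eta)^{2|w|}$. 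Substituting $1-2\eta=\lambda^{\ln 2}$ (equivalently $\eta=\frac{1-\lambda^{\ln 2}}{2}$) and grouping the dual codewords by weight gives $\E[{\phi_C}_\eta^2]=\sum_i b_i\lambda^{2i\ln 2}$, which is the first equality. The second equality is then exactly the MacWilliams identity $\sum_i b_i z^i=\frac{1}{|C|}\sum_k a_k(1-z)^k(1+z)^{n-k}$ \cite{Macwilliams77}, evaluated at $z=\theta=\lambda^{2\ln 2}$ and using $z^i=\lambda^{2i\ln 2}$.

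For the inequality I would apply Theorem~\ref{thm:sam-norm} with $q=2$ and $\phi=\phi_C$. Here $a(2)=\frac{1}{2\ln 2}\cdot\frac{2}{1}=\frac{1}{\ln 2}$, so the theorem's prescription $\lambda=(1-2\eta)^{a(2)}$ is consistent with the change of variables $1-2\eta=\lambda^{\ln 2}$. The left-hand side is $\log_2\|{\phi_C}_\eta\|_2=\frac12\log_2\E[{\phi_C}_\eta^2]$. For the right-hand side, a short computation shows that conditioning annihilates every Fourier mode not supported inside $T$, so $\E(\phi_C\mid T)=\sum_{w\in C^\perp,\ \mathrm{supp}(w)\subseteq T}\chi_w$, and by Parseval $\|\E(\phi_C\mid T)\|_2^2$ equals the number of codewords of $C^\perp$ whose support lies in $T$, namely $2^{d(T)}$ with $d(T)=\dim\{w\in C^\perp:\mathrm{supp}(w)\subseteq T\}$. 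The puncturing--shortening duality identifies $d(T)=|T|-\rank(T)$, where $\rank(T)$ is the rank function in the statement, i.e.\ the dimension of the projection of $C$ onto the coordinates in $T$. Hence $\E_{T\sim\lambda}\log_2\|\E(\phi_C\mid T)\|_2=\frac12\bigl(\E_{T\sim\lambda}|T|-\E_{T\sim\lambda}\rank(T)\bigr)=\frac12\bigl(\lambda n-\E_{T\sim\lambda}\rank(T)\bigr)$, using $\E_{T\sim\lambda}|T|=\lambda n$. Feeding both sides into Theorem~\ref{thm:sam-norm} and multiplying by $2$ yields the claimed inequality.

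The genuinely substantive ingredient is Theorem~\ref{thm:sam-norm}, which I take as given; the remaining work is bookkeeping that is easy to get wrong. The delicate points will be tracking the factor of $2$ coming from the $q=2$ normalization of the norm, verifying the $\eta\leftrightarrow\lambda$ substitution through $a(2)$, and--most of all--correctly matching $\|\E(\phi_C\mid T)\|_2^2$ to the right power of two, i.e.\ getting the $C$ versus $C^\perp$ accounting in the shortening--puncturing duality right. Once these are pinned down, the three displayed quantities line up.
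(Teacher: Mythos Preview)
Your proposal is correct and follows precisely the route the paper indicates: the survey does not give a detailed proof but only states that the theorem ``is a consequence of Theorem~\ref{thm:sam-norm}'' applied to $\phi=\phi_C$, and your argument fills in exactly that derivation---Fourier expansion of $\phi_C$ and Parseval for the first equality, MacWilliams for the second, and the $q=2$ specialization of Theorem~\ref{thm:sam-norm} together with the computation of $\|\E(\phi_C\mid T)\|_2^2$ for the inequality. Your identification of the rank quantity as $\dim(C|_T)$ via shortening--puncturing duality is the right one; the survey's notation $\rank_{C^\perp}(T)$ (and its definition in terms of ``rows'' of a generating matrix) is somewhat loose here, but the quantity you compute is what is actually needed.
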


Samordnitsky then observed that if $C^\perp$ achieves capacity for the BEC then for $\lambda=\text{Rate}(C^\perp)+o(1)$ it holds that
$\lambda n - \E_{T\sim \lambda} \rank_{C^\perp}(T) =o(n)$. Similar result follows if $C$ achieves capacity, for the appropriate $\lambda$. Combining this with Theorem~\ref{thm:sam-rank} he obtained his main result on the weight enumerator of codes that either they or their duals  achieve capacity.\\

\subsubsection{Lower bounds on the weight enumerator}

The proofs of both Observation~\ref{obs:wd-lb} and Theorem~\ref{thm:biaslb} are based on exhibiting a large set of polynomials having the claimed weight. 

Observation~\ref{obs:wd-lb} follows from the simple fact that for a random polynomial $g(x_{\ell},\ldots,x_n)$, of degree $r-\ell+1$, the degree $r$ polynomial $f(x_1,\ldots,x_n)=x_1\cdot x_2 \cdots x_{\ell-1}\cdot g$ will have weight at most $2^{-\ell}$ with  probability $1/2$ (as half of the polynomials $g$ have weight at most $1/2$).

To prove Theorem~\ref{thm:biaslb} we consider all polynomials of the form $$g(x_1,\ldots,x_m) = \sum_{i=1}^{\ell} x_i f_i(x_{i+1},\ldots,x_{m-i})\;,$$ where $f_i\in\reedmuller{r-1}{m-i}$. It is not hard to prove that with high probability, when we choose the $f_i$'s at random we get that $\bias{g} \geq 2^{-\ell+1}$. A simple counting argument then gives the claimed lower bound.

\section{Capacity results}  \label{sect:capacity}

\subsection{RM codes achieve capacity at low rate \cite{Abbe15,Sberlo18}}\label{sec:wt-to-cap}

	The results on the weight enumerator that were described in Section~\ref{sec:wt-dist} can be used to show that RM codes having low rate achieve capacity for the BEC and the BSC. Note that in the case of rates close to $0$ achieving capacity means that, for the BEC we can correct a fraction $p\geq 1-R(1+\epsilon)$ of random erasures (with high probability), and for the BSC we can correct a fraction $p$ of random errors (with high probability) for $p$ satisfying $h(p)\geq 1-R(1+\epsilon)$. See \cite{Abbe15} for a discussion of achieving capacity in extremal regimes.
	In the following subsections we shall show how to relate the error of the natural decoder for each channel to the weight distribution of the code. Such results were first prove in \cite{Abbe15} and later strengthened in \cite{Sberlo18} using essentially the same approach so we only give the state of the art results of \cite{Sberlo18}.\\
	
	\subsubsection{BEC}\label{sec:erasures}
	
	In this section we relate the weight distribution of a RM code to the probability of correctly decoding from random erasures. In \cite{Abbe15,Sberlo18} this was used to conclude that RM codes achieve capacity for the BEC at certain parameter range.

	Denote by $\err(p,m,r)$ the probability that $\reedmuller{r}{m}$  cannot recover from random erasures with parameter $p$ (i.e. when the erasure probability is $p$). To upper bound $\err(p,m,r)$ we need to understand when can we correct a codeword $f$ from some erasure pattern, where the erasure pattern is the set of coordinates erased in $f$. Thus, given a codeword $f$ and an erasure pattern $S \subseteq \F_2^m$, the corresponding corrupted codeword is the evaluation vector of $f$ with the evaluations over the set $S$ erased. The basic idea is based on the simple fact that an erasure pattern can be recovered from if and only if no codeword is supported on the pattern. By that we mean that there is no codeword $f$ that obtains nonzero values only for points in $S$, and therefore there is no other codeword that agrees with the transmitted one on the unerased section. 
	Thus, to bound the probability of failing to correct we need to bound the probability that a codeword is supported on a random erasure pattern (where each coordinate is selected with probability $p$). 

	\begin{lemma}
		\label{lem:erase-recover}
		For $ r\leq m \in \N$. For $R=\text{Rate}\left(\reedmuller{r}{rm}\right)$ it holds that,
		\begin{equation}\label{eq:erasure-prob}
		    \err(p,m,r) \leq  \sum_{\beta} (1-\beta)^{(1-p)2^m} \cdot \weightdistribution{r}{m}{\beta} \;,
		\end{equation}
		where $\beta$ ranges over all possible weights.
	\end{lemma}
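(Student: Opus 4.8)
The plan is to combine the structural characterization of erasure-correctability stated just above the lemma with a union bound over codewords organized by weight. Recall the fact already noted: a received word can be recovered from an erasure pattern $S\subseteq\F_2^m$ (the set of erased coordinates) if and only if no nonzero codeword has its support contained in $S$. Hence a decoding failure occurs exactly when some nonzero $f\in\reedmuller{r}{m}$ satisfies $\support{f}\subseteq S$, and by the union bound $\err(p,m,r)\le \sum_{0\neq f\in\reedmuller{r}{m}}\bP[\support{f}\subseteq S]$. Since the erasure model is exchangeable, the probability $\bP[\support{f}\subseteq S]$ depends on $f$ only through its weight $\wt{f}=\beta$, so grouping codewords by weight rewrites the bound as $\err(p,m,r)\le\sum_\beta \wdist{r}{m}{\beta}\cdot q_\beta$, where $q_\beta\triangleq\bP[\support{f}\subseteq S]$ for any fixed $f$ with $\wt{f}=\beta$. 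It then remains to prove $q_\beta\le(1-\beta)^{(1-p)2^m}$.

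Second, I would estimate $q_\beta$ by passing to the complementary set of unerased coordinates $T=\F_2^m\setminus S$, of size $(1-p)2^m$. The event $\support{f}\subseteq S$ is equivalent to $T$ avoiding $\support{f}$ entirely, i.e.\ to $T$ being contained in the zero set $Z(f)=\{z:f(z)=0\}$, which has size $(1-\beta)2^m$. Treating the erasures so that exactly $p\,2^m$ coordinates are erased uniformly at random (equivalently, $T$ is a uniformly random subset of size $(1-p)2^m$), this probability is exactly the binomial ratio $\binom{(1-\beta)2^m}{(1-p)2^m}\big/\binom{2^m}{(1-p)2^m}$. Writing it as the product of $(1-p)2^m$ factors $\prod_{i=0}^{(1-p)2^m-1}\frac{(1-\beta)2^m-i}{2^m-i}$ and bounding each factor by $1-\beta$ (for $a\le n$ one has $\frac{a-i}{n-i}\le\frac{a}{n}$, since subtracting the same $i$ from numerator and denominator of a quotient below $1$ only decreases it) yields $q_\beta\le(1-\beta)^{(1-p)2^m}$, which is the desired per-codeword estimate.

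Finally, a remark on where the content lies and the one point needing care. The genuinely nontrivial ingredient is the correctability criterion---that erasure recovery fails precisely when a nonzero codeword is hidden inside the erased positions---which is the standard support principle for linear codes and is invoked from the discussion above; the binomial bound in the previous paragraph is then elementary. The step that deserves care is the erasure model: under the Bernoulli($p$) model the number of unerased coordinates $\abs{T}$ is random with mean $(1-p)2^m$, and a direct computation there gives the exact value $q_\beta=p^{\beta 2^m}$ rather than the stated form. The clean bound $(1-\beta)^{(1-p)2^m}$ is most naturally obtained by fixing (or conditioning on) $\abs{T}=(1-p)2^m$ and using the random-subset view above. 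I expect the only subtle point to be this bookkeeping between the two equivalent ways of recording the per-codeword probability---$p^{\beta 2^m}$ through the support of $f$ and $(1-\beta)^{(1-p)2^m}$ through its complement---the two coinciding when $\beta+p=1$, with the complement form being the one recorded in the statement.
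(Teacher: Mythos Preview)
Your proposal is correct and follows essentially the same route as the paper's proof sketch: the same correctability criterion (failure iff some nonzero codeword is supported on the erased set), union bound over codewords grouped by weight, passage to the complement $T$ of size $(1-p)2^m$, and the binomial-ratio estimate $\binom{(1-\beta)2^m}{(1-p)2^m}/\binom{2^m}{(1-p)2^m}\le(1-\beta)^{(1-p)2^m}$. Your treatment is in fact slightly more explicit than the paper's---you write out the product of factors and justify the termwise bound, and you flag the Bernoulli-versus-fixed-size distinction that the paper handles only with the phrase ``to simplify matters assume that the erasure pattern consists of exactly $p\cdot 2^m$ erasures.''
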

	
	\begin{proof}[Proof sketch]
		To simplify matters assume that the erasure pattern consists of exactly $p\cdot 2^m=pn$ erasures. We now wish to bound that probability that  a random set $S\subset \F_2^m$ of size $|S|=p\cdot 2^m$ contains the support of a codeword. To bound this probability, fix a codeword $f$. Then, this probability is exactly equivalent to the probability that the complement of the support of $f$ contains the complement of $S$. Thus, assuming we only care about error patterns of size $p\cdot 2^m$ we get
		\[\err(p ,m,r) \approx 	\sum_{0 \neq f \in \polynomials{m}{r}} \frac{\binom{(1-\weight{f})2^m}{(1-p)\cdot 2^m }}{\binom{2^m}{(1-p)\cdot 2^m}} \leq
		\sum_{0 \neq f \in \polynomials{m}{r}} (1-\weight{f})^{(1-p)\cdot 2^m} \;.\]
	\end{proof}
		Thus, if we want to use Lemma~\ref{lem:erase-recover} to prove that RM codes achieve capacity for random erasures then we basically have to show that $\weightdistribution{r}{m}{\leq\beta}$ decays faster than $(1-\beta)^{(1-p)2^m}$, for $p = 1-(1+o(1))R$. In order to estimate the sum in \eqref{eq:erasure-prob}, \cite{Sberlo18} partitions the summation over $\beta$ to the dyadic intervals $[2^{-k-1}, 2^{-k}]$ and shows that each such interval sums to a small quantity. For $k\geq 2$ they use the estimate given in Theorem~\ref{thm:sberlo}. To handle the interval $[1/4,1/2]$ they partition it further to smaller dyadic intervals. Specifically, they start with polynomials of bias $\epsilon$, for some sub-constant $\epsilon$, and double the bias until they reach bias $1/2$ (equivalently, weight $1/4$). To estimate the contribution of each such sub-interval to the sum they use Theorem~\ref{thm:sberlo-bias}. This gives the following result from \cite{Sberlo18}.

\begin{theorem}[Theorem 1.9 from \cite{Sberlo18}]
\label{thm:bec-cap-low}
For any $r \leq m/50$,   $\reedmuller{r}{m}$ achieves capacity on the BEC.
\end{theorem}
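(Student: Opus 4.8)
The plan is to prove \Cref{thm:bec-cap-low} by feeding the weight-enumerator bounds of \Cref{thm:sberlo} and \Cref{thm:sberlo-bias} into the erasure estimate of \Cref{lem:erase-recover} and showing that the resulting sum tends to $0$. Fix $\epsilon>0$, let $R=\binom{m}{\leq r}/2^m$ be the rate and $k=\binom{m}{\leq r}=R\cdot 2^m$ the dimension. Achieving capacity at this (low) rate amounts to showing $\err(p,m,r)\to 0$ when $p=1-(1+\epsilon)R$, i.e. when $(1-p)2^m=(1+\epsilon)k$. By \Cref{lem:erase-recover} it therefore suffices to prove
$$ \sum_{\beta}(1-\beta)^{(1+\epsilon)k}\,\wdist{r}{m}{\beta} \;\longrightarrow\; 0\;,$$
so the whole argument reduces to controlling, weight by weight, the competition between the decaying survival factor $(1-\beta)^{(1+\epsilon)k}$ and the growing weight enumerator.

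First I would split the weights into three ranges. (i) For $\beta\geq 1/2$ I would use only the trivial count $\wdist{r}{m}{\beta}\leq 2^k$ together with $(1-\beta)\leq 1/2$, which already bounds the contribution by $2^{-\epsilon k}$. (ii) For small weights $\beta\leq 2^{-\ell_0}$, with $\ell_0$ a suitable constant, I would partition dyadically into $I_\ell=[2^{-\ell-1},2^{-\ell}]$ for $\ell_0\leq \ell\leq r$, bound the interval by $\wdist{r}{m}{\leq 2^{-\ell}}\cdot(1-2^{-\ell-1})^{(1+\epsilon)k}$, and plug in \Cref{thm:sberlo}. Written as a power of $2$, the net exponent of $I_\ell$ is $O(m^4)+\left[17(c_\gamma\ell+d_\gamma)\gamma^{\ell-1}-\tfrac{1+\epsilon}{2^{\ell+1}\ln 2}\right]k$ with $\gamma=r/m$; since $\gamma\leq 1/50$ forces $(2\gamma)^{\ell-1}$ to decay far faster than the erasure term, the bracket is negative for every $\ell\geq\ell_0$, and the additive $O(m^4)$ is swallowed because $k=\binom{m}{\leq r}$ is exponentially large in the relevant linear-degree regime. (iii) For the remaining weights $\beta\in[2^{-\ell_0},1/2)$ I would partition by \emph{bias}: with $\delta=2^{-\ell'}$ decreasing from a constant down to a threshold $\delta_0=\Theta(\epsilon)$, I bound the contribution of each bias shell using \Cref{thm:sberlo-bias} for the count and $(1-\beta)\leq (1+\delta)/2$ for the survival factor, while all codewords of bias below $\delta_0$ are again handled by the trivial $2^k$ count.

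The crux is range (iii), the weights close to $1/2$. Here the weight enumerator is essentially $2^k$ and the survival factor essentially $2^{-(1+\epsilon)k}$, so the entire margin is the single factor $2^{-\epsilon k}$, and everything hinges on showing the count does not erode this margin as the bias grows. Using \Cref{thm:sberlo-bias}, the base-$2$ exponent of the bias shell at level $\ell'$ is, up to $O(m^4)$ and times $k$, at most
$$\Big(1-2^{-c(\gamma,\ell'+1)}\Big)-(1+\epsilon)+(1+\epsilon)\log_2\!\big(1+2^{-\ell'}\big)\;\leq\;-2^{-c(\gamma,\ell'+1)}-\epsilon+\tfrac{(1+\epsilon)2^{-\ell'}}{\ln 2}\;.$$
Because $r\leq m/50$ makes $\gamma$ small and hence $c(\gamma,\ell')$ small, the term $2^{-c(\gamma,\ell'+1)}$ stays close to $1$ (there are very few high-bias codewords), so this exponent is negative for every $\ell'\geq 1$; and once $\delta\leq\delta_0=\Theta(\epsilon)$ the $\epsilon$-slack alone dominates. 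This is exactly the step for which the sharper net-size estimate behind \Cref{thm:sberlo-bias} is needed: the cruder bound of \Cref{thm:sberlo} is in fact vacuous (it exceeds $2^k$) at weights near $1/2$ when $\gamma$ is small. I expect this shoulder region to be the main obstacle, and the constant $1/50$ to be precisely what keeps every exponent in ranges (ii) and (iii) simultaneously negative.

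Finally I would assemble the pieces. There are at most $r\leq m$ dyadic intervals and $O(\log(1/\epsilon))=O(1)$ bias shells, each contributing at most $2^{O(m^4)-\Omega_\epsilon(k)}$, so the whole sum is bounded by $\poly(m)\cdot 2^{O(m^4)-\Omega_\epsilon(k)}\to 0$. This shows $\reedmuller{r}{m}$ recovers from a $1-(1+\epsilon)R$ fraction of random erasures with high probability, hence achieves capacity on the BEC. For degrees that are not linear in $m$, where $k$ could be comparable to $m^4$, the small cases are already covered by the earlier bounds, e.g.\ \Cref{thm:asw}.
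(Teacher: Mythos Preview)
Your proposal is correct and follows essentially the same route as the paper: feed the weight-enumerator estimates of \Cref{thm:sberlo} (on dyadic weight intervals away from $1/2$) and of \Cref{thm:sberlo-bias} (on bias shells near $1/2$) into \Cref{lem:erase-recover}, then verify interval by interval that the survival factor $(1-\beta)^{(1+\epsilon)k}$ beats the count. The only cosmetic differences are that you peel off $\beta\geq 1/2$ and the very-small-bias codewords with the trivial $2^k$ bound, whereas the paper folds those into the bias analysis by running the shells down to a sub-constant threshold.
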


In addition, they proved that for higher degrees, that lead to codes of rate $1/\poly(\log n)$, RM codes can correct a fraction $1 - o(1)$ of random erasures. The proof idea is essentially the same.

\begin{theorem}
\label{thm:bec-noise}
For any $r < m/2 - \Omega\per{{\sqrt{m\log m}}}$, $\reedmuller{r}{m}$ can efficiently correct a fraction of $1-o(1)$ random erasures (as $m$ increases).
\end{theorem}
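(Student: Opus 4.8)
The plan is to follow exactly the route used for Theorem~\ref{thm:bec-cap-low}, now pushing the erasure fraction all the way to $1-o(1)$ while exploiting the fact that in this regime the rate $R=\binom{m}{\le r}/2^m$ shrinks polynomially in $m$. First I would invoke Lemma~\ref{lem:erase-recover}: setting $p=1-\delta$ for a carefully chosen $\delta=\delta(m)=o(1)$, it suffices to show that
$$\err(1-\delta,m,r)\ \le\ \sum_{\beta}(1-\beta)^{\delta 2^m}\,\wdist{r}{m}{\beta}\ \xrightarrow[m\to\infty]{}\ 0 \;.$$
Since decoding over the BEC amounts to solving the linear system on the unerased coordinates, success of this polynomial-time decoder is equivalent to the erasure pattern not supporting any nonzero codeword, so efficiency is automatic and the whole task reduces to bounding the right-hand side.

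Next I would split the range of $\beta$ exactly as in the proof of Theorem~\ref{thm:bec-cap-low}. For weights $\beta\le 1/4$ I partition into dyadic intervals $[2^{-\ell-1},2^{-\ell}]$, bounding the number of codewords in each by $\wdist{r}{m}{\le 2^{-\ell}}$ via Theorem~\ref{thm:sberlo} and using $(1-\beta)^{\delta 2^m}\le e^{-\delta 2^{m-\ell-1}}$. For weights in $[1/4,1/2]$ I refine into sub-intervals indexed by the bias, starting from a sub-constant bias and repeatedly doubling it up to $1/2$, estimating each via Theorem~\ref{thm:sberlo-bias}; this is precisely the range for which that bound was designed. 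Taking logarithms, the contribution of the $\ell$-th dyadic interval is at most
$$O(m^4)+17(c_{\gamma}\ell+d_{\gamma})\gamma^{\ell-1}\binom{m}{\le r}-\frac{\delta\,2^{m-\ell-1}}{\ln 2}\;,$$
with $\gamma=r/m$, and I would show the last term dominates for every $\ell$.

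The heart of the argument is the balancing of these two exponents. Writing $\binom{m}{\le r}=R\,2^m$, the positive and negative contributions scale as $2^m$ times $17(c_{\gamma}\ell+d_{\gamma})\gamma^{\ell-1}R$ and $\delta\,2^{-\ell-1}/\ln 2$ respectively, so the decay wins provided $\delta\gtrsim (c_{\gamma}\ell+d_{\gamma})(2\gamma)^{\ell-1}R$. Summing this requirement over $\ell$ and using $2\gamma\le 1-\Omega(\sqrt{\log m/m})$ (which is exactly what $r<m/2-\Omega(\sqrt{m\log m})$ buys us), together with the fact that in this regime $R=1/\poly(m)=o(1)$, I can choose a single $\delta=o(1)$ that simultaneously beats all intervals; the geometric-type sum $\sum_{\ell}(c_{\gamma}\ell+d_{\gamma})(2\gamma)^{\ell-1}$ is $O(m/\log m)$, and multiplying by $R=m^{-\Omega(1)}$ keeps the total $o(1)$ once the implied constant in $\Omega(\sqrt{m\log m})$ is taken large enough. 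The $O(m^4)$ additive term is harmless since it is dwarfed by the $\Theta(2^m)$-scale decay.

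The step I expect to be the main obstacle is controlling the weights clustered just below $1/2$. There the weight enumerator is essentially maximal, as large as $2^{\binom{m}{\le r}}$, while $(1-\beta)^{\delta 2^m}$ is only $2^{-\Theta(\delta 2^m)}$, so the binding constraint becomes $\delta\gtrsim R$, matching the intuition that the BEC capacity is $1-R$. This is exactly where both the choice of $\delta$ and the degree restriction are forced: the bound $r<m/2-\Omega(\sqrt{m\log m})$ is dictated by the hypothesis of Theorem~\ref{thm:sberlo-bias} (equivalently $\gamma<1/2-\Omega(\sqrt{\log m/m})$) and by the need for $R$ to be small enough that the sum above stays $o(1)$. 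Once this near-$1/2$ region is handled, the small-weight intervals are easy, since there the minimum-distance bound $d=2^{m-r}\ge 2^{m/2}$ makes $e^{-\delta 2^{m-\ell-1}}$ super-exponentially small and swamps the modest number of low-weight codewords.
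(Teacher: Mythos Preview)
Your proposal is correct and follows essentially the same route as the paper: invoke Lemma~\ref{lem:erase-recover}, partition the sum over $\beta$ into dyadic intervals handled by Theorem~\ref{thm:sberlo} for $\beta\le 1/4$ and into finer bias-doubling sub-intervals handled by Theorem~\ref{thm:sberlo-bias} for $\beta\in[1/4,1/2]$, then balance exponents using that $r<m/2-\Omega(\sqrt{m\log m})$ forces $R=1/\poly(m)$. The paper presents this theorem as a direct extension of the proof of Theorem~\ref{thm:bec-cap-low} with the same idea, and your write-up matches that description, including the correct identification of the near-$1/2$ region as the binding constraint and the reason the degree restriction enters through the hypothesis of Theorem~\ref{thm:sberlo-bias}.
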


While this result is not enough to deduce that the code achieves capacity, it shows that it is very close to doing that.\\

	\subsubsection{BSC}\label{sec:errors}
	
	Similarly to what we did in Section~\ref{sec:erasures} we shall relate the probability that we fail to correct random errors to the weight distribution of the code. For simplicity we shall assume that instead of flipping each bit with probability $p$ the channel flips exactly $p\cdot 2^m=pn$ locations at random. Consider the decoder that returns the closest codeword to the received word. A bad error pattern $z \in \F_2^n$ is one for which there exists another error pattern $z' \in \F_2^n$, of weight $s=pn$, such that $z+z'$ is a codeword in $\reedmuller{r}{m}$.
	Note that since both $z$ and $z'$ are different and have the same weight, the weight of $z+z'$ must be even and in $\{d,\dots,2pn\}$. As both $z+z'$ and the all $1$ vector  are codewords, we also have that the weight of $z+z'$ is at most $n-d$, hence $|{z+z}|\in  \{d,\dots,n-d\}$.
 Therefore, counting the number of bad error patterns is equivalent to counting the number of weight $s$ vectors that can be obtained by ``splitting'' codewords whose (un-normalized) weight is an even number in $\{d,\dots,n-d\}$. 
For a codeword $y$ of hamming weight $|{y}|=w$, there are $w/2$ choices for the support of $z$ inside\footnote{By $\text{support}(y)$ we mean the set of nonzero coordinates of $y$.} $\text{support}(y)$ and $pn-w/2$ choices outside the codeword's support ($z$ and $z'$ must cancel each other outside the support of $y$ and hence have the same weight inside $\text{support}(y)$). 
Denote by $\mathcal{B}$ the set of bad error patterns, the union bound then gives
 \begin{equation*}
 \bP\{\mathcal{B}\}  \leq \sum_{w \in \{d,\dots,n-d\}} \weightdistribution{r}{m}{w/n} \frac{{w \choose w/2}{n-w \choose pn -w/2}}{{n \choose pn}} =  \sum_{\beta \in \frac{1}{n}\cdot\{d,\dots,n-d\}} \weightdistribution{r}{m}{ \beta}\frac{{\beta n \choose \beta n/2}{(1-\beta)n \choose (p-\beta/2)n}}{{n \choose pn}}\;.   \end{equation*}
As before, Sberlo and Shpilka partition the sum to small interval around weight $1/2$ and then to dadic interval of the form $[2^{-k-1},2^{-k}]$ and bound the contribution of each interval separately to the sum using Theorems~\ref{thm:sberlo-bias} and \ref{thm:sberlo}, respectively.
They thus were able to prove the following result.

\begin{theorem}[Theorem 1.10 of \cite{Sberlo18}]
\label{thm:bsc-cap-low}
For any $r \leq m/70$,  $\reedmuller{r}{m}$ achieves capacity on the BSC.
\end{theorem}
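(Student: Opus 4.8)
The plan is to show that the nearest-codeword decoder of $\rmrm$ fails with vanishing probability whenever the noise level is within a $(1+\epsilon)$ factor of the capacity limit. Fix $\epsilon>0$, set $R=\rate{\rmrm}$, and let $p$ satisfy $h(p)=1-R(1+\epsilon)$; since $r\le m/70$ we have $R=\binom{m}{\le r}/2^m\to 0$, so $p\to 1/2$. Working as in the setup with error patterns of exactly $pn$ flips, a decoding error forces the received word to be at least as close to some other codeword, which is exactly the event that the error pattern is bad. Hence it suffices to prove $\bP\{\cB\}=o(1)$, and for this I would estimate the displayed union bound term by term.

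The first step is to estimate the combinatorial kernel. By Stirling's approximation each binomial coefficient is governed by the binary entropy, so
\[
\log_2\!\left(\frac{\binom{\beta n}{\beta n/2}\binom{(1-\beta)n}{(p-\beta/2)n}}{\binom{n}{pn}}\right)
= n\,\psi(\beta)+O(\log n),
\qquad
\psi(\beta)\triangleq \beta+(1-\beta)\,h\!\left(\tfrac{p-\beta/2}{1-\beta}\right)-h(p).
\]
A direct check shows that at $p=1/2$ one has $\psi(\beta)\equiv 0$: at the critical noise level every weight is confusable, so all of the decay must be extracted from the capacity gap $p=\tfrac12-\delta$ with $\delta=\delta(R,\epsilon)>0$. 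Expanding $\psi$ in this gap exhibits it as a negative quantity whose size is controlled jointly by $\delta$ and by the bias $1-2\beta$ of the weight $\beta$.

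Next I would partition the weight range. By the symmetry $\beta\leftrightarrow 1-\beta$ it suffices to treat $\beta\le 1/2$. For the small weights I split $[\,d/n,\,1/4\,]$ into dyadic intervals $[2^{-k-1},2^{-k}]$ and bound $\wdist{r}{m}{\le 2^{-k}}$ on each using \Cref{thm:sberlo}, whose leading exponent is $O\!\big(k\gamma^{k-1}\binom{m}{\le r}\big)=O(k\gamma^{k-1}Rn)$. Since $R\to 0$ and $\sum_k k\gamma^{k-1}<\infty$, these exponents total only $o(n)$, while $\psi(\beta)$ stays a negative constant throughout $[\,d/n,1/4\,]$; hence this part of the sum is exponentially small. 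For the weights near $1/2$ I further partition $[1/4,1/2]$ by repeatedly doubling the bias, starting from a sub-constant bias and continuing up to bias $1/2$, and bound each sub-interval with the sharper \Cref{thm:sberlo-bias}.

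The main obstacle is precisely this band just below $1/2$: there the weight enumerator is largest while $\psi(\beta)\to 0$ as $\beta\to 1/2$, so the fixed negative-constant argument breaks down and the two effects must be weighed scale by scale. For the sub-interval of bias about $2^{-\ell}$, \Cref{thm:sberlo-bias} bounds the number of codewords by $2^{O(m^4)+(1-2^{-c(\gamma,\ell)})\binom{m}{\le r}}$, and this exponent must be beaten by the now-small entropy gain $n\,\psi(\beta)$ at the matching bias. Arranging this comparison to hold simultaneously at every bias scale, while also controlling the dyadic enumerator exponents $17(c_\gamma\ell+d_\gamma)\gamma^{\ell-1}$, is the quantitative heart of the proof; the threshold $\gamma=r/m\le 1/70$ is exactly what forces every sub-interval to contribute a negative exponent, which yields $\bP\{\cB\}=o(1)$ and hence capacity on the BSC.
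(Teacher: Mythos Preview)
Your proposal is correct and follows essentially the same route as the paper: set up the union bound over bad error patterns for the nearest-codeword decoder, partition the weight range into dyadic intervals $[2^{-k-1},2^{-k}]$ handled via \Cref{thm:sberlo} and a finer bias-doubling partition of $[1/4,1/2]$ handled via \Cref{thm:sberlo-bias}, with the constant $1/70$ emerging from balancing the enumerator exponents against the entropy kernel at every scale.

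One small correction worth flagging: since $p\to 1/2$, your own expansion gives $\psi(\beta)\approx -R(1+\epsilon)\cdot\tfrac{\beta}{1-\beta}=\Theta(R)$, not a negative constant; so on the dyadic intervals the competition is $\Theta(Rn)$ against $\Theta(Rn)$ rather than ``constant $\cdot\,n$ beats $o(n)$'', and the threshold $\gamma\le 1/70$ is already needed there, not only in the near-$1/2$ band.
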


Similarly to the BEC case, we can show that up to degrees close to $m/2$, RM codes can correct a fraction $1/2 - o(1)$ of random errors. 

\begin{theorem}
\label{thm:bsc-noise}
For any $r < m/2 - \Omega\per{{\sqrt{m\log m}}}$ the  maximum likelihood decoder for $\reedmuller{r}{m}$ can correct a fraction of $1/2-o(1)$ random errors.
\end{theorem}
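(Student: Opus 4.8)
To bound the maximum-likelihood failure probability, the plan is to start directly from the union bound derived above, namely
\[
\bP\{\cB\} \leq \sum_{\beta} \wdist{r}{m}{\beta}\,\rho(\beta), \qquad
\rho(\beta) := \frac{\binom{\beta n}{\beta n/2}\binom{(1-\beta)n}{(p-\beta/2)n}}{\binom{n}{pn}},
\]
and to show that for $p = \tfrac12 - \delta$ with a suitable $\delta = o(1)$ the whole sum tends to $0$. First I would estimate the binomial ratio by Stirling's formula, writing $\rho(\beta) \leq \poly(n)\cdot 2^{n g(\beta)}$ with $g(\beta) = \beta + (1-\beta)h\!\per{\frac{p-\beta/2}{1-\beta}} - h(p)$. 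A short calculation gives $g(\beta)=0$ exactly when $p=\tfrac12$; expanding $h(\tfrac12-x)=1-\frac{2x^2}{\ln 2}+O(x^4)$ around $\tfrac12$ then yields, to leading order in $\delta$,
\[
g(\beta) \;\leq\; -\frac{2\delta^2}{\ln 2}\cdot\frac{\beta}{1-\beta},
\]
so that $\rho$ supplies a decay factor of roughly $2^{-c\delta^2 n\,\beta/(1-\beta)}$. Since $\reedmuller{r}{m}$ contains the all-ones codeword we have $\wdist{r}{m}{\beta}=\wdist{r}{m}{1-\beta}$, which lets me fold the sum and restrict to $\beta\in[2^{-r},\tfrac12]$.

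Next I would split $[2^{-r},\tfrac12]$ into intervals and bound each contribution with the appropriate enumerator estimate. On the small-weight dyadic intervals $\beta\in[2^{-k-1},2^{-k}]$ with $k\geq 2$, the factor $\beta/(1-\beta)\approx 2^{-k}$ makes the decay $2^{-c\delta^2 n 2^{-k}}$, which I would pit against $\wdist{r}{m}{\le 2^{-k}}\leq 2^{O(m^4)+17(c_{\gamma}k+d_{\gamma})\gamma^{k-1}\binom{m}{\le r}}$ from \Cref{thm:sberlo}; because $\gamma=r/m<1$ the factor $\gamma^{k-1}$ decays geometrically while $n2^{-k}=2^{m-k}$ stays large, so each such interval contributes $o(1/m)$. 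For the range near $\beta=\tfrac12$ I would re-partition by the bias $\epsilon=1-2\beta$ into dyadic bias intervals $\epsilon\in[2^{-\ell-1},2^{-\ell}]$; here $\beta/(1-\beta)\approx 1$, so $\rho$ contributes only the $\epsilon$-independent decay $2^{-c\delta^2 n}$, and I would control the number of such near-balanced codewords using $\wdist{\gamma m}{m}{\le \frac{1-2^{-\ell}}{2}}\leq 2^{O(m^4)+(1-2^{-c(\gamma,\ell)})\binom{m}{\le r}}$ from \Cref{thm:sberlo-bias} (falling back to the trivial count $2^{\binom{m}{\le r}}$ for the very smallest biases). Summing the $\poly(m)$ interval bounds then gives $\bP\{\cB\}=o(1)$, so the minimum-distance (= maximum-likelihood) decoder succeeds with high probability against a $p=\tfrac12-o(1)$ fraction of errors.

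The crux of the argument, and the source of the degree restriction, is this near-$\tfrac12$ regime, where $\rho$ gives only the weak, $\delta$-dependent decay $2^{-c\delta^2 n}$ against an enormous count of near-balanced codewords. To win we need $c\delta^2 n > \binom{m}{\le r}$ with $\delta=o(1)$, i.e.\ $\binom{m}{\le r}$ must be nontrivially smaller than $n=2^m$. Writing $\gamma=r/m=\tfrac12-\xi$, the hypothesis $r<m/2-\Omega(\sqrt{m\log m})$ means $\xi=\Omega(\sqrt{\log m/m})$, whence $1-h(\gamma)=\Theta(\xi^2)=\Omega(\log m/m)$ and $\binom{m}{\le r}\leq 2^{h(\gamma)m}=n/m^{\Omega(1)}$. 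This polynomial gap is exactly what lets a choice $\delta=1/\poly(m)=o(1)$ (which also absorbs the additive $2^{O(m^4)}$ losses in the enumerator bounds) make the decay dominate; reassuringly, the very same threshold is the range of validity of \Cref{thm:sberlo-bias}, so the degree bound that controls the enumerator and the one that controls the binomial decay coincide. I expect the main technical obstacle to be making the entropy expansion of $g(\beta)$ uniform across all intervals, in particular verifying that the leading-order estimate survives near the endpoints $\beta\to\tfrac12$ and $\beta\to 2p=1-2\delta$, where $\tfrac{p-\beta/2}{1-\beta}$ approaches $0$, and checking that the $\poly(m)$ intervals together with the $2^{O(m^4)}$ factors are all negligible against the exponential decay.
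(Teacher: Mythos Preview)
Your proposal is correct and follows essentially the same approach as the paper: starting from the same union bound over bad error patterns, partitioning the weight range into dyadic intervals $[2^{-k-1},2^{-k}]$ handled via \Cref{thm:sberlo} and a near-$\tfrac12$ region handled via the bias bound \Cref{thm:sberlo-bias}, exactly as the paper sketches. Your added detail on the Stirling exponent $g(\beta)$, its quadratic expansion in $\delta$, and the explanation of why the hypothesis $r<m/2-\Omega(\sqrt{m\log m})$ yields $\binom{m}{\le r}=n/m^{\Omega(1)}$ (making a choice $\delta=1/\poly(m)$ beat the codeword count) fills in precisely what the paper leaves implicit by citing \cite{Sberlo18}.
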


As in Theorem~\ref{thm:bec-noise}, this is not enough to show that RM codes achieve capacity at this range of parameters (as the $o(1)$ term is not the correct one), but it gives a good indication that it does.

\subsection{RM codes achieve capacity on the BEC at high rate \cite{Abbe15}}
\label{sect:hj}

In this section we explain the high level idea of the proof of \cite{Abbe15} that RM codes of very high degree achieve capacity for the BEC. Similar to the case of $R\to 0$, we say that a family of codes of rate $R\to 1$ achieves capacity for the BEC if it can correct (with high probability) a fraction $p\geq (1-\epsilon)(1-R)$ of erasures. Thus, for such a code to achieve capacity for the BEC  it must hold that,  with high probability,  $\left(1-(1-\epsilon)(1-R)\right)n $ random rows of the generating span  the row space. This is equivalent to saying that if we consider the parity check matrix of the code, then a random subset of $(1-\epsilon)(1-R)n$ columns is full rank (i.e. the columns are linearly independent) with high probability. 

Since we will be dealing with very high rates (i.e. very high degrees) it will be more convenient for us to think of our code as $\reedmuller{m-r-1}{m}$ (i.e. RM code of degree $m-r-1$). 
As the parity check matrix of $\reedmuller{m-r-1}{m}$ is the generating matrix of $\reedmuller{r}{m}$, the discussion above gives rise to the question that we discuss next. 

For an input $z\in\F_2^m$ and degree $r$ denote with $z^r$ the column of $R_n$ corresponding to the evaluation point $z$ (recall Equation~\eqref{eq:def-Rn}). In other words, $z^r$ contains the evaluation of all multilinear monomials of degree at most $r$ at $z$. Thus, the code $\reedmuller{m-r-1}{m}$ achieves capacity for the BEC if it holds with high probability that a random subset $Z\subset F_2^m$, of size $|Z|=(1-\epsilon){m\choose \leq r}$, satisfies that the set $\{z^r \mid \\z\in Z\}$ is linearly independent. 

Abbe et al. \cite{Abbe15} show that this is indeed the case for small enough $r$.
\begin{theorem}\cite{Abbe15}
For $r=o(\sqrt{m/\log m})$, $\reedmuller{m-r-1}{m}$ achieves capacity on the BEC. \end{theorem}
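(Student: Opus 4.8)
The plan is to establish the equivalent statement highlighted above: for a uniformly random subset $Z \subseteq \F_2^m$ with $|Z| = (1-\epsilon)\binom{m}{\leq r}$, the vectors $\{z^r : z \in Z\}$ are linearly independent with high probability, so that the corresponding columns of the parity-check matrix of $\reedmuller{m-r-1}{m}$ are independent and a $(1-\epsilon)(1-R)$ fraction of erasures can be corrected. First I would expose the points of $Z$ one at a time, say $z_1, z_2, \dots$, and track the event that the newly revealed vector $z_{i+1}^r$ lands in the span of the previous ones. Writing $k = \binom{m}{\leq r} = \dim \reedmuller{r}{m}$, it then suffices to show that with high probability no dependency occurs during the first $(1-\epsilon)k$ steps.

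The key is to reinterpret each dependency geometrically. Suppose $z_1^r,\dots,z_i^r$ are already independent, and let $W_i \subseteq \reedmuller{r}{m}$ be the space of degree-$\leq r$ polynomials vanishing at $z_1,\dots,z_i$; since evaluation identifies $\reedmuller{r}{m}$ with its coefficient space $\F_2^k$ and the $i$ points impose independent linear conditions, $\dim W_i = k-i$. A short duality computation (the functional $c\mapsto\langle c,z^r\rangle$ is exactly evaluation at $z$ of the polynomial with coefficients $c$) shows that $z_{i+1}^r$ lies in $\mathrm{span}(z_1^r,\dots,z_i^r)$ precisely when every polynomial of $W_i$ also vanishes at $z_{i+1}$, i.e. when $z_{i+1}$ belongs to the common zero set $B_i$ of $W_i$. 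This set is \emph{closed} (it already contains $z_1,\dots,z_i$, and passing to its closure leaves $W_i$ unchanged), so $|B_i|\geq i$, with equality exactly when the chosen points are in general position. Conditioned on the history, the next point fails with probability roughly $|B_i|/n$, so the whole argument reduces to controlling $\sum_{i<(1-\epsilon)k}|B_i|/n$, which I would turn into a bound on the total failure probability by a union bound over the $i$ steps (or a martingale deviation estimate) once $|B_i|$ is under control.

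The crux — and where the restriction on $r$ enters — is bounding $|B_i|$. The only mechanism making $B_i$ substantially larger than $i$ is clustering of the chosen points into a low-dimensional affine subspace: if at least $\binom{d}{\leq r}$ of them fall into and evaluation-span a $d$-dimensional affine subspace $V$, then $\reedmuller{r}{m}$ restricted to $V$ is spanned by those points, every polynomial of $W_i$ vanishes identically on $V$, and all $2^d$ points of $V$ are swallowed into $B_i$. For random points this is rare, and I would quantify it by a union bound over affine subspaces: for each relevant dimension $d$, multiply the number of $d$-dimensional affine subspaces of $\F_2^m$ (which is $2^{(d+1)(m-d)}$) by the probability that the requisite $\binom{d}{\leq r}$ of the $(1-\epsilon)k$ random points land in a fixed such $V$ (a binomial tail, roughly $(2^{d-m})^{\binom{d}{\leq r}}$ up to a $\binom{k}{\binom{d}{\leq r}}$ factor). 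Showing $|B_i| = o(n/k)$ throughout then keeps the per-step failure probabilities summable.

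The hard part will be making this multi-scale count converge \emph{simultaneously} over all dimensions $d$ and all steps $i$, especially in the end regime $i\to(1-\epsilon)k$ where $W_i$ still has constant-fraction dimension but the clustering constraints are tightest; it is precisely the optimization of the subspace-count against the clustering-probability trade-off that forces $r = o(\sqrt{m/\log m})$ rather than a larger degree. As a consistency check on what must be controlled, I would note the dual picture: a dependency is exactly a nonzero codeword of $\reedmuller{m-r-1}{m}$ supported inside $Z$, so the failure probability is at most $\sum_{0\neq f\in\reedmuller{m-r-1}{m}} p^{|f|}$ with $p\approx k/n$; the dangerous terms come from minimum- and low-weight codewords, which by the Kasami–Tokura structure theorem (Theorem~\ref{thm:wt<2.5}) are precisely indicators of affine subspaces — exactly the clustering configurations identified above. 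Matching the two viewpoints is reassuring, but it does not remove the need for the careful, degree-limited counting, which remains the main obstacle.
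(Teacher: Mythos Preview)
Your setup --- sequential exposure of points and the duality identifying a new dependency with $z_{i+1}$ landing in the common zero set $B_i$ of $W_i$ --- matches the paper exactly (this is Claim~\ref{cla:common-zero}). The divergence, and the gap, is in how you control $|B_i|$.

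You propose a union bound over affine subspaces, arguing that $B_i$ is large only when the random points cluster into some low-dimensional flat. But the implication you state runs the wrong way: the clustering condition you describe is a \emph{sufficient} condition for an affine subspace $V$ to be absorbed into $B_i$, not a necessary one. The set $B_i$ is the zero locus of an arbitrary $(\epsilon k)$-dimensional space of degree-$\le r$ polynomials; for $r\ge 2$ it need not be a union of flats, and nothing forces the $i$ chosen points to lie inside any flat contained in $B_i$. Ruling out one mechanism by which $B_i$ could grow does not bound $|B_i|$.

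The paper sidesteps this completely. Since the restriction map $\reedmuller{r}{m}\to\F_2^{B_i}$ has kernel exactly $W_i$, its image has dimension $i$. Wei's theorem on the generalized Hamming weights of $\reedmuller{r}{m}$ (Theorem~\ref{thm:GHW}, packaged as Lemma~\ref{lem:number_of_ind_polynomials}) then gives: if $|B_i|>2^{m-t}$ this image would have dimension strictly larger than $\binom{m-t}{\le r}$, forcing $i>\binom{m-t}{\le r}$. This bounds $|B_i|$ \emph{deterministically} in terms of $i$ alone, with no appeal to the randomness of the points. The constraint $r=o(\sqrt{m/\log m})$ then falls out of balancing $t\gtrsim \log k\approx r\log m$ (needed for $k\cdot 2^{-t}=o(1)$) against $t\lesssim \epsilon m/r$ (needed for $\binom{m-t}{\le r}\ge (1-\epsilon)\binom{m}{\le r}$).

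Your intuition that affine subspaces are the extremal configurations is morally right --- that is precisely what Wei's theorem encodes --- but the missing ingredient is Wei's theorem itself, not a probabilistic clustering count. (Your dual ``consistency check'' via low-weight codewords of $\reedmuller{m-r-1}{m}$ is in fact a legitimate alternative route, essentially the weight-enumerator approach of Section~\ref{sec:erasures}; but Kasami--Tokura only covers weights up to $2.5\,d_{\min}$, and carrying that line through requires the full weight-enumerator machinery rather than the argument you sketch.)
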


The idea of the proof is to view the random $Z$ as if the points were picked one after the other, and to prove that as long as $r$ is not too large, the relevant set remains independent with high probability. An important observation made in \cite{Abbe15} connects this problem to a question about common zeros of degree $r$ polynomials.

\begin{claim}[Lemmas 4.6-4.9 of \cite{Abbe15}]\label{cla:common-zero}
Let $z_1,\ldots,z_s\in\F_2^m$ be such that the vectors $\{z_i^r\}$ are linearly independent. Let ${\cal I} = \{f\in \reedmuller{r}{m} \mid f(z_i)=0 \text{ for every } 1\leq i\leq s\}$. Then, for $z\in\F_2^m$, $z^r$ is not in the span of $\{z_i^r\}$ if and only if there is some $f\in {\cal I}$ such that $f(z)\neq 0$. 
\end{claim}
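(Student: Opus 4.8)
The plan is to recast the statement as the standard double-orthogonal-complement identity over $\F_2$, after which both directions become routine. First I would fix the coordinate system implicit in the definition of $z^r$: indexing coordinates by subsets $A\subseteq[m]$ with $|A|\le r$, the vector $z^r\in\F_2^{\binom{m}{\le r}}$ has $A$-entry $z_A=\prod_{i\in A}z_i$, while a polynomial $f\in\reedmuller{r}{m}$ is written $f=\sum_{|A|\le r}u_A x_A$ with coefficient vector $u_f=(u_A)_A$. The key identity is then $f(z)=\sum_A u_A z_A=\langle u_f,z^r\rangle$, where $\langle\cdot,\cdot\rangle$ is the standard inner product on $\F_2^{\binom{m}{\le r}}$; this converts ``$f$ vanishes at a point'' into orthogonality. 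Writing $W:=\text{span}_{\F_2}\{z_1^r,\ldots,z_s^r\}$, the coefficient vectors of the polynomials in ${\cal I}$ are exactly $W^\perp$, since $f\in{\cal I}$ means $\langle u_f,z_i^r\rangle=f(z_i)=0$ for every $i$.

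With this dictionary the claim reads: $z^r\notin W$ iff some $u\in W^\perp$ has $\langle u,z^r\rangle\ne 0$. The forward (contrapositive) direction I would dispatch directly: if $z^r=\sum_i c_i z_i^r\in W$, then for every $f\in{\cal I}$ we get $f(z)=\langle u_f,z^r\rangle=\sum_i c_i\langle u_f,z_i^r\rangle=\sum_i c_i f(z_i)=0$, so no element of ${\cal I}$ is nonzero at $z$. For the reverse direction, assume $z^r\notin W$; combined with the linear independence of $\{z_i^r\}$ this makes $z_1^r,\ldots,z_s^r,z^r$ linearly independent. Extending these to a basis and passing to the dual basis yields a linear functional vanishing on each $z_i^r$ but equal to $1$ on $z^r$; by non-degeneracy of the $\F_2$-inner product this functional equals $\langle u,\cdot\rangle$ for a unique $u$, which therefore lies in $W^\perp$, i.e. equals $u_f$ for some $f\in{\cal I}$, and satisfies $f(z)=\langle u,z^r\rangle=1\ne 0$.

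There is no deep obstacle here — the entire content of the claim is the reformulation above, which is exactly why \cite{Abbe15} finds the common-zero picture convenient for the inductive argument that follows. The one point I would take care to state explicitly is that the underlying fact is $(W^\perp)^\perp=W$ over $\F_2$, which holds because the bilinear form $\langle x,y\rangle=\sum_i x_iy_i$ is \emph{non-degenerate} (so that $\dim W+\dim W^\perp=\binom{m}{\le r}$ and hence $\dim(W^\perp)^\perp=\dim W$, forcing equality with the trivially-contained $W$). This is the place where the naive ``definiteness'' intuition fails over $\F_2$, since there exist nonzero self-orthogonal vectors; only non-degeneracy is needed, and it is precisely what the reverse direction above quietly uses when it realises the dual-basis functional as an inner product.
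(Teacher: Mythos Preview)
Your proof is correct and follows essentially the same approach as the paper's proof sketch: both identify $f(z)=\langle u_f,z^r\rangle$, recast $\mathcal{I}$ as the orthogonal complement $W^\perp$ of $W=\text{span}\{z_i^r\}$, and then reduce the claim to the standard fact $(W^\perp)^\perp=W$ under the non-degenerate bilinear form on $\F_2^{\binom{m}{\le r}}$. The paper merely states the reformulation while you also spell out both directions explicitly (and carefully note that only non-degeneracy, not definiteness, is needed), but there is no substantive difference in the route taken.
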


\begin{proof}[Proof Sketch]
Observe that a vector $c\in\F_2^{m\choose \leq r}$ is perpendicular to $z^r$ (in the sense that the dot product is zero) 
if and only if the polynomial whose coefficients are given by the coordinates of $c$ vanishes at $z$. Thus, the claim says that for a vector $z^r$ to be linearly independent of a set of vectors $Z$ it must be the case that the dual space to $Z\cup\{z^r\}$ is smaller than the dual space of $Z$. 
\end{proof}

Thus, we have to understand what is the probability that a new point is a common zero of the polynomials in $\cal I$. By dimension arguments, $\dim(\text{span}(Z))=s$ and hence $\dim(\text{span}({\cal I}))={m\choose \leq r} - s$. It remains to prove that if $r=o(\sqrt{m/\log m})$ and $s=(1-\epsilon){m\choose \leq r}$ then the number of common zeroes of the polynomials in $\cal I$ is $o(2^m)$. This will guarantee that a random point $z$ is, with high probability, not a common zero and by Claim~\ref{cla:common-zero} this will ensure that $z^r$ is linearly independent of $Z$ as desired. 

Denote with $V\subset\F_2^m$ the set of common zeroes of the polynomials in $\cal I$. That is, $V=\{z\in\F_2^m \mid \forall f\in {\cal I} \;, f(z)=0\}$.
To prove that $V$ is not too large, \cite{Abbe15} show that if $V$ was large then there would be many degree $r$ polynomials whose evaluation vectors on the points in $V$ are linearly independent. Indeed, by dimension arguments again we have that the dimension of the evaluation vectors of degree $r$ polynomials, restricted to the points in $V$, is at most ${m\choose \leq r} -\dim(\text{span}({\cal I})) = {m\choose \leq r} - \left({m\choose \leq r} - s\right) =s$. Thus, if one could show that must be more than that many linearly independent polynomials when $V$ is large, then one would get a contradiction. This is captured in the next lemma.

\begin{lemma}[Lemma 4.10 of \cite{Abbe15}]\label{lem:number_of_ind_polynomials}
Let $V\subseteq \F_2^m$ such that $|V| > 2^{m-t}$. Then there are more than ${m-t \choose \leq r}$ linearly independent polynomials of degree $\leq r$ that are defined on $V$.
\end{lemma}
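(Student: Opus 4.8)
The plan is to recast the statement as a rank computation and prove it by induction on $m$ through the Plotkin decomposition. Writing $\mathrm{RM}(m,r)|_V$ for the image of $\mathrm{RM}(m,r)$ under restriction to the coordinates indexed by $V$, the quantity ``number of linearly independent degree-$\leq r$ polynomials defined on $V$'' is exactly $\dim\big(\mathrm{RM}(m,r)|_V\big)$. Setting $s:=m-t$, the goal becomes: if $|V|>2^{s}$ then $\dim\big(\mathrm{RM}(m,r)|_V\big)>\binom{s}{\leq r}$. As a base case for the smallest relevant degree I would record $r=1$, where $\dim(\mathrm{RM}(m,1)|_V)$ equals the affine dimension of $V$ plus one, and $|V|>2^{s}$ forces the affine dimension to be at least $s+1$, so the bound holds with room to spare; the degree $r=0$ clause is the trivial $\dim=1=\binom{s}{\leq 0}$.

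For the inductive step I would fix a nonzero linear form $\ell$ (any direction is allowed, by the affine-invariance of Section~\ref{sect:aip}), split $V=V_0\sqcup V_1$ according to $\ell=0$ and $\ell=1$, and identify each half with a subset of $\F_2^{m-1}$. Decomposing $f=g+\ell h$ with $\deg g\leq r$ and $\deg h\leq r-1$, the restriction of $f$ to $V$ is $g|_{V_0}$ on the first half and $(g+h)|_{V_1}$ on the second; taking $g$'s independent on $V_0$ together with the functions $\ell h$ for $h$'s independent on $V_1$ yields, after the usual ``restrict to $V_0$, then to $V_1$'' argument, the subadditive bound
$$\dim\big(\mathrm{RM}(m,r)|_V\big)\ \geq\ \dim\big(\mathrm{RM}(m-1,r)|_{V_0}\big)+\dim\big(\mathrm{RM}(m-1,r-1)|_{V_1}\big).$$
I would run the induction with two coupled hypotheses, a strict one ($|W|>2^{s'}\Rightarrow\dim>\binom{s'}{\leq r}$) and a non-strict one ($|W|\geq 2^{s'}\Rightarrow\dim\geq\binom{s'}{\leq r}$), since the two halves feed the two clauses. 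Choosing $V_0$ to be the larger half gives $|V_0|>2^{s-1}$ automatically, so the degree-$r$ term exceeds $\binom{s-1}{\leq r}$ by the strict clause; if in addition $|V_1|\geq 2^{s-1}$, the degree-$(r-1)$ term is $\geq\binom{s-1}{\leq r-1}$, and the two add up to more than $\binom{s}{\leq r}$. A second easy regime is a split so lopsided that $|V_0|>2^{s}$, in which case the strict clause applied to $V_0$ alone (in $m-1$ variables, same $s$) already finishes.

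The main obstacle is the remaining regime, $2^{s}<|V|<\tfrac{3}{2}\,2^{s}$, where it is unclear that a direction $\ell$ exists making the split balanced enough — both $|V_0|>2^{s-1}$ and $|V_1|\geq 2^{s-1}$ — while no single half already exceeds $2^{s}$; there the threshold-only hypotheses lose the slack actually hidden in the possibly large term $\dim(\mathrm{RM}(m-1,r)|_{V_0})$ and do not close by themselves. I would resolve this by passing to standard monomials: for a degree-compatible monomial order the standard monomials of the vanishing ideal of $V$ form a down-closed family $\cF\subseteq 2^{[m]}$ with $|\cF|=|V|$, whose members of degree $\leq r$ number exactly $\dim(\mathrm{RM}(m,r)|_V)$. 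The lemma thereby becomes the purely combinatorial assertion that a down-closed $\cF$ with $|\cF|>2^{s}$ has more than $\binom{s}{\leq r}$ members of size $\leq r$, which I would attack by a Kruskal--Katona / compression argument: the down-shifts $D_{ij}$ preserve both $|\cF|$ and the entire size profile, the sub-cube $2^{[s]}$ is the minimizer at size $2^{s}$ with value $\binom{s}{\leq r}$, and $\dim(\mathrm{RM}(m,r)|_V)$ is monotone in $|V|$. The hardest point to make fully rigorous is exactly this extremal step — equivalently, the strict increase of $\dim(\mathrm{RM}(m,r)|_V)$ as $|V|$ crosses $2^{s}$ — which is where the whole difficulty of the lemma is concentrated.
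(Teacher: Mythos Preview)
Your reformulation through standard monomials is correct and elegant: for a degree-compatible term order the standard monomials of $I(V)$ form a down-closed family $\cF\subseteq 2^{[m]}$ with $|\cF|=|V|$, and the degree-$\le r$ ones are linearly independent on $V$ and span $\reedmuller{r}{m}|_V$, so the lemma is equivalent to the combinatorial statement ``any down-closed $\cF$ with $|\cF|>2^{s}$ has more than $\binom{s}{\le r}$ members of size $\le r$.'' But the proof of that statement is precisely where your proposal stops. You yourself note that the $D_{ij}$ compressions preserve the \emph{entire} size profile, so they cannot move you from an arbitrary down-closed $\cF$ to the cube $2^{[s]}$ while decreasing the count of small sets; they only show that a shifted family exists with the same counts. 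What actually pins down the extremal family is a Kruskal--Katona argument applied level by level (bounding $|\cF_{k-1}|$ from below via $|\partial\cF_k|$), which you have not carried out. As written, the Plotkin induction is abandoned in the hard regime and the replacement argument is only a plan.

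For comparison, the paper's proof is a two-line application of Wei's computation of the generalized Hamming weights of $\reedmuller{r}{m}$ \cite{Wei}: setting $a=\sum_{i=1}^{t}\binom{m-i}{\le r-1}$, Wei's theorem gives $d_a(\reedmuller{r}{m})=2^m-2^{m-t}$, and by the definition of $d_a$ any $V$ with $|V|>2^{m-t}$ satisfies $\dim(\reedmuller{r}{m}|_V)>\binom{m}{\le r}-a=\binom{m-t}{\le r}$. Your combinatorial extremal claim is in fact equivalent to this special case of Wei's theorem via the footprint/standard-monomial correspondence you set up, so completing your route would amount to reproving Wei for Reed--Muller codes. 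That is doable (and arguably more self-contained), but it is not shorter, and the step you flagged as ``hardest to make fully rigorous'' is exactly the content you would need to supply.
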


The proof relies on the notion of generalized hamming weight and on a theorem of Wei \cite{Wei}. To present this result we need some definitions. Let $C\subseteq \F_2^n$ be a linear code and $D\subseteq C$ a linear subcode. We denote $\text{supp}(D) = \{ i : \exists y \in D, \text{ such that } y_i \neq 0\}$.

\begin{definition}[Generalized Hamming weight]\label{dfn:GHW_1}
For a code $C$ of length $n$ and an integer $a$ we define 
$$d_a(C) = \min \{ \text{supp}(D) \mid D\subseteq C \text{ is a linear subcode with } \dim(D) = a\}.$$ 
\end{definition}
The following lemma follows from arguments similar to what we described above. 
\begin{lemma}\label{lem:GHW_2}
For a code $C$ of length $n$ and an integer $a$ we have that 
$$d_a(C) = \max \{ b \mid \forall |S|< b \text{ we have that }\dim(C[S^c]) > \dim(C)-a\},$$
where for a set of coordinates $S$, $C[S^c]$ is the linear code obtained by projecting each codeword in $C$ to the coordinate set $S^c$, the complement of the set $S$.
\end{lemma}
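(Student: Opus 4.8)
The plan is to reduce both sides of the claimed identity to the same combinatorial quantity, namely the minimum size of a coordinate set $S$ that supports an $a$-dimensional subcode. The first step is to analyze the projection map $\pi_{S^c} : C \to \F_2^{S^c}$ whose image is $C[S^c]$. Its kernel is exactly the set of codewords vanishing outside $S$, i.e. the subcode $C(S) := \{c \in C \mid \text{supp}(c) \subseteq S\}$. By rank--nullity,
$$\dim(C[S^c]) = \dim(C) - \dim(C(S)).$$
Hence the defining condition on the right-hand side, $\dim(C[S^c]) > \dim(C) - a$, is equivalent to $\dim(C(S)) < a$. The right-hand side is therefore the largest $b$ such that every $S$ with $|S| < b$ satisfies $\dim(C(S)) < a$; equivalently, it is the smallest cardinality of a set $S$ with $\dim(C(S)) \geq a$, a quantity I will call $b^*$.

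Next I would prove $d_a(C) = b^*$ by two inequalities. For $b^* \leq d_a(C)$: take a subcode $D \subseteq C$ with $\dim(D) = a$ achieving $|\text{supp}(D)| = d_a(C)$, and set $S = \text{supp}(D)$. Every codeword of $D$ is supported on $S$, so $D \subseteq C(S)$ and thus $\dim(C(S)) \geq a$; this exhibits a set of size $d_a(C)$ with $\dim(C(S)) \geq a$, giving $b^* \leq d_a(C)$. For $d_a(C) \leq b^*$: take a set $S$ of size $b^*$ with $\dim(C(S)) \geq a$, and choose any $a$-dimensional subcode $D \subseteq C(S)$. Then $D \subseteq C$, $\dim(D) = a$, and $\text{supp}(D) \subseteq S$, so $d_a(C) \leq |\text{supp}(D)| \leq |S| = b^*$. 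Combining the two inequalities yields the lemma.

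The argument is almost entirely bookkeeping, and the only genuinely structural input is the rank--nullity identity for the projection map, which is standard. The one place that requires care is the translation between the maximization over $b$ in the statement and the minimization over $S$ defining $b^*$, where one must track strict versus non-strict inequalities correctly: the condition quantifies over $|S| < b$, so the threshold $b^*$ is precisely the size of the smallest obstructing set, neither one more nor one less. I therefore do not anticipate a substantive obstacle. It is worth noting that this is exactly the standard duality characterization of generalized Hamming weights underlying Wei's theorem \cite{Wei}, so one could alternatively invoke that framework, but the self-contained argument above is short enough to give in full.
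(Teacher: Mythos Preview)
Your proof is correct and is essentially the argument the paper has in mind: the paper does not spell out a proof of this lemma, stating only that it ``follows from arguments similar to what we described above'' (referring to the dimension/duality reasoning in the proof sketch of Claim~\ref{cla:common-zero}), and your rank--nullity computation identifying the kernel of the projection with the subcode supported on $S$ is precisely the natural way to make that remark precise.
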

Wei's theorem then gives the following identity.
\begin{theorem}[\cite{Wei}]\label{thm:GHW}
Let $0\leq a\leq {m\choose \leq r}$ be an integer. Then, 
$d_a(\reedmuller{r}{m}) = \sum_{i=1}^{\ell} 2^{m_i}$, where $a = \sum_{i=1}^{\ell}{m_i \choose \leq r_i}$ is the unique representation of $a$.\footnote{Wei also proved that  every $0\leq a\leq {m\choose \leq r}$ has a unique representation as $a = \sum_{i=1}^{\ell}{m_i \choose \leq r_i}$, where $m_i-r_i = m-r -i+1$.}
\end{theorem}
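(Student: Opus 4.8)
The plan is to recast $d_a(\rmrm)$ as a statement about the maximal number of common zeros of an $a$-dimensional space of polynomials of degree at most $r$, reduce that extremal problem to monomial subcodes via the footprint (standard-monomial) bound over $\F_2$, and finish with a Kruskal--Katona-type computation that simultaneously produces the value $\sum_{i=1}^{\ell}2^{m_i}$ and the cascade representation of $a$ asserted in the footnote. Concretely, I would start from Definition~\ref{dfn:GHW_1}, writing $d_a(\rmrm)=\min\{|\text{supp}(D)|:\dim(D)=a\}$ with $D$ ranging over subcodes viewed as spaces of polynomials of degree $\le r$. Identifying a coordinate $z\in\F_2^m$ with the set $\text{supp}(z)=\{i:z_i=1\}\subseteq[m]$, the complement of $\text{supp}(D)$ is exactly the common zero set $Z(D)=\{z:f(z)=0\text{ for all }f\in D\}$, so $|\text{supp}(D)|=2^m-|Z(D)|$ and it suffices to maximize $|Z(D)|$ over $a$-dimensional $D$.

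Next I would reduce to monomial subcodes. Fix a graded monomial order together with the relations $x_i^2=x_i$, so standard monomials are multilinear and indexed by subsets of $[m]$. Row-reducing a basis of $D$ produces $a$ polynomials with distinct leading monomials $x_{A_1},\dots,x_{A_a}$, each $|A_j|\le r$. Since the ideal $I=\langle D\rangle+\langle x_i^2-x_i\rangle$ is zero-dimensional and radical over $\F_2$, the footprint bound gives $|Z(D)|=\dim_{\F_2}\F_2[x]/I=|\Delta(I)|$, the number of standard monomials. As each $x_{A_j}$ is a leading term of a member of $I$, every standard monomial $x_B$ satisfies $A_j\not\subseteq B$ for all $j$, whence $|Z(D)|\le|\{B\subseteq[m]:A_j\not\subseteq B\ \forall j\}|=|Z(D_0)|$ for the monomial subcode $D_0=\text{span}(x_{A_1},\dots,x_{A_a})$. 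Since monomial subcodes are themselves candidates, this shows $d_a(\rmrm)$ is attained by a monomial subcode and equals $\min|\mathcal{U}|$, the minimum size of an up-closed family $\mathcal{U}\subseteq 2^{[m]}$ containing at least $a$ sets of size $\le r$ (here $\mathcal{U}$ is the up-closure of $\{A_j\}$ and $|\text{supp}(D_0)|=|\mathcal{U}|$ under the identification above).

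Finally I would solve this purely combinatorial problem. Complementing every set, $B\mapsto[m]\setminus B$, turns it into the task of minimizing the size of a down-closed family containing at least $a$ sets of size $\ge m-r$; this is precisely a Kruskal--Katona / Macaulay shadow-minimization problem, whose extremal families are the initial segments in colexicographic order. Expressing $a$ as such an initial segment forces the unique expansion $a=\sum_{i=1}^{\ell}\binom{m_i}{\le r_i}$ with $m_i-r_i=m-r-i+1$, establishing both existence and uniqueness of the representation, and the down-closure of the corresponding initial segment has size exactly $\sum_{i=1}^{\ell}2^{m_i}$. Feeding those monomials back through the previous step exhibits a subcode meeting the bound, so $d_a(\rmrm)=\sum_{i=1}^{\ell}2^{m_i}$.

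I expect this last step to be the main obstacle. The footprint reduction is conceptually the crux but, thanks to the field equations over $\F_2$, is a clean application of standard Gröbner-basis facts. In contrast, carrying out the Kruskal--Katona optimization precisely, verifying that the colex initial segment genuinely minimizes the down-closure and that its size is exactly $\sum_i 2^{m_i}$ for the prescribed cascade, and matching it to the constraint $m_i-r_i=m-r-i+1$ together with its uniqueness claim, is where essentially all of the bookkeeping resides.
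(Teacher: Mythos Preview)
The paper does not prove this theorem; it is cited from Wei's paper and used as a black box in the proof of Lemma~\ref{lem:number_of_ind_polynomials}. So there is no ``paper's own proof'' to compare against.

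That said, your outline is a legitimate and well-known alternative route, essentially the footprint approach of Heijnen and Pellikaan rather than Wei's original inductive argument. Your step~2 is correct: over $\F_2$ the quotient $\F_2[x]/\langle x_i^2-x_i\rangle$ is a product of fields, so $I$ is automatically radical and $|Z(D)|=\dim \F_2[x]/I=|\Delta(I)|$; the containment $\Delta(I)\subseteq\{B:A_j\not\subseteq B\ \forall j\}$ then gives $|Z(D)|\le|Z(D_0)|$ for the monomial span $D_0$, and monomial subcodes are genuine $a$-dimensional candidates. This reduction is clean.

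The place to be careful is step~3. The problem you land on is not literally Kruskal--Katona: the generating sets $A_j$ need not all have size exactly $r$ (when $a>\binom{m}{r}$ they cannot), and you are minimizing a full down-closure rather than a single-level shadow. What is actually needed is the Clements--Lindstr\"om/Macaulay-type statement that, among $a$-element antichains in $\{B:|B|\ge m-r\}$, the colex-initial segment minimizes the size of the generated order ideal; from this the cascade representation $a=\sum_i\binom{m_i}{\le r_i}$ with $m_i-r_i=m-r-i+1$ and the value $\sum_i 2^{m_i}$ both fall out. Your sketch gestures at this but does not supply it, and it is where essentially all the work lies. If you intend to write a full proof, you should either invoke Heijnen--Pellikaan explicitly or carry out that extremal computation carefully; ``Kruskal--Katona'' alone will not do the job without the multi-layer extension.
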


Setting parameters carefully, we can prove Lemma~\ref{lem:number_of_ind_polynomials}.

\begin{proof}[Proof of Lemma~\ref{lem:number_of_ind_polynomials}]
For $a = \sum_{i=1}^{t}{m-i\choose \leq r-1}$, Theorem~\ref{thm:GHW} implies that $d_a(\reedmuller{r}{m}) = \sum_{i=1}^{t} 2^{m-i} = 2^m - 2^{m-t}$.
Thus, if $|V| > 2^m - d_a(\reedmuller{r}{m})  = 2^{m -t}$ then there are more than ${m\choose \leq r} -  a = {m\choose \leq r} -  \sum_{i=1}^{t}{m-i \choose \leq r-1}={m-t \choose \leq r}$ many linearly independent degree-$r$ polynomials defined on $Z$. 
\end{proof}

The result showing that $\reedmuller{m-r-1}{m}$ achieves capacity for $r=o(\sqrt{m/\log m})$ now follows from the argument above by careful calculations.

\subsection{RM codes achieve capacity on the BEC at constant rate \cite{Kudekar16STOC}}

In \cite{Kudekar16STOC,Kudekar17}, Kudekar et al. made use of the classic results about sharp thresholds of monotone boolean functions \cite{Kahn88,Bourgain92,Talagrand94,Friedgut96} to prove that RM codes achieve capacity on erasure channels.  More precisely, define
$$
Q(x):=\frac{1}{\sqrt{2\pi}} \int_x^{\infty}
e^{-t^2/2} dt .
$$
\begin{theorem}\label{thm:bec-cap}
For any constant $p\in(0,1)$, 
the sequence of RM codes $\{\mathrm{RM}(m,r_m)\}$ with
$$
r_m=\max\Big\{ \Big\lfloor \frac{m}{2}+\frac{\sqrt{m}}{2}
Q^{-1}(p) \Big\rfloor,0 \Big\}
$$
has code rates approaching $1-p$ and achieves capacity over BEC with erasure probability $p$.
\end{theorem}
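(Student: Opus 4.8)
The plan is to separate the two claims: first that the rate $R_m := \dim(\mathrm{RM}(m,r_m))/2^m$ converges to $1-p$, and second that the block-error probability of $\mathrm{RM}(m,r_m)$ over $\mathrm{BEC}(p)$ vanishes as $m\to\infty$. The first claim is a direct application of the de Moivre--Laplace central limit theorem. Since $R_m = 2^{-m}\sum_{i=0}^{r_m}\binom{m}{i} = \bP[\mathrm{Bin}(m,1/2)\leq r_m]$ and $\mathrm{Bin}(m,1/2)$ has mean $m/2$ and standard deviation $\sqrt{m}/2$, standardizing gives $R_m = \bP\!\left[\tfrac{\mathrm{Bin}(m,1/2)-m/2}{\sqrt m/2}\leq \tfrac{r_m - m/2}{\sqrt m/2}\right]$. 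By the choice of $r_m$ the right-hand threshold tends to $Q^{-1}(p)$, so $R_m \to \Phi(Q^{-1}(p)) = 1 - Q(Q^{-1}(p)) = 1-p$, where $\Phi = 1-Q$ is the standard normal CDF.

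For the capacity claim the central object is the bit-MAP (extrinsic) EXIT function. Fix a coordinate $z\in\bF_2^m$ and view the erasure pattern $E\subseteq\bF_2^m$ (each coordinate erased independently with probability $\epsilon$) as the input to the Boolean function $g_z(E) = \1[\text{bit } z \text{ is not determined by the unerased coordinates other than } z]$. This $g_z$ is monotone increasing in $E$, and its expectation under the product measure is precisely the EXIT value $h(\epsilon)$. The crucial structural input is the affine-invariance established in Section~\ref{sect:aip}: the automorphism group of $\mathrm{RM}(m,r)$ contains the transitive (indeed doubly transitive) affine group, so $\E[g_z]$ is independent of $z$ and, more importantly, all the coordinate influences of $g_z$ are equal.

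The heart of the argument is to turn this symmetry into a sharp threshold. Because $g_z$ is a transitively-symmetric monotone Boolean function on $n=2^m$ coordinates, the Friedgut--Kalai sharp-threshold theorem (itself built on the KKL/BKKKL influence inequalities) forces the transition of $h(\epsilon)$ from $o(1)$ to $1-o(1)$ to occur within a window of width $O(1/\log n) = O(1/m)$. To locate this window I would invoke the area theorem for the BEC, $\int_0^1 h(\epsilon)\,d\epsilon = R_m$, so that the essentially step-shaped EXIT function must jump near $\epsilon_m^\ast = 1-R_m \to p$. Since the jump is confined to a vanishing window around $\epsilon_m^\ast$, on the recoverable side of the threshold the expected fraction of unrecovered bits tends to $0$.

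The step I expect to be the main obstacle is precisely the sharp-threshold estimate --- controlling the width of the transition window --- since this is where the full weight of the symmetry must be extracted through the influence machinery, and where operating right at $\epsilon = p = \lim \epsilon_m^\ast$ makes the constants delicate and forces one to track the sign of $R_m-(1-p)$ through the floor and the $\sqrt m$ correction in $r_m$. A secondary but genuine difficulty is upgrading the vanishing average bit-erasure probability $h(p)\to 0$ to a vanishing \emph{block}-error probability; I would handle this by exploiting the nested recursive structure and double transitivity of RM codes to bound the probability that the random erasure pattern supports any codeword, rather than relying on a naive union bound over coordinates.
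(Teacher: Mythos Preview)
Your plan for bit-MAP --- EXIT function $h(\epsilon)$, monotonicity, equal influences from affine-invariance/double transitivity, the KKL/Friedgut--Kalai sharp-threshold bound $\frac{dh}{dp}\ge \ln(n-1)\,h(p)(1-h(p))$, and the area theorem $\int_0^1 h = k/n$ to pin the threshold at $1-R_m$ --- is exactly the paper's (i.e., Kudekar et al.'s) argument, and the rate computation via the CLT is the standard one. So for bit-MAP you are on the same track.

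The genuine gap is in the block-error upgrade. Your suggestion to ``bound the probability that the random erasure pattern supports any codeword'' via the recursive structure is, in effect, the weight-enumerator union bound of Lemma~\ref{lem:erase-recover}. That route is only known to succeed at extremal rates (Theorems~\ref{thm:bec-cap-low} and the high-rate results of Section~\ref{sect:hj}); at constant rate the weight-enumerator estimates one would need (Theorem~\ref{thm:sam}) are themselves \emph{consequences} of the capacity result, so invoking them here would be circular. The paper does not take this detour. Instead it stays within the threshold framework and invokes the Bourgain--Kalai refinement
\[
\frac{dh}{dp}\;\ge\; C\,\ln\ln(n-1)\cdot\ln(n-1)\cdot h(p)\bigl(1-h(p)\bigr),
\]
which adds a $\ln\ln n$ factor over basic KKL. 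Integrating this differential inequality makes $h(p)$ decay fast enough below the threshold that the \emph{naive} union bound over the $n$ coordinates already yields vanishing block error. In other words, the hard step is not the existence of a sharp threshold (KKL suffices for bit-MAP), but sharpening the transition width by that extra logarithmic factor --- and that is what Bourgain--Kalai provides, not any RM-specific recursive or weight-distribution argument.
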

While their proof applies to (Generalized) RM codes with larger alphabet $\bF_q$ over the $q$-ary erasure channel, in this section we  focus on the proof for the binary RM codes over BEC.

We denote the output alphabet of BEC as $\{0,1,*\}$, where $*$ is the erasure symbol.
Consider the transmission of a codeword $x=(x_{\original{z}}:\original{z}\in\bF_2^m)\in\mathrm{RM}(m,r)$ through $n=2^m$ copies of BEC channels, and the channel output vector is denoted as $y=(y_{\original{z}}:\original{z}\in\bF_2^m)$.
The probabilistic model is set up as follows: Let $X=(X_{\original{z}}:\original{z}\in\bF_2^m)$ be a random codeword uniformly chosen from $\mathrm{RM}(m,r)$ and let $Y(p)$ be the corresponding random output vector after transmitting $X$ through $n=2^m$ copies of BEC channels with erasure probability $p$.
Since the conditional entropy $H(X_{\original{z}}|Y(p)=y)=0$ for every channel output vector $y$ with $y_{\original{z}}\neq *$,
we have
\begin{equation}\label{eq:ber}
H(X_{\original{z}}|Y(p))
=\bP(Y_{\original{z}}=*)
H(X_{\original{z}}|Y_{\sim\original{z}}(p))
=p H(X_{\original{z}}|Y_{\sim\original{z}}(p)),
\end{equation}
where $Y_{\sim\original{z}}(p):=(Y_{\original{z}'}(p):\original{z}'\in\bF_2^m,\original{z}'\neq\original{z})$.
An important property of the BEC is that for any index $\original{z}\in\bF_2^m$ and any output vector $y$, the conditional entropy $H(X_{\original{z}}|Y(p)=y)$ and $H(X_{\original{z}}|Y_{\sim\original{z}}(p)=y_{\sim\original{z}})$ can only take two values, either $0$ or $1$, i.e., given these outputs, each codeword coordinate $X_{\original{z}}$ can either be exactly reconstructed or 
$X_{\original{z}}$ is equally likely to be $0$ or $1$. In the latter case,
no useful information is provided about $X_{\original{z}}$, and we say that $X_{\original{z}}$ is ``erased". Further, since $H(X_{\original{z}}|Y(p))=\sum_{y} H(X_{\original{z}}|Y(p)=y)\bP (Y(p)=y)=\sum_{y: H(X_{\original{z}}|Y(p)=y)=1} \bP (Y(p)=y)$, we have that $H(X_{\original{z}}|Y(p))$ gives the probability that $Y(p)$ takes on values such that $X_{\original{z}}$ is equally likely to be $0$ or $1$, i.e., such that $X_{\original{z}}$ is ``erased". Therefore, the channel mapping $X_{\original{z}}$ to $Y(p)$ is equivalent to a BEC with erasure probability $H(X_{\original{z}}|Y(p))$ (and similarly for $H(X_{\original{z}}|Y_{\sim\original{z}}(p))$). When the channel output vector is $y$, the bit-MAP decoder decodes $X_{\original{z}}$ as
$$
\argmax_{x\in\{0,1\}} 
\bP(X_{\original{z}}=x|Y(p)=y),
$$
and if the maximum is not unique, i.e., if $\bP(X_{\original{z}}=0|Y(p)=y)=\bP(X_{\original{z}}=1|Y(p)=y)$, then the decoder declares an erasure and this takes place with probability $H(X_{\original{z}}|Y(p))$.
From now on, we write 
$$
h_{\original{z}}(p):=H(X_{\original{z}}|Y_{\sim\original{z}}(p)).
$$
This function is called the extrinsic information transfer (EXIT) function and it was introduced by ten Brink \cite{ten99} in the context of turbo decoding.
According to \eqref{eq:ber}, analyzing the behavior of $h_{\original{z}}(p)$ is equivalent to analyzing the decoding error probability of the bit-MAP decoder.
We also define the average EXIT function as 
$$
h(p):=\frac{1}{n}\sum_{\original{z}\in\bF_2^m}
h_{\original{z}}(p).
$$
Since RM codes are affine invariant, for any two coordinates $\original{z}$ and $\original{z}'$, we can always find a permutation in the automorphism group of RM codes mapping from $\original{z}$ to $\original{z}'$. As a consequence, one can easily show that the value of $h_{\original{z}}(p)$ is independent of the index $\original{z}$, i.e., 
$$
h(p)=h_{\original{z}}(p) \text{~~for every~} \original{z}\in\bF_2^m .
$$
Furthermore, we also have that $0\le h(p)\le 1$ for any $0\le p\le 1$. 
The average EXIT functions of some
rate-$1/2$ Reed-Muller codes are shown in Fig.~\ref{fig:shptrans}.
The following three properties of the (average) EXIT function allow us to prove that RM codes achieve capacity under bit-MAP decoding:
\begin{enumerate}
    \item $h(p)$ is a strictly increasing function of $p$.
    \item $h(p)$ has a sharp transition from $0$ to $1$. More precisely, for any $0<\epsilon<1/2$, we have 
    $$
    h^{-1}(1-\epsilon)-h^{-1}(\epsilon)\to 0
    $$ 
    as the code length $n\to\infty$.
    \item The average EXIT function satisfies the {\em area theorem}
    \begin{equation}\label{eq:area}
    \int_0^1 h(p)dp=\frac{k}{n},
    \end{equation}
    where $k$ is the dimension of the RM code and $n$ is code length.
\end{enumerate}

\begin{figure}
    \centering
    \includegraphics[width=0.45\textwidth]{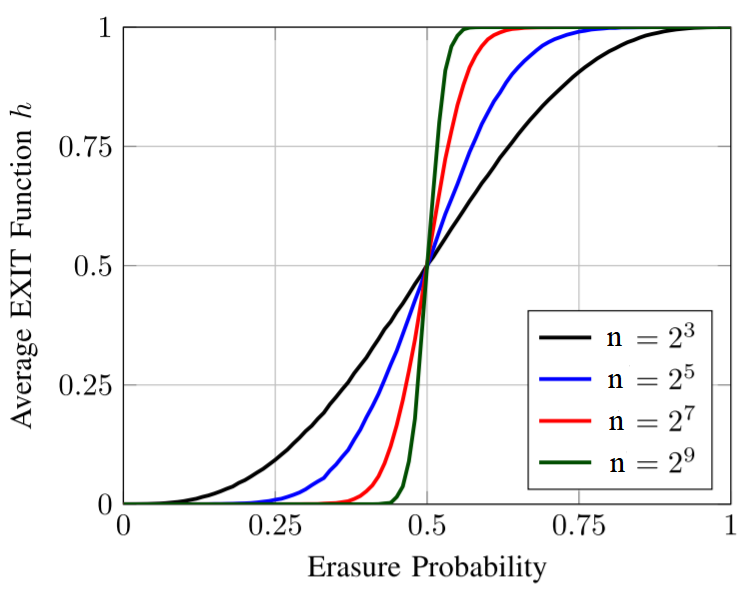}
    \caption{The average EXIT function of the rate-$1/2$ Reed-Muller code with
    blocklength $n$. This is originally Fig.~1 in \cite{Kudekar17}.}
    \label{fig:shptrans}
\end{figure}

Property 2) here simply means that as the blocklength increases, the transition width of the
average EXIT function decreases, as illustrated in Fig.~\ref{fig:shptrans}.

The first two properties imply that the function $h(p)$ has a threshold $p_*$ and a small transition width $w\to 0$ as $n\to\infty$. When $p<p_*-w/2$, we have $h(p)\approx 0$; when $p$ is around $p_*$, i.e., when $p_*-w/2\le p \le p_*+w/2$, the function $h(p)$ increases very fast from $0$ to $1$; when $p>p_*+w/2$, we have $h(p)\approx 1$. More precisely, for a given $\epsilon\in (0,1/2)$, we define the threshold as $p_*(\epsilon)=\frac{1}{2}h^{-1}(1-\epsilon)+\frac{1}{2}h^{-1}(\epsilon)$ and the transition width as $w(\epsilon)=h^{-1}(1-\epsilon)-h^{-1}(\epsilon)$. For $p<p_*(\epsilon)-w(\epsilon)/2$, we have $h(p)<\epsilon$; then the function $h(p)$ increases from $\epsilon$ to $1-\epsilon$ between $p_*(\epsilon)-w(\epsilon)/2<p<p_*(\epsilon)+w(\epsilon)/2$; for $p>p_*(\epsilon)+w(\epsilon)/2$, we have $h(p)>1-\epsilon$. Property 2) above tells us that for any given $\epsilon\in (0,1/2)$, the transition width $w(\epsilon)\to 0$ as $n\to\infty$.
Finally, property 3) above allows us to locate the threshold $p_*(\epsilon)$: Let us think of the extreme case where the transition width $w(0)=0$. Under this assumption, we have $h(p)=0$ for all $p<p_*$ and $h(p)=1$ for all $p>p_*$. Then $\int_0^1 h(p)dp=1-p_*$, so by \eqref{eq:area} we have that $p_*=1-\frac{k}{n}$.
Now back to the case where we only have $w(\epsilon)\to 0$ for any given $\epsilon$ instead of $w(0)=0$, we can still use a similar calculation to show that $p_*(\epsilon)\to 1-\frac{k}{n}$ for any given $\epsilon\in(0,1/2)$ when $n\to\infty$.
This in particular means that for any $\epsilon>0$, there exists $\delta>0$ such that $h(p)<\epsilon$ for all $p<1-\frac{k}{n}-\delta$ and all large enough $n$.
Since the decoding error of bit-MAP decoder is  $ph(p)$, we have shown that the decoding error goes to $0$ when the code rate approaches $1-p$, the capacity of BEC, i.e., we have shown that RM codes achieve capacity of BEC under bit-MAP decoder.

Next we will explain how to prove properties 1)--3).
Property 3) first appeared in \cite{Ashikhmin04}, and it follows directly from the following equality
\begin{equation} \label{eq:yz}
H(X|Y(1))-H(X|Y(0))=\int_0^1 \left(\sum_{\original{z}\in\bF_2^m}
h_{\original{z}}(p)\right) dp
\end{equation}
Indeed, $H(X|Y(1))=k$ and $H(X|Y(0))=0$, so the left-hand side of \eqref{eq:yz} is $k$. By definition, the right-hand side of \eqref{eq:yz} is $n\int_0^1 h(p)dp$. Therefore, \eqref{eq:area} follows immediately from \eqref{eq:yz}.
Equation \eqref{eq:yz} itself is implied by the results of both \cite{Ashikhmin04} and \cite{Measson08}, and it can be proved using the chain rule of derivatives as follows: We consider a slightly different model where different coordinates in the codeword are transmitted through BEC channels with (possibly) different erasure probabilities. More specifically, assume that $X_{\original{z}}$ is transmitted through a BEC with erasure probability $p_{\original{z}}$; previously we assumed that $p_{\original{z}}=p$ for all $\original{z}\in\bF_2^m$.
For a parametrized path $(p_{\original{z}}(t):\original{z}\in\bF_2^m)$ defined for
$t \in [0, 1]$, one finds that
$$
H(X|Y(1))-H(X|Y(0))=\int_0^1 \left(\sum_{\original{z}\in\bF_2^m}
\frac{\partial H(X|Y(p_{\original{z}}:\original{z}\in\bF_2^m))}{\partial p_{\original{z}}} \frac{d p_{\original{z}}}{dt} \right) dt.
$$
One can further show (e.g. using \eqref{eq:ber}) that 
$$
H(X_{\original{z}}|Y_{\sim\original{z}}(p_{\sim\original{z}}))=
\frac{\partial H(X|Y(p_{\original{z}}:\original{z}\in\bF_2^m))}{\partial p_{\original{z}}}.
$$
Therefore,
$$
H(X|Y(1))-H(X|Y(0))=\int_0^1 (\sum_{\original{z}\in\bF_2^m}
H(X_{\original{z}}|Y_{\sim\original{z}}(p_{\sim\original{z}}))
\frac{d p_{\original{z}}}{dt} ) dt.
$$
Equation \eqref{eq:yz} follows immediately by
taking the parametrized path $(p_{\original{z}}(t):\original{z}\in\bF_2^m)$ to be $p_{\original{z}}(t)=t$ for all $\original{z}\in\bF_2^m$ in the above equation.

Both property 1) and property 2) are established by connecting $h(p)$ to monotone and symmetric boolean functions and making use of the classic results on such functions \cite{Kahn88,Bourgain92,Talagrand94,Friedgut96}.
In order to connect $h(p)$ to boolean functions, we use the fact that $h(p)=h_{\original{z}}(p)$ for every $\original{z}\in\bF_2^m$. Below we fix an index $\original{z}$ and associate an erasure pattern $e_{\sim\original{z}}(y_{\sim\original{z}})=(e_{\original{z}'}(y_{\sim\original{z}}):\original{z}'\in\bF_2^m,\original{z}'\neq\original{z}) \in\{0,1\}^{n-1}$ with each channel output vector $y_{\sim\original{z}}$ as follows:
$$
e_{\original{z}'}(y_{\sim\original{z}})=\mathbf{1}[y_{\original{z}'}=*]
\text{~for all~}
\original{z}'\in(\bF_2^m\setminus\{\original{z}\}),
$$
i.e., $e_{\sim\original{z}}(y_{\sim\original{z}})$ is an indicator vector of the erasures in $y_{\sim\original{z}}$.
We further define the set $\Omega_{\original{z}}\subseteq\{0,1\}^{n-1}$ as
$$
\Omega_{\original{z}}:=
\{e_{\sim\original{z}}(y_{\sim\original{z}}): H(X_{\original{z}}|
Y_{\sim\original{z}}(p)=y_{\sim\original{z}}) =1\} .
$$
This is the set of erasure patterns that prevent the recovery of $X_{\original{z}}$ from $Y_{\sim\original{z}}(p)$. Note that the set $\Omega_{\original{z}}$ is completely determined by the RM code construction and it is independent of the channel erasure probability $p$.
We say that an erasure pattern $e_{\sim\original{z}}$ covers another erasure pattern $\tilde{e}_{\sim\original{z}}$
if $e_{\original{z}'}\ge \tilde{e}_{\original{z}'}$ for all $\original{z}'\in(\bF_2^m\setminus\{\original{z}\})$. The set
$\Omega_{\original{z}}$ has the following {\em monotone} property: if  (1) $\tilde{e}_{\sim\original{z}}\in \Omega_{\original{z}}$ and (2) $e_{\sim\original{z}}$ covers  $\tilde{e}_{\sim\original{z}}$, then $e_{\sim\original{z}}\in \Omega_{\original{z}}$.

For each erasure pattern $e_{\sim\original{z}}$, we use $|e_{\sim\original{z}}|$ to denote its Hamming weight. 
We write the probability of erasure pattern $e_{\sim\original{z}}$ under BEC with channel erasure probability $p$ as $\mu_p(e_{\sim\original{z}})$, and it is given by 
$$
\mu_p(e_{\sim\original{z}})
=p^{|e_{\sim\original{z}}|}(1-p)^{n-1-|e_{\sim\original{z}}|} .
$$
Using this notation,
$$
\mu_p(\Omega_{\original{z}})
=\sum_{e_{\sim\original{z}}\in\Omega_{\original{z}}} p^{|e_{\sim\original{z}}|}(1-p)^{n-1-|e_{\sim\original{z}}|}.
$$
By definition, we have
\begin{equation}\label{eq:opw}
h(p)=h_{\original{z}}(p)=\mu_p(\Omega_{\original{z}}).
\end{equation}

Next we define the ``boundary" of $\Omega_{\original{z}}$. For an erasure pattern $e_{\sim\original{z}}\in\{0,1\}^{n-1}$, we define $\flip(e_{\sim\original{z}},\original{z}')$ to be another erasure pattern obtained by flipping the $\original{z}'$ coordinate in $e_{\sim\original{z}}$.
The ``boundary" of $\Omega_{\original{z}}$ across the $\original{z}'$ coordinate is defined as 
$$
\partial_{\original{z}'}\Omega_{\original{z}}
:= \{e_{\sim\original{z}}:
\mathbf{1}[e_{\sim\original{z}}\in\Omega_{\original{z}}]
\neq \mathbf{1}[\flip(e_{\sim\original{z}},\original{z}')\in\Omega_{\original{z}}]\}.
$$
Notice that $\partial_{\original{z}'}\Omega_{\original{z}}$ contains the ``boundary" erasure patterns both inside and outside of $\Omega_{\original{z}}$.
This is because by definition if $e_{\sim\original{z}}\in\partial_{\original{z}'}\Omega_{\original{z}}$, then $\flip(e_{\sim\original{z}},\original{z}')\in \partial_{\original{z}'}\Omega_{\original{z}}$.
The probability measure $\mu_p(\partial_{\original{z}'}\Omega_{\original{z}})$ is called the {\em influence} of the $\original{z}'$ coordinate,
and one further defines the {\em total influence} as
$$
I^{(p)}(\Omega_{\original{z}})
:=\sum_{\original{z}'\in(\bF_2^m\setminus\{\original{z}\})} \mu_p(\partial_{\original{z}'}\Omega_{\original{z}})  ;
$$
see \cite{Margulis74,Russo82,Boucheron13,oDonnell14} for discussions and properties of these functions.

An important consequence of the monotone property is the following equality \cite{Margulis74,Russo82,Boucheron13} connecting the derivative of $\mu_p(\Omega_{\original{z}})$ with the total influence
$$
\frac{d\mu_p(\Omega_{\original{z}})}{dp}
=  I^{(p)}(\Omega_{\original{z}}) .
$$
Property 1) of $h(p)$ follows immediately from this equality and \eqref{eq:opw}.

As for Property 2) of $h(p)$, observe that the set $\Omega_{\original{z}}$ also has a symmetric property which follows from the affine-invariant property of RM codes. More precisely, for any $\original{z}',\original{z}''\in(\bF_2^m\setminus\{\original{z}\})$, the influences of the $\original{z}'$ and $\original{z}''$ coordinates are always the same for all $0\le p\le 1$, i.e., we always have
$\mu_p(\partial_{\original{z}'}\Omega_{\original{z}})
=\mu_p(\partial_{\original{z}''}\Omega_{\original{z}})$.
This is because for any triple $\original{z},\original{z}',\original{z}''\in\bF_2^m$ we can always find an invertible affine linear transform over $\bF_2^m$ that fixes $\original{z}$ and maps $\original{z}'$ to $\original{z}''$.

This symmetric property of $\Omega_{\original{z}}$ allows us to use the following classic result on Boolean functions: 
\begin{theorem}[\cite{Kahn88,Bourgain92,Talagrand94,Friedgut96}]
Let $\Omega\subseteq\{0,1\}^M$ be a monotone set and suppose that for all $0\le p\le 1$, the influences of all bits under the measure $\mu_p$ are equal. Then for all $0\le p\le 1$,
$$
\frac{d\mu_p(\Omega)}{dp}
= I^{(p)}(\Omega)\ge 
\ln(M) \mu_p(\Omega)
(1-\mu_p(\Omega)) .
$$
\end{theorem}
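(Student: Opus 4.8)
The equality $\frac{d\mu_p(\Omega)}{dp}=I^{(p)}(\Omega)$ is the Margulis--Russo formula already recorded above, so the entire content is the lower bound $I^{(p)}(\Omega)\ge \ln(M)\,\mu_p(\Omega)(1-\mu_p(\Omega))$. The plan is to run the Fourier-analytic argument of Kahn--Kalai--Linial on the $p$-biased hypercube and then exploit the equal-influences hypothesis to upgrade a lower bound on the \emph{largest} influence into one on the \emph{total} influence. First I would write $f=\mathds{1}_{\Omega}$ and pass to the $p$-biased Fourier--Walsh basis, in which $\mathrm{Var}_p(f)=\mu_p(\Omega)(1-\mu_p(\Omega))$ equals the total spectral mass off level $0$, and the total influence $I^{(p)}(\Omega)=\sum_i I_i^{(p)}$ equals the Dirichlet energy $\sum_{S\ne\emptyset}|S|\,\widehat{f}(S)^2$. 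For each coordinate $i$ the pivotal indicator $g_i=\partial_i f$ is again Boolean with $\E[g_i]=I_i^{(p)}$, and its spectrum is obtained from the part of $f$'s spectrum indexed by sets containing $i$.

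The key analytic input is hypercontractivity (the Bonami--Beckner inequality): applying the noise operator $T_\rho$ to $g_i$ gives $\sum_S \rho^{2|S|}\widehat{g_i}(S)^2=\|T_\rho g_i\|_2^2\le \|g_i\|_{1+\rho^2}^2=(I_i^{(p)})^{2/(1+\rho^2)}$, where the last equality uses that $g_i$ is $\{0,1\}$-valued. Truncating at a level $k$ yields $\sum_{|S|\le k}\widehat{g_i}(S)^2\le \rho^{-2k}(I_i^{(p)})^{2/(1+\rho^2)}$, which bounds the low-degree spectral mass of $f$ that passes through coordinate $i$. Summing over all $i$ and invoking the hypothesis that every influence equals the common value $\bar I=I^{(p)}(\Omega)/M$, the right-hand side becomes $M\,\bar I^{\,2/(1+\rho^2)}=M^{\,(\rho^2-1)/(1+\rho^2)}\,(I^{(p)}(\Omega))^{2/(1+\rho^2)}$, a negative power of $M$. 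This is exactly where symmetry is essential: without equal influences one only controls $\max_i I_i^{(p)}$, whereas here uniformity lets the bound propagate to the entire spectrum of $f$.

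With the low-degree mass pinned down, I would close by a dichotomy. Either a large fraction of $\mathrm{Var}_p(f)$ sits above level $k$, in which case $I^{(p)}(\Omega)=\sum_S |S|\,\widehat{f}(S)^2\ge (k+1)\cdot(\text{high-level mass})$ is already large; or almost all the variance sits at levels $\le k$, which the previous paragraph forbids once $\rho^{-2k}M^{(\rho^2-1)/(1+\rho^2)}$ is made $o(1)$. Taking $\rho$ bounded away from $1$ and $k$ proportional to $\ln M$ balances the two cases and produces a bound of the form $I^{(p)}(\Omega)\ge c\ln(M)\,\mathrm{Var}_p(f)$. The main obstacle is twofold. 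First, the $p$-biased hypercontractive constants degrade as $p\to 0,1$, so I would either restrict to $p$ bounded away from the endpoints (which suffices for the threshold argument of Section~\ref{sect:capacity}) or replace Bonami--Beckner by the sharper $p$-biased log-Sobolev inequality. Second, extracting the clean constant $\ln M$ rather than an unspecified $c\log M$ requires the optimized Talagrand-type estimate $\sum_i I_i^{(p)}/\log(1/I_i^{(p)})\ge c\,\mathrm{Var}_p(f)$ combined with the equal-influence identity $\log(1/\bar I)=\log M-\log I^{(p)}(\Omega)$; carefully tracking this constant through the optimization is the genuinely delicate part, and for a survey it may be cleaner to quote the sharp form directly from the cited works.
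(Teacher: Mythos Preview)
The paper does not supply its own proof of this theorem: it is quoted as a classical result from the cited references \cite{Kahn88,Bourgain92,Talagrand94,Friedgut96} and then applied directly to the set $\Omega_{\original{z}}$. So there is no in-paper argument to compare against.

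Your sketch is the standard route those references take, and the overall architecture is correct: Margulis--Russo for the equality, $p$-biased Fourier expansion, hypercontractivity applied to the pivotal indicators $g_i$, a low/high-degree split on the spectrum, and then the equal-influence hypothesis to convert a maximal-influence bound into a total-influence bound. You also correctly flag the two genuine subtleties. First, the $p$-biased hypercontractive constant degenerates like $(p(1-p))^{-1}$, so the naive KKL argument only yields $c_p\ln M$ with $c_p\to 0$ at the endpoints; the uniform-in-$p$ statement requires either Talagrand's inequality $\sum_i I_i^{(p)}/\log(e/I_i^{(p)})\ge c\,\mathrm{Var}_p(f)$ or the $p$-biased log-Sobolev route, not the raw Bonami--Beckner bound you start with. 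Second, extracting the clean constant $\ln M$ (rather than $c\ln M$ for some unspecified universal $c$) is exactly the optimization you defer to the cited works; your outline as written would produce a constant factor off. For a survey this is fine, since the paper itself only invokes the result and the application in Section~\ref{sect:capacity} needs merely a constant times $\ln(n-1)$ to conclude the sharp-threshold property. If you want a self-contained argument with the sharp constant and all $p$, the cleanest single reference is Talagrand's 1994 paper combined with the equal-influence substitution you describe.
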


Applying this theorem to $\Omega_{\original{z}}$, we obtain that for all $0<p<1$,
$$
\frac{d\mu_p(\Omega_{\original{z}})}{dp}
= I^{(p)}(\Omega_{\original{z}})\ge 
\ln(n-1) \mu_p(\Omega_{\original{z}})
(1-\mu_p(\Omega_{\original{z}}))  .
$$
Combining this with \eqref{eq:opw}, we have
\begin{equation}\label{eq:lgdt}
\frac{dh(p)}{dp} \ge 
\ln(n-1) h(p)
(1-h(p)).
\end{equation}
This inequality tells us that for any given $\epsilon\in(0,1/2)$, we always have $\frac{dh(p)}{dp}\to\infty$ as $n\to\infty$ for all $p$ satisfying $h(p)\in(\epsilon,1-\epsilon)$.
Therefore, $h(p)$ has a sharp transition from $\epsilon$ to $1-\epsilon$ for arbitrarily small $\epsilon>0$ when the code length $n$ is large enough, and this proves property 2) of $h(p)$.

Now this shows that RM codes achieve capacity of BEC under the bit-MAP decoder. In \cite{Kudekar16STOC,Kudekar17}, it was also proved that RM codes achieve capacity of BEC under the block-MAP decoder. The proof uses the same framework and relies on a refinement of \eqref{eq:lgdt} in \cite{Bourgain97}. In particular,
Bourgain and Kalai \cite{Bourgain97} showed that there exists a universal constant $C$ such that
$$
\frac{dh(p)}{dp} \ge 
C \ln(\ln(n-1))
\ln(n-1) h(p) (1-h(p)).
$$
The proof for block-MAP decoder then follows the same line of arguments as the proof for bit-MAP decoder but requires a few more technical details which we will omit here.

\begin{remark}
For BEC, the conditional entropy $H(X_{\original{z}}|Y_{\sim\original{z}}(p)=y_{\sim\original{z}})$ can only be either $0$ or $1$, and it is independent of the channel erasure probability $p$.
This property allows us to connect $h(p)$ with boolean functions and use the classic results to analyze $h(p)$. 
Unfortunately, this property no longer holds for general communication channels such as BSC. Now suppose instead that $Y(p)$ is the  random output vector after transmitting $X$ through $n=2^m$ copies of BSC with channel crossover probability $p$. In this case, the conditional entropy $H(X_{\original{z}}|Y_{\sim\original{z}}(p)=y_{\sim\original{z}})$ will vary with $p$ and its range is clearly not limited to $\{0,1\}$.
Moreover, the monotone property of the set $\Omega_{\original{z}}$ is also a consequence of transmitting codewords through erasure channels, and it does not hold for general communication channels like BSC.
In other words, for BSC, more errors do not necessarily lead to larger conditional entropy
$H(X_{\original{z}}|Y_{\sim\original{z}}(p)=y_{\sim\original{z}})$. For example, since the all-one vector is a codeword of RM codes, we always have the equality
$$
H(X_{\original{z}}|Y_{\sim\original{z}}(p)=\mathbf{0})
=H(X_{\original{z}}|Y_{\sim\original{z}}(p)=\mathbf{1}),
$$ 
where $\mathbf{0}$ and $\mathbf{1}$ are the all-zero and all-one vectors, respectively. Now assume that we transmit the all-zero codeword. Then this equality means that the conditional entropy given an output with no errors is the same as the conditional entropy given an output that is erroneous in every bit. This clearly does not satisfy the monotone property. Finally, the area theorem (see \eqref{eq:area}) for EXIT function only holds for BEC. For general channels, one needs to work with the generalized EXIT (GEXIT) function introduced in \cite{Measson09}. The GEXIT function is similar in many respects to the EXIT function: The GEXIT function satisfies the area theorem for general channels and it is neither boolean nor monotonic. New ideas are certainly required to analyze such functions in order to generalize the method of \cite{Kudekar17,Kudekar16STOC} for the cases of BSC or more general communication channels.
\end{remark}

\begin{remark}
Note that the approach in this subsection only works for the constant rate regime and does not extend to the extremal rate regimes discussed in Section~\ref{sec:wt-to-cap} and Section~\ref{sect:hj}.
\end{remark}

\subsection{RM codes polarize and Twin-RM codes achieve capacity on any BMS \cite{AY18}}
\label{sect:RMpolar}

\begin{figure}
\centering
\begin{subfigure}{0.48\textwidth}
\centering
\begin{tikzpicture}
\node [sblock, align=center] at (3,1.6)  (y1) { $U_1$ \\[0.5em]  $U_2$  \\[0.5em]  \vdots \\[0.5em]  $U_n$ };
\node [sblock, align=center] at (5,1.6)  (y2) {$G_n$: $n\times n$\\tensor\\ product
 \\ matrix};
\node [sblock, align=center] at (7,1.6)  (y3) { $X_1$ \\[0.5em] $X_2$  \\[0.5em] \vdots  \\[0.5em]  $X_n$ };
\node [block] at (8.3, 2.7) (w1) {$W$};
\node [block] at (8.3, 2) (w2) {$W$};
\node  at (8.3, 1.3)  {\vdots};
\node [block] at (8.3, 0.6) (w3) {$W$};

\node at (9.5, 2.7) (z1) {$Y_1$};
\node at (9.5, 2) (z2) {$Y_2$};
\node  at (9.5, 1.3) {\vdots};
\node at (9.5, 0.6) (z3) {$Y_n$};

\draw[->,thick] (y1)--(y2);
\draw[->,thick] (y2)--(y3);

\draw[->,thick] (w1)--(z1);
\draw[->,thick] (w2)--(z2);
\draw[->,thick] (w3)--(z3);

\draw[->,thick] (7.4, 2.7)--(w1);
\draw[->,thick] (7.4, 2)--(w2);
\draw[->,thick] (7.4, 0.6)--(w3);
\end{tikzpicture}
\caption{Polarization framework}
\label{fig:bj1}
\end{subfigure}
\hfill
\begin{subfigure}{0.48\textwidth}
\centering
\begin{tikzpicture}
\node [sblock, align=center] at (3,1.6)  (y1) { $U_{A_1}$ \\[0.5em]  $U_{A_2}$  \\[0.5em]  \vdots \\[0.5em]  $U_{A_n}$ };
\node [sblock, align=center] at (5,1.6)  (y2) {$R_n: n \times n$
\\generator\\matrix of\\ $\mathrm{RM}(m,m)$};
\node [sblock, align=center] at (7,1.6)  (y3) { $X_1$ \\[0.5em] $X_2$  \\[0.5em] \vdots  \\[0.5em]  $X_n$ };
\node [block] at (8.3, 2.7) (w1) {$W$};
\node [block] at (8.3, 2) (w2) {$W$};
\node  at (8.3, 1.3)  {\vdots};
\node [block] at (8.3, 0.6) (w3) {$W$};

\node at (9.5, 2.7) (z1) {$Y_1$};
\node at (9.5, 2) (z2) {$Y_2$};
\node  at (9.5, 1.3) {\vdots};
\node at (9.5, 0.6) (z3) {$Y_n$};

\draw[->,thick] (y1)--(y2);
\draw[->,thick] (y2)--(y3);

\draw[->,thick] (w1)--(z1);
\draw[->,thick] (w2)--(z2);
\draw[->,thick] (w3)--(z3);

\draw[->,thick] (7.4, 2.7)--(w1);
\draw[->,thick] (7.4, 2)--(w2);
\draw[->,thick] (7.4, 0.6)--(w3);
\end{tikzpicture}
\caption{Polarization framework for RM codes}
\label{fig:bj2}
\end{subfigure}
\caption{Polarization framework: The input vector $U^n$ are i.i.d. Bernoulli($1/2$) random variables. Given the channel output vector $Y^n$, we use successive decoder to decode the input vector $U^n$ one by one from top to bottom.}
\end{figure}
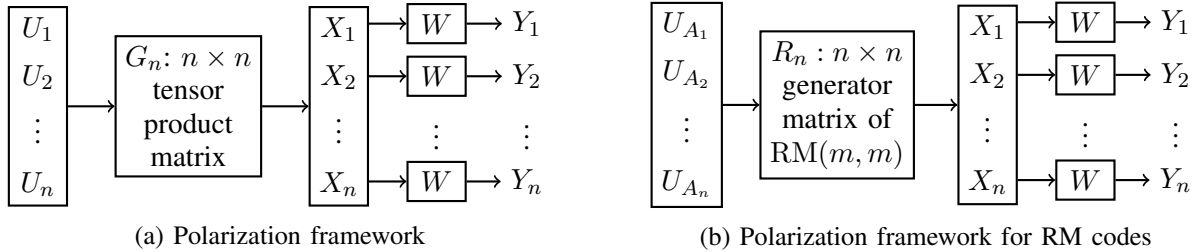


Recall that a binary memoryless symmetric (BMS) channel is a channel $W:\{0,1\}\to\cY$ such that there is a permutation $\pi$ on the output alphabet $\cY$ satisfying i) $\pi^{-1}=\pi$ and ii) $W(y|1)=W(\pi(y)|0)$ for all $y\in\cY$. In particular, the binary erasure channel (BEC), the binary symmetric channel (BSC), and the binary input additive white Gaussian noise channel (BIAWGN) are all BMS channels.

We now consider an arbitrary BMS  channel $W$.
We use the communication model in Fig.~\ref{fig:bj1} to transmit information over $W$. The input vector $U^n$ consists of $n$ i.i.d. Bernoulli($1/2$) random variables. We encode $U^n$ by multiplying it with an $n\times n$ invertible matrix $G_n$ and denote the resulting vector as $(X_1,\dots,X_n)=(U_1,\dots,U_n) G_n$. Then we transmit each $X_i$ through an independent copy of $W$. Given the channel output vector $Y^n$, our task is to recover the input vector $U^n$, and we use a successive decoder to do so. The successive decoder decodes the input vector bit by bit from $U_1$ to $U_n$. When decoding $U_i$, it makes use of all the channel outputs $Y^n$ and all the previously decoded\footnote{Assuming no decoding errors up to $U^{i-1}$.} inputs $U^{i-1}$.
The conditional entropy 
$$
H_i:=H(U_i|U^{i-1},Y^n)
$$
indicates whether $U_i$ is noisy or noiseless under the successive decoder: If $H_i\approx 0$, then $U_i$ is (almost) noiseless and can be correctly decoded with high probability. If $H_i$ is bounded away from $0$, then so is the decoding error probability of decoding $U_i$.

Informally, we say that the matrix $G_n$ polarizes if almost all $H_i,1\le i\le n$ are close to either $0$ or $1$, or equivalently, if almost all $U_i,i\le i\le n$ become either noiseless or completely noisy.
An important consequence of polarization is that every polarizing matrix automatically gives us a capacity-achieving code under the successive decoder. Indeed, if $G_n$ polarizes, then we can construct the capacity achieving code by putting all the information in the $U_i$'s whose corresponding $H_i$ is close to $0$ and freezing all the other $U_i$'s to be $0$, i.e., we only put information in the (almost) noiseless $U_i$'s.
Let $\cG$ be the set of indices of the noiseless $U_i$'s. Then the generator matrix of this code is the submatrix of $G_n$ obtaining by retaining only the rows whose indices belong to $\cG$.
To show that this code achieves capacity, we only need to argue that $|\cG|$ is asymptotic to $nI(W)$, and this directly follows from
\begin{align}
    \sum_{i=1}^n H_i=H(U^n|Y^n)=H(X^n|Y^n)
=nH(X_i|Y_i)=n(1-I(W)), \label{bal}
\end{align}
where the last equality relies on the assumption that $W$ is symmetric. Since almost all $H_i$'s are close to either $0$ or $1$, by the equation above we know that the number of $H_i$'s that are close to 1 is asymptotic to $n(1-I(W))$, so the number of $H_i$'s that are close to $0$ is asymptotic $nI(W)$, i.e., $|\cG|$ is asymptotic to $nI(W)$.

In his influential  paper \cite{Arikan09}, Ar{\i}kan gave an explicit construction of a polarizing matrix
$$
G_n:=\begin{bmatrix}
1 & 0 \\
1 & 1
\end{bmatrix}^{\otimes m} ,
$$
where $\otimes$ is the Kronecker product and $n=2^m$.
For example
$$
G_4:=\begin{bmatrix}
1 & 0 \\
1 & 1
\end{bmatrix}^{\otimes 2}
=\begin{bmatrix}
1 & 0 & 0 & 0 \\
1 & 1 & 0 & 0 \\
1 & 0 & 1 & 0 \\
1 & 1 & 1 & 1
\end{bmatrix} .
$$
Polar codes are simply the capacity-achieving codes constructed from $G_n$. More precisely:
\begin{theorem}[Polarization for $G_n$]
For any BMS channel $W$ and any $0<\epsilon<1/2$, 
\begin{align}
\left| \left\{i\subseteq[2^m]:  H_i \in (\epsilon, 1-\epsilon) \right\}
\right| = o(2^m). \label{pol}
\end{align}
\end{theorem}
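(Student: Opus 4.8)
The plan is to recast this deterministic counting statement as an almost-sure statement about a bounded martingale, following Ar\i kan's original argument. The first ingredient is the recursive structure of $G_n$: Ar\i kan's basic $2\times 2$ kernel transforms a BMS channel $W$ into two BMS channels $W^-$ and $W^+$, where under successive decoding $W^-$ is the channel seen by the first input bit from the two channel outputs alone, while $W^+$ is the channel seen by the second input bit from the two outputs together with the (assumed correct) first bit. Because $G_n$ is the $m$-fold tensor power of this kernel, the $n=2^m$ synthesized bit-channels are exactly the channels $W^{(s_1,\ldots,s_m)}$ obtained by applying the $\pm$ operation in all $2^m$ possible orders, so, writing $H(W):=1-I(W)$ for the conditional entropy of a BMS channel with uniform input, we have the multiset identity $\{H_i : i\in[2^m]\}=\{\,H(W^{(s_1,\ldots,s_m)}) : (s_1,\ldots,s_m)\in\{-,+\}^m\,\}$. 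I will use the single-step capacity-conservation identity $I(W^-)+I(W^+)=2I(W)$, equivalently $H(W^-)+H(W^+)=2H(W)$.

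With this dictionary in place, introduce independent bits $B_1,B_2,\ldots$ uniform on $\{-,+\}$, let $\cF_n=\sigma(B_1,\ldots,B_n)$, and define the channel-valued process $W_0=W$, $W_n=(W_{n-1})^{B_n}$, together with the real process $\cH_n:=H(W_n)\in[0,1]$. The conservation identity gives
\[
\E[\cH_n\mid\cF_{n-1}]=\tfrac{1}{2}H(W_{n-1}^-)+\tfrac{1}{2}H(W_{n-1}^+)=H(W_{n-1})=\cH_{n-1},
\]
so $\{\cH_n\}$ is a bounded martingale and converges almost surely to a limit $\cH_\infty$ by the martingale convergence theorem. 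The observation linking this to the theorem is that, for each fixed depth $m$, the law of $\cH_m$ is exactly the uniform distribution on the multiset $\{H_i\}_{i\in[2^m]}$, since each sign sequence is equiprobable. Hence
\[
\frac{1}{2^m}\left|\{\, i\in[2^m] : H_i\in(\epsilon,1-\epsilon)\,\}\right|=\bP\bigl(\cH_m\in(\epsilon,1-\epsilon)\bigr),
\]
and the theorem is equivalent to showing this probability tends to $0$. As $\cH_m\to\cH_\infty$ in probability, it suffices to prove $\cH_\infty\in\{0,1\}$ almost surely, which forces the mass of $\cH_m$ out of the open interval $(\epsilon,1-\epsilon)$.

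The heart of the proof --- and the step I expect to be the main obstacle --- is establishing $\cH_\infty\in\{0,1\}$ almost surely. Rather than argue directly with entropy, I would pass to the Bhattacharyya parameter $Z(W)=\sum_{y}\sqrt{W(y\mid 0)W(y\mid 1)}\in[0,1]$, which satisfies the one-step bounds $Z(W^+)\le Z(W)^2$ and $Z(W)\le Z(W^-)\le 2Z(W)-Z(W)^2$. Averaging these gives $\E[Z_n\mid\cF_{n-1}]\le Z_{n-1}$ for $Z_n:=Z(W_n)$, so $\{Z_n\}$ is a bounded nonnegative supermartingale and converges almost surely to some $Z_\infty$. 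Now fix a trajectory along which $B_n=+$ occurs infinitely often; this holds for almost every trajectory since the $B_n$ are i.i.d. uniform. Along the times $n$ with $B_n=+$ we have $Z_n\le Z_{n-1}^2$, and letting $n\to\infty$ through these times --- where both $Z_n$ and $Z_{n-1}$ tend to the same limit $Z_\infty$ --- yields $Z_\infty\le Z_\infty^2$. Since $Z_\infty\in[0,1]$, this forces $Z_\infty(1-Z_\infty)=0$, i.e. $Z_\infty\in\{0,1\}$ almost surely.

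Finally I transfer this back to entropy via Ar\i kan's standard two-sided comparisons $I(W)\ge\log_2\frac{2}{1+Z(W)}$ and $I(W)\le\sqrt{1-Z(W)^2}$: the former gives $Z_\infty=0\Rightarrow\cH_\infty=0$ and the latter gives $Z_\infty=1\Rightarrow\cH_\infty=1$, so $\cH_\infty\in\{0,1\}$ almost surely, which by the reduction of the second paragraph completes the proof. The remaining ingredients are routine verifications from the explicit descriptions of $W^-$ and $W^+$: the capacity-conservation equality and the three Bhattacharyya inequalities used above. The genuinely delicate point is the interplay between martingale convergence and the \emph{extremizing} one-step inequality; the quadratic bound $Z(W^+)\le Z(W)^2$ is precisely what upgrades mere convergence of $Z_n$ into convergence to the endpoints $\{0,1\}$, and everything else is bookkeeping.
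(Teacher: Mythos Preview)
Your proposal is correct and is precisely Ar\i kan's original argument: set up the entropy martingale via the conservation identity, pass to the Bhattacharyya supermartingale, use the squaring inequality $Z(W^+)\le Z(W)^2$ along the a.s.\ infinite set of $+$ steps to force $Z_\infty\in\{0,1\}$, and transfer back via the $I$--$Z$ bounds.

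The paper does not give a standalone proof of this theorem (it is quoted as Ar\i kan's result), but the proof-technique discussion that follows---written with the RM polarization Theorem~\ref{thm:PRM} in mind---takes a different route at the crucial ``only fixed points are extremal'' step. Rather than detouring through the Bhattacharyya parameter, the paper works directly with entropy: it uses Mrs.\ Gerber's Lemma to obtain a \emph{universal} lower bound $\delta(\epsilon)>0$ on the one-step entropy spread $H(W^{s+})-H(W^s)$ whenever $H(W^s)\in(\epsilon,1-\epsilon)$, valid across all BMS channels $W^s$ appearing in the tree. Combined with martingale convergence, this forces the limit to $\{0,1\}$. Your Bhattacharyya argument is arguably cleaner for the polar-code statement itself, since $Z(W^+)\le Z(W)^2$ is a single sharp inequality. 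The paper's entropy-spread approach buys something else: because it is phrased as a \emph{uniform quantitative} statement about entropy increments, it transports directly to the RM ordering, where the $\pm$ recursion is broken and one no longer has a clean Bhattacharyya supermartingale, but one can still lower-bound entropy gaps along increasing chains by comparing to the polar $\pm$ transforms (the interlacing Lemma~\ref{interlacing}).
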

In particular, the above still holds for some choices of $\epsilon=\epsilon_n$ that are $o(1/n)$ (in fact, $\epsilon_n$ can even decay exponentially with roughly the square-root of $n$).
Therefore, the polar code retaining only  rows $i$ of $G_n$ such that $H_i \le \epsilon_n$ has a block error probability that is upper-bounded by $|G_n| \epsilon_n \le n \epsilon_n$,
and if 
$\epsilon_n=o(1/n)$, the block error probability is upper-bounded by $n \epsilon_n=o(1)$, and the code achieves capacity.

The encoding procedure of polar codes amounts to finding $\cG$, the set of noiseless (or ``good'') bits, and efficient algorithms for finding these were proposed in \cite{Arikan09,Tal13}.
In \cite{Arikan09}, Ar{\i}kan also showed that the successive decoder for polar codes allows for an $O(n\log n)$ implementation. Later in \cite{Tal15}, a list decoding version of the successive decoder was proposed, and its performance is nearly the same as the Maximum Likelihood (ML) decoder of polar codes for a wide range of parameters.

In \cite{AY18}, the authors develop a similar polarization framework to analyze RM codes; see Fig.~\ref{fig:bj2} for an illustration.
More precisely, the monomials $x_{A_1},\dots,x_{A_n}$ defined by the $n:=2^m$ subsets $A_1,\dots,A_n$ of $[m]$, are used in replacement to the increasing integer index $i$ in $[n]$.
We arrange these subsets in the following order: Larger sets always appear before smaller sets; for sets with equal size, we use the lexicographic order. More precisely, if $i<j$, we always have $|A_i|\ge |A_j|$, and we have $A_i < A_j$ --where $<$ denotes the lexicographic order-- if $|A_i|=|A_j|$.
Define the matrix 
\begin{equation}\label{eq:def-Rn}
R_n:=\begin{bmatrix}
\Eval(x_{A_1}) \\
\Eval(x_{A_2}) \\
\vdots   \\
\Eval(x_{A_n}) 
\end{bmatrix} 
\end{equation}
whose row vectors are arranged according to the order of the subsets. By definition, $R_n$ is a generator matrix of $\mathrm{RM}(m,m)$.
Here we give a concrete example of the order of sets and $R_n$ for $m=3$ and $n=2^m=8$:
$$
\begin{array}{c}
    A_1=\{1,2,3\}  \\
    A_2=\{1,2\}   \\
    A_3=\{1,3\}   \\
    A_4=\{2,3\}   \\
    A_5=\{1\}     \\
    A_6=\{2\}     \\
    A_7=\{3\}     \\
    A_8=\emptyset
\end{array}
\quad\quad
R_8=\begin{bmatrix}
1 & 0 & 0 & 0 & 0 & 0 & 0 & 0  \\
1 & 1 & 0 & 0 & 0 & 0 & 0 & 0  \\
1 & 0 & 1 & 0 & 0 & 0 & 0 & 0  \\
1 & 0 & 0 & 0 & 1 & 0 & 0 & 0  \\
1 & 1 & 1 & 1 & 0 & 0 & 0 & 0  \\
1 & 1 & 0 & 0 & 1 & 1 & 0 & 0  \\
1 & 0 & 1 & 0 & 1 & 0 & 1 & 0  \\
1 & 1 & 1 & 1 & 1 & 1 & 1 & 1
\end{bmatrix}   .
$$
Note that $R_n$ is a row permutation of $G_n$.
Let $U_{A_1},\dots,U_{A_n}$ be the (random) coefficients of the monomials $x_{A_1},\dots,x_{A_n}$. Multiplying the coefficient vector $(U_{A_1},\dots,U_{A_n})$ with $R_n$ gives us a mapping from the coefficient vector to the evaluation vector $X^n:=(U_{A_1},\dots,U_{A_n}) R_n$. 
Then we transmit each $X_i$ through an independent copy of $W$ and get the channel output vector $Y^n$. We still use the successive decoder to decode the coefficient vector bit by bit from $U_{A_1}$ to $U_{A_n}$.
Similarly to $H_i$, we  define the conditional entropy 
$$
H_{A_i}:=H(U_{A_i}|U_{A_1},\dots,U_{A_{i-1}},Y^n),
$$
and by the chain rule we also have the balance equation 
\begin{align}
    \sum_{i=1}^n H_{A_i}=n(1-I(W)), \label{bal-rm}
\end{align}

In \cite{AY18}, a polarization result for $H_{A_i}$ is obtained, showing that with this ordering too, almost all $H_{A_i}$ are close to either $0$ or $1$. More precisely:
\begin{theorem}[Polarization of RM codes]
\label{thm:PRM}
For any BMS channel $W$ and any $0<\epsilon<1/2$, 
\begin{align}
\left| \left\{i \in [2^m]:  H_{A_i} \in (\epsilon, 1-\epsilon) \right\}
\right| = o(2^m). \label{rm-pol}
\end{align}
\end{theorem}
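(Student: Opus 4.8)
The plan is to reduce the statement to a quantitative ``no mass in the middle'' estimate and then establish that estimate through the recursive channel structure underlying the transform $R_n$, following Ar{\i}kan's analysis of $G_n$ but confronting one genuinely new difficulty. First I would note that $h(1-h)\ge\epsilon(1-\epsilon)>0$ for every $h\in(\epsilon,1-\epsilon)$, so Markov's inequality gives
\[
\bigl|\{i:H_{A_i}\in(\epsilon,1-\epsilon)\}\bigr|\ \le\ \frac{1}{\epsilon(1-\epsilon)}\sum_{i=1}^{2^m}H_{A_i}(1-H_{A_i}),
\]
and hence \eqref{rm-pol} follows once one shows $\frac{1}{2^m}\sum_i H_{A_i}(1-H_{A_i})\to 0$. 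By the balance equation \eqref{bal-rm} the \emph{average} of the $H_{A_i}$ is already fixed at $1-I(W)$, so all of the content lies in the \emph{concentration} of the $H_{A_i}$ at the two endpoints.

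Next I would expose the recursion by peeling one variable. Writing a codeword in the Plotkin $(u,u+v)$ form, each evaluation coordinate $z'\in\bF_2^{m-1}$ is transmitted as the pair $(u_{z'},u_{z'}+v_{z'})$ through two copies of $W$; this is exactly Ar{\i}kan's $2\times2$ kernel, producing a degraded channel $W^-$ carrying the $v$-bits (coefficients of monomials containing $x_m$) and an upgraded channel $W^+$ carrying the $u$-bits (monomials avoiding $x_m$). Restricted to the $v$-monomials and to the $u$-monomials separately, the degree ordering on $[m]$ induces precisely the degree ordering on $[m-1]$, so each half is an $\mathrm{RM}(m-1,m-1)$ transform decoded in its own RM order, over $W^-$ and $W^+$ respectively. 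Had we decoded the entire $v$-half before the entire $u$-half, iterating this peeling $m$ times would reproduce Ar{\i}kan's recursion for $G_n$ verbatim: for a uniformly random monomial (i.i.d.\ coordinates $b\in\{0,1\}^m$) the channels $W^{s_1\cdots s_m}$ would perform a random walk whose entropies form a bounded martingale converging a.s.\ to $\{0,1\}$, which is exactly \eqref{pol} and would finish the proof.

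The hard part is that the degree ordering does \emph{not} decode one half before the other: a low-degree $v$-monomial such as $x_{\{m\}}$ is decoded \emph{after} higher-degree $u$-monomials such as $x_{\{1,2\}}$, so the conditioning set of a bit in the full problem is neither that of the pure $W^-$ subproblem nor that of the pure $W^+$ subproblem. Since conditioning on more bits only lowers entropy, a short computation sandwiches each full entropy between the two $\mathrm{RM}(m-1,m-1)$ values over $W$ and over $W^{\mp}$: for a $v$-bit $A'\cup\{m\}$ one gets $H^{(m-1),W}_{A'}\le H_{A'\cup\{m\}}\le H^{(m-1),W^-}_{A'}$, and for a $u$-bit $B'$ one gets $H^{(m-1),W^+}_{B'}\le H_{B'}\le H^{(m-1),W}_{B'}$ (the extreme cases correspond to conditioning on all, respectively none, of the bits in the other half, which collapse $W^{\pm}$ back to $W$ through the known or uniform $v$). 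I expect this entanglement between the two subproblems to be the crux: naive sandwiching is too weak, because the set of indices that are good for $W$ but bad for $W^-$ has size $\Theta\bigl(2^{m-1}\,|I(W)-I(W^-)|\bigr)$, a constant fraction rather than $o(2^m)$.

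To close this gap I would not argue bit by bit but track the two bounding channels jointly along the $m$-step peeling, and show that for all but an $o(1)$ fraction of monomials the upper and lower squeezing values land on the \emph{same} side of $1/2$. The natural lever is the conservation law \eqref{bal-rm}: the full entropies sum to exactly $2^m(1-I(W))$, which pins the straddling mass, and fed into an induction on $m$ together with Ar{\i}kan's endpoint-suction estimates for the map $W\mapsto W^{\pm}$ it should force the fraction of $H_{A_i}$ in $(\epsilon,1-\epsilon)$ to vanish. Verifying that the accumulated sandwich slop over the $m$ levels stays $o(1)$ is the main technical step, and is exactly the point where the quantitative \emph{rate} of Ar{\i}kan polarization for the $\pm$ recursion must be imported rather than used as a black box.
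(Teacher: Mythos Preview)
You have correctly located the crux: the RM degree ordering interleaves the two halves of the Plotkin split, so one cannot simply iterate the $W\mapsto W^\pm$ recursion as for $G_n$, and your sandwiching of each $H^{(m)}_{A'\cup\{m\}}$ and $H^{(m)}_{A'}$ between $(m-1)$-level quantities is the right first move. The gap is that your proposed closure---``track the two bounding channels jointly along the $m$-step peeling'' and use the conservation law to pin the straddling mass---is not a mechanism but a hope. At each level the sandwich introduces slack of size $\Theta(|I(W)-I(W^\pm)|)$, a constant, and over $m$ levels there is no reason the lower and upper envelopes should land on the same side of $1/2$ for all but $o(2^m)$ monomials; importing the \emph{rate} of polar convergence does not by itself say which indices straddle. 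You yourself flag this as ``the main technical step,'' and indeed it is the whole difficulty: as written the argument has no device that prevents error accumulation.

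The paper's route avoids accumulation altogether by a combinatorial reduction you do not have. Rather than following all $2^m$ monomials, one fixes an \emph{increasing chain} $\emptyset=B_0\subset B_1\subset\cdots\subset B_m=[m]$ with $|B_i|=i$ and proves the interlacing $H_{B_i}^{(m+1)}\le H_{B_i}^{(m)}\le H_{B_{i+1}}^{(m+1)}$ together with a uniform gap lower bound $\delta(\epsilon)$ on the increments whenever the entropies lie in $(\epsilon,1-\epsilon)$. The interlacing forces the $H_{B_i}^{(m)}$ to be \emph{monotone} in $i$, so along any single chain at most $D(\epsilon)=O(1/\delta(\epsilon))$ of the $m{+}1$ entropies can sit in the middle---a bound uniform in $m$, with nothing to accumulate. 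The global statement then follows by averaging over all $m!$ chains: a set $A$ of size $k$ lies on exactly $k!(m-k)!$ chains, and combining the per-chain bound with $\binom{m}{k}\le O(2^m/\sqrt m)$ yields at most $O(D(\epsilon)\,2^m/\sqrt m)=o(2^m)$ bad monomials. The polar-type spread inequality you invoke enters only at the per-chain level (it supplies $\delta(\epsilon)$), where it is applied once rather than compounded $m$ times.
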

In particular, the above still holds for some choices of $\epsilon=\epsilon_n$ that decay faster than $1/n$. Therefore the code obtained by retaining only the monomials $x_{A_i}$ in $R_n$ corresponding to $A_i$'s such that $H_{A_i} \le \epsilon_n$ has a vanishing block error probability and achieves capacity; this follows from the same reasoning as in polar codes. We call this code the Twin-RM code as it is not necessarily the RM code. In fact, if the following implication were true,
\begin{align}
|A| > |B| \stackrel{?}{\implies}     H_A \ge H_B, \label{equi}
\end{align}
then the Twin-RM would be exactly the RM code, and the latter would also achieve capacity on any BMS. The same conclusion would hold if \eqref{equi} held true for most sets; it is nonetheless conjectured in \cite{AY18} that \eqref{equi} holds in the strict sense. To further support this claim, \cite{AY18} provides two partial results:
\begin{enumerate}
    \item[(i)] Partial order: 
    \begin{align}
        A \supseteq B \implies H_{A} \ge H_{B}, \label{equi2}
    \end{align}
    more generally, the implication is shown to hold if there exists $\tilde{B}$ s.t.\ $A \supseteq \tilde{B}$, $|\tilde{B}|=|B|$ and $\tilde{B}$ is less than $B$ and each component of $\tilde{B}$ is smaller than or equal to the corresponding component of $B$ (as integers).     

    \item[(ii)] For the BSC, \eqref{equi} is proved up to $2^m=16$, and numerically verified for some larger block lengths.
\end{enumerate}

It is also shown in \cite{AY18} that it suffices to check \eqref{equi} for specific subsets. It is useful at this point to introduce the  division of the input bits of $\mathrm{RM}(m,r)$ into $m+1$ layers, where the $j$th layer corresponds to the subsets of $[m]$ with size $j$, and the range of $j$ is from $0$ to $m$. Therefore, the $0$th layer only has one bit $U_1$, and the first layer has $m$ bits $U_2,U_3,\dots,U_{m+1}$. In general, the $i$th layer has $\binom{m}{i}$ bits.
It is shown in \cite{AY18} that it suffices to check \eqref{equi} for subsets $A,B$ that are respectively the last and first subsets in consecutive layers, as these are shown to achieve respectively the largest and least entropy within layers. We now present the proof technique for Theorem \ref{thm:PRM}.

{\bf Proof technique for Theorem \ref{thm:PRM}.} When $m=1$ and $n=2$, the polar and RM matrices are the same:   
$$
G_2=R_2=\begin{bmatrix}
1 & 0 \\
1 & 1
\end{bmatrix}
.$$
Define $H(W):=1-I(W)$. The balance equation gives
$$2H(W)=H(U_1|Y_1,Y_2) +H(U_2|Y_1,Y_2,U_1),$$
and since $H(U_2|Y_1,Y_2,U_1) \le H(U_2|Y_2)=H(W)$, 
we can create two synthetic channels $W^-: U_1 \to (Y_1,Y_2)$ and $W^+: U_2 \to (Y_1,Y_2,U_1)$ such that for $\Delta:= H(W^+)-H(W)$,
\begin{align}
    H(W^-)= H(W)+\Delta/2 \ge H(W) \ge H(W^+)= H(W)-\Delta/2, \label{sym} 
\end{align}
 and the above inequalities are strict unless $\Delta=H(W^+)-H(W)=0$, which is equivalent to $H(W)\in \{0,1\}$,
i.e., the channel is already extremal. One can write a quantitative version of this, i.e., for any binary input symmetric output channel $W$, there exists a positive continuous function $\delta$ on $[0,1/2)$ that vanishes only at $0$, such that for any $\epsilon \in (0,1/2)$,
\begin{align}
    H(W) \in (\epsilon,1-\epsilon) \,\,\, \implies \,\,\,
    \Delta=H(W^+)-H(W) \ge \delta(\epsilon) \label{extreme}.
\end{align}
Therefore, unless the initial channel $W$ was already close to extremal, the synthesized channel $W^+$ is strictly better by a bounded amount. 

If we move to $m=2$, then we still have $G_4=R_4$, and we can create four synthetic channels $W^{--}, W^{-+}, W^{+-}, W^{++}$ corresponding to the channels mapping $U_i$ to $(Y^4,U^{i-1})$ where the binary expansion of $i$ (mapping $0$ to $-$ and $1$ to $+$) gives the channel index. Note that the behavior of these channels at $m=2$ can be related to that at $m=1$, as suggested by the notation $W^{s_1 s_2}$, $s_1,s_2 \in \{-,+\}$. Namely, the two channels $(W^{*-},W^{*+})$ are the synthesized channels obtained by composing two independent copies of $W^*$, $* \in \{-,+\}$ with the transformation $G_2$, as done for $m=1$.

In the case of polar codes, one intentionally preserves this induction. Namely, after obtaining $n$ synthetic channels $W^{s}$, $s \in \{-,+\}^m$, one produce twice more channels with the + and - versions of these, such that for each one:
\begin{align}
    H(W^{s-})= H(W^s)+\Delta_s/2 \ge H(W^s) \ge H(W^{s+})= H(W^s)-\Delta_s/2, \label{sym2} 
\end{align}
with $\Delta_s:= H(W^{s+})-H(W^s)$, $s \in \{-,+\}^m$.
This follows by the inductive property of $G_{2n}$:
$$G_{2n}  =\begin{bmatrix}
G_n & 0 \\
G_n & G_n
\end{bmatrix}.$$
The polarization result is then a consequence of this recursive process: if one tracks the entropies of the $2^m$ channels at level $m$, at the next iteration, i.e., at level $m+1$, one breaks symmetrically each of the previous entropies into one strictly lower and one strictly larger value, as long as the produced entropies are not extremal, i.e., not 0 or 1. Therefore, extremal configurations are the only stable points of this process, and most of the values end up at these extremes. This can be deduced by using the martingale convergence theorem,\footnote{It is sufficient to check that the  variance of the $2^m$ entropies at time $m$ decreases when $m$ increases.} and the fact that the only fixed point of the $-,+$ transform are at the extremes, or conversely, that for values that are not  close to the extremes, a strict movement takes place as in \eqref{extreme}. One needs however a stronger condition than \eqref{extreme}. In fact, \eqref{extreme} holds for any $W$, but the function $\delta$ could depend on the channel $W$, and we need to apply \eqref{extreme} to the channels $W^s$ that have an output alphabet that grows as $s$ grows. We therefore need a universal function $\delta$ that applies to all the channels $W^s$, so that we can lower-bound $\Delta_s$ universally when $H(W^s) \in (\epsilon,1-\epsilon)$. Note that if $W$ is a BMS channels, then each $W^s$ is still a BMS channel, so we can restrict ourselves to this class of channels. The existence of a universal function $\delta$ follows then from the following inequality,\footnote{This strong inequality may not be needed to obtain a universal function $\delta$; weaker bounds and functions can be obtained, as for example in \cite{Blasiok18} for non-binary alphabets.} which can be proved as a consequence of the so-called Mrs.\ Gerber's Lemma\footnote{\eqref{gerber} is a convexity argument: $H(U_1+U_2|T_1,T_2) - H(U_2|T_2)
=\sum_{t_1,t_2}[H_2(p_{t_1}\star p_{t_2}) - H_2(p_{t_2})]P(T_2=t_2)P(T_1=t_1)$, where $p_{t} =H(U_1|T_1=t)$, $H_2$ is the binary entropy function, and one can introduce the expectation inside $H_2$ due to the convexity property of Mrs.\ Gerber's Lemma \cite{Wyner73}.}:
\begin{align}
H(U_1+U_2|T_1,T_2) - H(U_2|T_2) \ge H(V_1+V_2)-H(V_2) \label{gerber}
\end{align}
where $(U_1,T_1)$, $(U_2,T_2)$ are i.i.d.\ with $U_1$ binary uniform, $T_1$ is arbitrary discrete valued, and $V_1,V_2$ are i.i.d.\ binary uniform such that $H(U_2|T_2)=H(V_2)$. Therefore, the entropy spread $H(W^{s+})-H(W^s)$ for any $s$ is as large as the entropy spread of a simple BSC that has a matching entropy, since $1-H(V_2)$ is a BSC capacity and $H(U_2|T_2)=H(V_2)$.
The universal function $\delta$ can then be found  explicitly by inspecting the right hand side of \eqref{gerber}, and for any $\epsilon \in (0,1/2)$, there exists $\delta(\epsilon) >0$ such that for any $m \ge 1$, $s \in \{-,+\}^m$, 
\begin{align}
    H(W^s) \in (\epsilon,1-\epsilon) \,\,\, \implies \,\,\,
    \Delta_s=H(W^{s+})-H(W^s) \ge \delta(\epsilon) \label{eq:spread}.
\end{align}
For RM codes, the inductive argument is broken. In particular, we no longer have a symmetric break of the conditional entropies as in \eqref{sym2}, hence no obvious martingale argument.  Nonetheless, we next argue that the loss of the inductive/symmetric structure takes place in a favorable way, i.e., the spread in \eqref{sym2} tends to be greater for RM codes than for polar codes. In turn, we claim that the conditional entropies polarize faster in the RM code ordering (see \cite{AY18}). We next explain this and show how one can take a short-cut to show that RM codes polarize using increasing chains of subsets, exploiting the Plotkin recursive structure of RM codes and known inequalities from polar codes. We first need to define increasing chains. 
\begin{definition}[Increasing chains]
We say that $\emptyset=B_0 \subseteq B_1 \subseteq B_2 \subseteq \dots \subseteq B_m=[m]$ is an increasing chain if $|B_i|=i$ for all $i=0,1,2,\dots,m$.
\end{definition}

As for polar codes, we will make use of the recursive structure of RM codes, i.e., the fact that $\mathrm{RM}(m+1,m+1)$ can be decomposed into two independent copies of $\mathrm{RM}(m,m)$. In order to distinguish $H_A$'s for RM codes with different parameters, we add a superscript to the notation, writing $H_A^{(m)}$ instead of $H_A$.
A main step in our argument consist in proving the following theorem:
\begin{theorem}[RM polarization on chains]\label{thm:m}
For every BMS channel $W$, every positive $m$ and every increasing chain $\{B_i\}_{i=0}^m$, we have
\begin{align}
H_{B_0}^{(m)} \le H_{B_1}^{(m)} \le H_{B_2}^{(m)} \le \dots \le H_{B_m}^{(m)}. \label{thm:m1}
\end{align}
Further, for any $\epsilon\in (0,1/2)$, there is a constant $D(\epsilon)$ such that for every positive $m$ and every increasing chain $\{B_i\}_{i=0}^m$,
\begin{align}
    \left|\left\{ i\in\{0,1,\dots,m\}:\epsilon < H_{B_i}^{(m)} < 1-\epsilon \right\} \right| \le D(\epsilon). \label{thm:m2}
\end{align}
\end{theorem}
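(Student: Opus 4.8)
The plan is to treat the two assertions of Theorem~\ref{thm:m} separately, using the Plotkin decomposition of Section~\ref{sect:rscd} together with the universal ``entropy spread'' estimate \eqref{eq:spread} that already drives the polar-code argument.

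For the monotonicity \eqref{thm:m1} I would simply invoke the partial order \eqref{equi2}: since $B_{i-1}\subseteq B_i$ for every $i$, \eqref{equi2} gives $H^{(m)}_{B_{i-1}}\le H^{(m)}_{B_i}$ at once, so \eqref{thm:m1} is just the restriction of the set-monotonicity of the entropies to a maximal chain. If one prefers a self-contained argument, the single-element case $B\subseteq B\cup\{e\}\Rightarrow H_B\le H_{B\cup\{e\}}$ can be proved by induction on $m$ from the Plotkin recursion, peeling off the variable indexed by $e$.

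For the bounded-transition statement \eqref{thm:m2}, note first that \eqref{thm:m1} already forces the ``middle'' indices $\{i:\;H^{(m)}_{B_i}\in(\epsilon,1-\epsilon)\}$ to form a \emph{contiguous} block; hence it suffices to exhibit a constant $c(\epsilon)>0$ such that every consecutive step lying in the band obeys $H^{(m)}_{B_i}-H^{(m)}_{B_{i-1}}\ge c(\epsilon)$, from which \eqref{thm:m2} follows with $D(\epsilon)=1+\lceil(1-2\epsilon)/c(\epsilon)\rceil$, since the total increase along the chain is at most $1$. To produce such a gap I would induct on $m$ through the Plotkin decomposition $\mathrm{RM}(m,m)=(\mathrm{RM}(m-1,m-1),\mathrm{RM}(m-1,m-1))$, writing $f=g+x_e h$ where $e$ is the element added at the \emph{last} step of the chain (relabelled to $m$). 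Transmitting $(\mathrm{Eval}(g),\mathrm{Eval}(g)+\mathrm{Eval}(h))$ over two blocks of $W$ is precisely one polar combining step, producing the synthetic channels $W^{-}$ (carrying the top ``$h$'' coordinate $U_{B_m}$, decoded first) and $W^{+}$ (carrying the ``$g$'' coordinates $U_{B_0},\dots,U_{B_{m-1}}$). Thus $H^{(m)}_{B_m}$ equals the top chain-entropy of the level-$(m-1)$ $h$-code through $W^{-}$, while $H^{(m)}_{B_0},\dots,H^{(m)}_{B_{m-1}}$ are governed by the level-$(m-1)$ $g$-chain through $W^{+}$. Peeling the added elements one at a time is meant to identify each chain step with a single $+/-$ transform of some synthetic channel on the all-$+$ path, the base case ($m=1$, the step $\emptyset\subseteq\{1\}$) being literally $H^{(1)}_{\{1\}}-H^{(1)}_{\emptyset}=H(W^{-})-H(W^{+})$. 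The engine supplying the gap is the universal spread \eqref{eq:spread}: for every $\epsilon\in(0,1/2)$ there is $\delta(\epsilon)>0$, \emph{independent of the BMS channel}, with $H(V^{+})\le H(V)-\delta(\epsilon)$ and $H(V^{-})\ge H(V)+\delta(\epsilon)$ whenever $H(V)\in(\epsilon,1-\epsilon)$, this being the consequence of Mrs.\ Gerber's Lemma \eqref{gerber} that furnishes a channel-independent $\delta$ for all the synthesized channels $W^{s}$.

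The main obstacle is that the recursion does \emph{not} give clean equalities for the interior coordinates. When the successive decoder of $\mathrm{RM}(m,m)$ reaches a ``$g$'' coordinate $U_{B_i}$ with $0<i<m$, it has already decoded the high-degree part of $h$ (the large sets come first) but \emph{not} all of $h$, whereas the polar channel $W^{+}$ by definition reveals the entire lower block $\mathrm{Eval}(h)$. Consequently the recursion yields only inequalities $H^{(m)}_{B_i}\ge H\!\left(W^{(+)^{m-i}(-)^{i}}\right)$, and the delicate point -- exactly the ``favorable'' loss of the inductive structure alluded to around \eqref{sym2} -- is to check that the missing side information moves the entropies in the direction that \emph{increases} the consecutive gaps, so that the channel-independent spread $\delta(\epsilon)$ still lower-bounds $H^{(m)}_{B_i}-H^{(m)}_{B_{i-1}}$ uniformly in $m$. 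Verifying this favorable monotonicity, together with the fact that the $\delta(\epsilon)$ coming from \eqref{gerber} applies simultaneously to all synthetic channels $W^{s}$ (whose output alphabets grow with $m$), is the crux; once it is in place, the contiguous-band estimate delivers the uniform constant $D(\epsilon)$ of \eqref{thm:m2}.
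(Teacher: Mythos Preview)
Your treatment of \eqref{thm:m1} via the partial order \eqref{equi2} is fine and matches the paper's use of the same fact. Your overall framework for \eqref{thm:m2} is also correct: by monotonicity the ``band'' indices are contiguous, and a uniform lower bound $c(\epsilon)$ on consecutive gaps inside the band yields $D(\epsilon)$. The engine (the channel-independent spread $\delta(\epsilon)$ from Mrs.\ Gerber's Lemma) is the right one.

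The genuine gap is exactly the point you flag as ``the crux'' and do not resolve. Your induction tries to peel the chain inside a \emph{fixed} level $m$ and compare $H^{(m)}_{B_i}$ to bare polar channels $W^{(+)^{m-i}(-)^{i}}$; as you note, the RM successive decoder does not reveal all of $h$ when it reaches an interior $g$-coordinate, so you only get one-sided inequalities and no clean sandwich at level $m$. Merely asserting that the missing side information ``moves entropies in the favorable direction'' is precisely the statement that needs proof, and unrolling all the way to $W$ makes this harder, not easier.

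The paper sidesteps this by comparing two \emph{adjacent levels} rather than working within one. The key device is the interlacing Lemma~\ref{interlacing},
\[
H^{(m+1)}_{B_i}\le H^{(m)}_{B_i}\le H^{(m+1)}_{B_{i+1}},
\]
together with the polar-like gap \eqref{eq:jjw2}. Both come from the single sandwich
\[
H^{(m+1)}_{A}\ \le\ H\big((W^{(m)}_{A})^{+}\big)\ \le\ H^{(m)}_{A}\ \le\ H\big((W^{(m)}_{A})^{-}\big)\ \le\ H^{(m+1)}_{A\cup\{m+1\}},
\]
where the comparison is with the polar $\pm$ of the \emph{level-$m$ RM bit-channel} $W^{(m)}_{A}$, not with $W$ itself. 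This is exactly the ``favorable monotonicity'' you were looking for, stated in a form that is directly provable from the Plotkin split plus symmetry. Once you have it, \eqref{eq:spread} gives \eqref{eq:jjw2}, and then for consecutive band indices one telescopes through level $m+1$:
\[
H^{(m)}_{B_{i+1}}-H^{(m)}_{B_i}
=\big(H^{(m)}_{B_{i+1}}-H^{(m+1)}_{B_{i+1}}\big)+\big(H^{(m+1)}_{B_{i+1}}-H^{(m)}_{B_i}\big)>2\delta(\epsilon),
\]
yielding $D(\epsilon)=1/(2\delta(\epsilon))+1$. So the missing idea is not a deeper analysis of partial side information at a fixed level, but the two-level interlacing that lets the polar sandwich do all the work.
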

Note that $D(\epsilon)$ does not depend on $m$ here. This theorem relies strongly on the following interlacing property over chains. 
\begin{lemma}[Interlacing property]\label{interlacing}
For every BMS channel $W$, every positive $m$ and every increasing chain $\{B_i\}_{i=0}^m$, we have
\begin{equation} \label{eq:jjw1}
 H_{B_i}^{(m+1)} \le H_{B_i}^{(m)} \le H_{B_{i+1}}^{(m+1)}  \quad \forall i\in\{0,1,\dots,m\}.
\end{equation}
\end{lemma}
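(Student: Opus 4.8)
The plan is to induct on the Plotkin decomposition of $\mathrm{RM}(m+1,m+1)$ along the last variable $x_{m+1}$. Writing a codeword as $(u,u+v)$ with $u=\Eval(g)$, $v=\Eval(h)$ and $g,h\in\mathrm{RM}(m,m)$, the $2^{m+1}$ channel uses split into a first half $Y^{(1)}$ and a second half $Y^{(2)}$ of $2^m$ uses each. Two structural facts drive everything. First, the coefficients of $f=g+x_{m+1}h$ indexed by subsets $A\not\ni m+1$ are exactly the coefficients of $g$, those indexed by $A\ni m+1$ are the coefficients of $h$, and restricting the size-then-lexicographic order to either family (stripping $m+1$ from the second) reproduces the order on subsets of $[m]$. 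Second, $Y^{(1)}$ is $\Eval(g)$ sent through $2^m$ copies of $W$, hence is itself an $\mathrm{RM}(m,m)$ codeword seen through $W$; revealing all of $h$ turns the $g$-part into the upgraded channel $W^+$ (two independent looks at $g(z)$) and the $h$-part into $W^-$, so I will lean on the polar facts $H(W^-)\ge H(W)\ge H(W^+)$ and on degradation monotonicity of successive-decoding entropies. The strategy is to sandwich the effective channel that the $(m+1)$-variable successive decoder sees for each chain coefficient between these clean synthetic channels.

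For the left inequality $H_{B_i}^{(m+1)}\le H_{B_i}^{(m)}$ I would simply discard side information. Since $B_i\subseteq[m]$, the bit $U_{B_i}$ is a coefficient of $g$, recovered in the $(m+1)$-variable decoder from $\{U_A:A\prec B_i\}$ together with $(Y^{(1)},Y^{(2)})$. Dropping $Y^{(2)}$ and every conditioning coordinate except the coefficients of $g$ preceding $B_i$ can only raise conditional entropy, so
\[
H_{B_i}^{(m+1)}\le H\!\left(U_{B_i}\,\middle|\,\{U_B:B\subseteq[m],\,B\prec B_i\},\,Y^{(1)}\right).
\]
Because $Y^{(1)}$ depends only on $g$ and the induced order on subsets of $[m]$ is the standard one, the right-hand side is exactly the successive-decoding entropy of $U_{B_i}$ in $\mathrm{RM}(m,m)$ through $W$, i.e.\ $H_{B_i}^{(m)}$, giving this direction cleanly.

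The right inequality $H_{B_i}^{(m)}\le H_{B_{i+1}}^{(m+1)}$ is the hard direction, and I would route it through the companion coefficient $B_i\cup\{m+1\}$, which has the same size $i+1$ as $B_{i+1}$. Revealing all coefficients of $g$ degrades the channel seen by the $h$-coefficient $U_{B_i\cup\{m+1\}}$ to $W$ (given $g$, the second half delivers $h$ through $W$ and the first half is useless), and revealing less can only raise the entropy; since the induced order on the $h$-family again matches the order on subsets of $[m]$, this yields the clean bound $H_{B_i}^{(m)}\le H_{B_i\cup\{m+1\}}^{(m+1)}$. It then remains to compare the two size-$(i+1)$ coefficients inside $\mathrm{RM}(m+1,m+1)$, i.e.\ to show $H_{B_i\cup\{m+1\}}^{(m+1)}\le H_{B_{i+1}}^{(m+1)}$. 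Writing $B_{i+1}=B_i\cup\{j\}$ with $j\le m$, the transposition $x_j\leftrightarrow x_{m+1}$ is a code automorphism carrying $U_{B_{i+1}}$ to $U_{B_i\cup\{m+1\}}$, so as code symbols they are exchangeable and differ only through the decoding order; as $B_{i+1}$ is lexicographically smaller it is decoded with less side information, which is an instance of the partial order $A\supseteq\tilde B,\ \tilde B\le B\Rightarrow H_A\ge H_B$ shown in \cite{AY18}. Chaining this with $H_{B_i}^{(m)}\le H_{B_i\cup\{m+1\}}^{(m+1)}$ closes the right inequality.

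The main obstacle is the bookkeeping forced by the size-then-lexicographic order: because the two families interleave within each size layer, the effective channel the decoder sees for a given coefficient is only \emph{sandwiched} between $W^+$, $W$ and $W^-$ rather than equal to any one of them, so every comparison must be phrased as a degradation inequality with the set of already-decoded coordinates tracked exactly. Concretely, the delicate point is the same-layer step $H_{B_i\cup\{m+1\}}^{(m+1)}\le H_{B_{i+1}}^{(m+1)}$: one must verify that the automorphism $x_j\leftrightarrow x_{m+1}$ sends the decoding past of $B_{i+1}$ into a \emph{subset} of the decoding past of $B_i\cup\{m+1\}$, so that discarding conditioning is legitimate. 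This is precisely where the partial-order structure and the universal spread bound coming from Mrs.\ Gerber's Lemma enter; once that monotonicity is secured, the two inequalities together yield the interlacing.
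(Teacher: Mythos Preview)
Your proposal is correct and follows essentially the same route as the paper: exploit the Plotkin decomposition of $\mathrm{RM}(m+1,m+1)$ into two copies of $\mathrm{RM}(m,m)$, obtain $H_{B_i}^{(m+1)}\le H_{B_i}^{(m)}$ and $H_{B_i}^{(m)}\le H_{B_i\cup\{m+1\}}^{(m+1)}$ by adding or removing conditioning, and then pass from $B_i\cup\{m+1\}$ to $B_{i+1}=B_i\cup\{j\}$ via the variable-permutation symmetry. The paper phrases the two conditioning steps as comparisons with the polar synthetic channels $(W_{B_i}^{(m)})^{\pm}$, whereas you obtain the same inequalities directly by discarding $Y^{(2)}$ (left inequality) and revealing all coefficients of $g$ (right inequality); this is a cosmetic difference, not a different idea.

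One small correction: Mrs.\ Gerber's Lemma plays no role in the interlacing inequality \eqref{eq:jjw1} itself. It enters only in establishing the quantitative spread bound \eqref{eq:jjw2}, which is a separate ingredient in the proof of Theorem~\ref{thm:m}. For the interlacing you only need that conditioning reduces entropy together with the symmetry/partial-order step, so you can drop the reference to Mrs.\ Gerber from your closing paragraph.
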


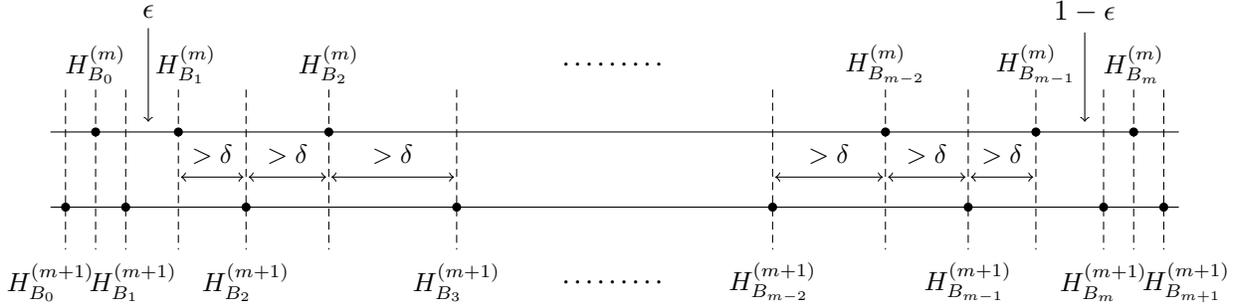
\begin{figure}
\centering
\begin{tikzpicture}
\draw (0,0) -- (15,0);
\draw (0,1) -- (15,1);
\draw [fill] (0.2,0) circle [radius=1.5pt];
\draw [fill] (1,0) circle [radius=1.5pt];
\draw [fill] (2.6,0) circle [radius=1.5pt];
\draw [fill] (5.4,0) circle [radius=1.5pt];
\draw [fill] (9.6,0) circle [radius=1.5pt];
\draw [fill] (12.2,0) circle [radius=1.5pt];

\draw [fill] (14,0) circle [radius=1.5pt];
\draw [fill] (14.8,0) circle [radius=1.5pt];
\draw [fill] (0.6,1) circle [radius=1.5pt];
\draw [fill] (1.7,1) circle [radius=1.5pt];
\draw [fill] (3.7,1) circle [radius=1.5pt];
\draw [fill] (11.1,1) circle [radius=1.5pt];
\draw [fill] (13.1,1) circle [radius=1.5pt];

\draw [fill] (14.4,1) circle [radius=1.5pt];
\draw
node at (7.5,-1) [] {\dots\dots\dots}
node at (7.5,1.9) [] {\dots\dots\dots}
node at (1.3,2.6) [] (e1) {$\epsilon$}
node at (1.3,1) [] (e2) {}
node at (13.75,2.6) [] (e3) {$1-\epsilon$}
node at (13.75,1) [] (e4) {}
node at (0.2,1.7) [] (x1) {}
node at (0,-1) []  {\small $H_{B_0}^{(m+1)}$}
node at (0.6,1.7) [] (x2) {}
node at (0.6,1.9) []  {\small $H_{B_0}^{(m)}$}
node at (1,1.7) [] (x3) {}
node at (1.1,-1) []  {\small $H_{B_1}^{(m+1)}$}

node at (2.15,0.7) [] {\small $>\delta$}
node at (1.7,1.7) [] (x6) {}
node at (1.8,1.9) []  {\small $H_{B_1}^{(m)}$}
node at (2.6,1.7) [] (x7) {}
node at (2.6,-1) []  {\small $H_{B_2}^{(m+1)}$}
node at (3.15,0.7) [] {\small $>\delta$}
node at (3.7,1.7) [] (x8) {}
node at (3.7,1.9) []  {\small $H_{B_2}^{(m)}$}
node at (4.55,0.7) [] {\small $>\delta$}
node at (5.4,1.7) [] (x9) {}
node at (5.4,-1) []  {\small $H_{B_3}^{(m+1)}$}
node at (9.6,1.7) [] (x10) {}
node at (9.6,-1) []  {\small $H_{B_{m-2}}^{(m+1)}$}
node at (10.35,0.7) [] {\small $>\delta$}
node at (11.1,1.7) [] (x11) {}
node at (11.1,1.9) []  {\small $H_{B_{m-2}}^{(m)}$}
node at (11.65,0.7) [] {\small $>\delta$}
node at (12.2,1.7) [] (x12) {}
node at (12.2,-1) []  {\small $H_{B_{m-1}}^{(m+1)}$}
node at (12.65,0.7) [] {\small $>\delta$}
node at (13.1,1.7) [] (x13) {}
node at (13.1,1.9) []  {\small $H_{B_{m-1}}^{(m)}$}

node at (14,1.7) [] (x16) {}
node at (14,-1) []  {\small $H_{B_m}^{(m+1)}$}
node at (14.4,1.7) [] (x17) {}
node at (14.4,1.9) []  {\small $H_{B_m}^{(m)}$}
node at (14.8,1.7) [] (x18) {}
node at (15.1,-1) []  {\small $H_{B_{m+1}}^{(m+1)}$}
node at (0.2,-0.7) [] (y1) {}
node at (0.6,-0.7) [] (y2) {}
node at (1,-0.7) [] (y3) {}

node at (1.6,0.4) [] (m1) {}
node at (2.7,0.4) [] (m2) {}
node at (2.5,0.4) [] (m3) {}
node at (3.8,0.4) [] (m4) {}
node at (3.6,0.4) [] (m5) {}
node at (5.5,0.4) [] (m6) {}

node at (1.7,-0.7) [] (y6) {}
node at (2.6,-0.7) [] (y7) {}
node at (3.7,-0.7) [] (y8) {}
node at (5.4,-0.7) [] (y9) {}
node at (9.6,-0.7) [] (y10) {}
node at (11.1,-0.7) [] (y11) {}
node at (12.2,-0.7) [] (y12) {}
node at (13.1,-0.7) [] (y13) {}

node at (9.5,0.4) [] (m11) {}
node at (11.2,0.4) [] (m12) {}
node at (11,0.4) [] (m13) {}
node at (12.3,0.4) [] (m14) {}
node at (12.1,0.4) [] (m15) {}
node at (13.2,0.4) [] (m16) {}

node at (14,-0.7) [] (y16) {}
node at (14.4,-0.7) [] (y17) {}
node at (14.8,-0.7) [] (y18) {};
\draw [densely dashed] (x1) -- (y1);
\draw [densely dashed] (x2) -- (y2);
\draw [densely dashed] (x3) -- (y3);

\draw [densely dashed] (x6) -- (y6);
\draw [densely dashed] (x7) -- (y7);
\draw [densely dashed] (x8) -- (y8);
\draw [densely dashed] (x9) -- (y9);
\draw [densely dashed] (x10) -- (y10);
\draw [densely dashed] (x11) -- (y11);
\draw [densely dashed] (x12) -- (y12);
\draw [densely dashed] (x13) -- (y13);

\draw [densely dashed] (x16) -- (y16);
\draw [densely dashed] (x17) -- (y17);
\draw [densely dashed] (x18) -- (y18);
\draw [<->] (m1) -- (m2);
\draw [<->] (m3) -- (m4);
\draw [<->] (m5) -- (m6);

\draw [<->] (m11) -- (m12);
\draw [<->] (m13) -- (m14);
\draw [<->] (m15) -- (m16);
\draw [->] (e1) -- (e2);
\draw [->] (e3) -- (e4);
\end{tikzpicture}
\caption{Illustration of the interlacing property in \eqref{eq:jjw1} used in the proofs of Theorem~\ref{thm:m}.}
\label{fig:bvd}
\end{figure}

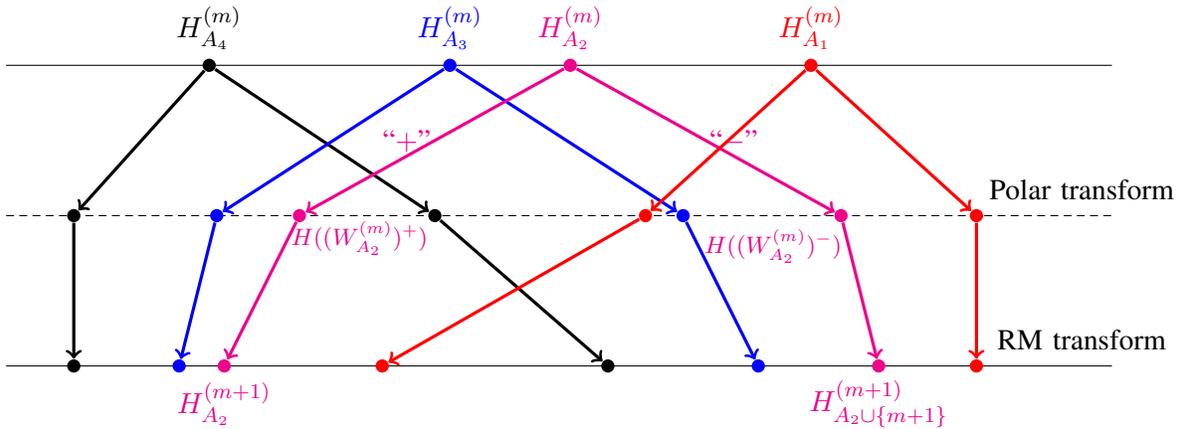
\begin{figure}
\centering
\begin{tikzpicture}[every node/.style={circle,inner sep=0pt, minimum size= 5pt}]
\draw (0.3,0) -- (15,0);
\draw [densely dashed] (0.3, 2) -- (15, 2);
\draw (0.3, 4) -- (15, 4);

\node at (14.6, 2.35) {Polar transform};
\node at (14.6, 0.35) {RM transform};

\node at (3, 4.5) {$H_{A_4}^{(m)}$};
\node at (6.2, 4.5) [color=blue] {$H_{A_3}^{(m)}$};
\node at (7.8, 4.5) [color=magenta] {$H_{A_2}^{(m)}$};
\node at (11, 4.5) [color=red] {$H_{A_1}^{(m)}$};

\node at (3.2, -0.5) [color=magenta] {$H_{A_2}^{(m+1)}$};
\node at (11.9, -0.5) [color=magenta] {$H_{A_2\cup\{m+1\}}^{(m+1)}$};

\node at (5, 1.7) [color=magenta] {\footnotesize $H((W_{A_2}^{(m)})^+)$};
\node at (10.5, 1.6) [color=magenta] {\footnotesize $H((W_{A_2}^{(m)})^-)$};

\node at (3,4) [fill]  (bk0) {};
\node at (6.2,4) [fill= blue]  (m0) {};
\node at (7.8,4) [fill= magenta]  (bl0) {};
\node at (11,4) [fill= red]  (br0) {};

\node at (1.2,2) [fill]  (bk1) {};
\node at (6,2) [fill] (bk2) {};

\node at (3.1,2) [fill, blue] (m1) {};
\node at (9.3,2) [fill, blue] (m2) {};

\node at (4.2,2) [fill, magenta] (bl1) {};
\node at (11.4,2) [fill, magenta] (bl2) {};

\node at (8.8,2) [fill, red] (br1) {};
\node at (13.2,2) [fill, red] (br2) {};

\node at (1.2, 0) [fill] (bk3) {};
\node at (8.3, 0) [fill] (bk4) {};

\node at (2.6,0) [fill, blue] (m3) {};
\node at (10.3,0) [fill, blue] (m4) {};

\node at (3.2,0) [fill, magenta] (bl3) {};
\node at (11.9, 0) [fill, magenta] (bl4) {};

\node at (5.3,0) [fill, red] (br3) {};
\node at (13.2,0) [fill, red] (br4) {};

\draw[very thick,->](bk0) -- node {}(bk1);
\draw[very thick,->](bk0) -- node {}(bk2);
\draw[very thick,->](bk1) -- node {}(bk3);
\draw[very thick,->](bk2) -- node {}(bk4);

\draw[very thick,->, color=blue](m0) -- node {}(m1);
\draw[very thick,->, color=blue](m0) -- node {}(m2);
\draw[very thick,->, color=blue](m1) -- node {}(m3);
\draw[very thick,->, color=blue](m2) -- node {}(m4);

\draw[very thick,->, color=magenta](bl0) -- node [left] {``$+$"}(bl1);
\draw[very thick,->, color=magenta](bl0) -- node [right] {``$-$"} (bl2);
\draw[very thick,->, color=magenta](bl1) -- node {}(bl3);
\draw[very thick,->, color=magenta](bl2) -- node {}(bl4);

\draw[very thick,->, color=red](br0) -- node {}(br1);
\draw[very thick,->, color=red](br0) -- node {}(br2);
\draw[very thick,->, color=red](br1) -- node {}(br3);
\draw[very thick,->, color=red](br2) -- node {}(br4);
\end{tikzpicture}
\caption{The fast polar transform with block size $4$.
The dots on the second line are the results of the standard polar transform, and the dots on the third line are the results of fast polar transform. In the fast polar transform, the bit-channel obtained by adding a monomial gets even worse (i.e., has even more entropy) than what would the classical polar $-$ transform produce on that bit-channel, and similarly,  the better bit-channel obtained by not adding the monomial is even better (less entropic) that what the polar $+$ transform would produce. Therefore, the gap between $H_{A_i\cup\{m+1\}}^{(m+1)}$ and $H_{A_i}^{(m+1)}$ is always larger than the gap between $H((W_{A_i}^{(m)})^-)$ and $H((W_{A_i}^{(m)})^+)$. Intuitively, this explains why RM codes polarize and do so even faster than polar codes (although the formal proof uses the increasing chain argument).}
\label{fig:gp}
\end{figure}

The proof of Theorem \ref{thm:m} relies mainly on the previous lemma and the following polar-like inequality: for any $\epsilon \in (0,1/2)$, there is $\delta(\epsilon)>0$ such that for any increasing chain and any $i\in\{0,1,\dots,m\}$,
\begin{equation} \label{eq:jjw2}
H_{B_i}^{(m)} \in (\epsilon,1-\epsilon)
\implies
 H_{B_i}^{(m)} -  H_{B_i}^{(m+1)} > \delta(\epsilon)
\text{~and~} 
 H_{B_{i+1}}^{(m+1)} - H_{B_i}^{(m)} > \delta(\epsilon).
\end{equation}
Note that \eqref{eq:jjw2} is analogous to \eqref{eq:spread} except that (i) it does not involve the $+,-$ transform of polar codes but the augmentation or not of a monomial with a new element, (ii) the resulting spread is not necessarily symmetrical as in \eqref{sym2}. This is  where it can be seen that RM codes have a bigger spread of polarization than polar codes, as further discussed below. 

We first note that \eqref{thm:m1} follows directly from the interlacing property \eqref{eq:jjw1}; see Fig.~\ref{fig:bvd} for an illustration of this. To prove \eqref{thm:m2}, we  combine \eqref{thm:m1} with the polar-like inequality \eqref{eq:jjw2}.
Indeed, by \eqref{eq:jjw2} we know that as long as $H_{B_i}^{(m)}>\epsilon$ and $H_{B_{i+1}}^{(m)}<1-\epsilon$, we have $H_{B_{i+1}}^{(m)} - H_{B_i}^{(m)}>2\delta$; see Fig.~\ref{fig:bvd} for an illustration.
Let $j$ be the smallest index such that $H_{B_j}^{(m)}>\epsilon$, and let $j'$ be the largest index such that $H_{B_{j'}}^{(m)}<1-\epsilon$. Then
$$
\left|\left\{ i\in\{0,1,\dots,m\}:\epsilon < H_{B_i}^{(m)} < 1-\epsilon \right\} \right| = j'-j+1. 
$$
Since $H_{B_i}^{(m)}$ increases with $i$, we have 
$
H_{B_{j'}}^{(m)}- H_{B_j}^{(m)}=\sum_{i=j}^{j'-1} (H_{B_{i+1}}^{(m)} - H_{B_i}^{(m)}) >2(j'-j)\delta
$, and since $H_{B_{j'}}^{(m)}- H_{B_j}^{(m)}$ is upper bounded by $1$, we have $j'-j<\frac{1}{2\delta}$. Therefore,
$$
\left|\left\{ i\in\{0,1,\dots,m\}:\epsilon < H_{B_i}^{(m)} < 1-\epsilon \right\} \right| < \frac{1}{2\delta}+1. 
$$
Thus we have proved \eqref{thm:m2} with the choice of $D(\epsilon)= 1/(2\delta(\epsilon))+1$.

Now we are left to explain how to prove \eqref{eq:jjw1}--\eqref{eq:jjw2}.
In Fig.~\ref{fig:bj2}, we define the bit-channel $W_{A_i}^{(m)}$ as the binary-input channel that takes $U_{A_i}^{(m)}$ as input and $Y^n,(U_{A_j}^{(m)}:j<i)$ as outputs, i.e., $W_{A_i}^{(m)}$ is the channel seen by the successive RM decoder when decoding $U_{A_i}^{(m)}$. By definition, we have $H_{A_i}^{(m)}=1-I(W_{A_i}^{(m)})$.
Making use of the fact that $\mathrm{RM}(m+1,m+1)$ can be decomposed into two independent copies of $\mathrm{RM}(m,m)$, one can show the larger spread of RM code split. More precisely, for every $A_i\subseteq[m]$, the bit-channel
$W_{A_i}^{(m+1)}$ is always better than the ``$+$" polar transform of $W_{A_i}^{(m)}$, and the bit-channel $W_{A_i\cup\{m+1\}}^{(m+1)}$ is always worse than the ``$-$" polar transform of $W_{A_i}^{(m)}$, i.e.,
$$
H_{A_i}^{(m+1)}
\le H((W_{A_i}^{(m)})^+)
\le  H_{A_i}^{(m)}
\le H((W_{A_i}^{(m)})^-)
\le H_{A_i\cup\{m+1\}}^{(m+1)} .
$$
Therefore, the gap between $H_{A_i\cup\{m+1\}}^{(m+1)}$ and $H_{A_i}^{(m)}$ is even larger than the gap between $H((W_{A_i}^{(m)})^-)$ and $H_{A_i}^{(m)}$. Similarly, the gap between $H_{A_i}^{(m)}$ and $H_{A_i}^{(m+1)}$ is even larger than the gap between $H_{A_i}^{(m)}$ and $H((W_{A_i}^{(m)})^+)$; see Fig.~\ref{fig:gp} for an illustration. Combining this with the polar inequality \eqref{eq:spread}, we have shown that \eqref{eq:jjw1}--\eqref{eq:jjw2} hold for any $B_{i+1}=B_i\cup\{m+1\}$.
Then by the symmetry of RM codes, one can show that $H_{B_i\cup\{j\}}^{(m+1)} \ge H_{B_i\cup\{m+1\}}^{(m+1)}$ for all $j\in[m]\setminus B_i$.
This proves \eqref{eq:jjw1}--\eqref{eq:jjw2} and Theorem \ref{thm:m}.

Theorem \ref{thm:m} proves a polarization on each increasing chain (See equation~\eqref{thm:m2}). In order to obtain the global polarization of RM codes (Theorem~\ref{thm:PRM}), we observe that there are in total $m!$ increasing chains, and a careful averaging argument over these $m!$ chains gives the result in Theorem~\ref{thm:PRM}.

\section{Decoding algorithms} \label{sect:drm}
We will survey various decoding algorithms for RM codes in this section. 
We divide these algorithms into three categories. The first category (Section~\ref{sect:Reed}) only consists of Reed's algorithm \cite{Reed54}: This is the first decoding algorithm for RM codes, designed for the worst-case error correction, and it can efficiently correct any error pattern with Hamming weight up to half the code distance.
The second category (Section~\ref{sect:ptre}) includes efficient algorithms designed for correcting random errors or additive Gaussian noise. These algorithms afford good practical performance in the short to medium code length regime or for low-rate RM codes. Yet due to complexity constraints, most of them are not efficient for decoding RM codes with long code length.
More specifically, we will cover the Fast Hadamard Transform decoder \cite{Green66,Be86} for first-order RM codes, Sidel'nikov-Pershakov algorithm and its variants \cite{Sidel92,Sakkour05}, Dumer's list decoding algorithm \cite{Dumer04,Dumer06,Dumer06a} and Recursive Projection-Aggregation algorithm \cite{YA18} as well as an algorithm based on minimum-weight parity checks \cite{Santi18}.
Finally, the last category (Section~\ref{sec:SSV}) again only consists of a single decoding algorithm---a Berlekamp-Welch type decoding algorithm proposed in \cite{Saptharishi17}.
This algorithm is designed for correcting random errors. Its performance guarantee (i.e., its polynomial run-time estimate) was established for decoding RM codes of degrees up to $r=o(\sqrt{m})$ while all the previous decoding algorithms discussed in this section only have performance guarantee for constant value of $r$ (i.e., we do not have polynomial upper bounds on their run time at other regimes of parameters).
In fact, this algorithm also gives interesting results for degrees $r=m-o(\sqrt{m/\log m})$.

\subsection{Reed's algorithm \cite{Reed54}: Unique decoding up to half the code distance}
\label{sect:Reed}
In this section, we recap
Reed's decoding algorithm \cite{Reed54} for $\mathrm{RM}(m,r)$. It can correct any error pattern with Hamming weight less than $2^{m-r-1}$, half the code distance.

For a subset $A\subseteq[m]$, we write $\overline{A}=[m]\setminus A$ and we use $V_A:=\{\original{z}\in\bF_2^m: z_i=0~\forall i\in\overline{A}\}$ to denote the $|A|$-dimensional subspace of $\bF_2^m$, i.e., $V_A$ is the subspace obtained by fixing all $z_i$'s to be $0$ for $i$ outside of $A$.
For a subspace $V_A$ in $\bF_2^m$, there are $2^{m-|A|}$ cosets of the form $V_A+\original{b}:=\{\original{z}+\original{b}:\original{z}\in V_A\}$, where $\original{b}\in\bF_2^m$.
For any $A\subseteq[m]$ and any $\original{b}\in\bF_2^m$, we always have
\begin{equation}\label{eq:tt1}
\sum_{\original{z}\in (V_A+\original{b})}
\Eval_{\original{z}}(x_A)  =1,
\end{equation}
and we also have that for any $A\not\subset B$,
\begin{equation}\label{eq:tt2}
\sum_{\original{z}\in (V_A+\original{b})}
\Eval_{\original{z}}(x_B)  =0.
\end{equation}
The sums in \eqref{eq:tt1}--\eqref{eq:tt2} are both over $\bF_2$.
To see \eqref{eq:tt1}, notice that  $\Eval_{\original{z}}(x_A)=1$ if and only if $z_i=1$ for all $i\in A$, and there is only one such $\original{z}\in (V_A+\original{b})$. 
To see \eqref{eq:tt2}: Since $A\not\subseteq B$, there is $i\in (A\setminus B)$. The value of $z_i$ does not affect the evaluation $\Eval_{\original{z}}(x_B)$. Therefore,
$
\sum_{\original{z}\in (V_A+\original{b}), z_i=0}
\Eval_{\original{z}}(x_B)
= \sum_{\original{z}\in (V_A+\original{b}), z_i=1}
\Eval_{\original{z}}(x_B),
$
and \eqref{eq:tt2} follows immediately.

Suppose that the binary vector $y=(y_{\original{z}}:\original{z}\in\bF_2^m)$ is a noisy version of a codeword $\Eval(f)\in\mathrm{RM}(m,r)$ such that $y$ and $\Eval(f)$ differ in less than $2^{m-r-1}$ coordinates. Reed's algorithm recovers the original codeword from $y$ by decoding the coefficients of the polynomial $f$.
Since $\deg(f)\le r$, we can always write $f=\sum_{A\subseteq[m], |A|\le r} u_A x_A$, where $u_A$'s are the coefficients of the corresponding monomials.
Reed's algorithm first decodes  the coefficients of all the degree-$r$ monomials, and then it decodes the coefficients of all the degree-$(r-1)$ monomials, so on and so forth, until it decodes all the coefficients.

To decode the coefficients $u_A$ for $|A|=r$, Reed's algorithm first calculates the sums
$\sum_{\original{z}\in (V_A+\original{b})} y_{\original{z}}$ over each of the $2^{m-r}$ cosets of the subspace $V_A$, and then it performs a majority vote among these $2^{m-r}$ sums: If there are more $1$'s than $0$'s, then we decode $u_A$ as $1$. Otherwise we decode it as $0$.
Notice that if there is no error, i.e., if $y=\Eval(f)$, then we have
$$
\sum_{\original{z}\in (V_A+\original{b})} y_{\original{z}}
= \sum_{\original{z}\in (V_A+\original{b})}
\Eval_{\original{z}}(\sum_{B\subseteq[m], |B|\le r} u_B x_B)
= \sum_{B\subseteq[m], |B|\le r} u_B
\sum_{\original{z}\in (V_A+\original{b})}
\Eval_{\original{z}}(x_B).
$$
According to \eqref{eq:tt1}--\eqref{eq:tt2}, for the subsets $B\subseteq[m]$ with $|B|\le r=|A|$, $\sum_{\original{z}\in (V_A+\original{b})}
\Eval_{\original{z}}(x_B)=1$ if and only if $B=A$.
Therefore, $\sum_{\original{z}\in (V_A+\original{b})} y_{\original{z}}
=u_A$ for all the $2^{m-r}$ cosets of the form $V_A+\original{b}$ if $y=\Eval(f)$.
Since we assume that $y$ and $\Eval(f)$ differ in less than $2^{m-r-1}$ coordinates, there are less than $2^{m-r-1}$ cosets for which $\sum_{\original{z}\in (V_A+\original{b})} y_{\original{z}}
\neq u_A$. After the majority voting among these $2^{m-r}$ sums, we will obtain the correct value of $u_A$.

After decoding all the coefficients of the degree-$r$ monomials, we can calculate 
$$
y'=y-\Eval(\sum_{B\subseteq[m], |B|= r} u_B x_B).
$$
This is a noisy version of the codeword
$\Eval(f-\sum_{B\subseteq[m], |B|= r} u_B x_B) \in\mathrm{RM}(m,r-1)$, and the number of errors in $y'$ is less than $2^{m-r-1}$ by assumption. Now we can use the same method to decode the coefficients of all the degree-$(r-1)$ monomials from $y'$. We then repeat this procedure until we decode all the coefficients of $f$.

\begin{theorem}
For a fixed $r$ and growing $m$, Reed's algorithm corrects any error pattern with Hamming weight less than $2^{m-r-1}$ in $O(n\log^r n)$ time when decoding $\mathrm{RM}(m,r)$.
\end{theorem}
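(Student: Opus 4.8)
The plan is to split the claim into correctness and running time, and to spend almost all the effort on the running-time count, since correctness is already contained in the discussion preceding the statement. For correctness I would argue by downward induction on the degree $j$ of the layer being decoded, with the invariant that at the start of the round decoding the degree-$j$ coefficients the working vector is a noisy version of a codeword of $\mathrm{RM}(m,j)$ differing from it in fewer than $2^{m-r-1}$ coordinates. The base case $j=r$ is the hypothesis on $y$. For the step, fix $A$ with $|A|=j$: the $2^{m-j}$ cosets $V_A+\original{b}$ partition $\bF_2^m$, so every erroneous coordinate lies in exactly one coset and corrupts exactly one of the sums $\sum_{\original{z}\in(V_A+\original{b})}y_{\original{z}}$, while by \eqref{eq:tt1}--\eqref{eq:tt2} each uncorrupted sum equals $u_A$. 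As fewer than $2^{m-r-1}\le 2^{m-j-1}$ of the $2^{m-j}$ sums are corrupted (using $j\le r$), strictly fewer than half dissent and the majority vote returns $u_A$. Subtracting $\Eval(\sum_{|B|=j}u_Bx_B)$ then flips no coordinate outside the already-erroneous set, so the error pattern, and hence the invariant, is preserved for the degree-$(j-1)$ round.

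For the running time I would bound the cost of a single round and sum over the $r+1$ rounds. In the round for degree $j$ there are $\binom{m}{j}$ subsets $A$. For a fixed $A$ all $2^{m-j}$ coset sums are computed together in $O(n)$ additions, because the cosets partition $\bF_2^m$ into $n=2^m$ points in total, so each coordinate is touched exactly once; the ensuing majority vote over $2^{m-j}\le n$ values costs $O(n)$ as well. Forming and subtracting $\Eval(\sum_{|B|=j}u_Bx_B)$ costs $O(n)$ per monomial, hence $O\!\left(n\binom{m}{j}\right)$ for the round. Thus each round costs $O\!\left(n\binom{m}{j}\right)$.

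Summing over the rounds gives a total of $O\!\left(n\sum_{j=0}^{r}\binom{m}{j}\right)=O(nk)$, where $k=\sum_{j=0}^r\binom{m}{j}$ is the dimension of $\mathrm{RM}(m,r)$. For fixed $r$ and large $m$ one has $\binom{m}{r}\le\sum_{j=0}^r\binom{m}{j}\le(r+1)\binom{m}{r}=\Theta(m^r)$, and since $m=\log_2 n$ this is $\Theta(\log^r n)$; therefore the running time is $O(n\log^r n)$.

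The only delicate point — and hence the main ``obstacle'' — is the per-round complexity bookkeeping: one must observe that the $2^{m-j}$ coset sums for a given $A$ together involve each of the $n$ coordinates exactly once, so the naive evaluation already costs $O(n)$ rather than $O\!\left(n\cdot 2^{m-j}\right)$. This is what keeps the per-subset cost at $O(n)$ and yields the clean $O(nk)=O(n\log^r n)$ bound without invoking any fast-transform machinery.
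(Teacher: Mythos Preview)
Your proposal is correct. The correctness argument (downward induction on the degree, using that the cosets of $V_A$ partition $\bF_2^m$ so each error corrupts at most one of the $2^{m-j}$ sums, and that $2^{m-r-1}\le 2^{m-j-1}$ for $j\le r$) is exactly the argument the paper sketches in the text preceding the theorem; you have simply made the induction and the invariant explicit. For the running time, the paper in fact states the $O(n\log^r n)$ bound without giving a proof, so your per-round $O\!\left(n\binom{m}{j}\right)$ count, summed to $O(nk)=O(n\log^r n)$ for fixed $r$, is the natural analysis filling that gap; the key observation you flag --- that all coset sums for a fixed $A$ touch each of the $n$ coordinates once --- is indeed what makes the bound work.
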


Reed's algorithm is summarized below:

\begin{center}
\begin{algorithm}
\caption{Reed's algorithm for decoding $\mathrm{RM}(m,r)$}
{\bf Input}: Parameters $m$ and $r$ of the RM code, and a binary vector $y=(y_{\original{z}}:\original{z}\in\bF_2^m)$ of length $n=2^m$

{\bf Output}: A codeword $c\in\mathrm{RM}(m,r)$

\vspace*{0.05in}
\begin{algorithmic}[1]
\State{$t\gets r$}

\While{$t\ge 0$}

\For{each subset $A\subseteq[m]$ with $|A|=t$}

\State{Calculate $\sum_{\original{z}\in (V_A+\original{b})} y_{\original{z}}$ for all the $2^{m-t}$ cosets of $V_A$}

\State{$\texttt{num1}\gets$ number of cosets $(V_A+\original{b})$ such that
$\sum_{\original{z}\in (V_A+\original{b})} y_{\original{z}}=1$}

\State{$u_A \gets \mathbf{1}[\texttt{num1} \ge 2^{m-t-1}]$}

\EndFor

\State{$y\gets y-\Eval(\sum_{A\subseteq[m],|A|=t}u_A x_A)$}

\State{$t\gets t-1$}

\EndWhile

\State{$c\gets \Eval(\sum_{A\subseteq[m],|A|\le r}u_A x_A)$}

\State{Output $c$}
\end{algorithmic}
\end{algorithm}
\end{center}

\subsection{Practical algorithms for short to medium length RM codes} \label{sect:ptre}

\subsubsection{Fast Hadamard Transform (FHT) for first order RM codes \cite{Green66,Be86}} \label{sect:fht}

The dimension of the first order RM code $\mathrm{RM}(m,1)$ is $m+1$, so there are in total $2^{m+1}=2n$ codewords.
A naive implementation of the Maximum Likelihood (ML) decoder requires $O(n^2)$ operations. In this section we recap an efficient implementation of the ML decoder based on FHT which requires only $O(n\log n)$ operations. We will focus on the soft-decision version of this algorithm, and the hard-decision version can be viewed as a special case.

Consider a binary-input memoryless channel $W:\{0,1\}\to\cW$. The log-likelihood ratio (LLR) of an output symbol $x\in\cW$ is defined as
$$
\LLR(x):=\ln\Big(\frac{W(x|0)}{W(x|1)}\Big).
$$
We still use $y=(y_{\original{z}}: \original{z}\in \bF_2^m)$ to denote the noisy version of a codeword in $\mathrm{RM}(m,1)$. 
Given the channel output vector $y$, the ML decoder for first order RM codes aims to find $c\in\mathrm{RM}(m,1)$ to maximize
$
\prod_{\original{z}\in\bF_2^m} W(y_{\original{z}}| c_{\original{z}}).
$
This is equivalent to finding $c$ which maximizes the following quantity:
$$
\prod_{\original{z}\in\bF_2^m} \frac{W(y_{\original{z}}| c_{\original{z}})}
{\sqrt{W(y_{\original{z}}|0) W(y_{\original{z}}|1)}},
$$
which is further equivalent to maximizing
\begin{equation}\label{eq:llr}
\sum_{\original{z}\in\bF_2^m} \ln \Big(
\frac{W(y_{\original{z}}| c_{\original{z}})}
{\sqrt{W(y_{\original{z}}|0) W(y_{\original{z}}|1)}} \Big).
\end{equation}
As the codeword $c$ is a binary vector,
$$
\ln \Big( \frac{W(y_{\original{z}}| c_{\original{z}})}
{\sqrt{W(y_{\original{z}}|0) W(y_{\original{z}}|1)}} \Big)
=\left\{ \begin{array}{cc}
\frac{1}{2}\LLR(y_{\original{z}}) & \mbox{if~} c_{\original{z}}=0 \vspace*{0.05in} \\ 
-\frac{1}{2}\LLR(y_{\original{z}}) & \mbox{if~} c_{\original{z}}=1
\end{array} \right.  .
$$
From now on we will use the shorthand notation
$$
L_{\original{z}}:= \LLR(y_{\original{z}}),
$$
and the formula in \eqref{eq:llr} can be written as
\begin{equation} \label{eq:mxr}
\frac{1}{2} \sum_{\original{z}\in\bF_2^m} \Big( (-1)^{c_{\original{z}}} L_{\original{z}} \Big),
\end{equation}
so we want to find $c\in\mathrm{RM}(m,1)$ to maximize this quantity.

By definition, every $c\in\mathrm{RM}(m,1)$ corresponds to a polynomial in $\mathbb{F}_2[x_1,x_2,\dots,x_m]$ of degree one, so we can write every codeword $c$ as a polynomial $u_0+\sum_{i=1}^m u_i x_i$.
In this way, we have $c_{\original{z}}=u_0+\sum_{i=1}^m u_i z_i$, where $z_1,z_2,\dots,z_m$ are the coordinates of the vector $\original{z}$.
Now our task is to find $u_0,u_1,u_2,\dots,u_m\in \mathbb{F}_2$ to maximize
\begin{equation} \label{eq:fht}
\sum_{\original{z}\in\bF_2^m} \Big( (-1)^{u_0+\sum_{i=1}^m u_i z_i} L_{\original{z}} \Big)
=(-1)^{u_0} \sum_{\original{z}\in\bF_2^m} \Big( (-1)^{\sum_{i=1}^m u_i z_i} L_{\original{z}} \Big).
\end{equation}
For a binary vector $\original{u}=(u_1,u_2,\dots,u_m)\in\bF_2^m$, we define
$$
\hat{L}(\original{u}):= \sum_{\original{z}\in\bF_2^m} \Big( (-1)^{\sum_{i=1}^m u_i z_i} L_{\original{z}} \Big).
$$
To find the maximizer of \eqref{eq:fht}, we only need to compute $\hat{L}(\original{u})$ for all $\original{u}\in\bF_2^m$, but the vector $(\hat{L}(\original{u}):\original{u}\in\bF_2^m)$ is exactly the Hadamard Transform of the vector $(L_{\original{z}}:\original{z}\in\bF_2^m)$, so it can be computed using the Fast Hadamard Transform with complexity $O(n\log n)$.
Once we know the values of $(\hat{L}(\original{u}),\original{u}\in\bF_2^m)$, we can find $\original{u}^*=(u_1^*,u_2^*,\dots,u_m^*)\in\bF_2^m$ that maximizes $|\hat{L}(\original{u})|$.
If $\hat{L}(\original{u}^*)>0$, then the decoder outputs the codeword corresponding to $u_0^*=0,u_1^*,u_2^*,\dots,u_m^*$. Otherwise, the decoder outputs the codeword corresponding to $u_0^*=1,u_1^*,u_2^*,\dots,u_m^*$.
This completes the description of the soft-decision FHT decoder for first order RM codes.

The hard-decision FHT decoder is usually used for random errors, or equivalently, used for error corrections over BSC. For BSC, the channel output $y=(y_{\original{z}}: \original{z}\in \bF_2^m)$ is a binary vector.
Suppose that the crossover probability of BSC is $p<1/2$, then $L_{\original{z}}=\ln(\frac{1-p}{p})$ if $y_{\original{z}}=0$, and $L_{\original{z}}=-\ln(\frac{1-p}{p})$ if $y_{\original{z}}=1$.
Since rescaling the LLR vector by a positive factor does not change the maximizer of \eqref{eq:fht}, we can divide the LLR vector by $\ln(\frac{1-p}{p})$ when decoding RM codes over BSC$(p)$.
This is equivalent to setting $L_{\original{z}}=1$ for $y_{\original{z}}=0$ and $L_{\original{z}}=-1$ for $y_{\original{z}}=1$.
Then the rest of the hard-decision FHT decoding is the same as the soft-decision version.

\begin{theorem}
The FHT decoder finds the ML decoding result in $O(n\log n)$ time when decoding first order RM codes.
\end{theorem}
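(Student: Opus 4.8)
The plan is to split the claim into a correctness statement and a running-time statement, both of which follow from the reduction already carried out above. For correctness, I would start from the fact established in the preceding discussion that ML decoding over a binary-input memoryless channel is equivalent to finding $c\in\reedmuller{1}{m}$ maximizing the sign-weighted sum \eqref{eq:mxr}, and that writing $c_{\original{z}}=u_0+\sum_{i=1}^m u_i z_i$ turns this into maximizing $(-1)^{u_0}\hat{L}(\original{u})$ over $(u_0,\original{u})\in\bF_2\times\bF_2^m$, as in \eqref{eq:fht} (the overall factor $\frac12$ being irrelevant to the maximization). The key observation is that for each fixed $\original{u}=(u_1,\dots,u_m)$ the optimal choice of the constant bit $u_0$ yields exactly $|\hat{L}(\original{u})|$: take $u_0=0$ when $\hat{L}(\original{u})>0$ and $u_0=1$ otherwise. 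Hence the global maximizer is obtained by selecting $\original{u}^*\in\argmax_{\original{u}}|\hat{L}(\original{u})|$ together with the sign-induced $u_0^*$, which is precisely the decoder's output; this shows that the decoder returns an ML codeword.

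For the running time, the first step is to recognize the vector $(\hat{L}(\original{u}):\original{u}\in\bF_2^m)$ as the Walsh--Hadamard transform of the LLR vector $(L_{\original{z}}:\original{z}\in\bF_2^m)$, i.e.\ multiplication by the $n\times n$ matrix with $(\original{u},\original{z})$ entry $(-1)^{\sum_i u_i z_i}=(-1)^{\langle \original{u},\original{z}\rangle}$. I would then use the Kronecker factorization of this matrix as the $m$-fold power $\mathbf{H}_2^{\otimes m}$ of the $2\times 2$ Hadamard matrix $\mathbf{H}_2$, which gives the standard butterfly recursion $T(n)=2T(n/2)+O(n)$ and therefore an $O(n\log n)$ simultaneous evaluation of all $n$ coefficients $\hat{L}(\original{u})$. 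Finally, scanning these $n$ values to locate the maximizer of $|\hat{L}(\original{u})|$ costs $O(n)$, and reconstructing the output codeword from $(u_0^*,\original{u}^*)$ costs $O(n)$ (a single evaluation of a degree-one polynomial at all $n$ points). Summing the three stages yields the claimed $O(n\log n)$ bound.

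I do not expect a genuine obstacle, since the statement is almost entirely a bookkeeping consequence of the reduction already performed. The one point deserving care is the complexity accounting of the Fast Hadamard Transform itself: I would state the butterfly recursion cleanly via the Kronecker factorization so the $O(n\log n)$ figure is transparent rather than asserted, and emphasize that the transform computes all $2^m$ inner products at once, which is exactly what beats the naive $O(n^2)$ evaluation of the $2n$ codewords one by one. A secondary detail worth a sentence is the tie-breaking convention when $\hat{L}(\original{u}^*)=0$ or when the maximizing $\original{u}^*$ is not unique; any consistent choice still returns an ML codeword, so it does not affect the correctness argument.
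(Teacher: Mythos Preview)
Your proposal is correct and follows essentially the same approach as the paper: the paper's argument is in fact the discussion preceding the theorem statement, which reduces ML decoding to maximizing $(-1)^{u_0}\hat{L}(\original{u})$, identifies $(\hat{L}(\original{u}))_{\original{u}}$ as the Hadamard transform of the LLR vector, and then invokes the $O(n\log n)$ cost of the Fast Hadamard Transform followed by a linear scan for $\argmax_{\original{u}}|\hat{L}(\original{u})|$ and the sign rule for $u_0^*$. The only difference is that you spell out the Kronecker factorization and the butterfly recursion $T(n)=2T(n/2)+O(n)$ explicitly, whereas the paper simply asserts the $O(n\log n)$ complexity of FHT as a standard fact; this is a welcome clarification but not a different route.
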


FHT can also be used for list decoding of first order RM codes. For list decoding with list size $s$, we find $s$ vectors $\original{u}^{(1)},\dots,\original{u}^{(s)}$ that give the largest values of $|\hat{L}(\original{u})|$ among all vectors in $\bF_2^m$.

As a final remark, we mention that
first order RM codes can also be decoded efficiently as geometry codes \cite{Tallini11}.

\begin{center}
\begin{algorithm}
\caption{FHT decoder for first order RM codes}
{\bf Input}: Code length $n=2^m$, and the LLR vector $(L_{\original{z}}:\original{z}\in\bF_2^m)$ of the received (noisy) codeword

{\bf Output}: A codeword $c\in\mathrm{RM}(m,1)$

\vspace*{0.05in}
\begin{algorithmic}[1]
\State{$(\hat{L}(\original{u}):\original{u}\in\bF_2^m)
\gets \FHT(L_{\original{z}}:\original{z}\in\bF_2^m)$}

\State{$\original{u}^*=(u_1^*,u_2^*,\dots,u_m^*) \gets \argmax_{\original{u}\in\bF_2^m} |\hat{L}(\original{u})|$}

\If{$\hat{L}(\original{u}^*)>0$}

\State{
$c\gets \Eval(\sum_{i=1}^m u_i^* x_i)$}

\Else

\State{
$c\gets \Eval(1+\sum_{i=1}^m u_i^* x_i)$}

\EndIf

\State{Output $c$}
\end{algorithmic}
\end{algorithm}
\end{center}

\subsubsection{Sidel'nikov-Pershakov algorithm \cite{Sidel92}
and its variant \cite{Sakkour05}}
\label{sect:sid}

In \cite{Sidel92}, Sidel'nikov and Pershakov proposed a decoding algorithm that works well for second order RM codes with short or medium code length (e.g. $\le 1024$). A version of their decoding algorithm also works for higher-order RM codes, but the performance is not as good as the one for second order RM codes.

In this section, we recap Sidel'nikov-Pershakov algorithm for second order RM codes.
Consider a polynomial $f\in\bF_2[x_1,\dots,x_m]$ with  $\deg(f)\le 2$:
$$
f(z_1,\dots,z_m)=\sum_{1\le i<j\le m}u_{i,j}z_iz_j
+\sum_{i=1}^m u_iz_i + u_0.
$$
For a vector $\original{b}=(b_1,\dots,b_m)\in\bF_2^m$, we have
\begin{align}
\Eval_{\original{z}+\original{b}}(f)
+\Eval_{\original{z}}(f)
=& \sum_{1\le i<j\le m}u_{i,j}(z_i+b_i) (z_j+b_j)
+ \sum_{1\le i<j\le m}u_{i,j}z_iz_j
+ \sum_{i=1}^m u_i b_i  \nonumber \\
=& \sum_{1\le i<j\le m} u_{i,j} z_i b_j
+ \sum_{1\le i<j\le m} u_{i,j} b_i z_j
+ \sum_{1\le i<j\le m} u_{i,j} b_i b_j
+ \sum_{i=1}^m u_i b_i \nonumber \\
=& \original{b} U \original{z}^T + \sum_{1\le i<j\le m} u_{i,j} b_i b_j + \sum_{i=1}^m u_i b_i ,
\label{eq:bs}
\end{align}
where the matrix $U$ is defined as
$$
U:=\left[ \begin{array}{ccccc}
0 & u_{1,2} & u_{1,3} & \dots & u_{1,m}  \\
u_{1,2} & 0 & u_{2,3} & \dots & u_{2,m} \\
u_{1,3} & u_{2,3} & 0 & \dots & u_{3,m} \\
\vdots & \vdots & \vdots & \vdots & \vdots \\
u_{1,m} & u_{2,m} & u_{3,m} & \dots & 0
\end{array}\right].
$$
Note that $\Eval_{\original{z}+\original{b}}(f)
+\Eval_{\original{z}}(f)$ is the coordinate of the discrete derivative of $f$ at direction $\original{b}$, as defined in \eqref{eq:disder}.

We first describe the decoder for BSC and then generalize it to other binary-input channels. Suppose that we transmit the codeword $c=\Eval(f) \in\mathrm{RM}(m,2)$ through some BSC, and we denote the channel output vector as $y\in\bF_2^n$.
For a fixed $\original{b}$, the vector $(y_{\original{z}+\original{b}}
+y_{\original{z}}: \original{z}\in\bF_2^m)$ is the noisy version of the codeword in $\mathrm{RM}(m,1)$ corresponding to the polynomial in \eqref{eq:bs}. 
Note that the vector $\original{b}U$ consist of the coefficients of all the degree-$1$ monomials in this polynomial.
Therefore, we can decode $\original{b}U$ from the noisy codeword $(y_{\original{z}+\original{b}}
+y_{\original{z}}: \original{z}\in\bF_2^m)$.
A naive way to do so is to decode each $\original{b}U$ separately using the FHT decoder for different vectors $\original{b}\in\bF_2^m$.
Sidel'nikov and Pershakov instead proposed to decode $\original{b}U$ for all $\original{b}\in\bF_2^m$ collectively:
The first step is to calculate $s$ candidates for $\original{b}U$ that have the largest posterior probability by decoding $(y_{\original{z}+\original{b}}
+y_{\original{z}}: \original{z}\in\bF_2^m)$ with the FHT list decoder described at the end of Section~\ref{sect:fht}. We denote these $s$ candidates as $D_{\original{b}}^{(1)},\dots,D_{\original{b}}^{(s)}$ and associate each of them with a reliability value initialized as its posterior probability.
Since $\original{b}U=\original{b}'U+(\original{b}+\original{b}')U$ for all $\original{b}'\in\bF_2^m$, the correct candidates $(D_{\original{b}}^*:\original{b}\in\bF_2^m)$  satisfy
$D_{\original{b}}^*=D_{\original{b}'}^*
+D_{\original{b}+\original{b}'}^*$.
In order to find $D_{\original{b}}^*$,
for each $i=1,\dots,s$, we check for all $\original{b}'\in\bF_2^m$ whether there are certain $i_1$ and $i_2$ such that 
$D_{\original{b}}^{(i)}=
D_{\original{b}'}^{(i_1)} +
D_{\original{b}+\original{b}'}^{(i_2)}$.
Each time when we find such $\original{b}'$ and $i_1,i_2$, we increase the reliability value of $D_{\original{b}}^{(i)}$ by some function\footnote{The choice of this function is somewhat ad hoc, and we omit the precise definition here.} of the reliability values of $D_{\original{b}'}^{(i_1)}$ and $D_{\original{b}+\original{b}'}^{(i_2)}$.
Finally, we set $D_{\original{b}}^*$ to be
$D_{\original{b}}^{(i)}$ with the largest reliability value among all $i=1,\dots,s$.

At this point, we have obtained 
$D_{\original{b}}^*$ for all $\original{b}\in\bF_2^m$.
 Notice that $D_{\original{b}}^*$ is the noisy  version of $\original{b}U$, and in particular, the first coordinate of $D_{\original{b}}^*$ is the noisy version of $u_{1,2}b_2+u_{1,3}b_3+\dots+u_{1,m}b_m$. Therefore, if we pick the first coordinate of $D_{\original{b}}^*$ for all $\original{b}\in\bF_2^m$, we will obtain the noisy version of a codeword from $\mathrm{RM}(m,1)$, and this codeword is the evaluation vector of the polynomial $u_{1,2}x_2+u_{1,3}x_3+\dots+u_{1,m}x_m$. After decoding this noisy codeword using the FHT decoder, we will obtain the coefficients $u_{1,2}, u_{1,3},\dots,u_{1,m}$, which form the first column of the matrix $U$.
 Similarly, we can also pick the $i$th coordinate of $D_{\original{b}}^*$ for all $\original{b}\in\bF_2^m$ and decode it with the FHT decoder. This will allow us to calculate the $i$th column of $U$.
 Once we decode all the entries in $U$, we have  the coefficients of all the degree-2 monomials in $f$. Then we use the FHT decoder again to decode all the other coefficients in $f$, which gives us the final decoding result.

For more general binary-input channels other than BSC, we are not able to calculate $y_{\original{z}+\original{b}}
+y_{\original{z}}$ since the two summands are not binary any more. We instead work with the LLRs
$L_{\original{z}}:= \LLR(y_{\original{z}})$.
Given $L_{\original{z}+\original{b}}$
and $L_{\original{z}}$, we want to estimate how likely $\Eval_{\original{z}+\original{b}}(f)
+\Eval_{\original{z}}(f)$ is $0$ or $1$.
The LLR of the sum
$\Eval_{\original{z}+\original{b}}(f)
+\Eval_{\original{z}}(f)$ can be calculated as
\begin{equation}\label{eq:tsi}
\ln \big( \exp \big( L_{\original{z}+\original{b}}
+L_{\original{z}} \big) +1 \big) - 
\ln \big(  \exp(L_{\original{z}})
+ \exp(L_{\original{z}+\original{b}}) \big).
\end{equation}
Once we replace $y_{\original{z}+\original{b}}
+y_{\original{z}}$ with this LLR, we can follow the decoding procedure described above for BSC to decode the output vector of more general binary-input channels.

In \cite{Sidel92}, Sidel'nikov and Pershakov showed that for second order RM codes $\mathrm{RM}(m,2)$, their algorithm can correct almost all error patterns with Hamming weight no more than
$(n-Cm^{1/4}n^{3/4})/2$ for any constant $C>\ln 4$ when the code length $n\to\infty$.

In \cite{Sakkour05}, Sakkour proposed a simplified and improved version of the Sidel'nikov-Pershakov algorithm for decoding second order RM codes. The main change in Sakkour's algorithm is to use a simple majority voting to obtain $D_{\original{b}}^*$ from $D_{\original{b}}$, replacing the more complicated procedure in Sidel'nikov-Pershakov algorithm. Such a simplification also leads to smaller decoding error probability. We summarize Sakkour's algorithm below in Algorithm~\ref{ag:sak} since it is simpler and has better performance:

\begin{center}
\begin{algorithm}
\caption{Sakkour's algorithm for decoding second order RM codes over BSC}
\label{ag:sak}

{\bf Input}: The code length $n=2^m$; the received (noisy) codeword $y=(y_{\original{z}}: \original{z}\in\bF_2^m)$;

{\bf Output}: 
A codeword $c\in\mathrm{RM}(m,2)$

\vspace*{0.05in}
\begin{algorithmic}[1]
\For{every $\original{b}\in\bF_2^m$}

\State{$D_{\original{b}} \gets \texttt{FHT\_Decoder}(y_{\original{z}+\original{b}}
+y_{\original{z}}: \original{z}\in\bF_2^m)$ 
\Comment{\texttt{FHT\_Decoder} for $\mathrm{RM}(m,1)$}}

\EndFor

\For{every $\original{b}\in\bF_2^m$}

\State{$D_{\original{b}}^* \gets \Majority
(D_{\original{b}+\original{b}'}+D_{\original{b}'}: 
\original{b}'\in\bF_2^m)$}

\State{\Comment{Majority function picks the vector occurring the largest number of times}}

\EndFor

\For{$i=1,2,\dots,m$}

\State{$E_i \gets (\text{the $i$-th coordiante of } D_{\original{b}}^*: \original{b}\in\bF_2^m)$ 
\Comment{$E_i$ is a vector of length $n$}}

\State{$\hat{E}_i \gets \texttt{FHT\_Decoder}(E_i)$ 
\Comment{\texttt{FHT\_Decoder} for $\mathrm{RM}(m,1)$}}

\State{$(u_{\{i,j\}}:j\in[m]\setminus\{i\})
\gets$ coefficients of the polynomial corresponding to $\hat{E}_i$}

\EndFor

\vspace*{0.04in}
\State{$y\gets y-\Eval(\sum_{A\subseteq[m],|A|=2}u_Ax_A)$}

\State{$\hat{D} \gets \texttt{FHT\_Decoder}(y)$ 
\Comment{\texttt{FHT\_Decoder} for $\mathrm{RM}(m,1)$}}

\State{$(u_A:A\subseteq[m],|A|\le 1)
\gets$ coefficients of the polynomial corresponding to $\hat{D}$}

\State{$c\gets \Eval(\sum_{A\subseteq[m],|A|\le 2}u_Ax_A)$}

\State{Output $c$}
\end{algorithmic}
\end{algorithm}
\end{center}

\subsubsection{Dumer's recursive list decoding \cite{Dumer04,Dumer06,Dumer06a}}
\label{sect:Dumer}

Dumer's recursive list decoding makes use of the Plotkin $(u,u+v)$ construction of RM codes (see Section~\ref{sect:rscd} for discussions), and it works well for short or medium length RM codes. More precisely, for RM codes with length $\le 256$, Dumer's recursive list decoding algorithm can efficiently approach the decoding error probability of the ML decoder.
For code length $512$ or $1024$, Dumer's algorithm works well for RM codes with  low code rates.
In \cite{Dumer06a}, Dumer and Shabunov also proposed to construct subcodes of RM codes that have better performance than RM codes themselves under the recursive list decoding algorithm.

We start with the basic version of Dumer's recursive decoding algorithm (without list decoding).
Suppose that we transmit a codeword $c=\Eval(f)\in\mathrm{RM}(m,r)$ through some binary-input channel $W$, and we denote the output vector as $y$ and the corresponding LLR vector as $L$.
The original codeword $\Eval(f)$ has two components $\Eval^{[z_m=0]}(f)$ and $\Eval^{[z_m=1]}(f)$.
We also devide the LLR vector $L$ into two subvectors $L^{[z_m=0]}$ and $L^{[z_m=1]}$ in the same way so that $L^{[z_m=0]}$ and $L^{[z_m=1]}$ are the LLR vectors of $\Eval^{[z_m=0]}(f)$ and $\Eval^{[z_m=1]}(f)$, respectively.
We then construct the LLR vector of $\Eval^{[/z_m]}(f)$  from $L^{[z_m=0]}$ and $L^{[z_m=1]}$ and we denote it as $L^{[/z_m]}$: Each coordinate  of $L^{[/z_m]}$ is obtained by combining the corresponding coordinates in $L^{[z_m=0]}$ and $L^{[z_m=1]}$ using formula \eqref{eq:tsi}.\footnote{In \cite{Dumer06a}, Dumer and Shabunov proposed to work with the quantity $W(x|1)-W(x|0)$ instead of LLR, but one can show that formula \eqref{eq:tsi} for LLR is equivalent to the combining method in \cite{Dumer06a} expressed in terms of $W(x|1)-W(x|0)$.}

We first decode $\Eval^{[/z_m]}(f)\in \mathrm{RM}(m-1,r-1)$ from $L^{[/z_m]}$ and denote the decoding result as $\hat{c}^{[/z_m]}$. Then we use $\hat{c}^{[/z_m]}$ together with $L^{[z_m=0]}$ and $L^{[z_m=1]}$ to form an updated LLR vector of $\Eval^{[z_m=0]}(f)$, which we denote as $\tilde{L}^{[z_m=0]}$. 
The updating rule is as follows:
For each $\original{z}=(z_1,\dots,z_{m-1})\in\bF_2^{m-1}$, if $\hat{c}_{\original{z}}^{[/z_m]}=0$, then we set $\tilde{L}_{\original{z}}^{[z_m=0]}=L_{\original{z}}^{[z_m=0]}+L_{\original{z}}^{[z_m=1]}$, and if $\hat{c}_{\original{z}}^{[/z_m]}=1$, then we set $\tilde{L}_{\original{z}}^{[z_m=0]}=L_{\original{z}}^{[z_m=0]}-L_{\original{z}}^{[z_m=1]}$.
As the next step, we decode $\Eval^{[z_m=0]}(f) \in \mathrm{RM}(m-1,r)$ from $\tilde{L}^{[z_m=0]}$ and denote the decoding result as $\hat{c}^{[z_m=0]}$.
Finally, we combine $\hat{c}^{[/z_m]}$ and $\hat{c}^{[z_m=0]}$ to form the final decoding result $\hat{c}=(\hat{c}^{[z_m=0]},\hat{c}^{[z_m=0]}+\hat{c}^{[/z_m]})$.

In this recursive decoding algorithm, we decompose the decoding of  $\mathrm{RM}(m,r)$ into two tasks: First, we decode a codeword from $\mathrm{RM}(m-1,r-1)$. After that, we decode another codeword from $\mathrm{RM}(m-1,r)$. Then the decoding of $\mathrm{RM}(m-1,r-1)$ and $\mathrm{RM}(m-1,r)$ are further decomposed into decoding another four codewords from RM codes with shorter code length and smaller order.
This decomposition procedure continues until we reach codewords from first order RM codes $\mathrm{RM}(i,1)$ for some $i$ or full RM codes $\mathrm{RM}(j,j)$ for some $j$.
For first order RM codes we use the FHT decoder, and for full RM codes we simply use the ML decoder.
The summary of the algorithm and an illustration of how it works for $\mathrm{RM}(6,2)$ and $\mathrm{RM}(6,3)$ are given in 
 Fig.~\ref{fig:bb}.


Next we briefly discuss the recursive list decoding algorithm. In the list decoding version, we usually stop at the zero order RM codes instead of the first order ones \footnote{In fact, stopping at first order RM codes in list decoding allows one to achieve smaller decoding error probability than stopping at zero order RM codes. In this paper, we only describe the version of list decoding stopping at zero order RM codes for two reasons: First, it is easier to describe; second, this is the version presented in Dumer and Shabunov's original paper \cite[Section~III]{Dumer06a}.}. Note that the zero order RM codes are simply repetition codes with dimension $1$.
We still go through the same procedure as illustrated in Fig.~\ref{fig:bb}, i.e., we keep decomposing the RM codes and eventually we only decode the RM codes on ``leaf nodes". In the list decoding algorithm, the codes on ``leaf nodes" are either repetition codes or full RM codes.
Each time when we decode a new leaf node, we examine several possible decoding results of this new node  for every candidate in the list: If this new leaf node is a repetition code, then there are only two possible decoding results--all zero or all one; if this new leaf node is a full RM code, then we take the $4$ most likely decoding results of it for every candidate in the list.
In this way, we will increase the list size by a factor of $2$ or $4$ at each step, depending on whether the new leaf node is a repetition code or a full RM code. We then calculate a reliability value for each candidate in the new list. When the list size is larger than some pre-specified value $\mu$, we prune the list down to size $\mu$ by only keeping the candidates with the largest reliability values.
Clearly, large $\mu$ leads to longer running time of the algorithm but smaller decoding error probability.

In \cite{Dumer06a}, a family of subcodes of RM codes were also proposed. The subcodes have smaller decoding probability under the recursive list decoding algorithm. The idea is quite natural: Each repetition code on the ``leaf nodes" only contain one information bit.
Some of these information bits are relatively noisy, and the others are relatively noiseless.
Dumer and Shabunov proposed to  set all the noisy bits to be $0$. In this way, one can get smaller decoding error at the cost of decreasing the code rate.


\begin{center}
\begin{algorithm}
\caption{Dumer's Algorithm $\Phi_r^m$ for decoding $\mathrm{RM}(m,r)$}
{\bf Input}: LLR vector $L=(L^{[z_m=0]},L^{[z_m=1]})$

{\bf Output}: $\hat{c}$

\begin{algorithmic}[1]
\If{$1<r<m$}

\State{
Calculate $L^{[/z_m]}$ from $L^{[z_m=0]}$ and $L^{[z_m=1]}$}

\State{
$\hat{c}^{[/z_m]}  \leftarrow
\Phi_{r-1}^{m-1}(L^{[/z_m]})$}

\State{
Calculate $\tilde{L}^{[z_m=0]}$ from $L^{[z_m=0]},L^{[z_m=1]}$ and $\hat{c}^{[/z_m]}$}

\State{
$\hat{c}^{[z_m=0]}  \leftarrow
\Phi_{r}^{m-1}(\tilde{L}^{[z_m=0]})$}

\State{
$\hat{c} \leftarrow
(\hat{c}^{[z_m=0]},\hat{c}^{[z_m=0]}+\hat{c}^{[/z_m]})$}

\ElsIf{$r=1$} 
\State{use FHT decoder}

\ElsIf{$r=m$} 
\State{use ML decoder}
\EndIf
\end{algorithmic}
\end{algorithm}
\end{center}

\begin{figure}
\centering

\begin{tikzpicture}
\node at (2,8) (x1) {$\mathrm{RM}(6,2)$};
\node at (1,7) (x2) {\textcolor{red}{$\mathrm{RM}(5,1)$}};
\node at (3,7) (x3) {$\mathrm{RM}(5,2)$};
\node at (2,6) (x4) {\textcolor{red}{$\mathrm{RM}(4,1)$}};
\node at (4,6) (x5) {$\mathrm{RM}(4,2)$};
\node at (3,5) (x6) {\textcolor{red}{$\mathrm{RM}(3,1)$}};
\node at (5,5) (x7) {$\mathrm{RM}(3,2)$};
\node at (4,4) (x8) {\textcolor{red}{$\mathrm{RM}(2,1)$}};
\node at (6,4) (x9) {\textcolor{red}{$\mathrm{RM}(2,2)$}};
\draw [->,thick] (x1)--(x2);
\draw [->,thick] (x1)--(x3);
\draw [->,thick] (x3)--(x4);
\draw [->,thick] (x3)--(x5);
\draw [->,thick] (x5)--(x6);
\draw [->,thick] (x5)--(x7);
\draw [->,thick] (x7)--(x8);
\draw [->,thick] (x7)--(x9);

\draw [->,densely dashed,thick,red] (0.7, 6.7) to [out=-90,in=180] (x4.west);

\draw [->,densely dashed,thick,red] (1.7, 5.7) to [out=-90,in=180] (x6.west);

\draw [->,densely dashed,thick,red] (2.7, 4.7) to [out=-90,in=180] (x8.west);

\draw [->,densely dashed,thick,red] (x8) to [out=-30,in=-150] (x9);
\end{tikzpicture}

\vspace*{0.2in}
\begin{tikzpicture}
\node at (8,8) (x1) {$\mathrm{RM}(6,3)$};
\node at (4,7) (x2) {$\mathrm{RM}(5,2)$};
\node at (12,7) (x3) {$\mathrm{RM}(5,3)$};
\node at (3,6) (x4) {\textcolor{red}{$\mathrm{RM}(4,1)$}};
\node at (5,6) (x5) {$\mathrm{RM}(4,2)$};
\node at (9,6) (x6) {$\mathrm{RM}(4,2)$};
\node at (15,6) (x7) {$\mathrm{RM}(4,3)$};
\node at (4,5) (x8) {\textcolor{red}{$\mathrm{RM}(3,1)$}};
\node at (6,5) (x9) {$\mathrm{RM}(3,2)$};
\node at (8,5) (x10) {\textcolor{red}{$\mathrm{RM}(3,1)$}};
\node at (10,5) (x11) {$\mathrm{RM}(3,2)$};
\node at (14,5) (x12) {$\mathrm{RM}(3,2)$};
\node at (16,5) (x13) {\textcolor{red}{$\mathrm{RM}(3,3)$}};
\node at (5,4) (x14) {\textcolor{red}{$\mathrm{RM}(2,1)$}};
\node at (7,4) (x15) {\textcolor{red}{$\mathrm{RM}(2,2)$}};
\node at (9,4) (x16) {\textcolor{red}{$\mathrm{RM}(2,1)$}};
\node at (11,4) (x17) {\textcolor{red}{$\mathrm{RM}(2,2)$}};
\node at (13,4) (x18) {\textcolor{red}{$\mathrm{RM}(2,1)$}};
\node at (15,4) (x19) {\textcolor{red}{$\mathrm{RM}(2,2)$}};
\draw [->,thick] (x1)--(x2);
\draw [->,thick] (x1)--(x3);
\draw [->,thick] (x2)--(x4);
\draw [->,thick] (x2)--(x5);
\draw [->,thick] (x3)--(x6);
\draw [->,thick] (x3)--(x7);
\draw [->,thick] (x5)--(x8);
\draw [->,thick] (x5)--(x9);
\draw [->,thick] (x6)--(x10);
\draw [->,thick] (x6)--(x11);
\draw [->,thick] (x7)--(x12);
\draw [->,thick] (x7)--(x13);
\draw [->,thick] (x9)--(x14);
\draw [->,thick] (x9)--(x15);
\draw [->,thick] (x11)--(x16);
\draw [->,thick] (x11)--(x17);
\draw [->,thick] (x12)--(x18);
\draw [->,thick] (x12)--(x19);

\draw [->,densely dashed,thick,red] (2.7, 5.7) to [out=-90,in=180] (x8.west);

\draw [->,densely dashed,thick,red] (3.7, 4.7) to [out=-90,in=180] (x14.west);

\draw [->,densely dashed,thick,red] (x14) to [out=-30,in=-150] (x15);

\draw [->,densely dashed,thick,red] (x15) to [out=90,in=-150] (x10.west);

\draw [->,densely dashed,thick,red] (x10) to [out=-90,in=150] (x16.west);

\draw [->,densely dashed,thick,red] (x16) to [out=-30,in=-150] (x17);

\draw [->,densely dashed,thick,red] (x17) to [out=-30,in=-150] (x18);

\draw [->,densely dashed,thick,red] (x18) to [out=-30,in=-150] (x19);

\draw [->,densely dashed,thick,red] (x19.east) to [out=30,in=-90] (x13);
\end{tikzpicture}
\caption{The recursive decoding algorithm $\Phi_r^m$ for $\mathrm{RM}(m,r)$ and an illustration of how it works for $\mathrm{RM}(6,2)$ and $\mathrm{RM}(6,3)$: We decompose $\mathrm{RM}(6,2)$ and $\mathrm{RM}(6,3)$ until we reach the leaf nodes, which are the first order or full RM codes marked in red. Eventually we only need to decode these RM codes on the leaf nodes using FHT or ML decoders. The order of decoding these leaf nodes is indicated by the red dashed arrows.}
\label{fig:bb}
\end{figure}
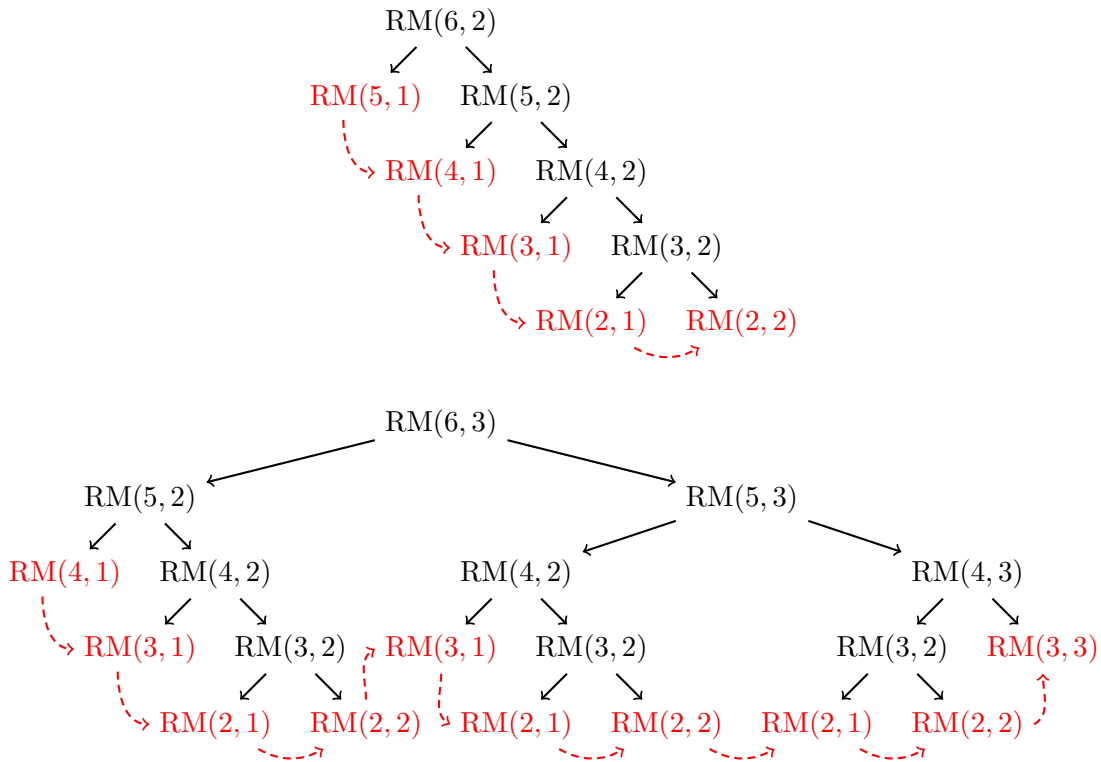

\begin{figure}
\centering
\begin{tikzpicture}
\node at (0, 9.5) [] {Decode $\mathrm{RM}(m,3)$};
\node at (-3.5, 5.25) [text width=1cm,align=center]  {\footnotesize{$N_{\max}$ \\ iterations}} ;
\node at (0.75, 5.25) [text width=1.5cm,align=center] {\footnotesize{Recursive decoding}};
\node at (0.75, 6) [] {\dots\dots};
\node at (0.75, 4.5) [] {\dots\dots};
\node at (1.6, 8) [] {\footnotesize{Projection}};
\node at (1.7, 2.5) [] {\footnotesize{Aggregation}};
\node at (0,9) [] (y) {$y$} ;
\node at (-2,6) [] (yb1) {$y^{[/\mathbb{B}_1]}$};
\node at (-0.5,6) [] (yb2) {$y^{[/\mathbb{B}_2]}$};
\node at (2,6) [] (ybn) {$y^{[/\mathbb{B}_{n-1}]}$};
\node at (-2,4.5) [] (hyb1) {$\hat{y}^{[/\mathbb{B}_1]}$};
\node at (-0.5,4.5) [] (hyb2) {$\hat{y}^{[/\mathbb{B}_2]}$};
\node at (2,4.5) [] (hybn) {$\hat{y}^{[/\mathbb{B}_{n-1}]}$};
\node at (0, 1.5) [] (hy) {$\hat{y}$};
\draw [->] (y)--(yb1);
\draw [->] (y)--(yb2);
\draw [->] (y)--(ybn);
\draw [->] (hyb1)-- (hy);
\draw [->] (hyb2)-- (hy);
\draw [->] (hybn)-- (hy);
\draw [->] (hy) -- ++(-2.7, 0) -- ++(0, 7.5) -- (y);
\draw [->] (y) -- ++(2.7, 0) -- ++(0, -7.5) -- (hy);

\draw [-] (-2, 5.75)--(-2.3, 5.25);
\draw [-] (-2.3, 5.25)--(-2, 4.75);
\draw [-] (-2, 5.75)--(-1.7, 5.25);
\draw [-] (-1.7, 5.25)--(-2, 4.75);

\draw [-] (-0.5, 5.75)--(-0.8, 5.25);
\draw [-] (-0.8, 5.25)--(-0.5, 4.75);
\draw [-] (-0.5, 5.75)--(-0.2, 5.25);
\draw [-] (-0.2, 5.25)--(-0.5, 4.75);

\draw [-] (2, 5.75)--(1.7, 5.25);
\draw [-] (1.7, 5.25)--(2, 4.75);
\draw [-] (2, 5.75)--(2.3, 5.25);
\draw [-] (2.3, 5.25)--(2, 4.75);

\node at (7, 8.5) [] {Decode $\mathrm{RM}(m-1,2)$};
\node at (7.5, 5.25) [] {\dots\dots};

\node at (7,8) [] (ys) {$y^{[/\mathbb{B}_{n-1}]}$} ;
\node at (7,2.5) [] (hys) {$\hat{y}^{[/\mathbb{B}_{n-1}]}$} ;
\draw [->] (ys)--(5.5,5.8);
\draw [->] (ys)--(6.5,5.8);
\draw [->] (ys)--(8.5,5.8);

\draw [->] (5.5, 4.7)--(hys);
\draw [->] (6.5, 4.7)--(hys);
\draw [->] (8.5, 4.7)--(hys);

\node at (5.5, 5.25) [] {FHT};
\node at (6.5, 5.25) [] {FHT};
\node at (8.5, 5.25) [] {FHT};

\draw [] (5.1, 5.05) rectangle ++ (0.8, 0.4);
\draw [] (6.1, 5.05) rectangle ++ (0.8, 0.4);
\draw [] (8.1, 5.05) rectangle ++ (0.8, 0.4);

\draw [->] (hys) -- ++(-2.1, 0) -- ++(0, 5.5) -- (ys);
\draw [->] (ys) -- ++(2.1, 0) -- ++(0, -5.5) -- (hys);

\draw [densely dashed] (2.4, 5.6) -- (4.7, 7.8);
\draw [densely dashed] (2.4, 4.9) -- (4.7, 2.7);

\draw [densely dashed] (9, 5.4) -- (9.8, 6);
\draw [densely dashed] (9, 5.1) -- (9.8, 4.5);

\node at (10.8, 5.25) [text width=3cm,align=center] 
{\footnotesize{Decode $\mathrm{RM}(m-2,1)$ with\\ Fast Hadamard Transform}};

\end{tikzpicture}
\caption{Recursive Projection-Aggregation decoding algorithm for third order RM codes} \label{fig:highlvl}
\end{figure}
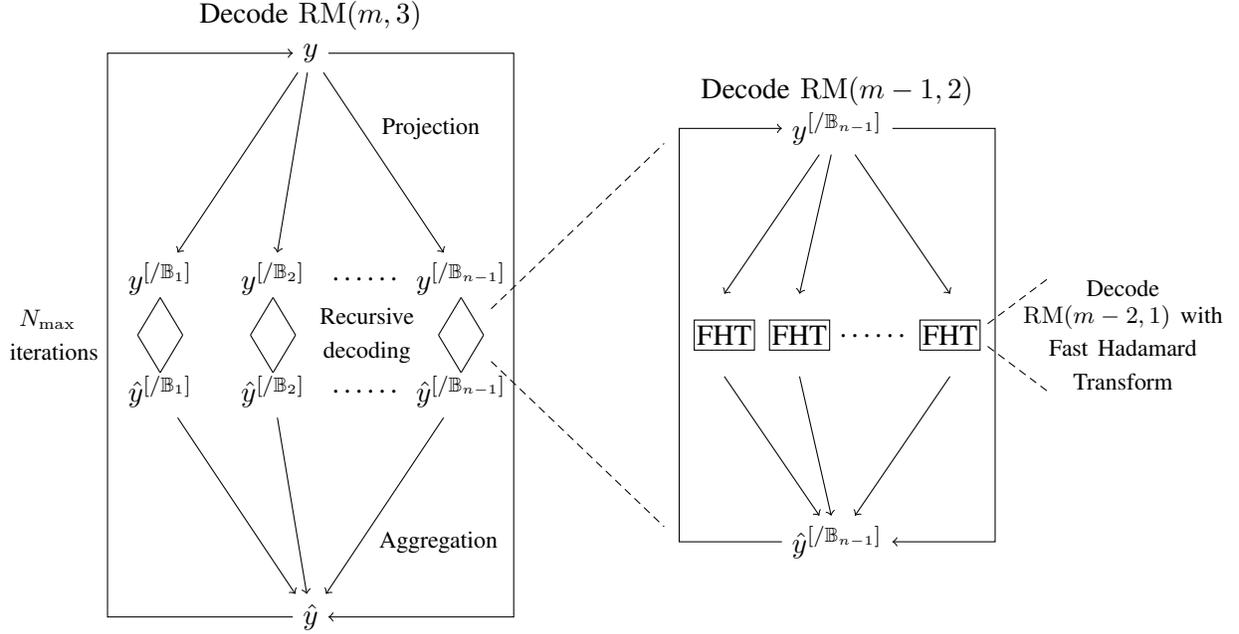

\subsubsection{Recursive projection aggregation decoding \cite{YA18}}
\label{sect:RPA}

The Recursive Projection Aggregation (RPA) decoding algorithm was proposed recently by Ye and Abbe \cite{YA18}. It works well for second and third order RM codes with short or medium code length (e.g. $\le 1024$). In particular, the RPA algorithm can efficiently achieve the same decoding error probability as the ML decoder for second order RM codes with length $\le 1024$.
Moreover, RPA decoder naturally allows parallel implementation.

We will focus mainly on the RPA decoder for BSC channels and briefly mention how to adapt it to general communication channels at the end of this section. 
In Section~\ref{sect:aip}, we have shown that
$\Eval^{[/p]}(f)\in \mathrm{RM}(m-1,r-1)$ whenever $\Eval(f)\in \mathrm{RM}(m,r)$ for any  $p=b_1z_1+\dots+b_mz_m$ with nonzero coefficients $\original{b}=(b_1,\dots,b_m)\neq \original{0}$.
Let $\bB$ be the one-dimensional subspace of $\bF_2^m$ consisting of the $\original{0}$ vector and $\original{b}$. Then $\Eval^{[/p]}(f)$ is obtained by taking the sums in each of the $2^{m-1}$ cosets of $\bB$, i.e., each coordinate in $\Eval^{[/p]}(f)$ is the sum $\sum_{\original{z}\in T} \Eval_{\original{z}}(f)$ for some coset $T$ of $\bB$.  
For this reason, we will use the two notations $\Eval^{[/\bB]}(f)$ and $\Eval^{[/p]}(f)$ interchangeably from now on. 
For a noisy codeword $y$, we also use $y^{[/\bB]}$ and $y^{[/p]}$ interchangeably, and we call $y^{[/\bB]}$ the projection of $y$ onto the cosets of $\bB$.
There are in total $n-1$ one-dimensional subspaces in $\bF_2^m$. We denote them as $\bB_1,\dots,\bB_{n-1}$. 

Suppose that we transmit a codeword $c=\Eval(f)\in\mathrm{RM}(m,r)$ through BSC, and that the channel output is $y$.
The RPA decoder for $\mathrm{RM}(m,r)$ consists of three steps: First, the projection step, then the recursive decoding step, and third, the aggregation step. More precisely, the first step is to project the noisy codeword $y$ onto all $n-1$ one-dimensional subspaces $\bB_1,\dots,\bB_{n-1}$. 
Note that this projection step also appears in Sidel'nikov-Pershakov algorithm \cite{Sidel92}
and Sakkour's algorithm \cite{Sakkour05};see Section~\ref{sect:sid}.
The second step is to decode each $y^{[/\bB_i]}$ using the RPA decoder for $\mathrm{RM}(m-1,r-1)$. If $r=2$, then we simply use the FHT decoder for $y^{[/\bB_i]}$. We denote the decoding result of the second step as $\hat{y}^{[/\bB_i]}$. Note that $\hat{y}^{[/\bB_1]},\dots,\hat{y}^{[/\bB_{n-1}]}$ consist of the (noisy) estimates of the sum $\Eval_{\original{z}}(f)+\Eval_{\original{z}'}(f)$ for all $\original{z}\neq\original{z}'$.
We denote the estimate of $\Eval_{\original{z}}(f)+\Eval_{\original{z}'}(f)$ as $\hat{y}_{(\original{z},\original{z}')}$.
Finally, in the aggregation step, observe that $\hat{y}_{(\original{z},\original{z}')}+y_{\original{z}'}$ is an estimate of $\Eval_{\original{z}}(f)$ for all $\original{z}'\neq \original{z}$. For a fixed $\original{z}$, we have in total $n-1$ such estimates of $\Eval_{\original{z}}(f)$, and we perform a majority vote among these $n-1$ estimates to obtain $\hat{y}_{\original{z}}$, i.e., we count the number of $0$'s and $1$'s in the set
$\{\hat{y}_{(\original{z},\original{z}')}+y_{\original{z}'}:\original{z}'\in\bF_2^m, \original{z}'\neq\original{z}\}$: If there are more $1$'s than $0$'s, then we set $\hat{y}_{\original{z}}$ to be 1. Otherwise, we set it to be $0$.
Next we replace the original channel output vector $y$ with $\hat{y}=(\hat{y}_{\original{z}}:\original{z}\in\bF_2^m)$, and run the Projection-Recursive decoding-Aggregation cycle again for a few more rounds\footnote{In practice, usually three rounds are enough for the algorithm to converge.}. The vector $\hat{y}=(\hat{y}_{\original{z}}:\original{z}\in\bF_2^m)$ in the last round is the final decoding result of the RPA decoder.
See Fig.~\ref{fig:highlvl} for a high-level illustration of the RPA decoder.

For general communication channels we need to work with LLR, and we only need to make two changes in the RPA decoder for BSC. The first change is in the projection step: In order to calculate $y^{[/\bB]}$, we need to calculate the sums $y_{\original{z}}+y_{\original{z}'}$ for the BSC case. We cannot do this for general communication channels because the channel output vector is not binary anymore.
Instead, we calculate the projected LLR vectors $L^{[/\bB]}$ using \eqref{eq:tsi}, and in the recursive decoding step, we decode from $L^{[/\bB_1]},\dots,L^{[/\bB_{n-1}]}$.
The second change is in the aggregation step: We replace the majority vote with a weighted sum of the LLRs. More precisely, for a fixed $\original{z}$, we calculate the sum $\hat{L}_{\original{z}}= \frac{1}{n-1}\sum_{\original{z}'\neq\original{z}} \tilde{y}_{(\original{z},\original{z}')} L_{\original{z}'}$, where we set $\tilde{y}_{(\original{z},\original{z}')}=1$ if $\hat{y}_{(\original{z},\original{z}')}=0$ and $\tilde{y}_{(\original{z},\original{z}')}=-1$ if $\hat{y}_{(\original{z},\original{z}')}=1$.
After each round, we replace the LLR vector of the original channel output with $\hat{L}=(\hat{L}_{\original{z}}:\original{z}\in\bF_2^m)$ and run the Projection-Recursive decoding-Aggregation cycle again.
After the last round, we decode $\hat{y}_{\original{z}}$ as $0$ if $\hat{L}_{\original{z}}>0$ and otherwise we decode $\hat{y}_{\original{z}}$ as $1$.
The vector $\hat{y}=(\hat{y}_{\original{z}}:\original{z}\in\bF_2^m)$ is the final decoding result of the RPA decoder.

In practical implementation, we combine the RPA decoder with the following list decoding procedure proposed by Chase \cite{Chase72} to boost the performance.
 We first sort $|L_{\original{z}}|,\original{z}\in\bF_2^m$ from small to large. Assume for example that $|L_{\original{z}_1}|,|L_{\original{z}_2}|,|L_{\original{z}_3}|$ are the three smallest components in the LLR vector, meaning that $y_{\original{z}_1},y_{\original{z}_2},y_{\original{z}_3}$ are the three most noisy symbols in the channel outputs. Next we enumerate all $8$ the possible cases of these three bits: We set $L_{\original{z}_i}=\pm L_{\max}$ for $i=1,2,3$, where $L_{\max}$ is some large real number. In practice, we can choose $L_{\max}:=2\max(|L_{\original{z}}|:\original{z}\in\bF_2^m)$.
For each of these 8 cases, we use the RPA decoder to obtain a decoded codeword (candidate).
Finally, we calculate the posterior probability for each of these $8$ candidates, and choose the largest one as the final decoding result, namely, we perform the ML decoding among the 8 candidates in the list.
This list decoding version of the RPA decoder allows us to efficiently achieve the same decoding error probability as the ML decoder for second order RM codes with length $\le 1024$.

\begin{center}
\begin{algorithm}
\caption{RPA decoder for RM codes over BSC}
{\bf Input}: The parameters of the Reed-Muller code $m$ and $r$; the received (noisy) codeword $y=(y_{\original{z}}: \original{z}\in\bF_2^m)$; the maximal number of iterations $N_{\max}$

{\bf Output}: $\hat{y}=
(\hat{y}_{\original{z}}: \original{z}\in\bF_2^m)
\in\bF_2^n$

\vspace*{0.05in}
\begin{algorithmic}[1]
\For {$j=1,2,\dots,N_{\max}$} 

\State{$\hat{y}^{/\mathbb{B}_i} \gets \texttt{RPA}(m-1,r-1,y^{/\mathbb{B}_i},N_{\max})$ for $i=1,2,\dots,n-1$}

\State{
$\{\hat{y}_{(\original{z},\original{z}')}:
\original{z},\original{z}'\in\bF_2^m,\original{z}\neq \original{z}'\}
\gets$ coordinates of $\hat{y}^{/\mathbb{B}_1},\dots,
\hat{y}^{/\mathbb{B}_{n-1}}$}

\For {every $\original{z}\in\bF_2^m$} 

\State{\texttt{num1} $\gets$ number of $\original{z}'\in\bF_2^m\setminus\{\original{z}\}$ such that $\hat{y}_{(\original{z},\original{z}')}+y_{\original{z}'}=1$}

\State{$\hat{y}_{\original{z}}\gets
\mathbf{1}[\texttt{num1}>\frac{n-1}{2}]$}

\EndFor

\If {$y=\hat{y}$}

\State{\bf break}

\EndIf

\State{$y\gets \hat{y}$}

\EndFor

\State{Output $\hat{y}$}

\end{algorithmic}
\end{algorithm}
\end{center}


\subsubsection{Additional methods \cite{Santi18,Mondelli14}}

In \cite{Santi18}, Santi et al. applied iterative decoding to a highly-redundant parity-check (PC) matrix that contains only the minimum-weight dual
codewords as rows. In particular, \cite{Santi18} proposed to use the peeling decoder
for the binary erasure channel, linear-programming and belief propagation (BP) decoding for the binary-input additive white Gaussian noise channel, and bit-flipping and BP decoding for the
binary symmetric channel. For short block lengths, it was shown
that near-ML performance can indeed be achieved in many
cases. \cite{Santi18} also proposed a method to tailor the PC matrix to the
received observation by selecting only a small fraction of useful
minimum-weight PCs before decoding begins. This allows one to
both improve performance and significantly reduce complexity
compared to using the full set of minimum-weight PCs.


In \cite{Mondelli14}, Mondelli et al. explored the relationship between polar and RM
codes, and they proposed a coding scheme which improves upon the performance of the standard polar codes at practical block lengths. The starting point is the experimental observation that RM codes have a smaller error probability than polar codes
under MAP decoding. This motivates one to introduce a family of
codes that ``interpolates" between RM and polar codes, call this
family
$\cC_{\inter}=
\{\cC_{\alpha}:\alpha\in[0,1]\}$, where
$\cC_{\alpha}|_{\alpha=1}$ is the original
polar code, and $\cC_{\alpha}|_{\alpha=0}$ is an RM code. Based on numerical
observations, one can see that the error probability under MAP
decoding is an increasing function of
$\alpha$. MAP decoding has in
general exponential complexity, but empirically the performance
of polar codes at finite block lengths is boosted by moving along the family
$\cC_{\inter}$ even under low-complexity decoding schemes such as belief propagation or successive cancellation
list decoder. Performance gain was also demonstrated in \cite{Mondelli14} via numerical simulations for transmission over the erasure channel as well as the Gaussian channel.

\subsection{Berlekamp-Welch type decoding algorithm \cite{Saptharishi17}}\label{sec:SSV}

In this section we explain the algorithm of Saptharishi, Shpilka and Volk \cite{Saptharishi17} for decoding RM codes of degrees up to $r=o(\sqrt{m})$. In fact, their algorithm also gives interesting results for degrees $r=m-o(\sqrt{m/\log m})$.
The algorithm is similar in spirit to the works of Pellikaan, Duursma and K\"{o}tter (\cite{Pellikaan92, DuursmaK94}),  which abstract the Berlekamp-Welch algorithm. Thus, it is very different from the  algorithms given in the other subsections of Section~\ref{sect:drm}.

Before stating their main theorem we will need the following notation. For $u,v \in \F_q^n$, we denote by $u *v \in \F_q^n$ the vector $(u_1 v_1, \ldots,  u_n  v_n)$. For $A,B \subseteq \F_q^n$ we similarly define $A*B = \{u * v \mid {u \in A, v \in B}\}$.

The algorithm considers three codes $C,E$ and $N$, all subsets of $\F_q^n$, such that $E*C\subseteq N$, and is able to correct in $C$ error patterns that are correctable from erasures in $N$, through the use of an {\em error-locating code} $E$.

\begin{algorithm}
  \caption{Decoding Algorithm of \cite{Saptharishi17}}
  \label{alg:abstract}
\begin{algorithmic}[1]
  \Require{received word $\boldy \in \F_q^n$ such that $\boldy = \boldc + \bolde$, with $\boldc \in C$ and $\bolde$ is supported on a set $U$}
  \State{Solve for $\bolda \in E, \boldb \in N$, the linear system $\bolda * \boldy = \boldb$.}
  \State{Let $\left\{\bolda_1, \ldots, \bolda_k \right\}$ be a basis for the solution space of $\bolda$, and let $\mathcal{E}$ denote the common zeros of $\{\bolda_i \mid i \in [k]\}$.}
  \State{For every $j \in \mathcal{E}$, replace $\boldy_j$ with `?', to get a new word $\boldy'$.}
  \State{Correct $\boldy'$ from erasures in $C$.}
\end{algorithmic}
\end{algorithm}

\begin{theorem}
\label{thm:abstraction}
Let $\F_q$ be a finite field and $E, C, N \subseteq \F_q^n$ be codes with the following properties.
\begin{enumerate}
\item \label{item:inclusion} $E*C \subseteq N$
\item \label{item:correction} For any pattern $\1_{U}$ that is
  correctable from erasures in $N$, and for any coordinate $i \not \in U$ there exists a codeword $\bolda \in E$ such that $\bolda_j = 0$ for all $j \in U$ and $\bolda_i=1$.
\end{enumerate}
Then Algorithm~\ref{alg:abstract}  corrects in $C$ any error pattern $\1_U$ which is correctable from erasures in $N$.
\end{theorem}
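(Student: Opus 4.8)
The plan is to track the linear space of solutions $\bolda$ produced by Step~1 and prove that its common zero set $\mathcal{E}$ is sandwiched as $\support{\bolde} \subseteq \mathcal{E} \subseteq U$. Once this is in hand, Step~3 erases a subset of $U$ that still covers every genuine error, so $\boldy'$ is just $\boldc$ with a correctable erasure pattern and Step~4 recovers it. Write $S = \set{\bolda \in E : \bolda * \boldy \in N}$ for the (linear, since $E,N$ are codes and $*$ is bilinear) solution space of Step~1; then $\mathcal{E}$ is exactly the set of coordinates on which every element of $S$ vanishes, so it suffices to argue about $S$ rather than about a chosen basis $\set{\bolda_1,\dots,\bolda_k}$.

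First I would show $\support{\bolde} \subseteq \mathcal{E}$, i.e.\ that every solution vanishes on the true error locations. For $\bolda \in S$, decompose $\bolda * \boldy = \bolda * \boldc + \bolda * \bolde$. Property~\ref{item:inclusion} gives $\bolda * \boldc \in N$ because $\bolda \in E$, $\boldc \in C$ and $E*C \subseteq N$; since $\bolda*\boldy \in N$ by definition of $S$ and $N$ is linear, this forces $\bolda * \bolde \in N$. But $\bolde$ is supported on $U$, hence so is $\bolda * \bolde$, and because $\1_U$ is correctable from erasures in $N$ the only codeword of $N$ supported inside $U$ is $\boldsymbol{0}$. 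Therefore $\bolda * \bolde = 0$, which says precisely that $\bolda$ vanishes wherever $\bolde$ is nonzero; as this holds for all $\bolda \in S$, we obtain $\support{\bolde} \subseteq \mathcal{E}$.

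Next I would show $\mathcal{E} \subseteq U$, ruling out over-erasure, and here Property~\ref{item:correction} is the crucial ingredient. Fix any $i \notin U$. Property~\ref{item:correction} supplies $\bolda \in E$ with $\bolda_j = 0$ for all $j \in U$ and $\bolda_i = 1$. Since $\bolda$ vanishes on $U \supseteq \support{\bolde}$ we have $\bolda * \bolde = 0 \in N$, so $\bolda * \boldy = \bolda * \boldc \in N$ and thus $\bolda \in S$; as $\bolda_i = 1 \neq 0$, coordinate $i$ is not a common zero of $S$, i.e.\ $i \notin \mathcal{E}$. Hence $\mathcal{E} \subseteq U$, and combining the two inclusions gives $\support{\bolde} \subseteq \mathcal{E} \subseteq U$. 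Consequently $\boldy'$ agrees with $\boldc$ off $\mathcal{E}$ (outside $\mathcal{E}$ there are no errors, as all errors lie in $\support{\bolde} \subseteq \mathcal{E}$) and its erasure set $\mathcal{E}$ is contained in $U$.

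Finally, Step~4 succeeds provided $\1_{\mathcal{E}}$ is correctable from erasures in $C$. Because $\mathcal{E} \subseteq U$, it is enough that $\1_U$ be correctable from erasures in $C$, and this is the one spot where I would invoke $C \subseteq N$ (the natural standing assumption, which holds in the intended instantiations: e.g.\ whenever $\mathbf{1}\in E$, since then $C = \mathbf{1} * C \subseteq E * C \subseteq N$). Erasure-correctability of $U$ in the larger code $N$ transfers to the subcode $C$ verbatim, as a determining set for $N$ is a determining set for $C$; correcting $\boldy'$ then returns the unique codeword of $C$ agreeing with $\boldy'$ off $\mathcal{E}$, namely $\boldc$. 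I expect the main obstacle to be the first inclusion — the recognition that $\bolda*\bolde$ is forced to be a codeword of $N$ living on $U$ and must therefore vanish — with the remaining work being bookkeeping, the only delicate point being the passage from erasure-correctability in $N$ to erasure-correctability in $C$.
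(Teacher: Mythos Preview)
Your argument is essentially the paper's: the two inclusions $\support{\bolde}\subseteq\mathcal{E}$ and $\mathcal{E}\subseteq U$ are exactly the paper's Claims~\ref{cla:each-zero} and~\ref{cla:no-other-zero}, proved in the same way.

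The only divergence is the final step, where you invoke an extra hypothesis $C\subseteq N$ (equivalently $\mathbf{1}\in E$) to transfer erasure-correctability of $U$ from $N$ down to $C$. The paper instead asserts that the stated hypotheses already handle this, and they do: apply Property~\ref{item:correction} with the \emph{empty} pattern $U'=\emptyset$ (trivially correctable in $N$) to obtain, for every coordinate $i$, some $\bolda^{(i)}\in E$ with $\bolda^{(i)}_i=1$. Now if $\boldc\in C$ is supported on $U$, then for every $\bolda\in E$ the product $\bolda*\boldc$ lies in $N$ by Property~\ref{item:inclusion} and is supported on $U$, hence vanishes since $U$ is erasure-correctable in $N$; in particular $(\bolda^{(i)}*\boldc)_i=\boldc_i=0$ for each $i$, so $\boldc=0$. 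Thus $U$ is erasure-correctable in $C$ directly from Properties~\ref{item:inclusion} and~\ref{item:correction}, and your added assumption is unnecessary.
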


It is worth pointing out the differences between Algorithm~\ref{alg:abstract} and the abstract Berlekamp-Welch decoder of Pellikaan, Duursma and K\"{o}tter \cite{Pellikaan92, DuursmaK94}.  They similarly set up codes $E, C$ and $N$ such that $E*C \subseteq N$.
However, instead of Property~\ref{item:correction}, they require that for any $\bolde \in E$ and $\boldc \in C$, if $\bolde * \boldc = 0$ then $\bolde=0$ or $\boldc=0$ (or similar requirements regarding the distances of $E$ and $C$ that guarantee this property).
This property, as well as the distance properties, do not hold in the case of Reed-Muller codes, which is the main application of \Cref{thm:abstraction}.

\begin{proof}[Proof Sketch]
It is relatively easy to show that Property~\ref{item:correction} in the statement of the theorem guarantees that every erasure pattern that is correctable in $N$ is also correctable in $C$. The main point of the algorithm is that, under the hypothesis of the theorem, the common zeros of the possible solutions for $\bolda$ determine exactly the error locations. 

Denote $\boldy = \boldc + \bolde$, where $\boldc \in C$ is the transmitted codeword and $\bolde$ is supported on the set of error locations $U$.
The following two claims guarantee that the algorithm correctly finds the set of error locations. The first claim shows that error locations are common zeros and the second claim shows that no other coordinate is a common zero. 

  \begin{claim}\label{cla:each-zero}
    For every $\bolda \in E, \boldb \in N$ such that $\bolda * \boldy = \boldb$, it holds that $\bolda * \bolde = 0$.
  \end{claim}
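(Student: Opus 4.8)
The plan is to expand $\bolda * \boldy$ using the bilinearity of the coordinate-wise product and then argue that the ``error part'' $\bolda * \bolde$ is forced to be a codeword of $N$ supported only on the error locations, and hence must vanish. The whole argument is a short manipulation once the right characterization of erasure correction is in hand; no distance or non-vanishing hypotheses on $E$ and $C$ enter, which is precisely the feature distinguishing this from the Pellikaan--Duursma--K\"{o}tter setup.

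First I would write, using $\boldy = \boldc + \bolde$,
\[
\bolda * \boldy = \bolda * (\boldc + \bolde) = \bolda * \boldc + \bolda * \bolde .
\]
By Property~\ref{item:inclusion} (the inclusion $E * C \subseteq N$) together with $\bolda \in E$ and $\boldc \in C$, we have $\bolda * \boldc \in N$; and by the hypothesis of the claim, $\boldb = \bolda * \boldy \in N$. Since $N$ is a linear code, it follows that
\[
\bolda * \bolde = \boldb - \bolda * \boldc \in N .
\]

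Next I would observe that $\bolde$ is supported on $U$, and since the coordinate-wise product can only shrink supports, $\bolda * \bolde$ is also supported on $U$. The only conceptual point worth spelling out is the standard characterization of erasure correction: an erasure pattern on a coordinate set $U$ is correctable in a linear code $N$ if and only if the only codeword of $N$ whose support is contained in $U$ is the zero codeword (otherwise a nonzero such codeword would be indistinguishable from $\mathbf{0}$ after erasing the coordinates in $U$). By hypothesis $\1_U$ is correctable from erasures in $N$, and we have just exhibited $\bolda * \bolde$ as a codeword of $N$ supported on $U$; hence $\bolda * \bolde = 0$, as desired.

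The main obstacle --- such as it is --- is simply making explicit this equivalence between erasure-correctability of the pattern $\1_U$ in $N$ and the nonexistence of a nonzero $N$-codeword supported on $U$; once that is granted, the rest is the two displayed lines above. It is worth noting that the argument uses only the linearity of $N$ and the inclusion $E*C\subseteq N$, and in particular does not require any lower bound on the distances of $E$ or $C$.
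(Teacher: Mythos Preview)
Your proof is correct and follows essentially the same approach as the paper: write $\bolda * \bolde = \bolda * \boldy - \bolda * \boldc \in N$, observe that it is supported on $U$, and conclude it must be zero since $U$ is erasure-correctable in $N$. The paper's proof is just a more compressed version of what you wrote, leaving the characterization of erasure-correctability implicit.
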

  \begin{proof}
Observe that $\bolda*\bolde = \bolda * \boldy - \bolda * \boldc$ is also a codeword in $N$.
As $\bolda*\bolde$ is  supported on $U$, and since $U$ is an erasure-correctable pattern in $N$, it must be the zero codeword.
  \end{proof}

  \begin{claim}\label{cla:no-other-zero}
    For every $i \not \in U$ there exists $\bolda \in E, \boldb \in N$ such that $\bolda$ is 0 on $U$, $\bolda_i=1$ and $\bolda*\boldy=\boldb$.
  \end{claim}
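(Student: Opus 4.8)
The plan is to produce the required codeword $\bolda$ directly from the hypotheses and then verify that $\bolda * \boldy$ automatically lands in $N$. First I would invoke Property~\ref{item:correction}: since $\1_{U}$ is correctable from erasures in $N$ and $i \notin U$, there exists a codeword $\bolda \in E$ with $\bolda_j = 0$ for every $j \in U$ and $\bolda_i = 1$. This choice of $\bolda$ immediately satisfies the two requirements on $\bolda$ in the claim, so the only remaining task is to exhibit a $\boldb \in N$ with $\bolda * \boldy = \boldb$; equivalently, to show that $\bolda * \boldy$ already lies in $N$.

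For this I would write $\boldy = \boldc + \bolde$ and use the bilinearity of the coordinatewise product to split $\bolda * \boldy = \bolda * \boldc + \bolda * \bolde$. The first summand lies in $N$ by the inclusion $E*C \subseteq N$ of Property~\ref{item:inclusion}, since $\bolda \in E$ and $\boldc \in C$.

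The key step is to check that the second summand vanishes. Examining an arbitrary coordinate $j$: if $j \in U$ then $\bolda_j = 0$ by the choice of $\bolda$, while if $j \notin U$ then $\bolde_j = 0$ because $\bolde$ is supported on $U$; in either case $(\bolda * \bolde)_j = 0$. Hence $\bolda * \bolde = 0$, so $\bolda * \boldy = \bolda * \boldc \in N$. Setting $\boldb := \bolda * \boldy$ then yields the desired $\boldb \in N$ and completes the claim.

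I do not anticipate a genuine obstacle here: the whole argument hinges on the single observation that $\bolda$ was selected to vanish on all of $U$ whereas $\bolde$ is supported on $U$, so their product is zero coordinatewise. The only point requiring care is to apply Property~\ref{item:correction} with exactly the erasure-correctable pattern $\1_{U}$ supplied by the theorem's hypothesis, as this is precisely what guarantees the existence of such an $\bolda$ in the first place.
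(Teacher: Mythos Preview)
Your proposal is correct and follows essentially the same approach as the paper: invoke Property~\ref{item:correction} to obtain $\bolda \in E$ vanishing on $U$ with $\bolda_i = 1$, then set $\boldb = \bolda * \boldy$ and observe that $\bolda * \boldy = \bolda * \boldc + \bolda * \bolde = \bolda * \boldc \in N$. The paper's version is simply more terse, collapsing your coordinatewise verification that $\bolda * \bolde = 0$ into a single equality.
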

  \begin{proof}
   Property~\ref{item:correction} implies that since $U$ is correctable from erasures in $N$, for every $i \not \in U$ we can pick $\bolda \in E$ such that $\bolda$ is 0 on $U$ and $\bolda_i = 1$. Set $\boldb = \bolda * \boldy$. As
   $\boldb = \bolda * \boldc + \bolda * \bolde = \bolda * \boldc$ it follows that $b\in N$.
  \end{proof}
  Together, Claims~\ref{cla:each-zero} and~\ref{cla:no-other-zero} imply the correctness of the algorithm.
\end{proof}

To apply \Cref{thm:abstraction} to RM codes we note that for $m \in \N$ and $r \le m/2 -1$ the codes $C=\reedmuller{m-2r-2}{m}$, $N=\reedmuller{m-r-1}{m}$ and $E=\reedmuller{r+1}{m}$ satisfy the conditions of Theorem~\ref{thm:abstraction}.

Theorem~\ref{thm:bec-cap} shows that $\reedmuller{m-r-1}{m}$ achieves capacity for $r=m/2 \pm O(\sqrt{m})$.
Letting $r=m/2-o(\sqrt{m})$ and looking at the code $\reedmuller{m-2r-2}{m} = \reedmuller{o(\sqrt{m})}{m}$ so that ${m \choose \le r} = (1/2 - o(1))2^m$, Saptharishi et al. obtained the following corollary to Theorem~\ref{thm:abstraction}.

\begin{corollary}
  There exists an efficient (deterministic) algorithm that is able to correct a fraction of $(1/2 - o(1))$ random errors in $\reedmuller{o(\sqrt{m})}{m}$, with probability $1-o(1)$.
\end{corollary}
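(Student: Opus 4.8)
The plan is to combine the abstract decoder of Theorem~\ref{thm:abstraction} with the capacity-achieving property of $N=\reedmuller{m-r-1}{m}$ on the BEC (Theorem~\ref{thm:bec-cap}). First I would fix the three codes $C=\reedmuller{m-2r-2}{m}$, $N=\reedmuller{m-r-1}{m}$ and $E=\reedmuller{r+1}{m}$, which satisfy conditions~\ref{item:inclusion} and~\ref{item:correction} of Theorem~\ref{thm:abstraction} for every $r\le m/2-1$; here the inclusion $E*C\subseteq N$ is immediate from the additivity of degrees, since $(r+1)+(m-2r-2)=m-r-1$. By that theorem, Algorithm~\ref{alg:abstract} decodes in $C$ every error pattern $\1_U$ whose support $U$ is correctable from erasures in $N$. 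Thus the whole problem reduces to showing that a \emph{random} error support is, with high probability, an erasure-correctable set for $N$.

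Next I would exploit the fact that in the random-error model each coordinate lies in the error support $U$ independently with probability $p$, so $U$ has exactly the distribution of a random erasure pattern of a BEC with erasure probability $p$. Hence if $N$ corrects a fraction $p$ of random erasures with probability $1-o(1)$, then $\1_U$ is erasure-correctable in $N$ with probability $1-o(1)$, and Algorithm~\ref{alg:abstract} recovers the transmitted codeword of $C$. It remains to quantify which $p$ the code $N$ can handle: writing $d_N=m-r-1$ for the degree of $N$ and comparing with the threshold $r_m=\lfloor m/2+(\sqrt m/2)Q^{-1}(p)\rfloor$ of Theorem~\ref{thm:bec-cap}, the code $N$ achieves capacity on the BEC at erasure probability $p$ precisely when $d_N=m/2+(\sqrt m/2)Q^{-1}(p)+O(1)$, i.e.\ when $Q^{-1}(p)=2(d_N-m/2)/\sqrt m+o(1)$, and achieving capacity there means $N$ corrects a fraction $p$ of random erasures with high probability.

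Finally I would set the parameters. Choosing the degree of $C$ to be any $d_C=o(\sqrt m)$ forces $r=(m-d_C)/2-1=m/2-o(\sqrt m)$, hence $d_N=m-r-1=m/2+o(\sqrt m)$. Plugging into the previous relation gives $Q^{-1}(p)=d_C/\sqrt m=o(1)$, so $p=Q(o(1))=1/2-o(1)$, using $Q(0)=1/2$ and continuity of $Q$. Therefore $N$ corrects a $(1/2-o(1))$-fraction of random erasures with probability $1-o(1)$, and by the reduction above Algorithm~\ref{alg:abstract} corrects a $(1/2-o(1))$-fraction of random errors in $C=\reedmuller{o(\sqrt m)}{m}$ with probability $1-o(1)$. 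The algorithm is deterministic and efficient, since each step (solving the linear system for $\bolda\in E$, computing the common zeros of the solution space, and correcting erasures in $C$) runs in polynomial time.

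I expect the delicate point to be the parameter matching rather than any single hard estimate: the erasure radius of $N$ and the target error fraction are both $1/2-o(1)$, so one must check that $d_N$ stays inside the $m/2\pm O(\sqrt m)$ window where Theorem~\ref{thm:bec-cap} applies while $d_C=m-2r-2$ remains $o(\sqrt m)$, and that the several $o(1)$ and $o(\sqrt m)$ error terms (from the binomial-tail estimate of $\rate{N}$, from $Q^{-1}$, and from the capacity gap) are all absorbed consistently. A secondary point worth spelling out, although it is granted by the cited reduction, is condition~\ref{item:correction}: that $E=\reedmuller{r+1}{m}$ is rich enough to isolate any single coordinate outside an $N$-erasure-correctable set, which is exactly what forces the common zeros of the solution space to coincide with the error locations.
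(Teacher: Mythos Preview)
Your proposal is correct and follows essentially the same approach as the paper: instantiate Theorem~\ref{thm:abstraction} with $C=\reedmuller{m-2r-2}{m}$, $E=\reedmuller{r+1}{m}$, $N=\reedmuller{m-r-1}{m}$, identify the random-error support with a BEC erasure pattern, and invoke Theorem~\ref{thm:bec-cap} for $N$ with $r=m/2-o(\sqrt m)$ so that $d_N=m/2+o(\sqrt m)$ and the tolerated erasure fraction is $1/2-o(1)$. Your parameter computation (in particular $d_N-m/2=d_C/2$, hence $Q^{-1}(p)=d_C/\sqrt m$) is exactly the calculation the paper summarizes in one line, and your flagged delicate points---keeping $d_N$ in the $m/2\pm O(\sqrt m)$ window of Theorem~\ref{thm:bec-cap} and verifying condition~\ref{item:correction}---are the right ones, both of which the paper simply asserts.
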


Similar arguments allow \cite{Saptharishi17} to obtain results for high rate RM codes. 

\begin{theorem}[Theorem 2 of \cite{Saptharishi17}]\label{thm:SSV-high}
Let $r =o(\sqrt{m/\log m})$.
Then, there is an efficient algorithm that can correct $\reedmuller{m-(2r+2)}{m}$ from a random set of $(1-o(1))\binom{m}{\le r}$ errors.
Moreover, the running time of the algorithm is $2^m \cdot \poly(\binom{m}{\le r})$.
\end{theorem}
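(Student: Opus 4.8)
The plan is to obtain Theorem~\ref{thm:SSV-high} as a direct instance of the abstract decoder of Theorem~\ref{thm:abstraction}, in exactly the same way as the preceding corollary but invoking the \emph{high-rate} erasure result of Section~\ref{sect:hj} in place of the constant-rate result of Theorem~\ref{thm:bec-cap}. Concretely, I would instantiate the three codes as
\[
E = \reedmuller{r+1}{m}, \qquad C = \reedmuller{m-2r-2}{m}, \qquad N = \reedmuller{m-r-1}{m}.
\]
Since $r = o(\sqrt{m/\log m})$ we have $r \le m/2 - 1$ for all large $m$, so this triple satisfies the two hypotheses of Theorem~\ref{thm:abstraction}: the inclusion $E * C \subseteq N$ (Property~\ref{item:inclusion}) holds because $\Eval(p)*\Eval(q)=\Eval(pq)$ and $\deg(pq)\le (r+1)+(m-2r-2)=m-r-1$, while the error-locating Property~\ref{item:correction} is the one already recorded for this triple. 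Hence Algorithm~\ref{alg:abstract} corrects in $C$ every error pattern $\1_U$ that is correctable from erasures in $N$, and the entire problem reduces to showing that a uniformly random $U$ of size $(1-o(1))\binom{m}{\le r}$ is, with probability $1-o(1)$, erasure-correctable in $N$.

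This reduction is where Section~\ref{sect:hj} enters. The rate of $N$ is $\text{Rate}(N)=\binom{m}{\le m-r-1}/2^m = 1-\binom{m}{\le r}/2^m$, so its maximal erasure load is $(1-\text{Rate}(N))\,2^m=\binom{m}{\le r}$. The theorem of Abbe et al.\ quoted in Section~\ref{sect:hj} states precisely that for $r=o(\sqrt{m/\log m})$ the code $\reedmuller{m-r-1}{m}$ achieves capacity on the BEC; unwinding the high-rate definition given there, this means that for a uniformly random set $U$ of size $(1-o(1))\binom{m}{\le r}$ the columns $\{z^r : z\in U\}$ of the parity-check matrix of $N$ (equivalently, the generator matrix of $N^\perp=\reedmuller{r}{m}$) are linearly independent with probability $1-o(1)$, which is exactly the condition for $U$ to be correctable from erasures in $N$ (this independence is the content of Claim~\ref{cla:common-zero} and rests on Wei's identity, Theorem~\ref{thm:GHW}). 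Feeding the very same random set $U$ into Algorithm~\ref{alg:abstract} as an \emph{error} pattern and applying Theorem~\ref{thm:abstraction} then yields that $C=\reedmuller{m-(2r+2)}{m}$ corrects $(1-o(1))\binom{m}{\le r}$ random errors with probability $1-o(1)$, as claimed.

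For the running time I would bound the four steps of Algorithm~\ref{alg:abstract} separately, exploiting that both $N^\perp=\reedmuller{r}{m}$ and $C^\perp=\reedmuller{2r+1}{m}$ have small dimension. Step~1 finds $\bolda\in E$ subject to $\bolda*\boldy\in N$; this is a linear system in $\dim E=\binom{m}{\le r+1}$ unknowns with one orthogonality constraint per basis vector of $N^\perp$, i.e.\ $\binom{m}{\le r}$ constraints, each read off from length-$2^m$ vectors, costing $2^m\cdot\poly(\binom{m}{\le r})$. Step~2 evaluates the $O(\binom{m}{\le r+1})$ basis solutions at all $2^m$ points to extract their common zeros, again $2^m\cdot\poly(\binom{m}{\le r})$, and Step~4 corrects the at most $\binom{m}{\le r}$ resulting erasures in $C$ by solving the syndrome system against $C^\perp=\reedmuller{2r+1}{m}$. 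The key bookkeeping fact is that $\binom{m}{\le r+1}$ and $\binom{m}{\le 2r+1}$ are both $\poly(\binom{m}{\le r})$ for $r=o(\sqrt{m/\log m})$, so every step runs in time $2^m\cdot\poly(\binom{m}{\le r})$.

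The genuinely hard ingredient is imported rather than reproved: the high-rate erasure-correction of $\reedmuller{m-r-1}{m}$ from Section~\ref{sect:hj}. Granting that, the main obstacle within this argument is the careful translation between the three notions ``random errors in $C$'', ``random erasures in $N$'', and ``independence of a random column set of $\reedmuller{r}{m}$'', together with checking that the single degree budget $m-r-1$ simultaneously accommodates the inclusion $E*C\subseteq N$ and Property~\ref{item:correction}. In particular I would verify that the abstraction's requirement $r\le m/2-1$ is compatible with the regime $r=o(\sqrt{m/\log m})$ (it is for all large $m$), so that the reduction and the imported erasure result apply at exactly the same parameters.
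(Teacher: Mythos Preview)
Your proposal is correct and matches the paper's approach: the paper itself does not give a detailed proof here but simply says ``similar arguments allow \cite{Saptharishi17} to obtain results for high rate RM codes,'' and you have correctly unpacked those arguments by instantiating Theorem~\ref{thm:abstraction} with the same triple $(E,C,N)=(\reedmuller{r+1}{m},\reedmuller{m-2r-2}{m},\reedmuller{m-r-1}{m})$ and invoking the high-rate BEC result of Section~\ref{sect:hj} in place of Theorem~\ref{thm:bec-cap}. Your running-time analysis and the bookkeeping claim that $\binom{m}{\le r+1}$ and $\binom{m}{\le 2r+1}$ are $\poly\bigl(\binom{m}{\le r}\bigr)$ in this regime are also correct.
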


In \cite{KoppartyP18}, Kopparty and Potukuchi improved the running time of the decoding algorithm of Theorem~\ref{thm:SSV-high} to run in time polynomial in the length of the syndrome, that is in time $\poly(\binom{m}{\le r})$.

\section{Open problems}  \label{sect:open}

\subsection{RM codes and Twin-RM codes}
As mentioned in Section \ref{sect:RMpolar}, the twin-RM code, obtained by retaining the low-entropy components of the squared RM code (proceeding in the RM ordering) is proved to achieve capacity on any BMS. There are now three possible outcomes: (i) the twin-RM code is exactly the RM code, (ii) the twin-RM code is equivalent to the RM code, in that only a vanishing fraction of rows selected by the two codes are different, (iii) the twin-RM code is not equivalent to the RM code. If (i) or (ii) are true, the RM code will also be capacity achieving for any BMS. If (iii) is true instead, then the RM code is not capacity-achieving (for the considered BMS channel). 
\begin{conjecture}
The twin-RM code is the RM code, i.e., with the notation of Section \ref{sect:RMpolar} where $U^n=R_n X^n$ and $R_n$ is the squared RM code matrix, for $i,j \in [n]$,
\begin{align}
|A_i| > |A_j| \implies
H(U_i|U^{i-1},Y^n) \ge H(U_j|U^{j-1},Y^n), 
\end{align}
in words, the conditional entropy is non-decreasing as we go to lower degree layers.
\end{conjecture}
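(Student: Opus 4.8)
The plan is to combine the reduction of \cite{AY18} with an induction on the number of variables driven by the Plotkin decomposition of $\reedmuller{m}{m}$. Recall from Section~\ref{sect:RMpolar} that it suffices to verify \eqref{equi} only across consecutive layers: if $A_j^{\max}$ denotes the size-$j$ subset carrying the largest entropy and $B_{j+1}^{\min}$ the size-$(j+1)$ subset carrying the least entropy, then \eqref{equi} is equivalent to the single family of inequalities $H_{B_{j+1}^{\min}}^{(m)} \ge H_{A_j^{\max}}^{(m)}$ for $0 \le j \le m-1$. First I would pin down these extremal sets explicitly in the prescribed ordering and record the obstacle already visible here: by the partial order \eqref{equi2} (and its shift-refinement) together with the affine symmetry of RM codes, $A_j^{\max}$ and $B_{j+1}^{\min}$ are \emph{incomparable}, so neither \eqref{equi2} nor the chain monotonicity \eqref{thm:m1} decides the sign on its own.

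Next I would induct on $m$ using the splitting of $\reedmuller{m+1}{m+1}$ into two independent copies of $\reedmuller{m}{m}$, together with the larger-spread inequalities of Section~\ref{sect:RMpolar},
\[
H_{A}^{(m+1)} \le H\bigl((W_{A}^{(m)})^{+}\bigr) \le H_{A}^{(m)} \le H\bigl((W_{A}^{(m)})^{-}\bigr) \le H_{A\cup\{m+1\}}^{(m+1)} .
\]
Every bit-channel of $\reedmuller{m+1}{m+1}$ is then indexed by a set $A \subseteq [m]$ in one of two modes: a ``$+$'' mode (when $m+1$ is absent, with entropy at most $H_A^{(m)}$) and a ``$-$'' mode (when $m+1$ is present, with entropy at least $H_A^{(m)}$). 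Given $A,B \subseteq [m+1]$ with $|A|>|B|$, I would split into four cases according to whether $m+1$ belongs to $A$ and/or $B$. To even get started on the two ``aligned'' cases ($m+1$ in both, or in neither), one should strengthen the inductive hypothesis from the entropy inequality to the stochastic-degradation relation under which $W_A^{(m)}$ is degraded with respect to $W_B^{(m)}$ whenever $|A|>|B|$, since degradation is preserved by both the $+$ and $-$ transforms and implies the entropy ordering; these cases preserve the strict size gap of the underlying sets in $[m]$ and are the more tractable ones.

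The hard part will be the two ``boundary-crossing'' cases, where exactly one of $A,B$ picks up the coordinate $m+1$. There the one-sided spread inequalities point the wrong way: one must compare a ``$+$''-mode channel (for which the spread gives only the upper bound $H_\cdot^{(m)}$) against a ``$-$''-mode channel (for which it gives only the lower bound $H_\cdot^{(m)}$) whose underlying sets sit in the \emph{same} layer of $\reedmuller{m}{m}$, so the inductive hypothesis is vacuous on them. Closing these cases is equivalent to showing that the entropy oscillation \emph{within a single layer} is dominated by the \emph{gap between consecutive layers} --- quantitatively, that the within-layer spread of $H^{(m)}$ at size $|B|$ is smaller than the polarization spread $\delta(\epsilon)$ furnished by \eqref{eq:jjw2}. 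This is exactly the ingredient the chain results do not provide: Theorem~\ref{thm:m} bounds the number of unpolarized entropies along any single increasing chain by $D(\epsilon)$ independently of $m$, but a layer contains exponentially many sets and no uniform handle on its maximum versus its minimum is available.

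To break this, I would first specialize to the BEC, where each bit-channel is again a BEC and $H_A^{(m)}$ equals a purely combinatorial rank-deficiency probability; there the target inequality $H_{B_{j+1}^{\min}}^{(m)} \ge H_{A_j^{\max}}^{(m)}$ becomes a statement about ranks of RM generator submatrices that one may hope to prove by an explicit coupling of erasure patterns, or via generalized Hamming weights and Wei's theorem (Theorem~\ref{thm:GHW}), the same tools already controlling these ranks in Section~\ref{sect:hj}. A clean BEC argument would also reveal the right degradation statement to posit for a general BMS $W$, which I would then try to transport using the extremality and Mrs.\ Gerber machinery behind \eqref{gerber}. I expect the uniform within-layer spread bound to be the true bottleneck: it is the very gap that keeps \eqref{equi} open, and any proof will have to exploit more of the coordinate symmetry of RM codes --- the full affine group, not merely the chain symmetry --- to supply the missing layer-wide comparison.
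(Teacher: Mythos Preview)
The statement you are attempting to prove is a \emph{conjecture}: the paper does not prove it, and it is listed explicitly in Section~\ref{sect:open} as open. There is therefore no ``paper's own proof'' against which to compare your attempt. What you have written is not a proof but a research plan, and you yourself flag the gap (``This is exactly the ingredient the chain results do not provide''). That honesty is appropriate; the gap is real and is precisely why \eqref{equi} remains open.

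Your structural decomposition is sound and matches the state of the art: the reduction to consecutive layers is in \cite{AY18} (and recalled after \eqref{equi2}); the interlacing and spread inequalities you invoke are Lemma~\ref{interlacing} and \eqref{eq:jjw2}; and your four-case split according to membership of $m{+}1$ is the natural way to push the Plotkin recursion. You correctly isolate the obstruction: in the ``boundary-crossing'' cases the only available one-sided bounds are $H_{A}^{(m+1)}\le H_{A}^{(m)}$ (for the ``$+$'' mode) and $H_{B'\cup\{m+1\}}^{(m+1)}\ge H_{B'}^{(m)}$ (for the ``$-$'' mode), and these point the wrong way relative to what you must prove. Your proposed cure---strengthening the inductive hypothesis from entropy inequalities to stochastic degradation---does handle the two aligned cases (degradation is preserved by both polar transforms), but it does \emph{not} resolve the crossing cases, and moreover it is itself a strictly stronger, unproved statement about RM bit-channels.

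One caution about the BEC route. It is true that, by \cite{Kudekar17}, the RM and twin-RM codes are \emph{equivalent} on the BEC (they select the same rows up to an $o(n)$ fraction). But this does not establish the pointwise implication \eqref{equi} even for the BEC: equivalence is an asymptotic statement about the \emph{set} of good rows, whereas the conjecture asserts a strict monotonicity of $H_{A}$ across every pair of layers, including at indices where both entropies may be bounded away from $0$ and $1$. So the BEC case of Conjecture~1 is, as far as the paper records, also open; the combinatorial/rank-deficiency reformulation you sketch (via Wei's theorem and the material of Section~\ref{sect:hj}) is a reasonable angle of attack, but you should not expect it to be a short exercise. In particular, the missing ``within-layer oscillation $\le$ between-layer gap'' estimate you identify has no known proof even in this special case.
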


Using the symmetry of RM codes, \cite{AY18} shows that this is implied by the following conjecture. 
\begin{conjecture}
Let $i$ be the index of the last row in a layer of $R_n$ (besides the last layer), then
\begin{align}
H(U_i|U^{i-1},Y^n) \ge H(U_{i+1}|U^{i},Y^n), 
\end{align}
in words, the conditional entropy is non-decreasing as we cross layers. 
\end{conjecture}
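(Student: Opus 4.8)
The plan is to first reduce the cross-layer statement to a sharp ``layer separation'' inequality, and then attack that inequality by induction on $m$ through the Plotkin recursion. Writing $s=|A_i|$, the row $i$ is the last subset of the size-$s$ layer, namely $\{m-s+1,\dots,m\}$, and $i+1$ is the first subset of the size-$(s-1)$ layer, namely $\{1,\dots,s-1\}$. By the partial order \eqref{equi2}, in its general shift-invariant form, the componentwise-smallest subset of a layer maximizes $H$ within that layer while the componentwise-largest subset minimizes it; hence $H_{A_i}=\min_{|A|=s}H_A$ and $H_{A_{i+1}}=\max_{|B|=s-1}H_B$. Thus the conjecture is equivalent to the assertion that the \emph{minimum} entropy of the size-$s$ layer dominates the \emph{maximum} entropy of the size-$(s-1)$ layer. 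The first thing to stress is that \eqref{equi2} cannot close this gap: the two extremizers $\{m-s+1,\dots,m\}$ and $\{1,\dots,s-1\}$ are neither nested nor comparable under the shift order, so a genuinely new comparison across the layer boundary is required.

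Second, I would set up an induction on $m$ using the decomposition of $\mathrm{RM}(m,m)$ into two copies of $\mathrm{RM}(m-1,m-1)$ along the variable $x_m$, together with the ``larger spread'' sandwich established in the proof of Theorem~\ref{thm:m} (see Fig.~\ref{fig:gp}),
\begin{equation*}
H_{C}^{(m)} \le H\!\left((W_{C}^{(m-1)})^{+}\right) \le H_{C}^{(m-1)} \le H\!\left((W_{C}^{(m-1)})^{-}\right) \le H_{C\cup\{m\}}^{(m)}, \qquad C\subseteq[m-1].
\end{equation*}
Applying this with $C=\{m-s+1,\dots,m-1\}$ (the minimizer of layer $s-1$ at level $m-1$) gives $H_{A_i}^{(m)}=H_{C\cup\{m\}}^{(m)}\ge H_{C}^{(m-1)}=\min_{|B|=s-1}H_B^{(m-1)}$, while applying it with $C=\{1,\dots,s-1\}$ (the maximizer of layer $s-1$ at level $m-1$) gives $H_{A_{i+1}}^{(m)}=H_{C}^{(m)}\le H_{C}^{(m-1)}=\max_{|B|=s-1}H_B^{(m-1)}$. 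These two bounds alone do not suffice, since $\min\le\max$ within the same layer. To make progress I would invoke the \emph{strict} polar-type spread \eqref{eq:spread}: when the relevant entropies lie in $(\epsilon,1-\epsilon)$ the $-$ transform raises, and the $+$ transform lowers, the entropy by at least $\delta(\epsilon)$, so the estimates improve to $H_{A_i}^{(m)}\ge \min_{s-1}^{(m-1)}+\delta$ and $H_{A_{i+1}}^{(m)}\le \max_{s-1}^{(m-1)}-\delta$. Consequently the conjecture at level $m$ would follow from the within-layer spread bound $\max_{s-1}^{(m-1)}-\min_{s-1}^{(m-1)}\le 2\delta$ at level $m-1$.

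This identifies the correct inductive invariant: rather than tracking only cross-layer gaps, I would carry along a bound on the \emph{within-layer spread} $\sigma_{t}^{(m)}:=\max_{|A|=t}H_A-\min_{|A|=t}H_A$ and prove simultaneously that these spreads contract as $m$ grows while the cross-layer gaps stay non-negative. The intuition that RM splits produce a strictly larger entropy spread than polar splits (Fig.~\ref{fig:gp}) is exactly the mechanism one hopes will drive the contraction, and the quantitative engine should again be the Mrs.\ Gerber-type convexity inequality \eqref{gerber}, which is what furnishes the universal $\delta$ in \eqref{eq:spread}. I expect the contraction of $\sigma_t^{(m)}$ to be the main obstacle: it forces one to compare $H_A$ for many \emph{incomparable} same-size subsets $A$, precisely the regime that the partial order \eqref{equi2} and the chain interlacing of Lemma~\ref{interlacing} leave untouched. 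A promising auxiliary route, should the spread-contraction estimate prove too blunt, is to compare the two bit-channels $W_{A_i}^{(m)}$ and $W_{A_{i+1}}^{(m)}$ directly by coupling their successive-decoding views and isolating a single monotone swap of one coordinate for another, with sign-definiteness again extracted from \eqref{gerber}; but making such a swap respect the successive-decoding order is delicate, and that is where I anticipate the real difficulty to lie.
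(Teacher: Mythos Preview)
This statement is listed in the paper as an \emph{open conjecture} (Section~\ref{sect:open}); the paper does not supply a proof, so there is nothing to compare your attempt against. Your write-up is accordingly a plan rather than a proof, and you say so. That is fine, but the plan has a structural problem that goes beyond the difficulty you already flag.

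Your reduction via the induction on $m$ leads to the requirement
\[
\sigma_{s-1}^{(m-1)} \;\le\; 2\delta,
\]
where $\delta=\delta(\epsilon)$ is the universal polar spread from \eqref{eq:spread}. But $\delta(\epsilon)$ is only available when the two reference entropies $\min_{s-1}^{(m-1)}$ and $\max_{s-1}^{(m-1)}$ lie in $(\epsilon,1-\epsilon)$, and $\delta(\epsilon)\to 0$ as $\epsilon\to 0$. For a layer near the polarization transition one can have $\min_{s-1}^{(m-1)}$ close to $0$ while $\max_{s-1}^{(m-1)}$ is $\Theta(1)$; then the $\epsilon$ you are forced to use is tiny, $\delta$ is tiny, yet $\sigma_{s-1}^{(m-1)}$ is of constant order, and the inequality $\sigma\le 2\delta$ fails outright. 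So the ``spread contraction'' you propose as the inductive invariant is not merely hard to prove---it is not true in the regime that matters. The same limitation bites the alternative ``single monotone swap'' route: any argument whose only quantitative input is the Mrs.~Gerber-type bound \eqref{gerber} can at best produce inequalities up to an additive $\delta(\epsilon)$ slack, whereas Conjecture~2 asks for a pointwise, exact inequality at every $m$ and every layer boundary, including layers where all entropies are already polarized and $\delta\approx 0$.

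In short: your identification $H_{A_i}=\min_{|A|=s}H_A$ and $H_{A_{i+1}}=\max_{|B|=s-1}H_B$ via the shift partial order is the right way to see why the cross-layer comparison is genuinely new, but the inductive machinery you bring (polar $\pm$ spread plus interlacing) is inherently an \emph{approximate} tool. Settling the conjecture would require a comparison principle between the bit-channels $W_{A_i}^{(m)}$ and $W_{A_{i+1}}^{(m)}$ that is exact (e.g., a channel degradation statement), not merely a $\delta$-separation; nothing in the paper or in your sketch supplies such a principle, which is precisely why the problem remains open.
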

Finally, since the RM code construction is the same for the BEC, the BSC or any other channel, and since it is already proved to achieve capacity on the BEC \cite{Kudekar16STOC}, it follows that the RM and twin-RM codes must be equivalent on the BEC. Consequently, we have the following conjecture that  also implies that RM codes achieve capacity on the BSC.
\begin{conjecture}The Twin-RM codes for the BEC and BSC are equivalent. I.e., for some $\epsilon =o(1/n)$, $|\{i \in [n]: H(U_{i}|U^{i-1},Y_{\mathrm{BSC}}^n) \le \epsilon , H(U_{i}|U^{i-1},Y_{\mathrm{BEC}}^n) \ge  \epsilon\}|=o(n)$.
\end{conjecture}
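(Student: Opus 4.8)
The plan is to reduce the conjecture to a one-sided statement about high-degree monomials being bad for the BSC, pin down the BEC side using the known capacity result, and then attempt to transfer the BEC sharp threshold to the BSC along increasing chains. Fix $r=r_m$ so that the rate $\binom{m}{\le r}/2^m$ tends to a target $I$, and compare a $\mathrm{BSC}$ of capacity $I$ with a $\mathrm{BEC}$ of capacity $I$ (matched capacity is the regime in which the statement is meaningful). First I would record that on the BEC the reference set is already pinned down: by \Cref{thm:bec-cap} the code $\mathrm{RM}(m,r)$ achieves capacity on the BEC, and combined with the balance equation \eqref{bal-rm} and the polarization bound \eqref{rm-pol} this forces the good set $\{A:H^{\mathrm{BEC}}_A\le\epsilon_n\}$ to coincide with the layered set $\{A:|A|\le r\}$ up to an $o(n)$ symmetric difference, i.e.\ the RM and Twin-RM codes are equivalent on the BEC. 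Under this identification ``bad for the BEC'' means, up to $o(n)$ indices, ``$|A|>r$'', so the conjecture is equivalent to
\begin{equation*}
\bigl|\{A\subseteq[m]: |A|>r,\ H^{\mathrm{BSC}}_A\le\epsilon_n\}\bigr|=o(n).
\end{equation*}
A short counting remark shows the reverse inclusion is then automatic: matched capacity together with \eqref{bal-rm} and \eqref{rm-pol} gives $|\mathrm{Good}^{\mathrm{BSC}}|=|\mathrm{Good}^{\mathrm{BEC}}|+o(n)$, so bounding $|\mathrm{Good}^{\mathrm{BSC}}\setminus\mathrm{Good}^{\mathrm{BEC}}|=o(n)$ forces $|\mathrm{Good}^{\mathrm{BEC}}\setminus\mathrm{Good}^{\mathrm{BSC}}|=o(n)$ as well.

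Two structural facts about the BSC come for free. The partial order \eqref{equi2}, valid for every BMS channel, makes $\mathrm{Good}^{\mathrm{BSC}}$ a down-set in the subset lattice; and \Cref{thm:m} together with the spread inequality \eqref{eq:jjw2} shows that along every increasing chain the sequence $H^{\mathrm{BSC}}_{B_0}\le\cdots\le H^{\mathrm{BSC}}_{B_m}$ crosses from below $\epsilon$ to above $1-\epsilon$ within a window of at most $D(\epsilon)$ layers. Thus each chain has a sharp transition layer $\tau_{\mathrm{chain}}$, and the displayed bound would follow if I could show that $\tau_{\mathrm{chain}}\le r+o(m)$ for all but an $o(1)$-fraction of the $m!$ chains: the same averaging over chains that deduces \Cref{thm:PRM} from \Cref{thm:m} converts concentration of $\tau_{\mathrm{chain}}$ into a layered global good set. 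The down-set property alone is insufficient, since down-sets of size $\approx nI$ need not be layered, so a genuinely channel-specific estimate placing the BSC transition at layer $\approx r$ is unavoidable.

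The most natural attempt is a degradation sandwich: find BECs $E_-\preceq\mathrm{BSC}(p)\preceq E_+$, and use that degradation is preserved by the Plotkin/RM recursion, so $W'\preceq W$ implies $H_A(W')\ge H_A(W)$ for all $A$ at once and hence $\mathrm{Good}(E_-)\subseteq\mathrm{Good}(\mathrm{BSC})\subseteq\mathrm{Good}(E_+)$, with both BEC good sets layered. The obstruction is quantitative: one can take $E_+=\mathrm{BEC}(2p)$ (replacing erasures by random bits recovers $\mathrm{BSC}(p)$), but its capacity $1-2p$ exceeds the BSC capacity $1-h(p)$ by a constant, so the sandwich is far too loose to pin $\tau_{\mathrm{chain}}$ to within $o(m)$. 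This is exactly the extremality of the BEC and BSC among capacity-matched BMS channels that \Cref{thm:sam} exploits, and it is why no coarse comparison can succeed. The viable refinement is to compare the two chain processes step by step using the recursive ``$+/-$ sandwich'' $H^{(m+1)}_{A}\le H((W^{(m)}_A)^+)\le H^{(m)}_A\le H((W^{(m)}_A)^-)\le H^{(m+1)}_{A\cup\{m+1\}}$ from \Cref{sect:RMpolar}, together with an extremes-of-information-combining control of how fast a BSC-rooted process drifts to the extremes relative to a BEC-rooted one of matched capacity.

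The hard part is precisely the concentration of $\tau_{\mathrm{chain}}$ for the BSC, equivalently a sharp-threshold (second-moment) statement for the BSC good set across layers. On the BEC this concentration is supplied by the Friedgut--Kahn--Kalai and Bourgain--Kalai sharp-threshold theorems for monotone symmetric boolean functions, which drive \Cref{thm:bec-cap}; on the BSC the relevant object is the GEXIT functional, which is neither boolean nor monotone, so that machinery is unavailable and essentially only the area theorem survives. Consequently this conjecture is, up to the known BEC result, equivalent to proving that RM codes achieve capacity on the BSC, and I expect that closing it will require a genuinely new concentration argument for the BSC chain process --- for example a variance bound on $\tau_{\mathrm{chain}}$ extracted from \eqref{eq:spread} and \eqref{eq:jjw2} strengthened to a quantitative, chain-correlated form --- rather than any transfer from the BEC through channel comparison alone.
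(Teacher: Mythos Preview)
The statement you were asked to prove is Conjecture~3 in the paper's open problems section; the paper does not prove it and explicitly presents it as open. There is therefore no ``paper's own proof'' to compare against.

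Your write-up is not a proof either, and to your credit you say so: the final paragraph correctly concludes that establishing the conjecture is, modulo the known BEC result of \cite{Kudekar17}, essentially equivalent to proving that RM codes achieve capacity on the BSC, which remains the central open problem of the area. Your analysis of \emph{why} this is hard is accurate and aligns with the paper's own discussion: the BEC argument relies on the EXIT function being a monotone boolean functional of the erasure pattern so that the KKL/Bourgain--Kalai sharp-threshold machinery applies, whereas for the BSC one must work with the GEXIT function, which is neither boolean nor monotone (the paper makes exactly this point in the remark following the BEC proof). Your observation that the degradation sandwich $\mathrm{BSC}(p)\preceq\mathrm{BEC}(2p)$ is too loose by a constant in capacity is correct and is indeed the standard obstruction to any direct channel-comparison transfer.

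One small correction: your reduction ``the conjecture is equivalent to $|\{A:|A|>r,\ H_A^{\mathrm{BSC}}\le\epsilon_n\}|=o(n)$'' uses that the BEC good set equals $\{A:|A|\le r\}$ up to $o(n)$, which is exactly the statement that Twin-RM equals RM on the BEC. The paper asserts this consequence in the open problems section (``it follows that the RM and twin-RM codes must be equivalent on the BEC''), but note that deriving it formally requires combining \cite{Kudekar17} with the polarization result \eqref{rm-pol} and the partial order \eqref{equi2}; this is routine but worth stating rather than folding into ``by \Cref{thm:bec-cap}''. Beyond that, your identification of the missing ingredient --- a concentration/variance bound on the chain transition layer $\tau_{\mathrm{chain}}$ for the BSC --- is a reasonable articulation of what a proof would need, but it remains a restatement of the open problem rather than progress on it.
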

One could also exploit the ordering between BEC and BSC entropies, showing that for  any two indices $i,j$ of rows with $j$ in a lower degree layer than $i$,  $H(U_i|U^{i-1},Y_{\mathrm{BEC}}^n) \ge H(U_{j}|U^{j-1},Y_{\mathrm{BEC}}^n)$ implies  $H(U_i|U^{i-1},Y_{\mathrm{BSC}}^n) \ge H(U_{j}|U^{j-1},Y_{\mathrm{BSC}}^n)$. 
Finally, one may also look for a BMS channel such that the Twin-RM code is not equivalent to the RM code, which would imply that the RM code does not achieve capacity for this BMS.

\subsection{Weight enumerator}
\noindent {\it 1. Unified bounds for two different regimes}

As mentioned in Section~\ref{sec:wt-dist}, we now have two different approaches to bound the number of codewords in two different regimes. More precisely, the method proposed in \cite{Samorod18} gives strong and in some cases nearly optimal bounds in the constant relative weight regime for constant rate codes. Yet this method does not produce meaningful bound when the code rate approaches $0$ or in the small weight regime. On the other hand, the method in \cite{Sberlo18} gives good bound for small rate codes or in small weight regime, but it does not work well in the linear weight regime for constant rate codes. A natural open problem is thus to obtain a unified bound that is effective in both regimes.

\vspace*{0.1in}
\noindent {\it 2. Use the capacity-achieving results for BEC to prove the conjecture for BSC}

As discussed in Section~\ref{sec:wt-dist},
Samorodnitsky gave a nearly optimal upper bound in a certain linear weight regime for any code that achieves capacity for the BEC. This in particular means that the weight distribution of RM codes in this regime is (nearly) the same as that of random codes. Since random codes achieve capacity of BSC, can we thus extend this result to prove that RM codes achieve capacity on the BSC?

\vspace*{0.1in}
\noindent {\it 3. Close the gap between the upper and lower bounds for small weight codewords}

There is a gap between the existing upper and lower bounds on small weight codewords; see \Cref{thm:sberlo-bias} and \Cref{thm:biaslb}. A natural open problem is to close this gap.

\subsection{Algorithms}

Another open problem is to develop an efficient (polynomial time) decoder for the twin-RM codes. Similarly to polar codes, twin-RM codes are also suitable for successive decoding by construction.
Moreover, both twin-RM codes and polar codes have similar but different recursive structures that can be used to reduce the complexity of the successive decoder. The difference is that the recursive structure of polar codes allows for an FFT-like implementation of the successive decoder which only has $O(N\log N)$ complexity \cite{Arikan09}.
On the other hand, the naive utilization of the recursive structure of twin-RM codes  
does not give such an obvious reduction. 
The hope is to still find a more efficient  (polynomial-time) implementation of the successive decoder for twin-RM codes based on a divide-and-conquer method.\\

The results of \cite{Sberlo18}, as described in Section~\ref{sec:wt-to-cap}, show that $\reedmuller{m/2 - O(\sqrt{m\log m})}{m}$ can correct a fraction of $1/2-o(1)$ random errors. Currently, the best algorithm that we have can only handle degrees up to $o(\sqrt{m})$ \cite{Saptharishi17} (see Section~\ref{sec:SSV}). It is a natural  open problem to extend these up to any constant fraction of errors. 


\bibliographystyle{IEEEtran}
\bibliography{RM}

\end{document}